\documentclass[11pt]{article}
\usepackage[margin=1in]{geometry}
\usepackage{amsmath,amssymb,amsthm,mathtools,bm}
\usepackage{hyperref}
\usepackage{enumitem}
\usepackage{microtype}
\usepackage{xcolor}
\usepackage{booktabs}
\usepackage{graphicx}
\usepackage{tabularx,longtable,array}
\usepackage{pdflscape}
\usepackage{ragged2e}
\usepackage{siunitx}
\hypersetup{
  colorlinks=true,
  linkcolor=black,
  citecolor=black,
  urlcolor=black,
  pdftitle={Relational Quantum Causal Processes: Exact Models, Continuum Limits, and the Boundary of Emergent Gravity},
  pdfauthor={Yipeng Xu},
  pdfsubject={Relational quantum processes, emergent causal order, Lorentzian geometry, and controlled gravitational limits},
  pdfkeywords={quantum causal processes, open quantum systems, causal order, Lorentzian geometry, quantum gravity}
}
\newcolumntype{P}[1]{>{\RaggedRight\arraybackslash}p{#1}}

\definecolor{statusblue}{HTML}{0072B2}
\definecolor{statusgreen}{HTML}{009E73}
\definecolor{statusorange}{HTML}{D55E00}
\definecolor{statusgray}{HTML}{666666}
\definecolor{statusred}{HTML}{B13B35}
\newcommand{\StatusAbstract}{\textcolor{statusblue}{\textsf{proved abstractly}}}
\newcommand{\StatusModel}{\textcolor{statusgreen}{\textsf{proved in an explicit model}}}
\newcommand{\StatusNumerical}{\textcolor{statusorange}{\textsf{controlled/model-specific}}}
\newcommand{\StatusConditional}{\textcolor{statusgray}{\textsf{conditional interface}}}
\newcommand{\StatusOpen}{\textcolor{statusred}{\textsf{open/phenomenological}}}
\newcommand{\AssumptionArrow}{\ensuremath{\Longrightarrow}}

\newtheorem{theorem}{Theorem}[section]
\newtheorem{lemma}[theorem]{Lemma}
\newtheorem{proposition}[theorem]{Proposition}
\newtheorem{corollary}[theorem]{Corollary}
\newtheorem{definition}[theorem]{Definition}
\newtheorem{assumption}[theorem]{Assumption}
\newtheorem{remark}[theorem]{Remark}
\newtheorem{external}[theorem]{External input}

\newcommand{\A}{\mathcal A}
\newcommand{\B}{\mathcal B}

\newcommand{\D}{\mathcal D}

\newcommand{\F}{\mathcal F}

\newcommand{\Hh}{\mathcal H}
\newcommand{\I}{\mathfrak I}

\newcommand{\K}{\mathcal K}
\newcommand{\M}{\mathcal M}
\newcommand{\N}{\mathcal N}
\newcommand{\R}{\mathcal R}
\newcommand{\Z}{\mathcal Z}
\DeclareMathOperator{\Proj}{Proj}
\newcommand{\Tr}{\operatorname{Tr}}
\newcommand{\Str}{\operatorname{Str}}
\newcommand{\vN}{\operatorname{vN}}
\newcommand{\MD}{\operatorname{MD}}
\newcommand{\Fix}{\operatorname{Fix}}
\newcommand{\Stab}{\operatorname{Stab}}
\newcommand{\Index}{\operatorname{Index}}
\newcommand{\ch}{\operatorname{ch}}

\newcommand{\diff}{\mathrm d}
\newcommand{\Ev}{\mathrm{Ev}}

\title{Relational Quantum Causal Processes:\\
\large Exact Models, Continuum Limits, and the Boundary of Emergent Gravity}
\author{Yipeng Xu\thanks{Email: \texttt{yx488@cam.ac.uk}.}\\
\small University of Cambridge, Cambridge, United Kingdom}
\date{July 2026}

\begin{document}
\maketitle

\begin{abstract}
Relational quantum causal processes describe finite operational contexts by
normal positive functionals on local completely positive maps.  Response
differences define an influence algebra, whose central projections form a
Boolean algebra of jointly readable events.  This article develops that
operator-algebraic starting point through exact models and states explicitly
where its gravitational interpretation remains conditional.  Fresh-cell
unitary collision circuits yield dephasing--exchange kinetics, an exact
charge-center fixed algebra, a system-size-uniform finite-step limit at fixed
response order, and graph-controlled metastable Markov dynamics.  A separate
absorbing-state model produces a sharp transition between a surviving matrix
algebra and Boolean memory records, while a reversal-covariant defect
dynamics generates a locally finite partial order on a restricted graph
family without a prescribed Lyapunov time.

Given a certified order and a positive additive record measure, compactness
and identifiability conditions lead to a Lorentzian metric--measure
subsequential limit, finite error bounds, and uniqueness of admissible smooth
interpretations.  Finite regulators then test distinct parts of the
information-to-gravity problem: modular-to-boost response, null tomography,
same-update variational identities, induced quadratic gravity, and
common-refinement limits.  These statements are exact or controlled within
their declared models.  They do not yet provide one background-independent
microscopic law that simultaneously generates adjacency, time, volume
normalization, dimension and signature, nonlinear Einstein constraints, and
quantum matter.  Operator-growth rigidity and projective-history
constructions further delimit the remaining non-Abelian possibilities.
RQCP-QG is therefore presented here as a theorem-indexed research framework:
established mechanisms, conditional compositions, and open assumptions are
kept separate throughout.
\end{abstract}

\newpage
\tableofcontents

\newpage
\section{Introduction}\label{sec:introduction}

A microscopic theory of spacetime must separate operational input from
dynamical output.  In particular, causal order and metric geometry cannot be
counted as emergent if they have already entered through the laboratory
labels, a calibrated clock, or a prescribed continuum background.  The
relational quantum causal process (RQCP) program starts instead from local
interventions and their observable response relations.  Its central question
is whether classical records, causal order, geometric measure, and
gravitational response can arise as successive long-distance structures of
one quantum process.

For a finite operational context \(C\), the primitive probability object is a
normal process functional
\[
 \Omega_C(\Phi_1,\ldots,\Phi_n),
\]
positive on completely positive local maps and normalized on deterministic
normal unital completely positive maps.  The context specifies the
laboratories, their admissible interventions, boundary data, and the
background strategy class, but not a global state on a preferred time slice
or a continuum metric.  This formulation includes finite process matrices
and quantum combs as special cases \cite{OCB2012,Chiribella2009}.

Interventions outside a target algebra \(Y\) generate predual response
differences at \(Y\).  The unitaries that leave all of these differences
invariant form a stabilizer; its commutant in \(Y\) is the influence algebra
\(\I_Y^{C;\B}\).  The jointly readable exact events are
\[
 \Ev_0(Y|C;\B)=\Proj Z(\I_Y^{C;\B}).
\]
Their Boolean structure follows from centrality.  The full influence algebra
may remain noncommutative, so the construction does not identify quantum
logic with a Boolean lattice or assume that all infrared information is
classical.

This algebraic observation is only a starting point.  Complete positivity
does not imply Markovian kinetics, a commutative infrared algebra, an acyclic
response relation, a Lorentzian continuum, or an Einstein equation.  Earlier
formulations collected these requirements in a conditional synthesis
theorem.  Here they are separated and tested in explicit models.  Each
result is labelled as abstract, exact in a model, controlled or numerical,
conditional, or open.  The labels are part of the statements: conclusions
from different models are not composed unless their shared variables and
limits have been identified.

The first group of results concerns the origin of stable classical records.
A fresh-cell unitary collision circuit on a connected graph reduces exactly
to local dephasing and random exchange after the environment cells are
traced out.  Its finite-step invariant algebra is
\(\operatorname{span}\{P_0,\ldots,P_N\}\), with \(P_q\) the projector onto
total charge \(q\), and its Pauli-response hierarchy is an exact killed
colored-interchange process.  At fixed response order, a palindromic
collision round converges to the associated GKSL evolution with an error
constant independent of the total number of qubits
\cite{Lindblad1976,Gorini1976}.  For the continuous
dephasing--exchange generator, the same charge algebra is the exact fixed
observable algebra.  On a regular graph the complete operator-space gap is
\[
 \lambda_G=\min\!\left\{2\kappa,
  \frac{\gamma}{d}\lambda_2(L_G)\right\}.
\]
Weak bistochastic perturbations then induce a Markov generator on charge
observables, with a finite-time reduction error controlled by
\(\varepsilon\|\mathcal V\|/\lambda_G\).  This establishes the formation and
metastable motion of Boolean charge records without assigning them a
spacetime meaning.

A second model shows that Booleanity is dynamical rather than inevitable.
Contact-process variables gate the dephasing of a separately writable
memory \cite{Harris1974,Liggett1999}.  Every memory matrix unit acquires an
exact occupation-time Feynman--Kac factor.  In the thermodynamic late-write
limit, the surviving algebra is a full matrix algebra in the
absorbing-noise phase and a diagonal algebra in the persistent-noise phase.
For a tagged qubit, the same factor determines a commutator witness,
reference-system negativity, and the memoryless-use quantum capacity.  The
transition therefore changes an operational quantum-information resource,
not merely the notation used for fixed points.

Causal order is addressed by a reversal-covariant dynamics of binary edge
orientations on a block cactus.  Directed triangles are local cycle defects,
and their complete observable algebra has an exact contact-process quotient.
Above the healing threshold, every fixed window becomes a directed acyclic
graph with exponentially increasing probability.  Only after the defects
have disappeared is the longest-path rank
\[
 T_\omega(v)=\max_{p:\,p\to v}|p|
\]
defined; it then increases along every realized directed edge.  Thus the
Lyapunov ranking is an output on this graph family.  The undirected skeleton,
laboratory runtime, and conversion from record number to spacetime volume
remain external.

Two complementary results connect operational data to geometry.  In the
first, Ramsey ratios of a quantum field cancel local self-variances and
isolate spacelike covariances.  For a conformal scalar in the Bunch--Davies
state on \(\mathrm{dS}_4\), six generic anchors determine the curvature
radius from a Lorentzian rank defect and predict target--target invariants
not used in the fit.  This is an inverse theorem in a specified
field--state--geometry family, not an emergence theorem.  In the second, a
margin-certified partial order and a positive additive record measure define
an intrinsic finite volume clock.  Uniform strong-profile covering bounds
give a Lorentzian metric--measure subsequential limit without assuming a
manifold; when an admissible smooth strongly causal limit exists, local
interval volumes recover proper time and the order--measure data determine
the geometry up to measure-preserving isometry.  Dimension and
Alexandrov-nerve topology are inferred with explicit abstention criteria.
This result realizes an ``order and number'' reconstruction under stated
compactness and identifiability conditions
\cite{BraunSaemann2025,BraunOrderNumber2026}; it does not generate the input
order or the absolute record-density unit.

The gravitational part is deliberately divided into finite questions.  A
critical free-fermion interval supplies a fixed physical modular band and a
controlled modular-to-boost response certificate
\cite{BisognanoWichmann1975,Giudici2018}.  Affine null-wire and nonlinear
record-lattice regulators then test null tomography, finite information
balance, focusing, and Ward completion.  A same-update construction compares
Ramsey response with the variation of the same declared phase functional,
and a common-refinement model places order, integer record measure, and that
phase on one register family.  A separate same-link determinant model removes
the bare curvature action in a transverse-traceless sector and produces a
positive two-derivative kernel from matter on the same links, in the spirit
of induced gravity \cite{Sakharov1968,Adler1982,Visser2002}.  Further finite
constructions supply relational clock histories, the quadratic
Fierz--Pauli/ghost/BRST complex, and a regulator-independent Gaussian
coefficient under a derived spectral balance.  These are nonempty-model and
compatibility results.  None selects four dimensions, the Einstein phase,
the record-area conversion, interacting matter, and the nonlinear continuum
from one autonomous dynamics.

The history constructions sharpen a different boundary.  Positive
projective causal-growth laws extend to infinite past-finite histories, and
restricted non-Abelian models yield countably additive operator-valued
measures.  Conversely, central operator-valued growth reduces to mixtures of
scalar laws, while several self-adjoint and triangular square-operator
branches are rigidly commutative.  These results narrow the search for a
covariant non-Abelian growth law with genuine geometric backreaction; they
do not provide such a law on arbitrary causal sets
\cite{RideoutSorkin2000,Sorkin1994QuantumMeasure}.

The status of vacuum, matter, and cosmology is correspondingly limited.
Fixed-sector normalization removes a source-independent pure-volume factor
once an additive charge has already been identified with metric volume.
No theorem here derives that identification, the observed cosmological
constant, the Standard Model representation, three generations, or a dark
equation of state.  Their conditional constructions are retained because
they state precise mathematical targets and failure tests, but they are not
used as evidence for the earlier information-to-geometry results.

The article is organized so that this separation can be checked directly.
Section~\ref{sec:status-map} gives the status conventions and a compact map of
the principal results.  Sections~\ref{sec:v62-axiom-reduction}--%
\ref{sec:closed-loops} introduce the operational assumptions and exact model
results.  Sections~\ref{sec:process-algebra}--%
\ref{sec:generic-order-volume} develop the common algebraic and
order--volume constructions.  The subsequent sections treat finite
gravitational regulators, induced quadratic dynamics, causal histories, and
their no-go boundaries.  The appendices collect proofs and conditional
matter and cosmological extensions.  The synthesis theorem records logical
dependence only; it is not a claim that all of its hypotheses arise in a
single model.

\section{Scope and status of the principal results}
\label{sec:status-map}

The results form a dependency structure rather than a single derivation.
For each link, the manuscript identifies the assumption being used, the
statement proved in its place, and the model or observable that tests the
statement:
\begin{equation}
 \boxed{\begin{gathered}
 \text{stated assumption}
 \ \AssumptionArrow\
 \text{new theorem or controlled mechanism}\\[-1mm]
 \AssumptionArrow\
 \text{explicit model or observable}
 \end{gathered}}
 \label{eq:closed-loop-rule}
\end{equation}
An assumption is considered removed only when it is not replaced by an
equivalent calibration, projection, or geometric input.  This convention is
used throughout to separate operator-algebraic results from conditional
matter and cosmological extensions.

\subsection{Status conventions}

Every major statement below carries one of five statuses.
\begin{description}[leftmargin=3.6cm,style=nextline]
\item[\StatusAbstract.]
The conclusion follows from the stated operator-algebraic hypotheses and is
not tied to a finite numerical example.
\item[\StatusModel.]
The statement is exact for a specified finite-range channel or Lindbladian.
It is evidence that the relevant structure is dynamically realizable, not a
universality theorem for all quantum causal processes.
\item[\StatusNumerical.]
Part of the conclusion is a deterministic finite-size calculation,
finite-shot simulation, imported asymptotic coefficient, or model-specific
inverse theorem.  The data and the analytic components are identified
separately.
\item[\StatusConditional.]
The implication is mathematically valid only after an external continuum,
state, geometric, or field-theoretic identification is supplied.
\item[\StatusOpen.]
The item is a target mechanism or phenomenological branch.  It is not used as
support for any earlier stage of the foundational chain.
\end{description}

\begin{landscape}
\begin{figure}[p]
 \centering
 \includegraphics[width=0.96\linewidth]{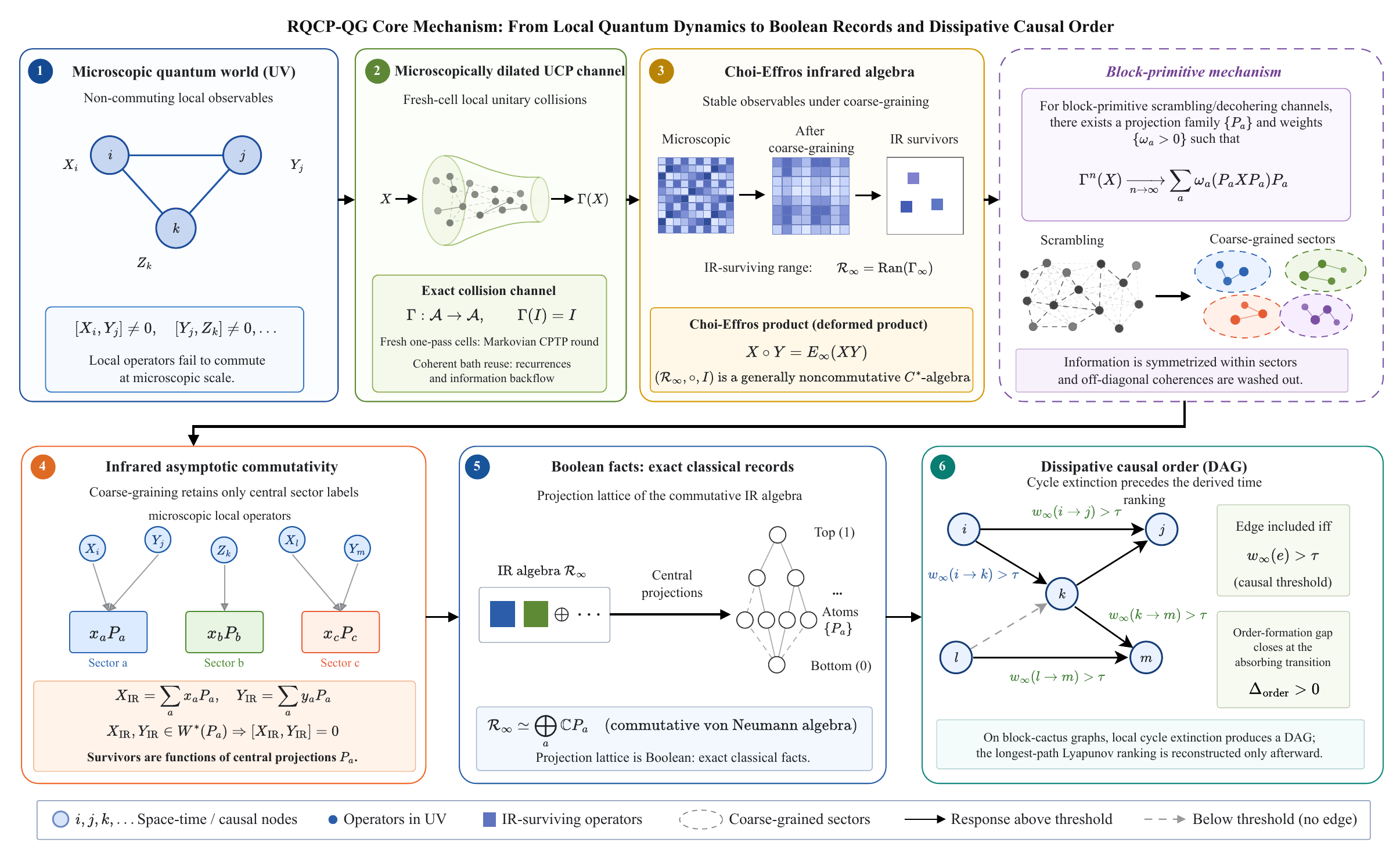}
 \caption{\textbf{The information-to-order core of RQCP-QG.}
 Panels 1--5 retain the mechanism of the original full-process diagram:
 local UCP coarse graining, a generally noncommutative Choi--Effros range,
 and its reduction to Boolean center records in block-primitive dynamics.
 The collision panel displays the fresh-cell unitary dilation realized
 by the solvable microscopic model.  Panel 6 incorporates the later
 causal-order result: local cycle defects on a block cactus become extinct
 before a DAG and longest-path Lyapunov rank are defined.  Metric geometry
 and gravity are not conclusions of this figure.}
 \label{fig:rqcp-core-mechanism}
\end{figure}
\end{landscape}

\begin{landscape}
\begin{figure}[p]
 \centering
 \includegraphics[width=0.819\linewidth]
 {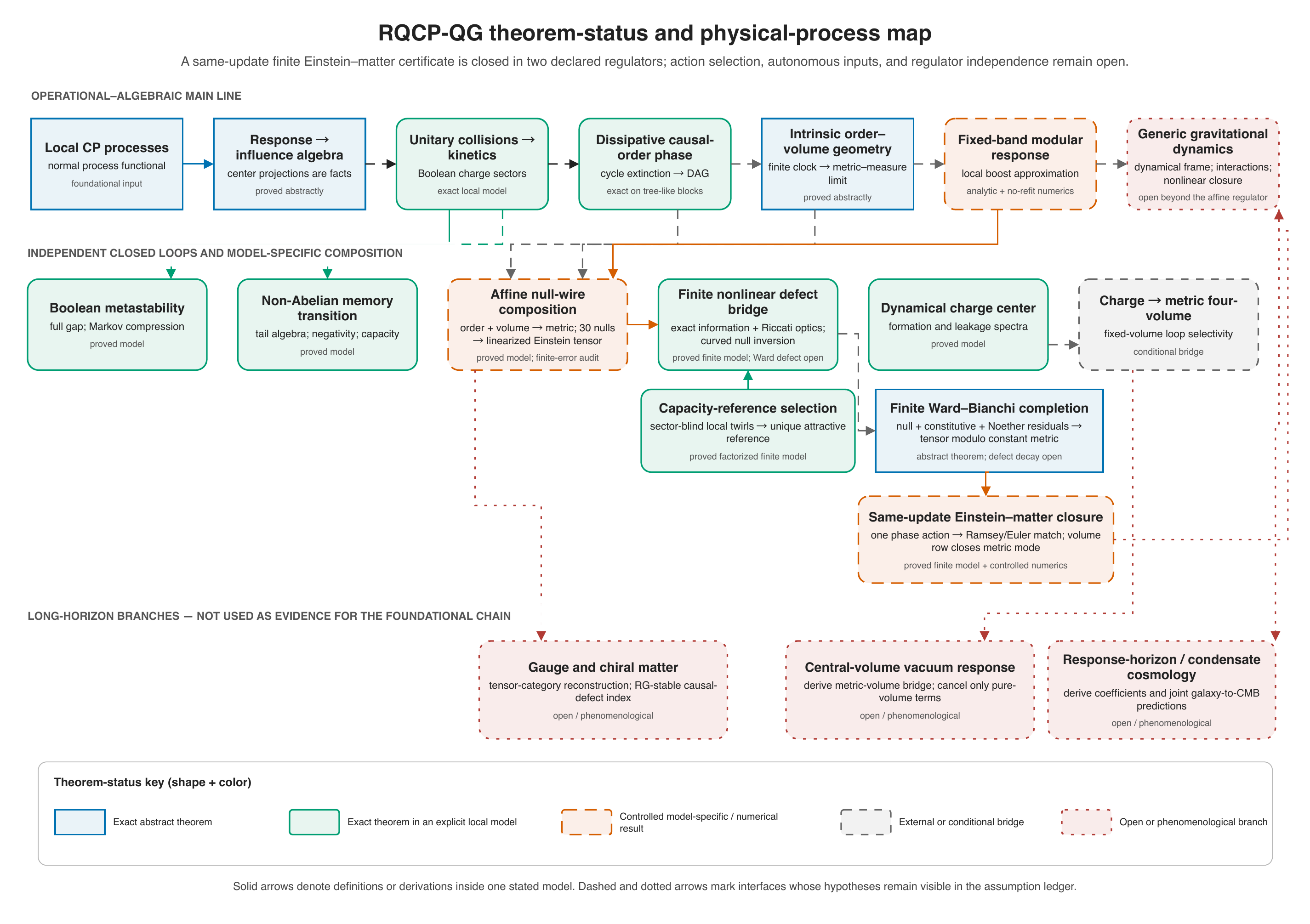}
 \caption{\textbf{RQCP-QG physical processes and theorem status.}
 Blue/green denote exact abstract/model theorems; orange, gray, and red denote
 controlled, conditional, and open statements.  The lower nodes concern
 reference selection, a nonlinear Ward-defect identity, finite tensor
 completion, and a same-update Einstein--matter certificate under their
 displayed inputs.  Common refinement, relational time, quadratic
 constraints, regulator universality, and positive-Hilbert constructions are
 stated in Table~\ref{tab:status-ledger} and in their dedicated sections.  The
 common-refinement mechanism appears separately in
 Fig.~\ref{fig:unified-continuum-mechanism}.  Dashed edges are open; arrows
 between named models record logical dependence and do not identify their
 microscopic generators.}
 \label{fig:rqcp-status-map}
\end{figure}
\end{landscape}

\begin{landscape}
\subsection{Status of the principal results}

Table~\ref{tab:status-ledger} summarizes the strongest statement available
for each module and the assumptions that remain.  Any claim that an input has
been removed is restricted to the scope stated in the third column.

\footnotesize
\begin{longtable}{@{}P{0.16\linewidth}P{0.13\linewidth}
 P{0.36\linewidth}P{0.27\linewidth}@{}}
\caption{Status of the main RQCP-QG modules.}
\label{tab:status-ledger}\\
\toprule
Module & Status & Strongest present result & Remaining boundary\\
\midrule
\endfirsthead
\toprule
Module & Status & Strongest present result & Remaining boundary\\
\midrule
\endhead
Influence algebra and exact facts
& \textcolor{statusblue}{\textsf{abstract}}
& Predual response differences generate an influence algebra; its center
  projections form a complete Boolean event algebra.
& The background strategy class and field-theoretic localization remain part
  of the event context.\\
\addlinespace
Microscopic response kinetics
& \textcolor{statusgreen}{\textsf{exact model}}
& A fresh-cell unitary collision circuit gives the exact random-unitary
  channel, Boolean finite-step fixed algebra, closed Pauli hierarchy, and a
  fixed-order kinetic error independent of total system size.
& One-pass product bath cells, calibrated collision strength, and an external
  schedule remain assumptions.\\
\addlinespace
Boolean formation and metastability
& \textcolor{statusgreen}{\textsf{exact model}}
& Local dephasing and exchange give the exact charge algebra, complete
  graph-controlled gap, metastable Markov compression, topology-dependent
  tolerance, and a noncommuting leakage correction.
& The strongest norm is normalized Hilbert--Schmidt; a general
  size-independent local operator-norm theorem remains open.\\
\addlinespace
Non-Abelian memory transition
& \textcolor{statusgreen}{\textsf{exact model}}
& An occupation-time identity classifies the late-write tail algebra and
  determines reference negativity and memoryless-use quantum capacity.
& The activity quotient is classical; finite systems ultimately absorb, and
  background dephasing rounds the infinite-time phase.\\
\addlinespace
Dissipative causal order
& \textcolor{statusgreen}{\textsf{exact model}}
& Local cycle projectors on a block cactus have an exact contact-process
  quotient.  Their extinction produces a DAG and only then a longest-path
  Lyapunov ranking.
& The undirected skeleton, laboratory runtime, and extension to generic
  graphs are not generated.\\
\addlinespace
Quantum-field inverse geometry
& \textcolor{statusorange}{\textsf{model/numerical}}
& A Ramsey ratio isolates a physical covariance; six anchors fix a de Sitter
  radius, while rank-five completion predicts held-out invariants with
  perturbation and shot-noise bounds.
& Field species, state, canonical normalization, spacelike support, and the
  de Sitter model class are inputs.\\
\addlinespace
Fixed-band modular response
& \textcolor{statusorange}{\textsf{model/numerical}}
& A state-independent physical band gives an optimal finite-size response
  certificate; a sparse range-three correction passes a frozen no-refit
  holdout in the free-fermion model.
& Interactions, higher dimensions, transverse area, multiple null directions,
  and gravity are absent.\\
\addlinespace
Capacity-reference selection
& \textcolor{statusgreen}{\textsf{exact finite model}}
& Sector-blind local product-Weyl collisions select the unique
  capacity-balanced boundary reference with a size-independent
  relative-entropy contraction and an additive-block modular coefficient.
& Boundary/complement token factorization, unbiased exchange,
  \(\widehat A=a_0^2\widehat N\), correlated-record selection, and the
  continuum Newton coefficient remain inputs or open.\\
\addlinespace
Dynamical charge center
& \textcolor{statusgreen}{\textsf{exact model}}
& A local number-conserving channel isolates the complete additive-charge
  center with a graph-dependent formation gap; homogeneous breaking gives an
  exact Ehrenfest spectrum and autocorrelation.
& The microscopic \(U(1)\) law is assumed, and charge has not been proved to
  converge to metric four-volume.\\
\addlinespace
Order--volume geometry
& \textcolor{statusblue}{\textsf{abstract + conditional smooth limit}}
& A robust response DAG and positive additive record measure define an
  intrinsic finite volume clock with an exact reverse triangle.  Strong-
  profile compactness gives a Lorentzian metric--measure limit; local
  subdivision recovers proper time, dimension, measure, and good-cover
  topology for an admissible unknown smooth limit.
& Generic dynamics must still generate adjacency and acyclicity and derive a
  scale-stable capacity-density conversion.  Smoothness is certified, not
  automatic, and one absolute volume unit remains microscopic.\\
\addlinespace
Affine order--volume realization
& \textcolor{statusorange}{\textsf{model-specific}}
& On a contractible affine four-complex, Hodge loss forms an exact response
  order, Boolean record capacity gives an additive volume, and a rank-nine
  finite response frame reconstructs a Lorentzian metric with a stability
  bound.
& The affine adjacency, dimension, and 30-class null frame are inputs; this
  regulator is a composition example, not the generic compactness theorem.\\
\addlinespace
Einstein closure
& \textcolor{statusorange}{\textsf{linearized model}}
& A transverse fixed-band wire bundle, boundary record entropy, capacity
  exchange with an explicit product-reference condition, discrete focusing,
  30 null probes, conservation, and an exact lattice Bianchi identity close to
  a linearized tensor equation.
& The wires are decoupled, transverse area is calibrated,
  \(G_{\rm eff}\) is regulator dependent, and interacting modular
  universality, nonlinear focusing, and graviton self-interaction remain
  open.\\
\addlinespace
Finite nonlinear information--focusing bridge
& \textcolor{statusgreen}{\textsf{exact finite model}}
& A capacity-balanced record reference fixes the modular area coefficient;
  unitary record--bulk exchange has an exact finite relative-entropy
  remainder; a matrix Jacobi recurrence gives exact nonlinear focusing and
  area transport before caustics; and arbitrary invertible coframes preserve
  rank-nine null-tensor identifiability.
& The reference is selected only in the separate factorized token model;
  coupling it to the same nonlinear update, interacting modular response,
  shrinking-diamond locality, the nonlinear information--geometry Ward
  identity, common stress/holonomy evolution, caustics, and regulator
  independence remain open.\\
\addlinespace
Finite Ward--Bianchi tensor completion
& \textcolor{statusblue}{\textsf{abstract + finite certificate}}
& A rank-nine local null frame, compatible discrete divergence, exact
  Noether-defect chain rule, and scalar Poincar\'e gap give a noisy
  full-tensor bound modulo one constant metric mode.  Exact counterexamples
  show that null data and divergence conservation are individually
  insufficient.
& The response tensor must still be derived as the geometric Euler covector
  of the same microscopic update.  Exact holonomy Bianchi identities do not
  imply exact finite diffeomorphism symmetry, and uniform interacting
  continuum control remains open.\\
\addlinespace
Same-update Einstein--matter closure
& \textcolor{statusorange}{\textsf{proved finite model}}
& For one declared local action-phase update, a fresh Ramsey instrument and
  an independent analytic variation have controlled field and divergence
  constitutive defects.  Rank-nine null data, the finite Noether identity,
  a scalar gap, matter residuals, and one volume score then bound the complete
  geometric Euler covector, including the constant metric mode.
& The Einstein--matter action, four-dimensional regulator, signature,
  adjacency, fresh phase access, and refinement family remain inputs.
  Interacting modular dynamics, uniform general-regulator convergence, and
  a universal Newton coefficient are not derived.\\
\addlinespace
Unified operational continuum limit
& \textcolor{statusorange}{\textsf{proved model + interface theorem}}
& One common \(h\) and one register family generate the response order,
  integer record capacities, and the phase action.  A joint discrepancy
  controls Lorentzian metric--measure convergence and the weak
  Einstein--matter Euler limit on the geometry reconstructed from those same
  records.  FLRW and four-dimensional TT refinements pass independent split
  controls.
& The four-dimensional topology, Einstein--Hilbert--scalar action, capacity
  unit, and \(G_h=G\) are supplied.  General nonlinear inhomogeneous
  convergence, autonomous phase/action selection, quantum fluctuations, and
  renormalization remain open.\\
\addlinespace
Same-link induced quantum TT gravity
& \textcolor{statusgreen}{\textsf{proved model}}
& A compiled anonymous pair constructor produces an unlabeled
  \(P_L\times T_L^3\) record lattice.  Its active links supply directed order,
  integer record volume, and scalar hopping.  With no bare curvature action,
  the scalar determinant has a strictly positive TT two-derivative
  coefficient; the saddle-selected, rescaled fixed-band TT Schwinger
  functions have a controlled Gaussian large-\(N_s\) limit.  One joint discrepancy controls
  constructor, order--volume, determinant, and quantum errors.
& The four-dimensional graph language, program/clock state, fair scheduler,
  one-cell volume unit, scalar species, fixed spacing, and a local \(q=0\)
  TT saddle measure are inputs.  The result is fixed-volume and plus-polarized, leading order
  at large \(N_s\), and has no graph backreaction, cutoff removal, universal
  \(G\), or nonlinear gauge-complete graviton measure.\\
\addlinespace
Finite relational order--volume history
& \textcolor{statusgreen}{\textsf{proved model}}
& A stationary local propagation Hamiltonian has an exact conditional
  Page--Wootters history.  Legal rank-increasing append records give a
  schedule-independent DAG, and the same accepted records carry an additive,
  blocking-stable information volume with \(O(L^{-2})\) weak convergence.
& The low-entropy clock endpoint, update program, parent-port law,
  four-dimensional branch, and microscopic volume unit are inputs.  The
  result is finite and does not select an eternal clock or generic
  Hamiltonian basin.\\
\addlinespace
Quadratic Einstein constraint completion
& \textcolor{statusgreen}{\textsf{proved model}}
& The induced TT normalization uniquely fixes the ten-component
  Fierz--Pauli kernel.  One common lattice derivative gives four exact gauge
  zero modes, de Donder ghosts, a nilpotent linear BRST differential,
  gauge-independent conserved-source response, and a full fixed-band
  continuum limit.
& Linear gauge symmetry and the conformal contour are model contracts.
  Nonlinear constraints, cubic Slavnov--Taylor identities, the BV master
  equation, and a nonperturbative metric measure remain open.\\
\addlinespace
Regulator-independent Newton flow
& \textcolor{statusblue}{\textsf{proved regulator-class theorem}}
& Positive determinant weights are obstructed from a cutoff-free Newton
  limit.  Vanishing zeroth and mass-squared supertrace moments remove the
  power, logarithmic, and scheme-local terms; three inequivalent regulators
  converge to one positive logarithmic mass invariant and fixed physical
  Gaussian band.
& The signed supertrace is an input to this regulator theorem.  BRIDGE-PH1
  derives it in a separate free positive-Hilbert model.  The absolute value
  is spectrum dependent, and vacuum plus higher-curvature moments are not
  balanced.\\
\addlinespace
Positive-Hilbert spectral balance
& \textcolor{statusgreen}{\textsf{proved model}}
& Exterior parity on \(\Lambda^\bullet\mathbb C^2\) and an additive
  number-operator mass ladder derive \((1,-2,1)\),
  \(\Str\mathbf1=\Str M^2=0\), a positive finite logarithmic invariant,
  transfer-form positivity, geometry-by-geometry persistence under any
  supplied finite positive graph sum, and exact finite-move response.
& The construction is free; its graph support and prior are supplied.
  Exact all-level pairing is proved to erase the finite invariant, while
  generic interactions break the balance.  Finite interacting protection,
  projective positive growth, and a covariant rank-one coherent branch are
  supplied separately below; a Lorentzian continuum, spin--statistics, and
  nonlinear gravity are not derived.\\
\addlinespace
Selective Ward, projective, covariant, and non-Abelian histories
& \textcolor{statusgreen}{\textsf{proved models}}
& A shifted CAR Ward identity protects the positive response for arbitrary
  finite interacting physical operators.  Conditional causal births extend
  it to exact deletion-projective cylinder laws, an infinite past-finite
  labelled-history measure, CPTP channels, and path-response identities.  A
  contact-response CSG subfamily additionally has exact discrete general
  covariance and Bell causality; a one-arity \(r\ge2\) complex branch extends
  to a countably additive measure and a strongly positive rank-one
  decoherence functional with nonzero finite-volume interference.  On a
  separate fixed two-port strip, fresh-ancilla noncommuting decisions give
  an extendible rank-four strongly positive history measure.  On arbitrary
  CSG causal histories, intrinsic fresh-event Pauli holonomies define a
  covariant countably additive non-Abelian operator measure and strongly
  positive grade-2 decoherence functional; a late-cylinder bound proves that
  integration does not abelianize its range.
& The commutant, maximal-birth language, contact motifs, response multiplet,
  parameters, coherent arity, phase, and two-port strip are supplied.  The
  arbitrary-causet non-Abelian construction uses a classical CSG density:
  it does not satisfy an operator Markov sum or square-operator
  TOBC/NTOBC/CPOBC, and has no quantum geometry backreaction.  No tight
  Lorentzian continuum or nonlinear constraint dynamics is derived.\\
\addlinespace
Central operator-valued causal growth
& \textcolor{statusgreen}{\textsf{proved model + classification}}
& Positive central CSG couplings give exact parentwise POVMs, covariant
  L\"uders path effects, central Bell cross-products, and a strongly
  countably additive infinite-history POVM.  Every such law is a direct
  integral of scalar CSG laws, so geometry statistics are blind to
  coupling-sector coherence.  A positive noncentral normalized control
  fails DGC.
& The central branch is state controlled but not coherent quantum
  backreaction.  No noncentral covariant operator Bell/Markov law,
  discarded-record infinite instrument, tight Lorentzian continuum, or
  nonlinear constraint closure is derived.\\
\addlinespace
Bell-causal operator-growth rigidity
& \textcolor{statusgreen}{\textsf{exact boundary theorem}}
& Every finite-dimensional self-adjoint nonsingular CPOBC transition
  system is commutative without a sign assumption, including
  complementary-spectrum blocks.  At the
  \(t=1/4\) scalar CSG point, exact four-event deformation theory
  independently forces all leading matrix tangents to commute; every
  \(2\times2\) upper-triangular extension between distinct strictly
  positive scalar CSG coupling sequences is split, at arbitrary
  first-difference order, while every coincident-character
  \(2\times2\) self-extension is a commuting coupling tangent.  An
  irreducible nonsingular \(2\times2\) branch with non-scalar
  \(R_2=Q_1^{-1}Q_2\) is excluded by exact next-stage minors.  On the
  remaining scalar branch, the two-chain condition excludes \(h=1\).
  Within the trace-balanced subchart \(d=a,\ x=-1\) of the normalized
  first-simple-spectrum chart at \(R_3\), exact factorization leaves three
  components, all excluded at the next antichain stage.
& Non-self-adjoint nonsingular and singular transitions, state-only
  covariance, higher-dimensional noncommutative nilpotent thickenings,
  the remainder of the normalized first-simple-spectrum chart,
  delayed-onset irreducible nontriangular representations with scalar
  \(R_2=h\mathbf1\), \(h\ne0,1\), and Jordan/later onset,
  and a projectively consistent infinite noncommutative CPOBC measure remain
  open.  The result is a self-adjoint nonsingular no-go theorem, not a
  construction or a no-go theorem for every operator-history formulation.\\
\addlinespace
Autonomous geometry and gravity
& \textcolor{statusred}{\textsf{open}}
& The intrinsic bridge supplies finite geometry and a generic
  metric--measure endpoint.  Three additional model chains supply a
  finite relational clock, quadratic BRST completion, and a
  regulator-independent balanced Newton limit.
& One autonomous background-free dynamics must still select adjacency,
  dimension, and the capacity-density law; protect the balanced multiplet
  under interacting Lorentzian RG; and produce nonlinear gravitational
  dynamics and the quantum metric measure.\\
\addlinespace
Vacuum, matter, and cosmology
& \textcolor{statusgray}{\textsf{conditional}}
& Fixed-sector normalization cancels a pure-volume term; standard tensor
  category, index, and effective-field-theory constructions identify precise
  targets.
& The central-volume bridge, gauge-category attractor, chiral-index selection,
  response-horizon coefficients, and dark equation of state are not derived.\\
\bottomrule
\end{longtable}
\end{landscape}

\subsection{Relation to the companion manuscripts}

The completed finite-model results are maintained as independent manuscripts.
Table~\ref{tab:companion-correspondence} fixes their current titles and
roles.  The labels PRL-1--PRL-5 are internal indices, not serial titles or
logical implications between adjacent entries.  The Boolean/metastability
Letter is denoted PRL-0 because it supplies the finite-system reference model
for the collision, memory, and central-charge studies.

\begin{longtable}{@{}P{0.10\textwidth}P{0.37\textwidth}P{0.45\textwidth}@{}}
\caption{Relation between the outstanding assumptions and the companion
manuscripts.  Each manuscript contains an independent
theorem--model--observable result; open compositions are stated separately
in Sec.~\ref{sec:updated-core-boundary}.}
\label{tab:companion-correspondence}\\
\toprule
Index & Canonical title & Unique role in the program\\
\midrule
\endfirsthead
\toprule
Index & Canonical title & Unique role in the program\\
\midrule
\endhead
\midrule
\multicolumn{3}{r}{\small Continued on the next page}\\
\endfoot
\bottomrule
\endlastfoot
PRA & \emph{Microscopic Derivation of Quantum Response Kinetics from Local
Unitary Dynamics}
& Derives the dephasing--random-SWAP kinetics, fixed-response hierarchy, and
finite-step error from fresh one-pass unitary collisions.\\
PRL-0 & \emph{Metastable Classical Dynamics of Noise-Selected Charge Sectors}
& Solves the exact Boolean algebra, full gap, weak-perturbation reduction, and
topology-dependent tolerance of that kinetic generator.\\
PRL-1 & \emph{Causal Order as a Dissipative Phase of Local Quantum Channels}
& Generates DAG order from cycle-defect extinction on a block cactus, without
supplying a Lyapunov function in advance.\\
PRL-2 & \emph{Dynamical Transition between Classical Records and Non-Abelian
Quantum Memory}
& Gives an exact late-write tail-algebra and quantum-capacity transition under
an absorbing noise field.\\
PRL-3 & \emph{Lorentzian Curvature from Sparse Quantum-Field Correlations}
& Infers an unknown de Sitter scale and blind invariants inside a prescribed
field--state--geometry family.\\
PRL-4 & \emph{Fixed-Band Modular-to-Boost Response in a Critical Quantum
Lattice}
& Replaces a shrinking fixed-mode window by a state-independent fixed physical
band and a finite-cutoff response certificate.\\
PRL-5 & \emph{Dynamical Formation and Controlled Decay of Charge
Superselection}
& Derives the complete charge center, sectorwise CPTP restrictions, exact
lumpability, and controlled breaking, while leaving charge-to-volume open.\\
BRIDGE-OV1 & \emph{An Intrinsic Order--Volume Bridge from Quantum Response to
Lorentzian Metric--Measure Geometry}
& Constructs the finite volume clock and generic Lorentzian metric--measure
limit, proves the continuum proper-time identity and smooth-limit error
budget, and turns nonmanifold behavior into explicit rejection certificates;
it consumes rather than dynamically generates the input order and
capacity-density law.\\
BRIDGE-1 & \emph{Order--Volume Reconstruction and Linearized Einstein Closure
in a Quantum Null-Wire Lattice}
& Closes the order--volume and multi-null Einstein interfaces in one affine,
decoupled-wire regulator, with an exact rank certificate and finite-error
ledger; it leaves generic and nonlinear completion open.\\
BRIDGE-NL1 & \emph{Finite Quantum-Information Remainders and Nonlinear Null
Focusing in a Quantum Record Lattice}
& Replaces the finite-exchange entropy first law and linearized optical law by
exact identities, proves curved rank-nine null inversion, and isolates the
remaining nonlinear Einstein content as a measurable Ward-defect condition.\\
BRIDGE-REF1 & \emph{Local Unitary Selection of a Capacity-Balanced Area
Reference}
& Selects the capacity-balanced reference in a factorized
capacity--complement token model, proves a size-independent entropy
contraction and bias test, and preserves the modular coefficient under
additive blocking.\\
BRIDGE-WB1 & \emph{Quantitative Ward--Bianchi Completion on Finite
Lorentzian Record Lattices}
& Converts local null, symmetry, matter-equation, and measurement residuals
into a certified full-tensor estimate, while exposing the still-open
constitutive identification with the geometric Euler derivative.\\
BRIDGE-SU1 & \emph{Same-Update Quantum Response and Einstein--Matter Ward
Closure on Finite Lattices}
& Derives the field and divergence constitutive budgets from one declared
local action-phase process through independent Ramsey and analytic pipelines;
one volume score fixes the missing constant metric mode.  This is a finite
model theorem, not dynamical selection of the action or a universal
continuum limit.\\
BRIDGE-UC1 & \emph{A Unified Operational Continuum Limit for Lorentzian
Geometry and Einstein Response}
& Uses one register family and one refinement parameter for the
order--volume and action-phase branches, proves a joint
metric--measure/Einstein--Euler limit, and rejects split geometry or split
action by a nonzero finite discrepancy.  Its explicit FLRW--TT realization
is model-specific and supplies rather than selects the gravitational action,
dimension, capacity unit, and Newton coefficient.\\
BRIDGE-IQ1 & \emph{From Relational Order--Volume Records to Induced Quantum
Gravity on the Same Links}
& Replaces the prescribed Einstein--Hilbert phase of BRIDGE-UC1 by a
strictly positive two-derivative TT kernel induced by linked scalar matter
on the constructor output, and gives a fixed-band large-\(N_s\) quantum
limit for that same kernel.  It compiles rather than selects dimension,
topology, clock, volume unit, and matter content, and it does not construct
the gauge-complete nonlinear theory or a regulator-independent Newton flow.\\
BRIDGE-AT1 & \emph{Relational Time and Autonomous Order--Volume Growth in a
Closed Quantum Network}
& Replaces the external update parameter by an exact finite stationary
history sector and proves that conditional rank-increasing growth, causal
incidence, capacity, and weak volume share one record refinement.  The
clock/program boundary and four-dimensional port language remain inputs.\\
BRIDGE-QC1 & \emph{Constraint-Complete Quantum Einstein Response from an
Induced Transverse--Traceless Coefficient}
& Extends the induced TT coefficient uniquely to the ten-component
Fierz--Pauli kernel and closes its finite lattice Ward, ghost, BRST, and
conserved-source Gaussian response at quadratic order.\\
BRIDGE-RG1 & \emph{Spectral Balance and a Regulator-Independent Induced
Newton Coupling}
& Proves the positive-spectrum obstruction, the necessity and sufficiency
of two spectral moments, and a common fixed-band limit across sharp
momentum, sharp proper-time, and smooth proper-time schemes; the signed
balance itself remains microscopic input.\\
BRIDGE-PH1 & \emph{A Positive-Hilbert Origin for Spectral Balance in
Induced Gravity}
& Derives the balanced weights and masses from exterior parity and an
additive internal number operator on a positive Fock space, proves
geometry-by-geometry persistence in any supplied finite graph ensemble and
exact finite-move response, and exposes both the nonlinear-mass defect and
the exact-pairing obstruction.\\
BRIDGE-SLW1 & \emph{Selective Ladder Ward Protection and Positive Dynamics
of Finite Quantum Geometries}
& Characterizes the exact shifted-CAR Ward manifold for an arbitrary finite
positive interacting physical operator, bounds approximate Ward and heat
defects, and realizes the surviving positive response as the unique
stationary law of a local unitary-dilatable finite-geometry channel.  The
internal commutant and finite geometry language remain supplied, and no
projective continuum measure or nonlinear gauge closure is claimed.\\
BRIDGE-PCG1 & \emph{Projectively Consistent Quantum-Response Growth of
Causal Histories}
& Extends the protected positive response to conditionally normalized
causal births, exact deletion-projective cylinder laws, a probability
measure on infinite past-finite labelled histories, and exact path-response
identities.  Natural labels and the growth language remain supplied; no
discrete general covariance, coherent quantum measure, or Lorentzian
continuum limit is claimed.\\
BRIDGE-CQG1 & \emph{Covariant Quantum-Response Growth and an Extendible
Quantum Measure on Causal Histories}
& Converts response-generated contact-star weights into a nonvanishing
classical sequential-growth law with exact deletion projectivity, discrete
general covariance, and Bell causality.  A separate one-arity \(r\ge2\)
complex response has bounded variation, a unique countably additive
extension, a normalized strongly positive rank-one decoherence functional,
and nonzero finite-volume quotient interference.  The growth language,
contact family, arity, and phase remain supplied; no local-unitary
non-rank-one history dynamics or Lorentzian continuum limit is claimed.\\
\addlinespace
BRIDGE-NQH1 & \emph{Extendible Non-Abelian Quantum History Measures from
Local Unitary Decisions}
& On a fixed ranked two-port causal strip, two fresh-ancilla decisions obey
exact operator Markov and Kraus identities, generate a noncommuting
transition algebra, and have a countably additive matrix-measure extension
whenever the response-angle tail is summable.  The induced decoherence
functional is strongly positive, has nonzero interference, and reaches the
full rank-four Hilbert--Schmidt bond space after two cells.  The port
language and foliation remain supplied, and this theorem does not solve
unlabeled operator Bell causality on arbitrary causal sets or a Lorentzian
continuum limit.\\
\addlinespace
BRIDGE-CNH1 & \emph{Covariant Non-Abelian Quantum Measures from
Spectator-Local Causal Holonomies}
& On arbitrary response-generated CSG histories, each birth rotates only a
fresh event qubit by an intrinsic precursor-dependent Pauli gate.  An
exponential seed-tail theorem gives an operator-norm infinite product, a
countably additive bounded-variation \(B(\mathcal H)\)-valued measure, and
a normalized strongly positive grade-2 decoherence functional.  Exact
relabeling equivariance, spectator invariance, and a late-cylinder lower
bound prove a noncommuting integrated range.  The scalar CSG law remains
external to the operator density; this is isometric Bell locality, not a
square-operator CPOBC solution, and no continuum or constraint theorem is
claimed.\\
\addlinespace
BRIDGE-OCG1 & \emph{Central Operator-Valued Causal-Set Growth and the
Boundary of Coherent Backreaction}
& Solves the commutative operator-valued CSG branch: central positive
couplings give exact finite POVMs, DGC, Bell cross-products, and an infinite
history POVM, but a direct-integral theorem proves coherence blindness of
the geometry marginal.  A positive noncentral normalized control violates
DGC, and an independent fresh-event instrument shows that non-Abelian
records can coexist without geometry backreaction.\\
\addlinespace
BRIDGE-BCGR1 & \emph{Self-Adjoint and All-Order Triangular Rigidity in
Bell-Causal Quantum Sequential Growth}
& Attacks the remaining nonsingular square-operator CPOBC equations rather
than adding a separate noncommuting density.  It proves commutativity of
every finite-dimensional self-adjoint nonsingular CPOBC system without a
sign assumption, an exact rank obstruction to every noncommuting leading
tangent at the four-event \(t=1/4\) scalar point, and an all-order
determinant proving splitting of every \(2\times2\) upper-triangular
extension between distinct strictly positive scalar CSG coupling
sequences and classifies every coincident-character \(2\times2\)
self-extension as a commuting dual-number coupling tangent.  It also
excludes irreducible nonsingular \(2\times2\) onset at non-scalar \(R_2\);
the two-chain condition removes the scalar value \(h=1\), and exact
certificates obstruct all three components in the trace-balanced
\(d=a,\ x=-1\) subchart of the normalized delayed first-simple-spectrum
chart at the following antichain stage.
It leaves non-self-adjoint nonsingular, zero/singular-coupling, state-only,
higher-dimensional noncommutative nilpotent, delayed-onset irreducible
nontriangular components with scalar
\(R_2=h\mathbf1\), \(h\ne0,1\), Jordan/later onset, and infinite branches
explicitly open; the remainder of the normalized first-simple-spectrum
chart is also open.\\
\end{longtable}

The original exact interface between two manuscripts is
\[
 \text{PRA collision circuit}
 \longrightarrow
 \text{PRL-0 dephasing--exchange generator}.
\]
BRIDGE-OV1 closes the mathematical composition from any certified
order--record input to a Lorentzian metric--measure output; it does not
identify the PRL-1 block-cactus dynamics with a universal capacity-density
law.  BRIDGE-1 is not a relabeling of PRL-1+PRL-3.  It introduces an affine
regulator whose Hodge order, Boolean volume, finite null frame, transverse
modular response, and lattice focusing are proved to compose internally.  It
reuses the fixed-band certificate of PRL-4 but does not identify the PRL-1
block-cactus generator with the PRL-3 de Sitter detector model.  The general
autonomous order--capacity generation, autonomous Einstein-phase selection,
and the BRIDGE-OV1+PRL-5 central-volume limit remain open.  BRIDGE-REF1 closes
reference selection only for factorized tokens and is not identified with
the nonlinear record--geometry generator of BRIDGE-NL1.  BRIDGE-WB1 closes
only the finite tensor-completion step.  BRIDGE-SU1 supplies the missing
same-update response--Euler theorem and constant-mode volume row in two
declared regulators; it does not derive their Einstein--matter phase action,
geometry, or universal continuum limit from the preceding record dynamics.
BRIDGE-UC1 then closes the compatibility of the order--volume and
same-update limits for one common FLRW--TT refinement family.  It proves that
the two continuum outputs refer to the same records and metric, while leaving
autonomous selection of that refinement family and action open.
BRIDGE-IQ1 closes a different missing interface: on one compiled anonymous
record lattice it removes the bare Einstein--Hilbert phase and derives a
positive TT kinetic coefficient from scalar matter carried by the same
links.  Its large-\(N_s\) limit quantizes only that fixed-volume infrared
sector.  It neither upgrades the compiled language to the generic
background-free target of BRIDGE-OV1 nor supplies the nonlinear tensor
completion demanded by P9.
BRIDGE-AT1 replaces its external scheduler by a finite relational history
and makes causal rank and information volume records of that same history.
BRIDGE-QC1 then closes the full quadratic gauge complex normalized by the
induced TT response.  BRIDGE-RG1 removes the cutoff only after a signed
two-moment balance is imposed.  BRIDGE-PH1 derives that balance from a
positive free graph multiplet.  BRIDGE-SLW1 supplies exact and approximate
selective protection for an arbitrary finite positive interacting physical
operator and a positive local dynamics for the induced weights on a supplied
finite geometry language.  BRIDGE-PCG1 then removes the independent
finite-size normalization defect by constructing exactly projective laws and
an infinite positive measure on a supplied naturally labelled causal-birth
language.  BRIDGE-CQG1 then removes the natural-label path-weight and
Bell-spectator defects for a response-generated contact CSG subfamily and
constructs an extendible rank-one coherent quantum measure with nonzero
finite-volume interference.  BRIDGE-NQH1 separately removes the rank-one
and local-realizability restrictions on a fixed two-port strip.
BRIDGE-CNH1 then combines arbitrary-down-set CSG covariance with intrinsic
fresh-event noncommuting holonomies and a countably additive operator-valued
measure.  It does not solve the square-operator Bell equations or make the
geometry probability law quantum and backreacting.  BRIDGE-OCG1 then closes
and classifies the central operator-valued branch: its infinite POVM is
exactly covariant, but is only a direct integral of scalar laws, while a
positive noncentral normalized control fails DGC.  It therefore rules out
centralization as the missing coherent-backreaction mechanism rather than
closing that mechanism.  BRIDGE-BCGR1 then proves that three natural
noncentral escape routes are rigid: every finite-dimensional self-adjoint
nonsingular system commutes without a sign assumption, noncommuting leading
tangents at the tested scalar point are obstructed, and the complete
distinct positive scalar CSG \(2\times2\) triangular family splits at
arbitrary first-difference order; the coincident-character branch consists
only of commuting coupling tangents, and irreducible nonsingular
\(2\times2\) onset at non-scalar \(R_2\) is impossible.  It does not
exclude a non-self-adjoint nonsingular, zero/singular-coupling, state-only,
higher-dimensional noncommutative nilpotent, or the remaining delayed-onset
irreducible nontriangular representations with scalar
\(R_2=h\mathbf1\), \(h\ne0,1\), including Jordan or later onset, and it
does not provide
an infinite extension.  These
manuscripts still do not derive the
protected internal commutant, maximal-birth/port language, contact motifs,
response multiplet, coherent arity or phase, a covariant non-Abelian history
law satisfying operator Markov/Bell constraints, a tight Lorentzian
continuum, continuum spin--statistics, or a nonlinear BV identity.  The
interfaces therefore compose at finite or quadratic
regulator level, but they are not evidence that the clock program, gauge law,
matter multiplet, growth language, or interacting continuum spectrum is
dynamically selected.

\subsection{Microscopic, kinetic, and continuum limits}

The methodological standard is close in spirit to the modern treatment of
Hilbert's sixth problem: microscopic dynamics, kinetic evolution, and
macroscopic equations are different limits, and each arrow needs its own
error estimate and order of limits.  Long-time derivations of Boltzmann
dynamics from hard spheres and subsequent fluid limits make this separation
explicit \cite{DengHaniMa2024,DengHaniMa2025}.  RQCP-QG has a different
microscopic object and does not inherit those theorems, but it should inherit
their discipline.  We therefore distinguish
\begin{equation}
 \text{local unitary process}
 \ \longrightarrow\
 \text{quantum response kinetics}
 \ \longrightarrow\
 \text{hydrodynamic or geometric field}.
 \label{eq:three-scale-standard}
\end{equation}
The first arrow asks for a microscopic dilation, a controlled memory
assumption, and a finite-time channel bound.  The second asks for the relevant
slow variables, a kinetic or spectral gap, a scaling regime, and a continuum
error.  A successful calculation at the middle level does not validate the
last arrow.  This is why the collision model below is a foundational advance
even though it does not generate a metric, and why the modular lattice result
is not called an Einstein derivation.

\subsection{Boundary of the present results}

The present results do not show that a single \emph{background-independent}
microscopic model yields
\[
\begin{gathered}
\text{quantum process}\to\text{generic curved Lorentzian spacetime}
\to\text{nonlinear Einstein dynamics}\\
\to\text{Standard Model and cosmology}.
\end{gathered}
\]
In particular:
\begin{enumerate}
\item the causal-order model generates order on a restricted graph family,
not the graph itself or a metric;
\item the inverse-geometry model infers curvature inside a specified
geometry--state class and does not establish spacetime emergence;
\item the intrinsic order--volume bridge constructs geometry only after a
robust order and positive additive record measure pass their gates; it does
not derive generic adjacency, the capacity-density conversion, or an
autonomous runtime from one closed dynamics;
\item the modular lattice model ends with a response certificate and does not
derive a gravitational field equation;
\item the affine null-wire bridge derives a linearized field equation only
after fixing the four-complex, response cone, decoupled-wire regulator, and
linear-response regime; it does not derive their generic selection or
nonlinear completion;
\item the nonlinear finite-regulator bridge derives exact information and
optical remainder identities before caustics, but not their required Ward
matching, an interacting modular limit, covariant stress/Bianchi evolution,
or a regulator-independent Einstein equation;
\item capacity-reference selection is proved only for factorized tokens.  It
does not derive the record--area map, correlated-record reference, or
observed Newton coefficient;
\item the finite Ward--Bianchi theorem controls tensor completion under a
constitutive identification; it does not derive that identification or
promote a holonomy identity into an exact finite diffeomorphism symmetry;
\item the same-update bridge derives the constitutive identification and
constant-mode score only for a prescribed local Einstein--matter phase
action; it does not derive that action, its four-dimensional regulator, or a
universal Newton normalization from the preceding record dynamics;
\item the unified-continuum bridge proves that order, volume, response, and
the Euler limit use one common FLRW--TT refinement, but it does not derive
the four-dimensional manifold phase, the Einstein--Hilbert action, the
capacity unit, \(G\), or a general quantum probability-law limit;
\item the same-link induced-gravity bridge removes the bare curvature action
only inside a compiled \(3+1\)-dimensional record language and a fixed-volume
TT sector.  It retains the program/clock, fair scheduler, volume unit,
microscopic spacing, scalar matter content, and local TT measure; its
large-\(N_s\) infrared limit is not a cutoff-free or gauge-complete quantum
gravity measure;
\item the relational-history model replaces external runtime only within a
prepared finite clock and program.  It does not select those inputs, the port
language, four dimensions, or the physical record-volume unit;
\item the constraint-complete model is gauge complete only at quadratic
order after the linear gauge law is declared; nonlinear BV/BRST closure and
a nonperturbative metric measure are absent;
\item the regulator-independent Newton theorem requires a signed
two-moment determinant balance.  The companion positive-Hilbert model
derives that balance for a free affine mass ladder on every member of a
supplied finite graph ensemble and rules out exact all-level pairing as the
sole protection.  The selective-Ward companion protects the same moments
for arbitrary finite positive interacting physical operators and samples
their positive weights on a supplied finite geometry language.  The
projective-growth companion further turns those weights into compatible
finite cylinder laws and an infinite positive history measure on a supplied
naturally labelled causal-birth language.  The covariant-growth companion
adds a response-generated contact CSG subfamily with exact discrete general
covariance and Bell causality and an extendible rank-one coherent measure.
The fixed-strip companion separately supplies a local-unitary
noncommutative rank-four measure with a countable extension.  The
arbitrary-causet holonomy companion combines the classical CSG density
with intrinsic fresh-event noncommuting gates to obtain a
relabeling-covariant countably additive operator measure whose integrated
range is provably non-Abelian.  The central operator-growth companion proves
that a commutative operator promotion has an exact infinite POVM but is
only a direct mixture of scalar laws; its positive noncentral normalized
control fails DGC.  These results do not derive the internal
commutant, maximal-birth/port language, contact motifs, response multiplet,
coherent arity, or phase.  In particular, the arbitrary-causet construction
retains an external classical CSG law and does not satisfy an operator
Markov sum or the square-operator TOBC/NTOBC/CPOBC conditions, while the
central branch is coherence blind; neither supplies a quantum-backreacting
geometry law.  The companions do not
prove Lorentzian continuum refinement,
spin--statistics or anomaly freedom, absolute Newton constant, or vacuum and
higher-curvature renormalization;
\item the charge-center model derives a superselection structure but not its
identification with metric volume;
\item the response-horizon, finite spectral triple, generation-number, and
response-superfluid sectors are research targets rather than consequences of
the foundational theorems.
\end{enumerate}
These exclusions are part of the theory's falsifiability.  They prevent an
application of a conditional synthesis theorem from being mistaken for a
microscopic derivation.

\section{Axiom-reduced formulation}\label{sec:v62-axiom-reduction}

Earlier versions collected a number of sufficient conditions under the label
``axioms.''  That presentation obscured which conditions were operational
definitions, which were model hypotheses, and which were open continuum
bridges.  We retain one foundational operational input and record the
remaining assumptions at the arrows where they are used.

\subsection{Foundational operational input}

\begin{assumption}[QI1: completely positive local processes]\label{ass:qi1}
For every finite operational context $C$, $\Omega_C$ is normal, multilinear, positive on completely positive local operations, and normalized on deterministic normal unital completely positive operations.  This is the minimal operational validity condition for the process calculus.
\end{assumption}

Complete positivity and normalization do not imply Markovianity, classical
records, order, geometry, or gravity.  Those are separate dynamical
questions.  In particular, the following interfaces are not axioms of the
operational calculus:
\begin{enumerate}[label=(R\arabic*)]
\item a local microscopic unitary process has a controlled Markovian response
limit;
\item the infrared information algebra is commutative rather than a noiseless
matrix algebra;
\item cycle-free directed response and a Lyapunov ranking are generated
without a preordered vertex set;
\item order and information number converge to an unknown Lorentzian
metric-measure space;
\item a local modular generator converges to boost energy in a fixed physical
band and supplies enough null directions for a tensor equation;
\item a local dynamics selects the capacity-balanced boundary reference and
derives the map from record number to boundary area;
\item an additive central charge converges to metric four-volume;
\item residual sectors dynamically realize gauge, chiral, and cosmological
phases.
\end{enumerate}
Items (R1) and (R2) are removed in the dephasing--exchange and memory model
classes described in Section~\ref{sec:closed-loops}.  Item (R3) is removed
for the stated block-cactus family, but not for generic graphs.  Restricted
finite-regulator instances of (R4)--(R5) are provided by the affine and
nonlinear bridge theorems; their background-independent and interacting
limits remain open.  The factorized capacity--complement model removes the
reference-selection part of (R6), but not the record--area map or a universal
coefficient.  Items (R7)--(R8) remain conditional or open.  The split
property used for
field-theoretic laboratories is an external AQFT theorem under its standard
nuclearity hypotheses, not an RQCP dynamical prediction.

\subsection{Microscopic status of Markovian response kinetics}

The companion foundational work \emph{Microscopic Derivation of Quantum
Response Kinetics from Local Unitary Dynamics} replaces the earlier
Markovian-response postulate by a constructive statement for a nontrivial
local model class.  On any finite connected graph,
each dephasing or exchange event is implemented by a unitary collision with a
fresh one-pass bath qubit.  A symmetric edge-colored collision round
\(\Phi_h\) then has, at every nonzero step size, the exact Heisenberg fixed
algebra
\[
 \operatorname{Fix}(\Phi_h)
 =\operatorname{span}\{P_0,\ldots,P_N\},
\]
where \(P_q\) projects onto total-charge sector \(q\).  The full Pauli
response hierarchy is a killed colored-interchange process, and for every
fixed Pauli order \(k\) its kinetic approximation obeys a system-size-uniform
bound
\[
 \left\|\Phi_h^n-e^{t\mathcal L}\right\|_{(k),p}
 \leq \frac{t h^2}{3}B_k^3e^{hB_k},
 \qquad
 B_k\leq \frac{2\chi k\gamma}{d},
 \qquad t=nh,\qquad p\in\{1,\infty\}.
\]
Thus a local unitary circuit, its Markov generator, its Boolean charge
records, and its operational response kernel are derived within one model
rather than postulated independently.  This construction removes a
phenomenological Markov-channel input for that class, but not in general:
product bath
input, one-pass cells, an external collision clock, and calibrated
step-dependent angles remain assumptions.  In particular it does not yet
derive a stationary reservoir spectrum, correlated-bath kinetics, or a
self-generated time variable.

This result establishes existence of a nontrivial microscopic realization of
(R1) and of one Boolean record mechanism.  It does not turn every item in the
former S1--S10 list into derived structure.  In particular, central volume,
matter categories, autonomous clocks, and geometric locality remain distinct
problems.

\subsection{From CP fixed ranges to Boolean facts}

Let $\Gamma$ be a normal UCP channel and suppose the Cesaro mean
\[
E_N(A)=\frac1N\sum_{k=0}^{N-1}\Gamma^k(A)
\]
converges ultraweakly to a normal completely positive projection $E_\infty$.  Its range
\[
\A_{\rm IR}=E_\infty(\A)
\]
is the infrared accessible algebra.  Choi--Effros theory
\cite{ChoiEffros1976,ChoiEffros1977} gives a product
\[
X\circ Y=E_\infty(XY)
\]
under which $\A_{\rm IR}$ is a $C^*$-algebra.  However, this alone does not imply classicality: $\A_{\rm IR}$ may still be noncommutative.

The missing dynamical condition is asymptotic abelianness on the invariant
range:
\[
\lim_{n\to\infty}\|[\Gamma^n(X),\Gamma^n(Y)]\|_\omega=0,
\qquad X,Y\in E_\infty(\A).
\]
Because \(E_\infty(\A)\subset\Fix(\Gamma)\), this condition says precisely
that the retained observables commute in the faithful GNS seminorm; it is
stronger than decay of equal-time commutators on a microscopic local
subalgebra.  Under it, the Choi--Effros product is commutative and any
faithful representation has a commutative von Neumann closure.  Its
projection lattice is therefore Boolean.

\begin{theorem}[IR Boolean facts from CP coarse-graining]\label{thm:v62-ir-bool}
Let $\Gamma$ be a normal UCP channel preserving a faithful normal state
$\omega$.  Suppose the normal mean-ergodic projection $E_\infty$ exists and
for every \(X,Y\in E_\infty(\A)\),
\[
\lim_{n\to\infty}
\|[\Gamma^n(X),\Gamma^n(Y)]\|_\omega=0.
\]
Then the Choi--Effros algebra \(E_\infty(\A)\) is commutative.  Consequently,
for any faithful normal representation \(\pi\),
\[
\Proj\bigl(\pi(E_\infty(\A))''\bigr)
\]
is a complete Boolean algebra of infrared facts.
\end{theorem}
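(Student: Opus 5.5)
The plan is to reduce the asymptotic condition to an exact statement about the range and then pass through the faithful GNS seminorm. First I record that \(E_\infty(\A)\subset\Fix(\Gamma)\). Since \(E_N\to E_\infty\) ultraweakly and \(\Gamma\) is normal, \(\Gamma E_N - E_N = \frac1N(\Gamma^N-\mathrm{id})\) tends to zero, so \(\Gamma\circ E_\infty=E_\infty\). Hence \(\Gamma^n(X)=X\) for every \(X\in E_\infty(\A)\). The hypothesis then collapses to \(\|[X,Y]\|_\omega=0\), where \(\|Z\|_\omega=\omega(Z^*Z)^{1/2}\). Faithfulness of \(\omega\) gives \(Z^*Z=0\), so \(XY=YX\) as operators in \(\A\) for all \(X,Y\) in the range.

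Next I transfer this to the Choi--Effros product \(X\circ Y=E_\infty(XY)\). The identity \(X\circ Y-Y\circ X=E_\infty([X,Y])=0\) is immediate. Because the Choi--Effros algebra is a \(C^*\)-algebra with this product, the usual involution and the norm inherited from \(\A\), it is a commutative \(C^*\)-algebra.

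For the von Neumann statement, I interpret \(\pi\) as a faithful normal representation of the Choi--Effros algebra. Such a representation exists because \(E_\infty(\A)\) is ultraweakly closed: it is the fixed-point set of the normal projection \(E_\infty\), and \(E_\infty\) is a normal conditional-expectation-type map, so \(E_\infty(\A)\) is a \(W^*\)-algebra under \(\circ\). Then \(\pi(E_\infty(\A))\) is a commutative \(*\)-algebra, and its bicommutant is contained in its commutant. Since the double commutant is the ultraweak closure, it is again abelian. The projections of an abelian von Neumann algebra form a complete Boolean lattice. Completeness follows because suprema and infima of arbitrary families of projections exist in any von Neumann algebra. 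Distributivity and orthocomplementation as a Boolean complement follow because commuting projections satisfy \(p\wedge q=pq\), \(p\vee q=p+q-pq\), and \(\neg p=1-p\).

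The main obstacle is the second step, specifically the compatibility between the original product and the Choi--Effros product. A subtle point is that the range need not be a subalgebra of \(\A\). I only need that the commutator in \(\A\) vanishes, so this is harmless. I would still verify that \(\pi\) is faithful normal with respect to the \(\circ\)-structure and not the ambient one; the bicommutant argument works in either reading. A second check is that the hypothesis was stated with \(\Gamma^n\), which is trivial on the range. I therefore expect the theorem to be essentially an exact consequence of invariance, with faithfulness of \(\omega\) doing the real work.
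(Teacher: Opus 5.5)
Your proposal is correct and follows the paper's proof essentially step for step. Invariance \(\Gamma E_\infty=E_\infty\) collapses the hypothesis to \(\|[X,Y]\|_\omega=0\); faithfulness of \(\omega\) gives \([X,Y]=0\), hence \(X\circ Y=Y\circ X\); and the abelian bicommutant has a complete Boolean projection lattice. Your derivation of invariance from \(\Gamma E_N-E_N=\frac1N(\Gamma^N-\mathrm{id})\) simply spells out what the paper calls the mean-ergodic identities.

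The obstacle you flag is in fact absent under these hypotheses. Here \(E_\infty(\A)=\Fix(\Gamma)\), and Kadison--Schwarz together with \(\omega\circ\Gamma=\omega\) and faithfulness places every fixed point in the multiplicative domain. So the range is a von Neumann subalgebra, and \(\circ\) coincides with the ambient product.
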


\begin{proof}
The mean-ergodic identities give
\(\Gamma E_\infty=E_\infty\Gamma=E_\infty\).  Hence
\(\Gamma^n(X)=X\) and \(\Gamma^n(Y)=Y\) for \(X,Y\) in the range.  The
assumed limit is therefore \(\|[X,Y]\|_\omega=0\).  Faithfulness of
\(\omega\) implies \([X,Y]=0\).  Thus
\[
 X\circ Y=E_\infty(XY)=E_\infty(YX)=Y\circ X .
\]
The faithful von Neumann closure is commutative, and its projections form a
complete Boolean algebra.
\end{proof}

\subsection{Central superselection and vacuum-response sequestering}

\textbf{Status: \StatusModel\ for charge formation; \StatusConditional\ for
vacuum response.}
The local number-conserving model in Section~\ref{sec:loop-center} derives a
complete additive-charge center rather than attaching a global register:
\[
 \ker\mathcal L_0^*=\operatorname{span}\{P_0,\ldots,P_N\}.
\]
Every fixed-charge block is dynamically invariant and the restriction of the
semigroup is CPTP.  If a separately specified continuum map identifies the
charge \(v\) with metric four-volume and one works with a normalized
fixed-\(v\) functional, a common factor \(e^{-cv}\) cancels:
\[
 \frac{Z_v[J;c]}{Z_v[0;c]}
 =\frac{e^{-cv}Z_v[J;0]}{e^{-cv}Z_v[0;0]}
 =\frac{Z_v[J;0]}{Z_v[0;0]}.
\]
The leading heat-kernel \(a_0\) term of a matter loop has this pure-volume
form.  Curvature, anomaly, boundary, nonlocal, and state-dependent terms do
not.  This is a selective normalized-response identity, not yet a
gravitational sequestering theorem.  No present result derives the map
\(v\to\int\sqrt{-g}\), an independent central variable \(\lambda_R\), or a
small cosmological integration constant.

\subsection{Entanglement free-energy attractor for the response horizon}

\textbf{Status: \StatusOpen.}
The earlier long-form draft considered the phenomenological ansatz
\[
\rho_\Lambda^{\rm ren}\simeq \xi M_{\rm Pl}^2L_R^{-2}.
\]
A candidate thermodynamic selection principle defines, for a causal diamond
of radius \(L\),
\[
F_R(L)=E_{\rm mod}(L)-T_HS_{\rm gen}(L),
\qquad
T_H=\frac{\hbar H}{2\pi}.
\]
If one assumes the late-time expansions
\[
E_{\rm mod}(L)=\alpha M_{\rm Pl}^2H^2L^3+\cdots,
\]
\[
T_HS_{\rm gen}(L)=\beta M_{\rm Pl}^2HL^2+\cdots.
\]
then
\[
F_R(L)=\alpha M_{\rm Pl}^2H^2L^3-
\beta M_{\rm Pl}^2HL^2+\cdots.
\]
The nonzero stationary point is
\[
\frac{\partial F_R}{\partial L}=0
\quad\Longrightarrow\quad
L_R=\frac{2\beta}{3\alpha}H^{-1}.
\]
The condition \(3\alpha=2\beta\) was previously imposed by causal-diamond
first-law matching at \(L=H^{-1}\).  It is not an independent microscopic
calculation of both coefficients and therefore cannot be used to derive
\(L_R=H^{-1}\) without a circularity concern.  A closed project must compute
\(\alpha\) and \(\beta\) in the same model before locating the extremum, prove
\(F_R''(L_R)>0\), and determine the relaxation rate and matter corrections.
Until then neither \(L_R=H^{-1}\) nor
\(\rho_\Lambda^{\rm ren}\sim M_{\rm Pl}^2H^2\) is a result of the framework.

\subsection{Noncommutative-geometric matter from influence algebras}

\textbf{Status: \StatusOpen.}
The residual noncommutative part of an infrared influence algebra is a
candidate input to a finite real spectral triple.  Let
\[
\I_{\rm IR}^{\rm rel}/Z(\I_{\rm IR})
\]
be the residual noncentral influence algebra after the Boolean center has been separated.  Assume its finite semisimple real form is
\[
\A_F=\bigoplus_iM_{n_i}(\mathbb K_i),
\qquad
\mathbb K_i\in\{\mathbb R,\mathbb C,\mathbb H\}.
\]
A finite real spectral triple is then
\[
(\A_F,\Hh_F,D_F,J_F,\gamma_F).
\]
The product triple
\[
\A=C^\infty(M)\otimes\A_F,
\qquad
D=D_M\otimes1+\gamma_5\otimes D_F
\]
has inner fluctuations
\[
D\mapsto D_A=D+A+JAJ^{-1},
\]
which generate gauge bosons, scalar fields, and fermion mass/Yukawa data.
Under a particular set of finite spectral-triple representation and
consistency conditions---including a real structure, first-order condition,
orientability, Poincare duality, unimodularity, anomaly cancellation, and a
minimal faithful charge representation---the familiar minimal
Standard-Model-like algebra is \cite{ConnesMarcolli,ChamseddineConnes}
\[
\A_F\simeq \mathbb C\oplus\mathbb H\oplus M_3(\mathbb C),
\]
with gauge group
\[
G_F\simeq S(U(2)\times U(3))
\simeq \frac{SU(3)\times SU(2)\times U(1)}{\mathbb Z_6}.
\]
The generation number is the index pairing
\[
N_{\rm gen}=\Index(D_A,p)=\langle \operatorname{ch}(D_A),[p]\rangle.
\]
These are consequences of imposing the spectral-triple axioms on a chosen
residual algebra.  RQCP-QG has not proved that its influence sectors form this
algebra, that the relevant tensor category reconstructs the Standard Model
gauge group, or that dynamics selects an index equal to three.  The first
closed matter project is therefore the reconstruction of one non-Abelian
gauge group from an explicit influence-sector category; the chiral-index
problem is separate.

\section{Status of three conditional mechanisms}\label{sec:v62-referee-hardening}

The following mechanisms have different logical roles.  Block-primitive
mixing is realized exactly in the charge-sector models, although not proved
for generic channels.  The
free-energy coefficient relation remains a matching condition rather than an
independent microscopic calculation.  The finite-real-spectral-triple route
is a conditional classification problem.  Keeping these roles separate is
more informative than presenting them as parallel derivations.

\subsection{Physical origin of asymptotic abelianness}

The condition
\[
\lim_{n\to\infty}\|[\Gamma^n(A),\Gamma^n(B)]\|_\omega=0
\]
characterizes a particular universality class of coarse-graining channels:
local, information-scrambling channels whose microscopic off-diagonal
information is discarded by the infrared observer.  In a causal graph or
lattice regularization, Lieb--Robinson locality gives an approximate light
cone for commutators \cite{LiebRobinson1972}, while ETH-type mixing
\cite{Deutsch1991,Srednicki1994} can suppress off-diagonal matrix
elements of coarse observables inside high-dimensional energy or response
shells.  The channel $\Gamma$ then acts like a conditional expectation onto
long-lived macroscopic labels plus a primitive mixing channel inside each
label sector.

A useful sufficient condition is the following block-primitive form.  Suppose there are mutually orthogonal projections $P_a$ with $\sum_aP_a=1$, $\Gamma(P_a)=P_a$, and block states $\omega_a$ such that
\[
\|P_a\Gamma^n(X)P_b\|_\omega\le C_Xe^{-\gamma n}\quad (a\ne b),
\]
and
\[
\|P_a\Gamma^n(X)P_a-\omega_a(P_aXP_a)P_a\|_\omega\le C_Xe^{-\gamma n}.
\]
Then
\[
\Gamma^n(X)\longrightarrow \sum_a\omega_a(P_aXP_a)P_a
\]
in the GNS seminorm.  The limiting range is the abelian pointer algebra $\oplus_a\mathbb C P_a$, and commutators decay exponentially:
\[
\|[\Gamma^n(A),\Gamma^n(B)]\|_\omega\le C_{A,B}e^{-\gamma n}.
\]
Thus asymptotic abelianness is not an independent metaphysical postulate; it follows in this channel class from local propagation, internal scrambling, and loss of microscopic phase information.

\begin{theorem}[Block-primitive coarse-graining implies Boolean IR records]\label{thm:block-primitive-abelian}
Let $\Gamma$ be a normal UCP channel preserving a faithful normal state
$\omega$.  Assume the block-primitive estimates above hold for a finite
family of invariant projections $\{P_a\}$.  Then the fixed observables of
$\Gamma$ are the abelian algebra generated by the \(P_a\).  In particular,
the infrared record projections form a Boolean algebra, and the stated
commutator defect decays exponentially.
\end{theorem}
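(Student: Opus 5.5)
The plan is to first pin down the asymptotic form of \(\Gamma^n(X)\), then read off the fixed algebra, and finally obtain the commutator bound from the same estimate.

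\textbf{Step 1 (asymptotic form).} For each observable \(X\) I decompose \(\Gamma^n(X)=\sum_{a,b}P_a\Gamma^n(X)P_b\), which is valid because \(\sum_aP_a=1\). The off-diagonal blocks are controlled directly by the first block-primitive estimate. The diagonal blocks are controlled by the second estimate. Since the family \(\{P_a\}\) is finite, the triangle inequality gives
\[
 \Bigl\|\Gamma^n(X)-\sum_a\omega_a(P_aXP_a)P_a\Bigr\|_\omega\le m^2C_Xe^{-\gamma n},
\]
where \(m\) is the number of blocks. Write \(E(X)=\sum_a\omega_a(P_aXP_a)P_a\).

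\textbf{Step 2 (fixed algebra).} First, each \(P_a\) is fixed by hypothesis, so \(\operatorname{span}\{P_a\}\subset\Fix(\Gamma)\). Conversely, take \(X\in\Fix(\Gamma)\). Then \(\Gamma^n(X)=X\) for all \(n\), so Step 1 gives \(\|X-E(X)\|_\omega\le m^2C_Xe^{-\gamma n}\) for every \(n\), and hence \(\|X-E(X)\|_\omega=0\). Because \(\omega\) is faithful, the GNS seminorm is a norm, so \(X=E(X)\in\operatorname{span}\{P_a\}\). The span of finitely many mutually orthogonal projections is the abelian algebra \(\oplus_a\mathbb C P_a\). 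Its projections are the sums \(\sum_{a\in S}P_a\), forming a Boolean algebra isomorphic to the power set of the index set. This also identifies the Choi--Effros range \(E_\infty(\A)\) whenever the mean-ergodic projection exists. Indeed, Step 1 makes the Ces\`aro means converge to \(E\), and the range of \(E\) lies in \(\operatorname{span}\{P_a\}\). Theorem~\ref{thm:v62-ir-bool} then applies, since commutators of fixed elements vanish.

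\textbf{Step 3 (commutator decay).} Set \(D_n(X)=\Gamma^n(X)-E(X)\). Then
\[
 [\Gamma^n(A),\Gamma^n(B)]=[D_n(A),\Gamma^n(B)]+[E(A),D_n(B)],
\]
because \([E(A),E(B)]=0\). Each term is a product of an operator-norm-bounded factor and a factor that is small in the GNS seminorm. The bounded factors satisfy \(\|\Gamma^n(B)\|\le\|B\|\) (UCP contractivity) and \(\|E(A)\|\le\max_a|\omega_a(P_aAP_a)|\le\|A\|\).

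\textbf{Main obstacle.} The difficulty is that \(\|\cdot\|_\omega\) is only right-multiplicatively controlled: \(\|YZ\|_\omega\le\|Y\|\,\|Z\|_\omega\) holds, but \(\|ZY\|_\omega\) is not bounded by \(\|Z\|_\omega\|Y\|\) in general. The terms \(D_n(A)\Gamma^n(B)\) and \(E(A)D_n(B)\) carry the small factor on the wrong side. My first attempt is to handle them using the adjoint identity \(\|Z^*\|_\omega\) together with \(D_n(X)^*=D_n(X^*)\), since \(\Gamma\) and \(E\) are \(*\)-preserving. This moves the small factor to the correct side at the cost of applying the block estimate to \(X^*\), with constant \(C_{X^*}\). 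If that still fails for the \(\omega\)-seminorm convention in use, I would read \(\|\cdot\|_\omega\) as the symmetric seminorm \(\omega(Z^*Z+ZZ^*)^{1/2}\), under which two-sided multiplication by bounded operators is controlled. Finally, I would collect the constants into \(C_{A,B}=2m^2\bigl(\|B\|\max(C_A,C_{A^*})+\|A\|\max(C_B,C_{B^*})\bigr)\).
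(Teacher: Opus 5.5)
Your Steps 1 and 2 follow the paper's route: block asymptotics, then an abelian limit. They are correct, and your faithfulness argument for $\Fix(\Gamma)=\operatorname{span}\{P_a\}$ is more explicit than the paper's sketch.

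\textbf{The gap is in Step 3.} You correctly see that $\|\cdot\|_\omega$ controls only left multiplication by bounded operators, but neither of your repairs works.

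\emph{The adjoint trick.} For non-tracial $\omega$ the GNS seminorm is not adjoint invariant. Passing to adjoints therefore bounds $\|(D_n(A)\Gamma^n(B))^*\|_\omega$, not $\|D_n(A)\Gamma^n(B)\|_\omega$. Moreover, smallness of both $\|Z\|_\omega$ and $\|Z^*\|_\omega$ does not control $\|ZY\|_\omega$. On $M_2(\mathbb C)$ with faithful density $\mathrm{diag}(p,1-p)$, take $Z=E_{11}$ and $Y=E_{12}$. Then $\|Z\|_\omega^2=\|Z^*\|_\omega^2=p$, but $\|ZY\|_\omega^2=\|E_{12}\|_\omega^2=1-p$.

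\emph{The symmetric seminorm.} The same example kills $\omega(Z^*Z+ZZ^*)^{1/2}$: both $ZE_{12}=E_{12}$ and $E_{21}Z=E_{21}$ have symmetric seminorm $1$, while $Z$ has squared symmetric seminorm $2p$. So this seminorm controls multiplication on neither side. Changing norms would in any case change the statement being proved.

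\textbf{The missing idea: $\omega$ is block diagonal.} By Cauchy--Schwarz, $|\omega(W)|\le\|W\|_\omega$. Combined with $\omega\circ\Gamma=\omega$ and your Step 1, this gives $\omega(X)=\lim_n\omega(\Gamma^n(X))=\omega(E(X))$. Applied to $P_aXP_b$, it yields $\omega(P_aXP_b)=0$ for $a\ne b$, since $E(P_aXP_b)=0$. Hence every $P_a$ lies in the centralizer of $\omega$. Consequently, right multiplication by $E(B)=\sum_a\beta_aP_a$ is GNS-bounded:
\[
\|ZE(B)\|_\omega^2=\sum_a|\beta_a|^2\,\omega(P_aZ^*ZP_a)\le\|B\|^2\|Z\|_\omega^2 .
\]

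\textbf{Completing the estimate.} Expand the bad term one step further:
\[
D_n(A)\Gamma^n(B)=D_n(A)D_n(B)+D_n(A)E(B).
\]
The first piece is left multiplication of $D_n(B)$ by an operator of norm at most $2\|A\|$. The second is right multiplication by an element of $\operatorname{span}\{P_a\}$. Treat $D_n(B)E(A)$ inside $[E(A),D_n(B)]$ the same way. Every term is then an operator norm times $\|D_n(A)\|_\omega$ or $\|D_n(B)\|_\omega$, so
\[
\|[\Gamma^n(A),\Gamma^n(B)]\|_\omega\le m^2\bigl(2\|B\|C_A+4\|A\|C_B\bigr)e^{-\gamma n}.
\]
No estimate for $A^*$ or $B^*$ is needed.

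The paper's one-line appeal to the triangle inequality passes over exactly this point. In finite dimensions, $\|Z\|\le\lambda_{\min}(\rho)^{-1/2}\|Z\|_\omega$ would also close the argument, but only with a state-dependent constant.
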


\begin{proof}
The off-block estimate kills coherences between different sectors.  The primitive estimate inside each sector sends any local observable to a scalar multiple of $P_a$.  Summing over $a$ gives the stated GNS limit.  Scalar multiples of mutually orthogonal projections commute, hence the represented asymptotic range is abelian.  The commutator estimate follows by applying the triangle inequality to the difference between $\Gamma^n(A),\Gamma^n(B)$ and their abelian limits.
\end{proof}

Integrable systems, many-body-localized sectors, strictly topological nonabelian memories, or channels harboring nonabelian noiseless subsystems generally bypass these block-primitive estimates, predicting macroscopic persistent quantum records or separated topological properties.

\subsection{\texorpdfstring{Why $2\beta/3\alpha=1$ remains a matching condition}{Why 2 beta / 3 alpha = 1 remains a matching condition}}

The response-horizon free energy was written as
\[
F_R(L)=\alpha M_{\rm Pl}^2H^2L^3-\beta M_{\rm Pl}^2HL^2+\cdots .
\]
Earlier versions treated $\alpha$ and $\beta$ as two normalizations of a
common causal-diamond first law.  If that identification is imposed at the
de Sitter scale, the local matching condition is
\[
\left.\frac{\diff E_{\rm mod}}{\diff L}\right|_{L=H^{-1}}
=
\left.\frac{\diff (T_HS_{\rm gen})}{\diff L}\right|_{L=H^{-1}}.
\]
Since
\[
\frac{\diff E_{\rm mod}}{\diff L}=3\alpha M_{\rm Pl}^2H^2L^2,
\qquad
\frac{\diff(T_HS_{\rm gen})}{\diff L}=2\beta M_{\rm Pl}^2HL,
\]
the first law at the de Sitter response horizon $L=H^{-1}$ gives
\[
3\alpha=2\beta,
\qquad
\frac{2\beta}{3\alpha}=1.
\]
This calculation is internally consistent, but it does not independently
predict $L_R=H^{-1}$: the scale at which the first-law matching is imposed is
also the scale subsequently recovered from the extremum.  The relation
$3\alpha=2\beta$ is therefore a diagnostic condition on a candidate
microscopic model, not a present theorem of RQCP-QG.  A closed derivation must
compute $\alpha$ and $\beta$ from the same process without using the desired
horizon location, establish $F_R''(L_R)>0$, and derive the relaxation rate.
If that calculation gives $2\beta/3\alpha\ne1$, the stationary radius is
$L_R=(2\beta/3\alpha)H^{-1}$; if it gives equality, the equality is then an
output rather than a calibration.

\subsection{What is, and is not, proved about the NCG matter algebra}

The finite spectral-triple completion is a selection theorem.  An arbitrary
residual influence algebra $\I_{\rm rel}$ is not guaranteed to flow to
\[
\mathbb C\oplus\mathbb H\oplus M_3(\mathbb C).
\]
The claim instead takes a conditional form.  If there is a stable local
matter sector, if its residual noncommutative influence algebra has a finite
semisimple real form, and if that form satisfies the finite real
spectral-triple consistency conditions---reality, the first-order condition,
orientability, Poincare duality, unimodularity, anomaly cancellation, and a
minimal faithful charge representation---then the admissible algebra is
constrained to the Connes--Chamseddine Standard-Model-like class.  RQCP-QG
supplies a possible operator-algebraic origin for the finite internal
algebra; spectral-triple classification and the spectral action supply the
conditional matter-selection constraint.

\begin{remark}[Status of the $\I_{\rm rel}\to\A_F$ step]\label{rem:irel-af-status}
This analysis provides a conditional result rather than a universal basin-of-attraction: if the infrared residual matter algebra is finite, semisimple, anomaly-free, and compatible with the axioms of a real finite spectral triple, then its minimal Standard-Model-like realization is $\mathbb C\oplus\mathbb H\oplus M_3(\mathbb C)$ with gauge group $S(U(2)\times U(3))$.  Proving that a broad CSR-TPN universality class dynamically selects this algebra is a separate mathematical problem.
\end{remark}

\section{Exact finite-model results}
\label{sec:closed-loops}

The abstract influence-algebra construction identifies the kind of
information that may survive coarse graining, but it does not by itself
provide a microscopic dilation, a mixing rate, a phase transition, or a
geometric estimator.  This section records independent models that supply
these missing pieces.  They are presented together because they
constrain the architecture of RQCP-QG; they are not combined into one
generator.  In particular, a solid implication below is always local to the
stated model.

\subsection{Local unitaries to quantum-response kinetics}
\label{sec:loop-unitary-kinetics}

\paragraph{Assumption removed.}
The original process description began with a UCP coarse-graining channel.
That is too high a starting point for a microscopic theory: complete
positivity may be postulated while the Markov property and the kinetic rates
remain unexplained.  A collision circuit supplies an exact dilation of
the channel used in the Boolean sector model
\cite{XuResponseKinetics2026}.  The continuum generator belongs to the
standard Gorini--Kossakowski--Sudarshan--Lindblad class
\cite{Gorini1976,Lindblad1976}; the point here is that its rates and its
finite-step error follow from an explicit one-pass quantum network, rather
than being postulated at the semigroup level \cite{Chiribella2009}.

For an involution \(U=U^\dagger=U^{-1}\), let a system interact once with a
fresh qubit bath cell through
\begin{equation}
 V_\theta=\exp(-i\theta\,U\otimes X_b),
 \qquad
 \rho_b=\tfrac12\mathbf1_b .
 \label{eq:loop-collision-unitary}
\end{equation}
Tracing out the cell gives
\begin{equation}
 \Phi_{\theta,U}(X)
 =\cos^2\theta\,X+\sin^2\theta\,UXU
 =\exp\!\left[h r(\operatorname{Ad}_U-\operatorname{id})\right]X
 \label{eq:loop-one-collision}
\end{equation}
when \(\sin^2\theta=(1-e^{-2rh})/2\).  The equality is finite-step and
contains no weak-coupling expansion.  Taking \(U=Z_i\) produces local
dephasing; taking \(U=S_{ij}\), the SWAP on an edge, produces symmetric
exchange.  An edge coloring of a bounded-degree graph arranges the exchange
collisions into a constant-depth symmetric round.  Every collision uses a
new cell, so the global reduced circuit is exactly CPTP at every step.

Let \(G=(V,E)\) be connected and \(d\)-regular, and write
\begin{align}
 \mathcal D&=\kappa\sum_i(\operatorname{Ad}_{Z_i}-\operatorname{id}),
 \nonumber\\
 \mathcal A&=\frac{\gamma}{d}\sum_{\{i,j\}\in E}
   (\operatorname{Ad}_{S_{ij}}-\operatorname{id}),
 \qquad
 \mathcal L_{\rm kin}=\mathcal D+\mathcal A .
 \label{eq:loop-kinetic-generator}
\end{align}
For a Pauli string \(P_{\boldsymbol\sigma}\), let \(k\) be its number of
nonidentity labels and \(q\) its number of \(X\) or \(Y\) labels.  The
Heisenberg generator closes exactly:
\begin{equation}
 \mathcal L_{\rm kin}P_{\boldsymbol\sigma}
 =-2\kappa q\,P_{\boldsymbol\sigma}
 +\frac{\gamma}{d}\sum_{\{i,j\}\in E}
  \left(P_{s_{ij}\boldsymbol\sigma}-P_{\boldsymbol\sigma}\right).
 \label{eq:loop-pauli-hierarchy}
\end{equation}
The second term is a colored interchange process and the first kills every
transverse label.  Thus the hierarchy is not a low-moment closure: all Pauli
orders are invariant sectors and together span the full operator algebra.

\begin{theorem}[Finite-step charge algebra and fixed-order kinetic limit]
\label{thm:loop-unitary-kinetics}
Let \(\Phi_h\) be the symmetric collision round constructed from
Eqs.~\eqref{eq:loop-collision-unitary}--\eqref{eq:loop-kinetic-generator}.
For every \(h,\kappa,\gamma>0\),
\begin{equation}
 \operatorname{Fix}(\Phi_h)
 =\operatorname{span}\{P_0,\ldots,P_N\},
 \qquad
 P_m=\sum_{|x|=m}|x\rangle\langle x|.
 \label{eq:loop-finite-fixed}
\end{equation}
On an order-\(k\) Pauli orbit and for coefficient
\(\ell^p\) norm, \(p\in\{1,\infty\}\), \(t=nh\),
\begin{align}
 \|\Phi_h^n-e^{t\mathcal L_{\rm kin}}\|_{(k),p}
 &\le \varepsilon_k(t,h),\nonumber\\
 \varepsilon_k(t,h)
 &=\frac{t h^2}{3}B_k^3e^{hB_k},
 \qquad
 B_k\le\frac{2\chi k\gamma}{d},
 \label{eq:loop-kinetic-bound}
\end{align}
where \(\chi\) is the edge-chromatic number of the chosen schedule.  At
fixed \(k\) the bound is independent of \(N\).
\end{theorem}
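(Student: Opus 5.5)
The proof has two independent parts: an exact identification of the fixed algebra, and a Strang-splitting error estimate on a finite-dimensional invariant Pauli sector.

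\textbf{Structure of the round.} Write the symmetric round as a palindromic product
\[
\Phi_h=e^{h\mathcal D/2}\,e^{h\mathcal A_1/2}\cdots e^{h\mathcal A_{\chi-1}/2}\,e^{h\mathcal A_\chi}\,e^{h\mathcal A_{\chi-1}/2}\cdots e^{h\mathcal A_1/2}\,e^{h\mathcal D/2}.
\]
Here \(\mathcal A_c\) collects the edges of colour \(c\). Each factor is exactly a product of the one-collision channels of Eq.~\eqref{eq:loop-one-collision}, with the angles calibrated to the half or full step. The collisions inside one colour class act on disjoint pairs, so they commute, and the dephasing factors commute with each other. Every factor is unital and of random-unitary form \(X\mapsto\sum_j p_jU_jXU_j^\dagger\), with strictly positive weights \(p_j\) because \(h>0\).

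\textbf{Fixed algebra.} First, each \(P_m\) is fixed: it commutes with every \(Z_i\) and every \(S_{ij}\), hence with every factor. For the converse, take the Hilbert--Schmidt inner product, for which each factor is self-adjoint and a contraction. If \(\Phi_h(X)=X\), equality in the contraction chain forces every factor to fix \(X\). For random-unitary channels with positive weights this forces \([U_j,X]=0\) for each \(U_j\), by strict convexity of the HS norm. Hence \(X\) commutes with all \(Z_i\), so it is diagonal in the computational basis. It also commutes with all edge SWAPs, and since \(G\) is connected these generate the full symmetric group \(S_N\). A diagonal operator invariant under \(S_N\) is a function of the Hamming weight, that is, \(X\in\operatorname{span}\{P_m\}\).

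\textbf{Kinetic bound.} Use Eq.~\eqref{eq:loop-pauli-hierarchy} for each generator separately. \(\mathcal D\) is diagonal in the Pauli basis, and \(\mathcal A_c\) permutes labels while preserving \(k\). The span of Pauli strings with \(k\) nonidentity labels is therefore invariant under every factor. It is finite-dimensional but has dimension growing in \(N\), so the estimate has to run through operator norms on \(\ell^1\) and \(\ell^\infty\) coefficient spaces rather than through dimension.
\begin{enumerate}[label=(\roman*)]
\item \emph{Generator norms.} Bound each generator on this sector in both norms. A string of weight \(k\) has at most \(k\) sites, hence at most \(k\) edges of colour \(c\) touch its support; edges away from the support act trivially. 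So \(\|\mathcal A_c\|_{(k),p}\le 2k\gamma/d\) for \(p\in\{1,\infty\}\): each column and each row has total weight at most \(2\cdot(\text{number of active edges})\cdot\gamma/d\). Summing over the \(\chi\) colour classes gives \(B_k\le 2\chi k\gamma/d\), up to the \(\mathcal D\) contribution \(2\kappa k\). The \(\mathcal D\) part can be dropped or absorbed, because it is diagonal, \(\le0\), and commutes with the \(\mathcal A_c\) on the \(q\)-grading. This absorption is the point to check, and it should follow because \(\operatorname{Ad}_{S_{ij}}\) preserves \(q\).
\item \emph{Local error.} For a palindromic product of exponentials, the one-step error \(\Phi_h-e^{h\mathcal L}\) is a combination of third-order nested commutators. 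The standard integral-remainder representation gives \(\|\Phi_h-e^{h\mathcal L}\|\le \tfrac{h^3}{3}B_k^3e^{hB_k}\). Concretely, apply the Strang estimate recursively: the symmetric composition has the form \(e^{hA/2}e^{hB}e^{hA/2}\) with \(B\) itself symmetric, and one bounds the remainder using the triple commutator bound \(\le 8\|A\|\|B\|(\|A\|+\|B\|)\) together with the exponential factors \(e^{h\|\cdot\|}\).
\item \emph{Telescoping.} Write \(\Phi_h^n-e^{nh\mathcal L}=\sum_j\Phi_h^{\,j}(\Phi_h-e^{h\mathcal L})e^{(n-1-j)h\mathcal L}\). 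Both \(\Phi_h\) and \(e^{s\mathcal L}\) are contractions in the \(\ell^1\) and \(\ell^\infty\) coefficient norms: they are substochastic, built from convex combinations of label permutations and killing. Summing \(n\) local errors then gives \(\tfrac{nh^3}{3}B_k^3e^{hB_k}=\tfrac{th^2}{3}B_k^3e^{hB_k}\).
\end{enumerate}

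Independence from \(N\) follows because \(B_k\) depends only on \(k,\chi,\gamma,d\). The main obstacle is the local third-order remainder with the precise constant \(1/3\). I would first try the commutator-free Taylor bound: match both expansions through order \(h^2\), which is where palindromic symmetry enters, and then bound the \(h^3\) tail of each product by \((hB_k)^3e^{hB_k}/3!\), doubled.
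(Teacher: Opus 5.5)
Your proposal is correct and follows essentially the paper's proof. For the fixed algebra, equality in the Hilbert--Schmidt contraction chain forces commutation with every \(Z_i\) and every edge SWAP; for the kinetic bound, you match the palindromic product to the exponential through second order, bound the two third-order Taylor tails by \((hB_k)^3e^{hB_k}/3!\) each, and telescope using \(\ell^1/\ell^\infty\) contractivity. The recursive Strang/nested-commutator detour is unnecessary, and the point you flagged does close: since \(\mathcal D\) commutes with every \(\mathcal A_c\), \(\Phi_h-e^{h\mathcal L_{\rm kin}}=e^{h\mathcal D}(S_h-e^{h\mathcal A})\) for the exchange palindrome \(S_h\), with \(\|e^{h\mathcal D}\|_{(k),p}\le1\), which is exactly why the paper compares only the exchange product with \(e^{h\mathcal A}\) and why \(B_k\) carries no \(\kappa\).
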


\begin{proof}[Proof outline]
Each random-unitary factor is a positive Hilbert--Schmidt contraction.
Equality in the product contraction forces simultaneous commutation with
all \(Z_i\) and all edge SWAPs.  The first condition makes an observable
diagonal, while connected edge transpositions make its diagonal value
depend only on Hamming weight, proving
Eq.~\eqref{eq:loop-finite-fixed}.  On an order-\(k\) orbit, a matching has at
most \(k\) active edges.  The symmetric product and
\(e^{h\mathcal A}\) have equal derivatives through second order; a third
derivative remainder gives the one-step error
\(h^3B_k^3e^{hB_k}/3\).  Contractivity and a telescoping sum over \(n=t/h\)
steps prove Eq.~\eqref{eq:loop-kinetic-bound}.
\end{proof}

The response is operational.  Preparing
\(\rho_{\boldsymbol\sigma,\eta}
=2^{-N}(\mathbf1+\eta P_{\boldsymbol\sigma})\) and measuring
\(P_{\boldsymbol\tau}\) gives a matrix element of \(\Phi_h^n\).  In
particular,
\begin{align}
 R^Z_{j\leftarrow i}(t)
 &=\left[e^{-(\gamma/d)L_Gt}\right]_{ji},\nonumber\\
 R^X_{j\leftarrow i}(t)
 &=e^{-2\kappa t}
   \left[e^{-(\gamma/d)L_Gt}\right]_{ji},
 \label{eq:loop-local-response}
\end{align}
up to the controlled finite-step error
\eqref{eq:loop-kinetic-bound}.  On a ring, taking
\(t=N^2s/\gamma\) and a compatible \(h(N)\) yields the heat kernel on the
unit circle, with separately bounded product-formula and lattice errors.
This realizes the microscopic--kinetic--hydrodynamic separation advocated
in Sec.~\ref{sec:status-map}.

\paragraph{Boundary of the result.}
Fresh one-pass bath cells are a physical memory assumption.  Reusing the
same cell gives coherent recurrences, for example \(\cos(2n\theta)\), rather
than exponential decay.  The circuit also uses an external schedule and a
calibrated collision strength.  Equation~\eqref{eq:loop-kinetic-bound}
therefore derives the response kinetics under a controlled Markovian
dilation; it does not derive an autonomous clock, a thermal reservoir, or a
metric.

\subsection{Boolean charge records and metastable classical motion}
\label{sec:loop-boolean}

\paragraph{Assumption removed.}
The first finite open-system loop removes asymptotic abelianness as a
separate dynamical hypothesis.  For the Lindbladian
\(\mathcal L_G=\mathcal D+\mathcal A\) in
Eq.~\eqref{eq:loop-kinetic-generator}, the invariant algebra, its complete
gap, and the weakly perturbed sector dynamics are all calculable
\cite{XuBoolean2026}.  General fixed-algebra theory identifies stationary
operator structures of quantum dynamical semigroups
\cite{Frigerio1978}; the result below additionally fixes the complete
many-body gap and the topology dependence of the metastable window.

Let \(\mathcal E\) be the trace-preserving conditional expectation
\begin{equation}
 \mathcal E(X)=\sum_{m=0}^{N}
 \frac{\operatorname{Tr}(P_mX)}{\binom Nm}P_m .
 \label{eq:loop-boolean-expectation}
\end{equation}

\begin{theorem}[Noise-selected Boolean algebra and complete gap]
\label{thm:loop-boolean}
For connected \(d\)-regular \(G\),
\begin{align}
 \operatorname{Fix}(e^{t\mathcal L_G})
 &=\bigoplus_{m=0}^{N}\mathbb CP_m,\nonumber\\
 \lambda_G&=\min\!\left\{2\kappa,
 \frac{\gamma}{d}\lambda_2(L_G)\right\},
 \label{eq:loop-boolean-gap}\\
 \|[e^{t\mathcal L_G}(A),e^{t\mathcal L_G}(B)]\|_2
 &\le 2e^{-\lambda_Gt}
 \bigl[\|B\|_\infty\|A-\mathcal EA\|_2
 +\|A\|_\infty\|B-\mathcal EB\|_2\bigr].
 \label{eq:loop-boolean-comm}
\end{align}
The norm \(\|\cdot\|_2\) is the normalized Hilbert--Schmidt norm.
\end{theorem}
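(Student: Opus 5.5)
The argument works in the Pauli basis, where the generator $\mathcal L_G=\mathcal D+\mathcal A$ is self-adjoint with respect to the normalized Hilbert--Schmidt inner product $\langle A,B\rangle=2^{-N}\operatorname{Tr}(A^\dagger B)$. Each $\operatorname{Ad}_{Z_i}$ and $\operatorname{Ad}_{S_{ij}}$ is a unitary involution, so $\operatorname{Ad}_U-\operatorname{id}$ is self-adjoint and negative semidefinite. Hence $\mathcal L_G$ is a negative semidefinite self-adjoint operator, and $\operatorname{Fix}(e^{t\mathcal L_G})=\ker\mathcal L_G$.

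\emph{Step 1: the fixed algebra.} From negativity, $\mathcal L_G X=0$ forces $\langle X,(\operatorname{Ad}_U-\operatorname{id})X\rangle=0$ for every summand. So $X$ commutes with every $Z_i$ and with every edge SWAP. Commuting with all $Z_i$ makes $X$ diagonal in the computational basis. Commuting with the SWAPs of a connected graph, which generate the full symmetric group, makes the diagonal entries depend only on the Hamming weight. This gives $\operatorname{span}\{P_0,\ldots,P_N\}$, exactly as in the proof of Theorem~\ref{thm:loop-unitary-kinetics}. The map $\mathcal E$ in Eq.~\eqref{eq:loop-boolean-expectation} is then the Hilbert--Schmidt orthogonal projection onto this kernel.

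\emph{Step 2: the gap.} The hierarchy Eq.~\eqref{eq:loop-pauli-hierarchy} splits the operator space into invariant sectors $W_{k,q}$, spanned by Pauli strings with fixed multiset of labels. On each sector $\mathcal L_G=-2\kappa q+(\gamma/d)\mathcal A_{\rm int}$, where $\mathcal A_{\rm int}$ is the generator of the colored interchange process.
\begin{itemize}
\item If $q\ge1$, the spectrum lies below $-2\kappa$, and the value $-2\kappa$ is attained by the single-site $X_i$ averaged over sites, which is an eigenvector of $\mathcal A$ with eigenvalue $0$.
\item If $q=0$, the operators are diagonal ($Z$-strings). On weight-$m$ strings the interchange process is the symmetric exclusion process, whose only zero mode is the symmetric sum, which lies in the charge algebra.
\end{itemize}
The remaining gap is the spectral gap of symmetric exclusion. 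By Aldous' theorem (Caputo--Liggett--Richthammer), this equals the gap of a single random walk, $\lambda_2(L_G)$, and it is attained already at $m=1$ by a Fiedler vector $\sum_i v_iZ_i$. Combining the two cases gives $\lambda_G=\min\{2\kappa,(\gamma/d)\lambda_2(L_G)\}$. I expect the main obstacle to be this appeal to Aldous' spectral-gap identity for all particle numbers; I would cite it rather than reprove it. On the $q\ge1$ sectors the interchange part only adds further nonpositive contributions, so no finer analysis is needed there.

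\emph{Step 3: the commutator bound.} Write $A_t=e^{t\mathcal L_G}A=\mathcal EA+e^{t\mathcal L_G}(A-\mathcal EA)$, and similarly for $B$. The parts $\mathcal EA$ and $\mathcal EB$ lie in a commutative algebra and so commute. By Step 2, $\|e^{t\mathcal L_G}(A-\mathcal EA)\|_2\le e^{-\lambda_Gt}\|A-\mathcal EA\|_2$.

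Expand $[A_t,B_t]$ into its cross terms and use $\|XY\|_2\le\|X\|_\infty\|Y\|_2$ together with $\|YX\|_2\le\|Y\|_2\|X\|_\infty$. The semigroup is UCP, so $\|A_t\|_\infty\le\|A\|_\infty$. To avoid a quadratic remainder, I would group the terms as
\[
[A_t,B_t]=[A_t-\mathcal EA,\,B_t]+[\mathcal EA,\,B_t-\mathcal EB].
\]
The first term is bounded by $2\|B\|_\infty e^{-\lambda_Gt}\|A-\mathcal EA\|_2$. The second is bounded by $2\|\mathcal EA\|_\infty e^{-\lambda_Gt}\|B-\mathcal EB\|_2$. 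Since $\mathcal E$ is a conditional expectation, $\|\mathcal EA\|_\infty\le\|A\|_\infty$. Adding the two bounds gives Eq.~\eqref{eq:loop-boolean-comm}.
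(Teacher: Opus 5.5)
Your proposal is correct and follows essentially the paper's route: dephasing sectors carry a penalty of at least $2\kappa$, attained by a uniform one-flip coherence (your $\sum_i X_i$). On the diagonal sector exchange is symmetric exclusion with gap $\frac{\gamma}{d}\lambda_2(L_G)$ by Aldous' theorem, and your Pauli-string sectors are only a relabeling of the paper's occupation-basis matrix units. Your grouping $[A_t,B_t]=[A_t-\mathcal EA,B_t]+[\mathcal EA,B_t-\mathcal EB]$, combined with operator-norm contractivity of the semigroup and of $\mathcal E$, correctly supplies the commutator bound that the paper states without a written derivation.
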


Local dephasing removes all occupation-basis off-diagonal matrix units.
On the diagonal subspace, exchange is the symmetric exclusion process.
Aldous' spectral-gap theorem identifies its gap with the one-particle graph
gap; a uniform one-excitation coherence attains the off-diagonal value
\(2\kappa\).  Thus neither the algebra nor the slow scale is inferred from a
small-system diagonalization.

The theorem preserves the full charge distribution:
\begin{equation}
 \rho_\infty=\sum_{m=0}^{N}
 p_m\,\frac{P_m}{\binom Nm},
 \qquad p_m=\operatorname{Tr}(P_m\rho).
 \label{eq:loop-boolean-state}
\end{equation}
All intrasection quantum information is erased, while the atoms \(P_m\)
remain exactly distinguishable.  The finite-time Boolean defect obeys
\begin{equation}
 \mathfrak b(t)=
 \sup_{\|A\|_\infty,\|B\|_\infty\le1}
 \|[e^{t\mathcal L_G}(A),e^{t\mathcal L_G}(B)]\|_2
 \le4e^{-\lambda_Gt}.
 \label{eq:loop-boolean-defect}
\end{equation}

Now perturb by
\(\mathcal L_{G,\epsilon}=\mathcal L_G+\epsilon\mathcal V\), where
\(\mathcal V\) generates a bistochastic Hilbert--Schmidt contraction, and
put \(v=\|\mathcal V\|_{2\to2}\),
\(\delta=\epsilon v/\lambda_G\).  The compressed operator
\(\mathcal K=\mathcal E\mathcal V\mathcal E\) is a backward classical Markov
generator on the atoms \(P_m\).  Standard spectral perturbation and a
Duhamel expansion give, on the metastable time window,
\begin{equation}
 \left\|
 e^{t\mathcal L_{G,\epsilon}}
 -e^{t\epsilon\mathcal K}\mathcal E
 \right\|_{2\to2}
 \le C\!\left(e^{-\lambda_Gt}+\delta
 +\epsilon^2 v^2t/\lambda_G\right).
 \label{eq:loop-metastable}
\end{equation}
Uniform bit flips preserve the sector manifold and give an exact Ehrenfest
chain.  A one-site flip does not commute with \(\mathcal E\), produces
genuine leakage, and is controlled by the same ratio \(\delta\).

Topology enters twice.  On rings
\(\lambda_2(L_G)=O(N^{-2})\), so a fixed isolation quality requires
\(\epsilon v/\gamma=O(N^{-2})\).  On bounded-degree expanders,
\(\lambda_2(L_G)=O(1)\), and a size-independent perturbation is tolerated.
The result therefore ties a graph property to the formation and robustness
of a classical information manifold.

\paragraph{Boundary of the result.}
The uniform theorem is in normalized Hilbert--Schmidt norm.  It controls
the stated operator-space defect and mixed-state families of bounded
normalized purity, but it is not a dimension-free diamond-norm or
worst-case local operator-norm theorem.  Nor does the model decide which
charge should encode a spacetime record.

\subsection{A transition between full quantum memory and Boolean records}
\label{sec:loop-memory}

\paragraph{Assumption removed.}
A fixed Boolean algebra does not show whether non-Abelian information and
classical records can be separated by a genuine dynamical phase transition.
This is realized by a contact-process flag field that gates local memory
dephasing \cite{XuMemory2026}.  The flag quotient is the classical contact
process \cite{Harris1974,Liggett1999}, while the object that crosses the
transition is an operational quantum memory channel.

Let \(n_i(t)\in\{0,1\}\) be contact-process occupation variables and let a
memory qudit at \(i\) dephase at rate \(\kappa\) only when \(n_i=1\).
After waiting to \(t_w\), an arbitrary state is written into a finite window
\(W\).  For matrix units
\(E_{\boldsymbol a\boldsymbol b}\), define
\(D(\boldsymbol a,\boldsymbol b)=\{i:a_i\ne b_i\}\).  The reduced channel
obeys the exact operator identity
\begin{align}
 \Phi^*_{s,t}(E_{\boldsymbol a\boldsymbol b})
 &=q_D(s,t)E_{\boldsymbol a\boldsymbol b},\nonumber\\
 q_D(s,t)&=\mathbb E\exp\!\left[
 -\kappa\int_s^t\!du\sum_{i\in D}n_i(u)\right].
 \label{eq:loop-memory-fk}
\end{align}
This is a Feynman--Kac representation of the entire memory channel, not a
formula for one chosen coherence.

Define the late-write algebra by taking the infinite-volume or
quasistationary flag limit before the unbounded storage interval:
\begin{equation}
 \mathcal A_{\rm tail}(W)=
 \left\{A:\lim_{t_w\to\infty}\sup_{\tau\ge0}
 \|\Phi^*_{t_w,t_w+\tau}(A)-A\|=0\right\}.
 \label{eq:loop-tail-definition}
\end{equation}

\begin{theorem}[Tail-algebra and capacity transition]
\label{thm:loop-memory}
For the contact process on \(\mathbb Z^r\), away from criticality,
\begin{equation}
 \mathcal A_{\rm tail}(W)=
 \begin{cases}
 M_{d^{|W|}},&\lambda<\lambda_c,\\
 \operatorname{diag}(M_{d^{|W|}})
 \simeq\mathbb C^{d^{|W|}},&\lambda>\lambda_c.
 \end{cases}
 \label{eq:loop-memory-algebra}
\end{equation}
Below threshold there are \(C,\alpha>0\) such that
\begin{equation}
 1-q_D(t_w,\infty)
 \le\frac{\kappa|D|C}{\alpha}e^{-\alpha t_w}.
 \label{eq:loop-memory-bound}
\end{equation}
For one memory qubit entangled with an untouched reference,
\begin{equation}
 \mathcal N_{R:M}=\frac q2,
 \qquad
 Q_{\rm mem}(q)=
 1-h_2\!\left(\frac{1-q}{2}\right)
 \label{eq:loop-memory-capacity}
\end{equation}
for independently reinitialized channel uses.
\end{theorem}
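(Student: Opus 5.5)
The proof starts from the exact Feynman--Kac identity \eqref{eq:loop-memory-fk}. It says that every matrix unit \(E_{\boldsymbol a\boldsymbol b}\) is an eigenoperator of the reduced memory channel, with eigenvalue \(q_D(s,t)\in(0,1]\) depending only on the disagreement set \(D=D(\boldsymbol a,\boldsymbol b)\). Since \(W\) is finite, the channel is diagonal in a finite basis, so the definition \eqref{eq:loop-tail-definition} reduces to a statement about the numbers \(q_D(t_w,t_w+\tau)\) for nonempty \(D\subset W\). Diagonal units (\(D=\emptyset\)) have \(q=1\) and always belong to \(\mathcal A_{\rm tail}(W)\). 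A general \(A=\sum a_{\boldsymbol a\boldsymbol b}E_{\boldsymbol a\boldsymbol b}\) lies in the tail algebra iff each off-diagonal coefficient \(a_{\boldsymbol a\boldsymbol b}\neq 0\) satisfies \(\sup_{\tau}(1-q_D(t_w,t_w+\tau))\to0\); because all norms on \(M_{d^{|W|}}\) are equivalent, the norm used is irrelevant. Monotonicity in \(\tau\) (the integrand is nonnegative) shows that this supremum is \(1-q_D(t_w,\infty)\). The classification therefore reduces to deciding whether \(q_D(t_w,\infty)\to1\) or stays bounded away from \(1\).

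\emph{Subcritical phase.} For \(\lambda<\lambda_c\), I would use Jensen, or simply \(1-e^{-x}\le x\), to get
\[
1-q_D(t_w,\infty)\le\kappa\sum_{i\in D}\int_{t_w}^\infty\Pr[n_i(u)=1]\,du .
\]
The standard exponential extinction estimate for the subcritical contact process, \(\Pr[n_i(u)=1]\le Ce^{-\alpha u}\), holds in infinite volume started from a finite or translation-invariant density; for the fully occupied start it follows from duality plus exponential decay of the survival probability. Integrating this estimate gives \eqref{eq:loop-memory-bound}. Hence every matrix unit, and therefore all of \(M_{d^{|W|}}\), is in the tail algebra.

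\emph{Supercritical phase.} For \(\lambda>\lambda_c\), the flag process at late times converges to the nontrivial upper invariant measure \(\bar\nu\), in the order stated in the definition (infinite volume first). Fix \(i\in D\). I would show that \(\int_{t_w}^\infty n_i(u)\,du=\infty\) almost surely, or at least with probability bounded below uniformly in \(t_w\). Ergodicity of \(\bar\nu\) under the dynamics gives time averages tending to \(\bar\nu(n_i=1)>0\), so \(q_D(t_w,\infty)=0\). In that case no off-diagonal unit survives and the tail algebra is exactly the diagonal. I expect this to be the main obstacle. One must handle the starting configuration at time \(t_w\) (taken from the quasistationary or \(\bar\nu\)-law), and if pure ergodicity is unavailable, replace it with a weaker bound: \(q_D(t_w,t_w+\tau)\le \mathbb E e^{-\kappa\int n_i}\) together with a positive-density lower bound on occupation time in blocks, from the Bezuidenhout--Grimmett block construction, which makes \(q\) decay in \(\tau\) uniformly in \(t_w\). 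That is enough for the supremum over \(\tau\) to stay equal to \(1\).

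\emph{Tagged qubit.} For a single qubit, the channel is a phase-flip channel \(\rho\mapsto\) (diagonal unchanged, coherences multiplied by \(q\)). Acting on a maximally entangled state with the reference, it produces the Choi state with off-diagonal entry \(q/2\). Its partial transpose has eigenvalue \(-q/2\), so the negativity is \(\mathcal N_{R:M}=q/2\). The dephasing channel is degradable, so its quantum capacity equals the single-letter coherent information, maximized at the maximally mixed input. This gives \(1-h_2((1-q)/2)\), the standard dephasing-channel formula. It applies to independently reinitialized uses because these are i.i.d. copies of a fixed channel.
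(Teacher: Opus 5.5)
Your proposal is correct and follows the paper's route: exponential subcritical extinction (via self-duality) combined with $1-e^{-x}\le x$ gives the integrated bound; ergodicity of the upper invariant process makes the occupation integral diverge, so only diagonal units survive; and the tagged qubit is a phase-flip channel with $p=(1-q)/2$, Choi negativity $q/2$, and degradable capacity $1-h_2(p)$. The block-construction fallback is unnecessary: under $\bar\nu$, and via the graphical coupling also for the fully occupied start, which dominates a stationary copy, one has $q_D(t_w,\infty)\le c:=\mathbb E_{\bar\nu}e^{-\kappa\int_0^1 n_i(u)\,du}<1$ uniformly in $t_w$, and this already excludes every off-diagonal unit.
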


Subcritical sharpness makes the late integrated activity finite and proves
\eqref{eq:loop-memory-bound}.  In the upper invariant supercritical process,
ergodicity makes the occupation integral diverge for every nonempty \(D\),
so only diagonal matrix units survive.  The one-qubit map is a phase-flip
channel with \(p=(1-q)/2\); its Choi state has negativity \(q/2\), and
degradability gives Eq.~\eqref{eq:loop-memory-capacity}.  The same scalar
therefore controls noncommutativity, entanglement transmission, and
memoryless-use quantum capacity.

Quantum absorbing-state models can support coherence in the activity sector
itself and can depart from directed-percolation criticality
\cite{WamplerCooper2025}.  That is not the mechanism used here.  The flags
form a classical quotient; the additional result is the exact matrix-unit
Feynman--Kac action, which determines the full tail algebra and both
reference-entanglement and capacity witnesses without a moment closure.

\paragraph{Boundary of the result.}
On every finite connected graph the flags eventually absorb and all memories
become dark.  The phase transition belongs to the stated thermodynamic or
quasistationary order of limits.  The activity quotient is classical and
does not receive backaction from the memories.  A nonzero background
dephasing rate rounds the ideal infinite-time distinction into a lifetime
crossover.  These qualifications do not alter the exact channel identity,
but they delimit the phase claim.

\subsection{Causal order as an absorbing dissipative phase}
\label{sec:loop-causal}

\paragraph{Assumption removed.}
The earlier causal-order criterion assumed a vertex function \(T\) satisfying
\(u\to v\Rightarrow T(v)>T(u)\).  A local defect model produces the DAG
first and constructs \(T\) afterward \cite{XuCausalOrder2026}.

The underlying undirected graph is a chain, or more generally a locally
finite tree, of triangular blocks.  Each edge carries a qubit whose monitored
record is an orientation.  In block \(r\), the two cyclic words \(000\) and
\(111\) define
\begin{equation}
 C_r=|000\rangle\langle000|_r+|111\rangle\langle111|_r,
 \qquad Q_r=\mathbf1-C_r .
 \label{eq:loop-cycle-projector}
\end{equation}
Local Lindblad jumps heal a cyclic block at total rate \(g\), while a cyclic
neighbor infects a transitive target at rate \(1/2\).  The jump set is
covariant under global arrow reversal, so no global direction is encoded in
the generator.

\begin{theorem}[Order from defect extinction]
\label{thm:loop-causal-order}
For every recorded orientation \(\omega\) of a block cactus, the threshold
response graph is acyclic if and only if \(C_r(\omega)=0\) for every block.
When this holds,
\begin{equation}
 T_\omega(v)=\max_{p:\,p\to v}|p|
 \label{eq:loop-longest-path}
\end{equation}
is finite, obeys \(u\to v\Rightarrow T_\omega(v)>T_\omega(u)\), and the
reachability relation is a locally finite strict partial order.
\end{theorem}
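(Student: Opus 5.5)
The plan is to reduce acyclicity of the whole directed graph to acyclicity inside each triangular block, using the block-cactus structure. The key combinatorial fact is that in a block cactus (a chain or locally finite tree of triangles glued at cut vertices) every cycle of the underlying undirected graph lies within a single block. I would prove this first. Suppose a simple cycle visits two blocks. Then it must leave a block through a cut vertex and later return to that block. Removing that cut vertex disconnects the two sides, and since the block graph is a tree, the cycle would have to pass through the same cut vertex twice, contradicting simplicity. Hence every directed cycle of the threshold response graph is a directed cycle inside one triangle. On a triangle with three oriented edges, a directed cycle exists exactly when the orientation word is cyclic, namely \(000\) or \(111\) in the encoding of Eq.~\eqref{eq:loop-cycle-projector}. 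Thus the graph is acyclic if and only if \(C_r(\omega)=0\) for all \(r\). I would first check that the threshold response graph has exactly the recorded orientations as its edges. For the converse direction, a cyclic block directly exhibits a 3-cycle, so that direction is immediate.

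Next I would show that \(T_\omega\) is finite when \(C_r(\omega)=0\) everywhere. This is the step I expect to be the main obstacle, because local finiteness of the tree of blocks does not by itself bound path lengths into a vertex: an infinite chain of blocks could in principle feed an infinite directed path toward \(v\). I would argue as follows. Any directed path ending at \(v\) projects to a path in the block tree. Since the path is simple in an acyclic graph, the projection cannot revisit a block, because revisiting would force a repeated cut vertex. Within each block the path uses at most two edges of a transitive triangle. So the length of a path is at most twice the number of blocks on a geodesic in the block tree toward \(v\). This bounds \(|p|\) in terms of the tree distance of the path's starting point, but that distance can still be unbounded. If the paper's setting is a one-sided chain, or if \(T_\omega\) is meant to be measured from a finite window or from sources, finiteness follows at once. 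Otherwise I would need the additional fact that along an infinite ray of blocks, transitive orientations must reverse direction somewhere, which would truncate incoming paths. I would first try to establish finiteness through the observation that every ancestor set \(\{u: u\to^* v\}\) is finite, then take \(T_\omega(v)\) as the longest path within that finite DAG. If ancestor sets can be infinite on a general tree, I would restrict the statement to windows, or to finite ancestry, which is the convention I expect the paper uses.

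Given finiteness, monotonicity is routine. If \(u\to v\), then appending the edge \(u\to v\) to a maximal path ending at \(u\) gives a path into \(v\), so \(T_\omega(v)\ge T_\omega(u)+1\). Finally, take reachability as the transitive closure of the edges. It is irreflexive because a relation \(u\to^* u\) of positive length would be a directed cycle. Together with transitivity this makes it a strict partial order, and irreflexivity also follows from the strict increase of \(T_\omega\). Local finiteness means every interval \(\{w: u\to^* w\to^* v\}\) is finite. This holds because every such \(w\) lies on a simple path between \(u\) and \(v\), which by the cut-vertex argument stays inside the finitely many blocks on the tree geodesic from \(u\) to \(v\).
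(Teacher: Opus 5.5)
Your proof of the equivalence is the paper's argument. The block-cut tree puts every simple undirected cycle inside one block, so a directed cycle exists exactly when some triangle carries a cyclic word, and a cyclic block is an explicit $3$-cycle. Your arguments for monotonicity, irreflexivity and transitivity, and finiteness of intervals are also sound. The interval argument does not need the cactus to be finite: any $w$ with $u\to^*w\to^*v$ lies on a simple directed path that stays in the finitely many blocks on the block-tree geodesic from $u$ to $v$.

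The genuine gap is the finiteness of $T_\omega$, and it cannot be closed as stated because the claim is false for an infinite block cactus. Take the one-sided chain with blocks $\{c_r,c_{r+1},x_r\}$, $r\ge0$, where consecutive blocks share the cut vertex $c_{r+1}$. Orient every block transitively: $c_{r+1}\to x_r\to c_r$ and $c_{r+1}\to c_r$. Then $C_r(\omega)=0$ for all $r$ and the graph is acyclic. Yet $c_n\to c_{n-1}\to\cdots\to c_0$ has length $n$ for every $n$, so $T_\omega(c_0)=\infty$.

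This one example defeats both of your proposed rescues. A one-sided chain does not make finiteness automatic. There is also no forced reversal of transitive orientations along an infinite ray. Your bound of at most two edges per block along the geodesic only controls paths between two fixed endpoints, not the supremum over all starting points.

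The paper's own justification only localizes cycles to blocks and is silent on finiteness. The statement is correct only under an extra hypothesis, for example:
\begin{itemize}
\item a finite cactus;
\item restriction to a finite window $W$, which is the setting of the paper's windowed DAG bound; or
\item finite ancestor sets $\{u:u\to^*v\}$.
\end{itemize}
Under any of these, $T_\omega(v)$ is a longest path in a finite DAG, and the rest of your argument goes through. You should state that hypothesis explicitly rather than try to derive it from the block structure.
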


The block-cut tree is essential: every simple undirected cycle belongs to one
block.  Hence the local projectors certify all directed cycles.  On a generic
lattice a triangle test can miss a longer chordless loop.

The commuting algebra generated by \(\{C_r\}\) is invariant and has the exact
classical restriction
\begin{align}
 (\mathcal K_gf)(\boldsymbol n)
 ={}&g\sum_r n_r
 [f(\boldsymbol n^{r,0})-f(\boldsymbol n)]\nonumber\\
 &+\frac12\sum_r(1-n_r)(n_{r-1}+n_{r+1})
 [f(\boldsymbol n^{r,1})-f(\boldsymbol n)] .
 \label{eq:loop-contact-generator}
\end{align}
After rescaling time, this is the one-dimensional contact process with
\(\lambda=1/g\) \cite{Harris1974,Liggett1999}.  Thus
\begin{equation}
 g_c=\lambda_c^{-1}=0.3032280(18)
 \label{eq:loop-causal-gc}
\end{equation}
using the accepted numerical contact-process threshold.  In the ordered
phase, every fixed operational window \(W\) satisfies
\begin{equation}
 1-P_{\rm DAG}(W,t)
 \le A(g)|W|e^{-\alpha(g)t},
 \qquad
 P_{\rm DAG}(W,t)=
 \operatorname{Tr}\!\left[\rho(t)\prod_{r\in W}Q_r\right].
 \label{eq:loop-causal-window}
\end{equation}
The vanishing rate \(\alpha(g)\) is an order-formation gap, and the exact
quotient transfers the directed-percolation critical exponents to cycle
density and causal correlation scales.

\paragraph{Boundary of the result.}
The model assumes the block-cactus adjacency, the edge registers, the local
laboratory runtime, monitoring, and a thresholded response readout.  It
generates neither the undirected graph nor a volume measure.  Its theorem
removes an assumed Lyapunov ranking on a nontrivial graph family; it is not
yet a background-free spacetime theorem.

\subsection{Quantum-field response as an inverse Lorentzian observable}
\label{sec:loop-lorentz}

\paragraph{Assumption removed.}
The first inverse-geometry model encoded proper time into a calibrated
depolarizing parameter.  That construction has been replaced by a physical
field response: a Ramsey ratio isolates a Wightman covariance, whose
rank defect determines an unknown de Sitter radius
\cite{XuLorentz2026}.  Scalar propagators and localized quantum probes are
known to carry metric and curvature information
\cite{SaravaniAslanbeigiKempf2016,PercheMartinMartinez2022}, while
Lorentzian multidimensional scaling embeds supplied interval data
\cite{CloughEvans2017}.  The distinct claim tested here is partial-data
completion from a physical covariance kernel: anchor responses fix an
unknown curvature scale and predict target pairs never used in the fit.

For real spacelike field smearings \(\Phi_i,\Phi_j\), two qubit probes undergo
\begin{equation}
 U=\exp[-ig(\sigma_z^i\Phi_i+\sigma_z^j\Phi_j)] .
 \label{eq:loop-ramsey-unitary}
\end{equation}
In a zero-mean quasifree state their same- and opposite-sign coherences are
\(\chi_\pm=\exp\{-2g^2[V_i+V_j\pm2W_{ij}]\}\), so
\begin{equation}
 W_{ij}=\frac{1}{8g^2}\log\frac{\chi_-}{\chi_+}.
 \label{eq:loop-ramsey-ratio}
\end{equation}
The local self-variances cancel exactly.

For a conformally coupled massless scalar in the Bunch--Davies state on
\(\mathrm{dS}_4(R)\),
\begin{equation}
 W_{ij}=\frac{1}{8\pi^2R^2(1-Z_{ij})}
 \label{eq:loop-ds-kernel}
\end{equation}
on the spacelike branch.  For six generic anchors define
\(A_{ii}=0\), \(A_{ij}=(8\pi^2W_{ij})^{-1}\).

\begin{theorem}[Finite-anchor curvature and blind completion]
\label{thm:loop-lorentz}
If \(A\) is invertible and the six ambient anchor vectors span
\(\mathbb R^{1,4}\), then
\begin{equation}
 R^2=(\mathbf1^TA^{-1}\mathbf1)^{-1}.
 \label{eq:loop-radius}
\end{equation}
The candidate is admissible only if
\(G=R^2\mathbf1\mathbf1^T-A\) has rank five and Lorentzian inertia
\((-++++0)\).  With \(K\ge6\) anchors and
\(b_x=R^2\mathbf1-A_{Ax}\), every withheld target--target invariant is
\begin{equation}
 \widehat Z_{xy}=\frac{b_x^TG^+b_y}{R^2}.
 \label{eq:loop-blind-completion}
\end{equation}
\end{theorem}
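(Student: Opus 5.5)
The plan is to translate everything into the ambient embedding $\mathrm{dS}_4(R)=\{X\in\mathbb R^{1,4}:X\cdot X=R^2\}$, where the geodesic invariant is $Z_{ij}=X_i\cdot X_j/R^2$. Substituting Eq.~\eqref{eq:loop-ds-kernel} into the definition of $A$ gives, for $i\ne j$,
\[
A_{ij}=\frac{1}{8\pi^2W_{ij}}=R^2(1-Z_{ij})=R^2-X_i\cdot X_j .
\]
The diagonal convention $A_{ii}=0$ equals $R^2-X_i\cdot X_i$ because the anchors lie on the hyperboloid. Hence the whole anchor matrix satisfies the exact identity
\[
G:=R^2\mathbf1\mathbf1^T-A=X_A^T\eta X_A,
\]
where $X_A$ is the $5\times 6$ matrix of ambient anchor vectors and $\eta=\operatorname{diag}(-1,1,1,1,1)$. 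So $G$ is the Minkowski Gram matrix of the anchors. All three claims of Theorem~\ref{thm:loop-lorentz} will be read off from this factorization.

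\emph{Radius.} Because $X_A$ has only five rows, $\operatorname{rank}G\le5$, so $\det(R^2\mathbf1\mathbf1^T-A)=0$. With $A$ invertible, the matrix determinant lemma gives
\[
\det(-A+s\mathbf1\mathbf1^T)=\det(-A)\,\bigl(1-s\,\mathbf1^TA^{-1}\mathbf1\bigr).
\]
This is affine in $s$ and vanishes at the true value $s=R^2>0$. It follows that $\mathbf1^TA^{-1}\mathbf1\neq0$ and that the root is unique, namely $R^2=(\mathbf1^TA^{-1}\mathbf1)^{-1}$.

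\emph{Admissibility.} Since the anchors span $\mathbb R^{1,4}$, $X_A$ has full row rank five. By Sylvester's law of inertia applied to $X_A^T\eta X_A$, $G$ then has rank exactly five and inertia $(-++++0)$. The stated admissibility test is therefore a necessary consequence of the model, so any data violating it lie outside the field--state--geometry family.

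\emph{Blind completion.} For a target point $X_x$ on the hyperboloid, the same identity applied to anchor--target responses gives
\[
b_x=R^2\mathbf1-A_{Ax}=X_A^T\eta X_x .
\]
The key algebraic step is the pseudoinverse factorization for $B=X_A^T$, which has full column rank, and invertible $\eta$:
\[
(B\eta B^T)^+=(B^T)^+\eta^{-1}B^+ .
\]
I would verify this directly from the four Moore--Penrose conditions, using $B^+B=\mathbf1_5$ and the fact that $BB^+$ is the orthogonal projector onto $\operatorname{ran}B$. Then
\[
b_x^TG^+b_y=X_x^T\eta B^T(B^T)^+\eta^{-1}B^+B\eta X_y=X_x^T\eta X_y=R^2Z_{xy},
\]
because $B^TB^{T+}=\mathbf1_5$ and $B^+B=\mathbf1_5$. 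Dividing by $R^2$ gives Eq.~\eqref{eq:loop-blind-completion}. The case $K\ge6$ is identical: $X_A$ becomes $5\times K$ and still has full row rank by spanning.

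The step I expect to need the most care is the pseudoinverse factorization and its use with an indefinite $\eta$, together with the consistency requirement that the $R^2$ obtained from the determinant is the same one used in $b_x$ and $G$. Both reduce to the rank-five spanning hypothesis, so I would state that dependence explicitly in the proof.
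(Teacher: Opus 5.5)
Your proposal is correct and follows the paper's own route. The paper applies the matrix determinant lemma to the Minkowski Gram factorization \(G=X_A^T\eta X_A\) to obtain \(R^2\), and then uses the pseudoinverse identity \((B\eta B^T)^+=(B^T)^+\eta^{-1}B^+\) for the blind completion; your explicit Moore--Penrose verification supplies the details the paper only names. One point worth stating is that for \(K\ge7\) the full anchor matrix has rank at most six and so is singular, which means the \(R^2\) entering \(b_x\) and \(G\) must come from an invertible six-anchor subblock, as in the paper's redundant-sextuple checks.
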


The determinant lemma applied to
\(G=X\eta X^T\) proves Eq.~\eqref{eq:loop-radius}; the pseudoinverse identity
proves Eq.~\eqref{eq:loop-blind-completion}.  Redundant anchor sextuples,
Lorentzian inertia, target hyperboloid norms, and unopened target pairs
overidentify the model.  Perturbation bounds propagate Ramsey shot noise,
finite smearing, and Gram conditioning into errors on \(R\) and
\(\widehat Z_{xy}\).  The blind pairs can therefore falsify the assumed
geometry--state kernel rather than merely reproduce data used in a fit.

\paragraph{Boundary of the result.}
The field species, Bunch--Davies state, de Sitter model class, spacelike
support, canonical normalization, probe coupling, and a nondegenerate anchor
frame are inputs.  Equation~\eqref{eq:loop-ramsey-ratio} is physical, but the
kernel \eqref{eq:loop-ds-kernel} supplies the geometric model.  The theorem is
an inverse problem inside that class; it does not derive a Lorentzian
manifold from the causal-order model or determine temporal orientation of a
timelike held-out pair.

\subsection{A fixed physical modular band}
\label{sec:loop-modular}

\paragraph{Assumption removed.}
Earlier lattice tests retained a fixed number of modular modes.  As the
cutoff is removed, that prescription narrows the physical band and can make
convergence tautological.  A state-independent band selected by the local
lattice-boost operator removes this loophole
\cite{XuModular2026}.

For the half-filled free-fermion interval, let \(C_N\) be the restricted
correlation matrix and
\begin{equation}
 h_N=\log[(\mathbf1-C_N)C_N^{-1}]
 \label{eq:loop-modular-h}
\end{equation}
the exact one-particle modular Hamiltonian.  The known tridiagonal commuting
operator determines a local boost matrix \(B_N\), with
\([B_N,h_N]=0\).  This construction is rooted in the
Bisognano--Wichmann form and its lattice implementations
\cite{BisognanoWichmann1975,Giudici2018}.  The operational band
\begin{equation}
 Q_\Lambda=\mathbf1_{[-\Lambda,\Lambda]}(B_N)
 \label{eq:loop-modular-band}
\end{equation}
is fixed before \(h_N\) is diagonalized; its rank grows with \(N\) at fixed
\(\Lambda\).

For any real Borel function \(g\), put \(G_N=g(B_N)\) and
\(E_N=h_N-G_N\).  Every Hermitian correlation perturbation satisfies
\begin{align}
 |\operatorname{Tr}(\delta C E_N)|
 \le{}&\delta_\Lambda[G_N]\,
 \|Q_\Lambda\delta C Q_\Lambda\|_1\nonumber\\
 &+\delta_\Lambda^\perp[G_N]\,
 \|\bar Q_\Lambda\delta C\bar Q_\Lambda\|_1,
 \label{eq:loop-modular-certificate}\\
 \delta_\Lambda[G_N]
 &=\|Q_\Lambda E_NQ_\Lambda\|_\infty,\qquad
 \delta_\Lambda^\perp[G_N]
 =\|\bar Q_\Lambda E_N\bar Q_\Lambda\|_\infty .
\end{align}
The in-band constant is minimax optimal for trace-neutral Gaussian tangents.
It therefore certifies all linear modular responses in the stated physical
band, rather than only one prepared wave packet.

Common-spectrum analysis gives the sparse correction
\begin{equation}
 \widetilde B_N=B_N+\frac1{N^2}
 \left[-\frac14B_N+\frac{1}{4\pi^2}B_N^3\right].
 \label{eq:loop-range-three}
\end{equation}
It contains only first- and third-neighbor hopping.  The cubic coefficient is
analytic; the linear coefficient was selected on \(N\le128\), frozen, and
tested without refitting on \(N=192,\ldots,768\).  For \(\Lambda=8\), the
archived deterministic data give
\begin{equation}
 \delta_8[B_N]\propto N^{-1.992},
 \qquad
 \delta_8[\widetilde B_N]\propto N^{-3.984},
 \label{eq:loop-modular-scaling}
\end{equation}
with mode-entry-aware fourth-order envelopes on the held-out sizes.  The
first relation is an analytic finite-band response certificate; the
\(N^{-4}\) corrected law is controlled numerical evidence, not an all-size
theorem.

\paragraph{Boundary of the result.}
This loop concerns a one-dimensional free state and a modular-to-lattice-
boost response.  It contains no transverse area, multiple null directions,
Newton constant, focusing equation, conservation completion, or dynamical
metric.  Consequently it is not an Einstein derivation.  Interacting and
higher-dimensional versions require new operator estimates.

\subsection{Dynamical charge center and selective fixed-sector response}
\label{sec:loop-center}

\paragraph{Assumption removed.}
A central variable should not be attached by hand.  A local
number-conserving channel derives a complete additive-charge fixed
center, its formation gap, and its lifting under weak symmetry breaking
\cite{XuCentral2026}.

On a connected graph let \(n_i\) be hard-core occupations,
\(\widehat V=\sum_i n_i\), and use local dephasing together with symmetric
number-conserving hops \(L_{ij}=\sigma_i^-\sigma_j^+\).  The reversible
Lindbladian has Dirichlet form
\begin{align}
 -\langle X,\mathcal L^*(X)\rangle
 ={}&\frac{\gamma}{2}\sum_i\|[n_i,X]\|_2^2\nonumber\\
 &+\frac{\eta}{2}\sum_{\{i,j\}\in E}
 \left(\|[L_{ij},X]\|_2^2+\|[L_{ji},X]\|_2^2\right).
 \label{eq:loop-center-dirichlet}
\end{align}

\begin{theorem}[Dynamical additive-charge center]
\label{thm:loop-center}
Let \(P_v\) project onto \(\widehat V=v\).  Then
\begin{align}
 \ker\mathcal L^*&=\operatorname{span}\{P_0,\ldots,P_N\},\nonumber\\
 \mathcal E_V(X)&=\sum_{v=0}^{N}
 \frac{\operatorname{Tr}(P_vX)}{\binom Nv}P_v,\nonumber\\
 \|e^{t\mathcal L^*}X-\mathcal E_VX\|_2
 &\le e^{-\Delta t}\|X-\mathcal E_VX\|_2,\qquad
 \Delta\ge\min\{\gamma/2,\eta\lambda_2(L_G)\}.
 \label{eq:loop-center-gap}
\end{align}
The one-point density profile closes exactly:
\begin{equation}
 \mathcal L^*(n_i)=\eta\sum_{j\sim i}(n_j-n_i).
 \label{eq:loop-center-heat}
\end{equation}
\end{theorem}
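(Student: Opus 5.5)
The proof uses the Dirichlet form \eqref{eq:loop-center-dirichlet} throughout, since the Lindbladian is reversible, hence self-adjoint, with respect to the normalized Hilbert--Schmidt inner product.

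\textbf{Kernel.} If \(\mathcal L^*X=0\), the Dirichlet form vanishes, so every term is zero. This gives \([n_i,X]=0\) for all \(i\) and \([L_{ij},X]=[L_{ji},X]=0\) for all edges. The first condition says that \(X\) is diagonal in the occupation basis, \(X=\sum_x f(x)|x\rangle\langle x|\). Now apply the hop condition to a configuration \(x\) with \(x_i=1\), \(x_j=0\). The operator \(\sigma_i^-\sigma_j^+\) maps \(|x\rangle\) to \(|s_{ij}x\rangle\), so commutation forces \(f(x)=f(s_{ij}x)\). Since \(G\) is connected, transpositions of occupations along edges connect any two configurations of equal Hamming weight. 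Hence \(f\) depends only on \(|x\rangle\)'s weight, and \(X\in\operatorname{span}\{P_v\}\). Conversely each \(P_v\) commutes with every \(n_i\) and every number-conserving \(L_{ij}\), so it lies in the kernel. The map \(\mathcal E_V\) is the Hilbert--Schmidt orthogonal projection onto this span, because \(\|P_v\|_{\rm HS}^2=\binom Nv\). Reversibility then makes \(\ker\mathcal L^*\) and its orthogonal complement invariant, and the semigroup restricted to \((\ker)^\perp\) decays.

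\textbf{Gap.} Decompose \(X-\mathcal E_VX\) into an off-diagonal part \(X_{\rm off}\) and a diagonal part \(X_{\rm d}\). These two subspaces are orthogonal and both invariant: dephasing preserves them, and the hop commutator terms map matrix units \(|x\rangle\langle y|\) into combinations with the same diagonal/off-diagonal type after the double commutator.
\begin{itemize}
\item \emph{Off-diagonal sector.} For \(|x\rangle\langle y|\) with \(x\neq y\), we have \(\sum_i\|[n_i,|x\rangle\langle y|]\|_2^2=|D(x,y)|\,\||x\rangle\langle y|\|_2^2\ge\||x\rangle\langle y|\|_2^2\). The dephasing term therefore already gives \(-\langle X,\mathcal L^*X\rangle\ge(\gamma/2)\|X_{\rm off}\|_2^2\), and the nonnegative hop term can be dropped.
\item \emph{Diagonal sector.} Here \(\mathcal L^*\) acts as \(\eta\) times the generator of the symmetric exclusion process, with the edge SWAP rate normalized so that one particle hops at rate \(\eta\) per edge. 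By Aldous' spectral-gap theorem (Caputo--Liggett--Richthammer), as invoked for Theorem~\ref{thm:loop-boolean}, the gap on each fixed-\(v\) sector equals the one-particle gap \(\eta\lambda_2(L_G)\). Functions orthogonal to the sector constants, which is exactly \(X_{\rm d}\perp\operatorname{span}\{P_v\}\), therefore obey the Poincaré inequality with constant \(\eta\lambda_2(L_G)\).
\end{itemize}
Combining the two sectors gives \(\Delta\ge\min\{\gamma/2,\eta\lambda_2\}\). The stated contraction then follows by the spectral theorem for the self-adjoint \(\mathcal L^*\) restricted to \((\ker)^\perp\).

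\textbf{Main obstacle.} I expect the hardest step to be the bookkeeping of normalizations in the diagonal sector. One must check that \(\frac{\eta}{2}\bigl(\|[L_{ij},X]\|^2+\|[L_{ji},X]\|^2\bigr)\) for diagonal \(X\) equals \(\frac{\eta}{2}\sum_x|f(x)-f(s_{ij}x)|^2\) per edge in normalized norm. Summed over edges, this must reproduce the exclusion Dirichlet form whose single-particle generator is exactly \(\eta L_G\); a factor-of-2 error here would change the bound. I would verify this on a single edge with two sites first, then also confirm that the hop terms cannot mix the diagonal and off-diagonal sectors.

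\textbf{Heat equation.} Compute directly. Dephasing contributes nothing because \([n_k,n_i]=0\). For the hop \(L=\sigma_i^-\sigma_j^+\), the adjoint dissipator is \(L^\dagger n_kL-\tfrac12\{L^\dagger L,n_k\}\), and \(L^\dagger L=n_i(1-n_j)\). This gives \(L^\dagger n_iL=0\) and \(L^\dagger n_jL=n_i(1-n_j)\), so the hop \(i\to j\) contributes \(-n_i(1-n_j)\) to \(n_i\) and \(+n_i(1-n_j)\) to \(n_j\). Adding the reverse hop, the quadratic terms cancel: \(n_j(1-n_i)-n_i(1-n_j)=n_j-n_i\). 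Summing over the neighbours of \(i\) gives \eqref{eq:loop-center-heat}.
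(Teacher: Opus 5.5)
Your proposal is correct and follows essentially the route the paper indicates: the Dirichlet form gives the kernel; dephasing gives the rate \(\gamma/2\) on off-diagonal matrix units; the diagonal sector is the exclusion process, whose gap is the one-particle gap by Aldous--Caputo--Liggett--Richthammer; and a direct dissipator computation gives the density profile. The normalization you flag works out: for diagonal \(X=\sum_x f(x)|x\rangle\langle x|\) the hop term equals \(\tfrac{\eta}{2}2^{-N}\sum_{\{i,j\}\in E}\sum_x|f(x)-f(s_{ij}x)|^2\), which is the exclusion Dirichlet form with swap rate \(\eta\) per edge and one-particle generator \(-\eta L_G\).
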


The fixed projections are dynamically produced, not an appended register.
They are central in the fixed algebra and for charge-preserving
laboratories, but not in the microscopic algebra
\(\mathcal B(\mathcal H)\).  Homogeneous weak flips give an exact Ehrenfest
chain on \(v=0,\ldots,N\); in the balanced stationary state,
\begin{equation}
 \langle\delta\widehat V(t)\delta\widehat V(0)\rangle
 =\frac N4e^{-2\epsilon t}.
 \label{eq:loop-center-correlation}
\end{equation}
This separates the graph-controlled formation time from the
symmetry-breaking charge-memory time.

There is a precise, but conditional, field-theory consequence.  If \(v\) is
identified with fixed Euclidean metric volume and the sector response is
normalized as \(Z_v[J]/Z_v[0]\), a pure volume counterterm \(c v\) cancels:
\begin{equation}
 \frac{Z_v[J;c]}{Z_v[0;c]}
 =\frac{e^{-cv}Z_v[J;0]}{e^{-cv}Z_v[0;0]}
 =\frac{Z_v[J;0]}{Z_v[0;0]} .
 \label{eq:loop-fixed-volume-cancellation}
\end{equation}
A coupling to an operator \(Q\) that is not constant inside the sector does
not cancel; its derivative is the connected covariance
\(-\operatorname{Cov}(O,Q)\).  Thus the statement is selective: it removes
only the identity or pure-volume part from conditional response and leaves
curvature-dependent counterterms to ordinary effective-field-theory
renormalization.  Fixed-total-volume and unimodular formulations provide
the closest gravitational precedent \cite{FiolGarriga2010}, while global
sequestering uses a different set of constraints \cite{KaloperPadilla}.
The microscopic result here is the local charge-center theorem; identifying
that charge with metric volume remains a separate bridge.

\paragraph{Boundary of the result.}
The microscopic \(U(1)\) law is assumed.  No theorem identifies
\(\widehat V\) with metric four-volume, derives sectorwise gravitational
normalization, predicts a cosmological constant, or explains why the
appropriate laboratory is confined to one volume sector.  Equation
\eqref{eq:loop-fixed-volume-cancellation} is therefore a conditional
interface attached to an exact charge-center theorem, not a solution of the
cosmological-constant problem.

\subsection{What these loops do and do not compose}
\label{sec:loop-noncomposition}

The completed loops remove several assumptions independently:
\begin{equation}
\begin{array}{rcl}
\text{postulated UCP kinetics}
&\Longrightarrow&\text{fresh-cell unitary dilation},\\
\text{assumed asymptotic abelianness}
&\Longrightarrow&\text{exact Boolean algebra and gap},\\
\text{static algebraic dichotomy}
&\Longrightarrow&\text{memory-capacity transition},\\
\text{assumed Lyapunov ranking}
&\Longrightarrow&\text{DAG followed by derived rank},\\
\text{calibrated proper-time oracle}
&\Longrightarrow&\text{field-covariance inverse geometry},\\
\text{shrinking fixed-mode audit}
&\Longrightarrow&\text{fixed physical modular band},\\
\text{attached central label}
&\Longrightarrow&\text{dynamical additive-charge center}.
\end{array}
\label{eq:loop-removals}
\end{equation}
No arrow in Eq.~\eqref{eq:loop-removals} licenses the replacement
\[
\text{collision graph}\equiv\text{causal graph}\equiv
\text{de Sitter anchors}\equiv\text{fermion interval}\equiv
\text{metric-volume sectors}.
\]
Those identifications would be additional assumptions.  A future unified
model must define common microscopic degrees of freedom and prove that its
distinct scaling regimes reduce to the models above, with an error budget
that remains controlled across both arrows
\[
\text{microscopic process}\longrightarrow\text{kinetic response}
\longrightarrow\text{order--volume geometry}.
\]
This noncomposition statement is as important as the positive theorems: it
prevents evidence obtained in one solvable model from being counted twice as
support for a different bridge.

\section{Operational and algebraic process foundations}
\label{sec:process-algebra}

A finite process matrix assigns probabilities to local operations by
\[
p(\vec a|\vec x)=\Tr\left[W_C\bigotimes_{i\in C}M_i^{a_i|x_i}\right],
\]
where $M_i^{a_i|x_i}$ are CP-instrument Choi operators and $W_C$ is a positive process matrix normalized on deterministic operations.  Process-matrix theory allows local quantum laboratories without assuming a fixed global causal order \cite{OCB2012}.

For infinite-dimensional and field-theoretic settings one replaces $W_C$ by a normal positive multilinear process functional.

\begin{definition}[Algebraic process functional]
Let $C=\{\N_i\}_{i=1}^n$ be a finite family of local von Neumann algebras.  An algebraic process functional is a positive multilinear map
\[
\Omega_C:\prod_i \mathrm{CP}_\sigma(\N_i)\to[0,1]
\]
which is normalized on deterministic normal UCP operations:
\[
\Omega_C(\Phi_1,\ldots,\Phi_n)=1
\]
whenever every $\Phi_i$ is deterministic.
\end{definition}

This object is operational: it maps local interventions to probabilities or states without first postulating a classical spacetime background.

\section{Background-relative causal influence and exact facts}

Fix a target algebra $\M_Y$ and source $X$.  A background strategy class $\B$ acts on the complement of $X\cup Y$.  For $\beta\in\B$ and source operation $\Phi_X$, define
\[
r_Y^{C;\beta}(\Phi_X)(A)=\Omega_C(\Phi_X,A,\beta),\qquad A\in\M_Y.
\]
Response differences are
\[
\delta_Y^{C;\beta}(\Phi_X,\Phi_X')
=r_Y^{C;\beta}(\Phi_X)-r_Y^{C;\beta}(\Phi_X').
\]
Let $\D_Y^{C;\B}$ be the family of all such differences.

\begin{definition}[Influence algebra]
\[
\Stab(\D)=\{U\in\mathcal U(\M_Y):\delta(U^*AU)=\delta(A),\ \forall A,\delta\},
\]
and
\[
\boxed{\I_Y^{C;\B}=\Stab(\D_Y^{C;\B})'\cap\M_Y.}
\]
\end{definition}

In finite dimension, if $\delta_k(A)=\Tr(\Delta_k A)$, then
\[
\I_Y^{C;\B}=\vN\{\Delta_k\}.
\]

\begin{definition}[Exact event]
An exact RQCP event is a tuple
\[
e=(C,Y,\B,P),\qquad P\in\Ev_0(Y|C;\B),
\]
where
\[
\boxed{\Ev_0(Y|C;\B)=\Proj(Z(\I_Y^{C;\B})).}
\]
\end{definition}

\begin{theorem}[Boolean facts]
For any von Neumann algebra $N$, $\Proj(Z(N))$ is a complete Boolean algebra with
\[
P\wedge Q=PQ,\quad P\vee Q=P+Q-PQ,
\quad \neg P=1-P.
\]
\end{theorem}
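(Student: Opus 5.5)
The plan is to show that \(\Proj(Z(N))\) is exactly the projection lattice of the abelian von Neumann algebra \(Z(N)\), and then to use the standard fact that the projections of an abelian von Neumann algebra form a complete Boolean algebra.

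First I will check that \(Z(N)=N\cap N'\) is an abelian von Neumann algebra. It is weakly closed and \(*\)-closed because both \(N\) and \(N'\) are. It is abelian because every element of \(Z(N)\) lies in \(N\) and also commutes with \(N\).

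Next I will verify the lattice operations. Take \(P,Q\in\Proj(Z(N))\). They commute, so \(PQ\) is self-adjoint and idempotent, and its range is \(\operatorname{ran}P\cap\operatorname{ran}Q\). Hence \(PQ\) is the greatest lower bound. The element \(P+Q-PQ\) is a projection, and it equals \(1-(1-P)(1-Q)\). So De Morgan together with \(\neg P=1-P\) shows it is the least upper bound. All three expressions are polynomials in central elements, so they lie in \(Z(N)\).

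The complement satisfies \(P(1-P)=0\) and \(P+(1-P)=1\). Distributivity follows from commutativity by direct expansion:
\[
P(Q+R-QR)=PQ+PR-PQPR .
\]
Together these make \(\Proj(Z(N))\) a Boolean algebra, with \(0\) and \(1\) as bottom and top.

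The only step needing real care is completeness. Given an arbitrary family \(\{P_\alpha\}\subset\Proj(Z(N))\), I will define \(\bigvee_\alpha P_\alpha\) as the projection onto the closed span of the ranges \(\operatorname{ran}P_\alpha\). This supremum exists in \(\Proj(B(\Hh))\). The issue is whether it stays central.

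I will show this by writing it as a strong-operator limit of finite joins over the directed set of finite subfamilies. Each finite join lies in \(Z(N)\), and \(Z(N)\) is strongly closed, so the limit does too. Alternatively, for a unitary \(U\in N\) each \(UP_\alpha U^*=P_\alpha\), so \(U\) permutes nothing and preserves the closed span. Hence the join commutes with every unitary of \(N\), and therefore with \(N\), so it lies in \(N'\). It also lies in \(N\), because for \(V\in N'\) the same argument applies, placing the join in \(N''=N\).

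Meets then follow by complementation:
\[
\bigwedge_\alpha P_\alpha=1-\bigvee_\alpha(1-P_\alpha).
\]
Because the order on \(\Proj(Z(N))\) is the restriction of the order on \(\Proj(B(\Hh))\), these operator-theoretic suprema and infima are also the lattice suprema and infima. This completes the proof.
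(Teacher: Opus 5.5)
Your proof is correct and follows the same route as the paper: $Z(N)$ is an abelian von Neumann algebra, so its commuting projections give the stated Boolean operations, and completeness follows because $Z(N)$ is strongly closed. This is exactly the argument in the appendix proof of Proposition~\ref{prop:centerboolean}; you additionally spell out details the paper leaves implicit, such as the distributivity check and the strong limit of the increasing net of finite joins.
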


\begin{proof}
$Z(N)$ is commutative and monotone complete; hence its projections form a complete Boolean projection lattice.
\end{proof}

If $P,Q$ are noncommuting projections, $PQ$ is not an orthogonal projection.  Therefore approximate records are effects, not exact Boolean facts.

\section{Split-record laboratories}

QFT local algebras are typically Type III.  RQCP does not demand atomic local projections.  Instead it uses split inclusions and record conditional expectations.  Under standard nuclearity/split hypotheses, one has
\[
\N\subset\F\subset\widetilde\N,
\]
with $\F$ a Type I factor.  This is an external AQFT input, connected to the Doplicher-Longo split framework \cite{DoplicherLongo}.

A Type I split buffer alone has trivial center.  Classical records arise from a record decomposition
\[
\Hh_{\rm lab}=\bigoplus_a \Hh_a,
\qquad P_a:\Hh_{\rm lab}\to\Hh_a.
\]
The record expectation is
\[
E_R(A)=\sum_a P_aAP_a,
\]
with range
\[
\R=\bigoplus_a\mathcal B(\Hh_a),
\qquad Z(\R)=\left\{\sum_a\lambda_aP_a\right\}.
\]
If $\I^{C;\B_R}=\R$, then exact facts are exactly record-center projections.

\section{Causal renormalization and multiplicative-domain stability}

Let $\Gamma:\A_{\rm coarse}\to\A_{\rm fine}$ be a normal UCP coarse-graining map in Heisenberg picture.  A UCP map need not preserve products.  Its multiplicative domain is
\[
\MD(\Gamma)=\{a:\Gamma(a^*a)=\Gamma(a)^*\Gamma(a),\ \Gamma(aa^*)=\Gamma(a)\Gamma(a)^*\}.
\]

If $P=P^*=P^2$ and $P\in\MD(\Gamma)$, then $\Gamma(P)$ is again a projection.  Hence exact facts are RG-stable if
\[
Z(\I_{\ell+1})\subset\MD(\Gamma_\ell),
\qquad
\Gamma_\ell(Z(\I_{\ell+1}))\subset Z(\I_\ell).
\]

\begin{theorem}[Boolean RG homomorphism]
Under the two conditions above,
\[
\Gamma_\ell:\Proj(Z(\I_{\ell+1}))\to\Proj(Z(\I_\ell))
\]
is a Boolean algebra homomorphism.
\end{theorem}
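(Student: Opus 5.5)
The plan is to reduce the statement to Choi's multiplicative-domain theorem and then check the Boolean operations one at a time using linearity and unitality. First, recall that \(\Gamma_\ell\) is unital and completely positive, hence \(2\)-positive. It therefore satisfies the Kadison--Schwarz inequality \(\Gamma_\ell(a^*a)\ge\Gamma_\ell(a)^*\Gamma_\ell(a)\). The standard consequence (Choi) is the following. If \(a\in\MD(\Gamma_\ell)\), then for every \(b\) in the domain algebra one has \(\Gamma_\ell(ab)=\Gamma_\ell(a)\Gamma_\ell(b)\) and \(\Gamma_\ell(ba)=\Gamma_\ell(b)\Gamma_\ell(a)\). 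In particular \(\MD(\Gamma_\ell)\) is a \(*\)-subalgebra on which \(\Gamma_\ell\) is a \(*\)-homomorphism. I would include a short proof sketch of this step. Apply positivity of the \(2\times2\) matrix \(\Gamma_\ell^{(2)}\bigl(\begin{smallmatrix}a^*\\ b^*\end{smallmatrix}\bigr)\bigl(\begin{smallmatrix}a& b\end{smallmatrix}\bigr)-\bigl(\begin{smallmatrix}\Gamma_\ell(a)^*\\ \Gamma_\ell(b)^*\end{smallmatrix}\bigr)\bigl(\begin{smallmatrix}\Gamma_\ell(a)&\Gamma_\ell(b)\end{smallmatrix}\bigr)\ge0\). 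When its \((1,1)\) entry vanishes, positivity forces the off-diagonal entry \(\Gamma_\ell(a^*b)-\Gamma_\ell(a)^*\Gamma_\ell(b)\) to vanish as well. This is the one genuinely nontrivial input, and I expect it to be the main obstacle, although it is classical.

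Second, let \(P\in\Proj(Z(\I_{\ell+1}))\). By the first hypothesis, \(P\in\MD(\Gamma_\ell)\), so \(\Gamma_\ell(P)^2=\Gamma_\ell(P^2)=\Gamma_\ell(P)\). Self-adjointness of \(\Gamma_\ell(P)\) follows from positivity of \(\Gamma_\ell\). Hence \(\Gamma_\ell(P)\) is a projection, as noted before the theorem. By the second hypothesis, \(\Gamma_\ell(P)\in Z(\I_\ell)\). So \(\Gamma_\ell\) indeed maps \(\Proj(Z(\I_{\ell+1}))\) into \(\Proj(Z(\I_\ell))\), and both sides are Boolean algebras by the Boolean facts theorem.

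Third, verify the operations listed in the Boolean facts theorem. For \(P,Q\in\Proj(Z(\I_{\ell+1}))\), Choi's theorem with \(a=P\in\MD(\Gamma_\ell)\) and \(b=Q\) gives \(\Gamma_\ell(P\wedge Q)=\Gamma_\ell(PQ)=\Gamma_\ell(P)\Gamma_\ell(Q)=\Gamma_\ell(P)\wedge\Gamma_\ell(Q)\). The last equality uses that the image projections commute, since they lie in the center. Unitality gives \(\Gamma_\ell(1)=1\), and linearity then gives \(\Gamma_\ell(1-P)=1-\Gamma_\ell(P)\), i.e. preservation of \(\neg\). Linearity together with the meet identity gives \(\Gamma_\ell(P+Q-PQ)=\Gamma_\ell(P)+\Gamma_\ell(Q)-\Gamma_\ell(P)\Gamma_\ell(Q)\), i.e. preservation of \(\vee\). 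Finally, \(\Gamma_\ell(0)=0\). These identities are exactly the defining properties of a Boolean algebra homomorphism, which completes the argument.

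I would not claim preservation of infinite suprema. That would additionally use normality of \(\Gamma_\ell\) and ultraweak continuity of multiplication on the commutative center. I would only remark on it, since the statement asks just for a homomorphism.
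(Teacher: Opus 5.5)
Your proposal is correct and follows the paper's proof. Multiplicativity on $\MD(\Gamma_\ell)$ gives preservation of meets, unitality and linearity give complements and joins, and the projection lemma together with center covariance places the images in $\Proj(Z(\I_\ell))$. The only variation is in the multiplicative-domain lemma: you derive it from the $2\times2$ Kadison--Schwarz matrix (Choi's argument), whereas the paper's appendix uses the Stinespring intertwining $\pi(a)V=V\Gamma(a)$. Both are standard and interchangeable here.
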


\begin{proof}
For central $P,Q$ in the multiplicative domain,
\[
\Gamma(PQ)=\Gamma(P)\Gamma(Q),\qquad \Gamma(1-P)=1-\Gamma(P).
\]
Linearity gives joins; center covariance places images in $Z(\I_\ell)$.
\end{proof}

\section{Transfer-channel spectral gap and threshold-edge stability}

A transfer-channel gap controls when threshold decisions stop changing; it
does not by itself make the limiting graph acyclic.  Let $\Gamma$ be a UCP
transfer channel with fixed-point algebra
\[
\Z=\Fix(\Gamma)=Z(\R),
\]
and let $E_\Z$ be the conditional expectation onto $\Z$.  Suppose
\[
r(\Gamma|_{\ker E_\Z})=r<1.
\]
Then for any $\rho\in(r,1)$ there is $C_\rho$ such that
\[
\boxed{\|\Gamma^k(A)-E_\Z(A)\|\le C_\rho\rho^k\|A\|.}
\]

Let response edge weights be
\[
w_k(e)=\ell_e(\Gamma^k(B_e))
\]
for bounded predual functionals $\ell_e$.  Then
\[
|w_k(e)-w_\infty(e)|\le C_\rho\rho^k\|\ell_e\|\|B_e\|.
\]
If a finite observation window $W$ has threshold margin
\[
m_W=\min_{e\in W}|w_\infty(e)-\tau|>0,
\]
then for
\[
k\ge \frac{\log(2C_\rho L_W/m_W)}{-\log\rho},
\]
where $L_W=\max_{e\in W}\|\ell_e\|\|B_e\|$, the edge set in $W$ is stable.
If the limiting edge set in \(W\) is already acyclic, its transitive closure
is stable as well.  Acyclicity is a separate dynamical conclusion in the
block-cactus defect model of Section~\ref{sec:loop-causal}; it is not a
consequence of spectral convergence alone.


\section{The information-to-order core and its present boundary}
\label{sec:updated-core-boundary}

Figure~\ref{fig:rqcp-core-mechanism} is the organizing picture for the
foundational part of RQCP-QG.  It contains two logically distinct layers.
Panels 1--5 describe the passage
\begin{equation}
 \text{local quantum observables}
 \longrightarrow
 \text{UCP coarse graining}
 \longrightarrow
 \text{infrared algebra}
 \longrightarrow
 \text{Boolean records},
 \label{eq:updated-information-core}
\end{equation}
while panel 6 asks when response records define a directed acyclic graph.
The abstract statements concern normal process functionals, influence
algebras, Choi--Effros ranges, and center projections.  The dynamical
statements require the explicit models of
Sec.~\ref{sec:closed-loops}.  Conflating those two levels would turn an
existence result into a false universality claim.

A previously used sufficient condition guaranteed acyclicity by prescribing
a Lyapunov function:
\begin{equation}
 i\to j,\quad T(j)>T(i)
 \quad\Longrightarrow\quad
 \text{no directed cycle}.
 \label{eq:updated-lyapunov-sufficient}
\end{equation}
It is no longer the strongest available statement.  In the block-cactus
model, every directed cycle is detected by a bounded local projector, the
defect algebra has an exact contact-process quotient, and extinction of all
defects makes the recorded graph acyclic.  Only after this event is the rank
\begin{equation}
 T_\omega(v)=\max_{p:\,p\to v}|p|
 \label{eq:updated-derived-rank}
\end{equation}
defined.  Thus the model realizes
\[
 \text{local dissipative dynamics}
 \longrightarrow
 \text{cycle extinction}
 \longrightarrow
 \text{DAG}
 \longrightarrow
 \text{derived Lyapunov ranking}.
\]
The distinction is substantive: Eq.~\eqref{eq:updated-derived-rank} is a
property of a realized response record, whereas the semigroup parameter is
the laboratory duration used to generate the record.

The result does not yet compose with the preceding Boolean model.  The
dephasing--exchange graph and the oriented block-cactus graph use different
degrees of freedom and different generators.  A unified model would need an
invariant subalgebra carrying both the Boolean record atoms and the cycle
projectors, plus a proof that the gap controlling record formation is
compatible with the critical closing of the order-formation gap.  That
compatibility is an open theorem, not a matter of relabeling one graph as the
other.

\section{A closed affine order--volume and linearized-gravity bridge}
\label{sec:affine-null-wire-closure}

\textbf{Status: \StatusModel\ for the finite algebraic statements and
\StatusNumerical\ for the archived regulator calculations.}

The affine regulator provides one common-model realization of the
order--volume and gravity interfaces.  The generic order--volume map is
treated separately in Sec.~\ref{sec:updated-order-volume}; the
fully autonomous nonlinear same-update bridge remains open, although
Sec.~\ref{sec:same-update-einstein} closes its finite action-phase
subproblem and Sec.~\ref{sec:three-decisive-chains} closes the relational
clock, quadratic constraint, and balanced-regulator subproblems under their
own visible inputs.  The companion bridge manuscript
\cite{XuOrderVolumeEinstein2026} constructs an affine null-wire regulator in
which order, information volume, metric reconstruction, transverse modular
response, focusing, and tensor completion are interfaces of the same model.
It is not obtained by identifying the block-cactus process of PRL-1 with the
de Sitter detector model of PRL-3.

\subsection{Order and information volume}

Let \(X\subset\mathbb Z^4\) be a finite contractible cubical complex.  Edge
response modes \(a_e\) are acted on by plaquette-local jumps
\begin{equation}
 L_p=\sqrt\kappa\sum_e(d_1)_{pe}a_e ,
 \qquad
 \dot z=-\frac{\kappa}{2}d_1^\dagger d_1z ,
 \label{eq:affine-hodge-flow}
\end{equation}
where \(z_e=\langle a_e\rangle\) and \(d_1d_0=0\).  A fresh-bath
beam-splitter collision realizes each jump; a local product formula gives the
summed semigroup.  Since \(H^1(X)=0\),
\begin{equation}
 z_\infty\in\ker d_1=\operatorname{im}d_0,
 \qquad z_\infty=d_0T .
 \label{eq:affine-exact-form}
\end{equation}
Thresholded responses with a positive limiting margin are therefore oriented
by differences \(T_y-T_x\).  Their directed graph is acyclic because the sum
of strictly positive differences around a cycle would vanish.  The laboratory
semigroup parameter, the recovered rank \(T\), and the affine coordinate
\(x^0\) are distinct.

At vertex \(x\), local dephasing leaves a Boolean record algebra
\(\mathcal A_x^{\rm rec}\simeq\mathbb C^{q_x}\).  Its zero-error classical
capacity gives
\begin{equation}
 \nu(R)=\sum_{x\in R}\log q_x,\qquad
 \mu(R)=\ell_I^4\nu(R).
 \label{eq:affine-information-volume}
\end{equation}
Thus the model produces both an order and a positive additive measure.  The
single dimensionful conversion \(\ell_I^4\) is an explicit calibration.

\subsection{Finite null-frame metric}

The response-active unoriented displacement classes consist of six vectors
of type \((1,\pm1,0,0)\) and permutations, and 24 vectors whose time
component is \(3\) and whose spatial absolute values are
\(\{1,2,2\}\).  For a symmetric tensor \(g\), define
\begin{equation}
 (Ag)_r=v_r^\mu g_{\mu\nu}v_r^\nu .
 \label{eq:affine-design}
\end{equation}
An integer \(9\times9\) minor of \(A\) has determinant \(-32768\).  Hence
\begin{equation}
 \operatorname{rank}A=9,\qquad
 \ker A=\operatorname{span}\{\eta\},\qquad
 \eta=\operatorname{diag}(-1,1,1,1).
 \label{eq:affine-rank-nine}
\end{equation}
The normalized frame has smallest positive singular value \(0.684225689\)
and condition number \(3.635381686\).  The response cone therefore fixes a
conformal Lorentzian metric, while the information volume fixes its scale by
\begin{equation}
 \sqrt{|\det g|}=\mu_0 .
 \label{eq:affine-scale}
\end{equation}
This is a finite cell reconstruction, not a generic curved
metric--measure-limit theorem.  In particular, the affine complex and the
30-class response cone are microscopic inputs.

\subsection{Area response and tensor completion}

Along each of the same 30 directions place a bundle of critical
free-fermion wires.  The fixed-physical-band modular certificate of
Sec.~\ref{sec:loop-modular} is additive over \(N_\perp\) wires.  Since the
transverse cut has \(A_\perp=N_\perp a^2\), division by area cancels
\(N_\perp\) exactly at finite regulator size.  One maximally entangled
\(q_g\)-record pair per cut cell gives
\begin{equation}
 S_\partial=A_\perp\frac{\log q_g}{a^2},
 \qquad
 G_{\rm eff}=\frac{a^2}{4\hbar\log q_g}.
 \label{eq:affine-geff}
\end{equation}
A stated local capacity-exchange channel is primitive and unital within each
conserved capacity sector, so its full sector entropy is stationary to first
order.  The split relation
\(\delta S_\partial+\delta S_{\rm bulk}=0\) additionally uses a full-rank
product reference on the cut and adjacent wire algebras.  Fixed-sector
primitivity does not select this product preparation; its failure is retained
as a measurable stationarity-interface error.

The lattice interval boost and the Dirichlet Green function share the
parabolic weight.  Discrete focusing therefore gives, for every translated
subdiamond and direction \(r\),
\begin{equation}
 \left(
 G_{\mu\nu}^{\rm L}-8\pi G_{\rm eff}T_{\mu\nu}
 \right)v_r^\mu v_r^\nu=\varepsilon_r .
 \label{eq:affine-null-equations}
\end{equation}
The family of translated kernels separates lattice sites.  With
\(E_{\mu\nu}=G_{\mu\nu}^{\rm L}-8\pi G_{\rm eff}T_{\mu\nu}\), the same
rank-nine frame yields
\begin{equation}
 \|P_g^\perp E\|_2
 \le\frac{\|\boldsymbol\varepsilon\|_2}{\sigma_9(A)}
 =1.46151\,\|\boldsymbol\varepsilon\|_2 .
 \label{eq:affine-tensor-error}
\end{equation}
Central finite differences commute, so the linearized lattice Einstein
tensor obeys an exact discrete Bianchi identity.  Conservation of the
number-carrying wire stress leaves only a constant metric mode:
\begin{equation}
 G_{\mu\nu}^{\rm L}
 =8\pi G_{\rm eff}T_{\mu\nu}
 +\Lambda_{\rm int}g_{\mu\nu}.
 \label{eq:affine-einstein}
\end{equation}
A fixed reference boundary may set \(\Lambda_{\rm int}=0\); otherwise it is
retained as an integration constant.

\begin{theorem}[Model-specific order--volume and Einstein closure]
\label{thm:affine-model-closure}
On the contractible affine four-complex, with the stated response cone,
Boolean record cells, decoupled critical-wire bundles, primitive
capacity-exchange channel, stationary product cut--wire reference, linearized
focusing law, and conserved lattice stress,
Eqs.~\eqref{eq:affine-hodge-flow}--\eqref{eq:affine-einstein} form a closed
finite-regulator implication from response dynamics to an acyclic order,
additive information volume, Lorentzian metric, and linearized Einstein tensor
equation.  The nonmetric tensor error is bounded by
Eq.~\eqref{eq:affine-tensor-error}.
\end{theorem}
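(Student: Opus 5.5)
The proof will chain the listed equations, checking at each link that the output of one step is exactly the input of the next within the same affine regulator.

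\emph{Order.} The Hodge flow \eqref{eq:affine-hodge-flow} is a negative semidefinite linear ODE, so \(z(t)\) converges to the orthogonal projection of \(z(0)\) onto \(\ker d_1^\dagger d_1=\ker d_1\). Contractibility gives \(H^1(X)=0\), hence \(\ker d_1=\operatorname{im}d_0\) and \(z_\infty=d_0T\) as in \eqref{eq:affine-exact-form}. Under a positive threshold margin, the spectral-gap threshold-stability argument of the transfer-channel section fixes the finite edge set. Every retained edge \(x\to y\) has \(T_y-T_x>0\), and summing around a directed cycle would give a positive quantity equal to zero. The graph is therefore a DAG, and \(T\) serves as a strict rank.

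\emph{Volume and metric.} Dephasing at each vertex leaves the commutative record algebra \(\mathbb C^{q_x}\). Its zero-error capacity \(\log q_x\) is additive over disjoint regions, which makes \(\nu\) and \(\mu\) in \eqref{eq:affine-information-volume} positive additive measures. For the metric, I would check the rank claim \eqref{eq:affine-rank-nine} directly. First, the exhibited integer \(9\times9\) minor has determinant \(-32768\ne0\), so \(\operatorname{rank}A\ge9\). Second, every listed vector is \(\eta\)-null: \(-1+1=0\) for the first family and \(-9+1+4+4=0\) for the second. Hence \(A\eta=0\), and the rank is exactly \(9\) with kernel \(\operatorname{span}\{\eta\}\). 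Null response data therefore determine \(g\) only up to the conformal mode, and the volume condition \eqref{eq:affine-scale} fixes that remaining scale.

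\emph{Einstein tensor.} The modular certificate of Sec.~\ref{sec:loop-modular} is additive over the \(N_\perp\) wires, so it passes to the bundle. Dividing by \(A_\perp=N_\perp a^2\) and applying the record-pair entropy gives \eqref{eq:affine-geff}. The split first law \(\delta S_\partial+\delta S_{\rm bulk}=0\) follows from sector primitivity together with the assumed product reference, with any defect kept in \(\varepsilon_r\). Discrete focusing with the shared parabolic weight then yields \eqref{eq:affine-null-equations} for each direction \(r\). Because the translated kernels separate sites, these become pointwise identities \(AE=\boldsymbol\varepsilon\). Restricted to \(\eta^\perp\), \(A\) is injective with smallest singular value \(\sigma_9\), which gives \eqref{eq:affine-tensor-error} with \(1/0.684225689\approx1.46151\). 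The remaining mode is \(E=\phi g\) with \(\phi\) a scalar field. Commuting central differences give the discrete Bianchi identity, and together with stress conservation this yields \(\partial^\mu(\phi g_{\mu\nu})=0\), so \(\phi\) is constant, which is \eqref{eq:affine-einstein}.

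The main obstacle is the step from the null projections to a full tensor equation. Its first part is showing that the translated focusing kernels separate sites, so that the identities hold pointwise. Its second part is ensuring the lattice divergence is compatible with the frame decomposition, so that the residual \(\phi g\) truly has constant \(\phi\) on the contractible complex. I would first try to verify both with a discrete Fourier or invertibility argument on the kernel family, and fall back on the archived numerical certificate if no closed form emerges.
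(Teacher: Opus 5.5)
Your proposal follows the paper's own chain essentially step for step---Hodge exactness $z_\infty=d_0T$ and telescoping for acyclicity, additive zero-error capacity for $\mu$, the $-32768$ minor together with $A\eta=0$ for $\ker A=\operatorname{span}\{\eta\}$, wire additivity with area division and the product reference for $G_{\rm eff}$, the $1/\sigma_9$ frame bound for \eqref{eq:affine-tensor-error}, and the central-difference Bianchi identity plus stress conservation for the constant mode---and your explicit nullity check of the 30 directions is exactly what the paper's one-line rank argument implicitly uses. The paper asserts your two flagged steps (site separation and constancy of $\phi$) at the same level of detail, so this is not a gap relative to the paper; but if you carry out the Fourier check you propose, be aware that a full-step central difference also annihilates the period-two staggered modes, so $\partial^\mu(\phi g_{\mu\nu})=0$ forces $\phi$ to be constant only under a compatible divergence with a scalar Poincar\'e gap, which is the hypothesis the paper states explicitly later in Theorem~\ref{thm:ward-bianchi-completion}.
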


The theorem closes \(P8_{\rm model}\) and \(P9_{\rm lin,model}\).  It does not
close the general P8 or P9: adjacency and dimension selection, generic curved
convergence, interacting \(3+1\)-dimensional modular universality, nonlinear
focusing, graviton self-interaction, and regulator independence of
\(G_{\rm eff}\) remain open.

\section{A local capacity-reference selection bridge}
\label{sec:capacity-reference-selection}

\textbf{Status: \StatusModel\ for a factorized finite
capacity--complement regulator; \StatusOpen\ for correlated records, the
record--area map, and regulator universality.}

The nonlinear information balance below requires a full-rank boundary
reference whose eigenvalue in an \(n\)-record payload block compensates the
payload multiplicity \(q^n\).  That state need not be installed by an
\(n\)-dependent reset.  The companion reference-selection model
\cite{XuCapacityReference2026} derives it as the unique attractive boundary
state of a sector-blind local random-unitary dynamics.

\subsection{Capacity--complement tokens and local collisions}

Each of \(M\) tokens has a location and a \(q\)-level payload,
\begin{equation}
 \widetilde{\mathcal H}_j
 =\mathbb C^2_{\ell,j}\otimes\mathbb C^q_{p,j},
 \qquad
 \ell\in\{\partial,r\}.
 \label{eq:reference-token}
\end{equation}
Moving a token transfers a fixed information capacity between the visible
boundary and its complement.  Let \(X_d,Z_d\) be generalized Weyl operators.
The product-Weyl twirl and its local generator are
\begin{align}
 \mathcal T_j(\widetilde\rho)
 &=\frac{1}{4q^2}
 \sum_{\substack{a,b=0,1\\c,d=0,\ldots,q-1}}
 W_{abcd}^{(j)}\widetilde\rho W_{abcd}^{(j)\dagger}
 =\frac{I_j}{2q}\otimes\Tr_j\widetilde\rho ,
 \label{eq:reference-twirl}\\
 \widetilde{\mathcal L}^{*}
 &=\gamma\sum_{j=1}^{M}(\mathcal T_j-\mathrm{id}),\qquad
 W_{abcd}=X_2^aZ_2^b\otimes X_q^cZ_q^d .
 \label{eq:reference-generator}
\end{align}
Every collision rate is independent of the total visible-record number.  A
fresh uniformly prepared control cell and the controlled unitary
\(U_j=\sum_\mu|\mu\rangle\langle\mu|\otimes W_\mu^{(j)}\) give a one-pass
unitary dilation, so the twirl is not postulated as a sector reset.

The boundary algebra retains a payload only when its location is
\(\partial\):
\begin{equation}
 \mathcal A_{\partial,j}\simeq\mathbb C\oplus M_q,\qquad
 \mathcal R_j(\widetilde\rho)
 =\Tr_p\langle r|\widetilde\rho|r\rangle
 \oplus\langle\partial|\widetilde\rho|\partial\rangle .
 \label{eq:reference-readout}
\end{equation}
The physical and boundary semigroups intertwine.  On the boundary, the local
expectation replaces token \(j\) by
\(\omega_j=\frac12\oplus I_q/(2q)\), and
\(\mathcal L_\partial^*=\gamma\sum_j(\mathcal E_j^*-\mathrm{id})\).

\begin{theorem}[Capacity-complement reference selection]
\label{thm:capacity-reference-selection}
For every finite \(M\) and \(q\ge2\), the boundary semigroup is primitive and
has the unique stationary state
\begin{equation}
 \omega_\partial
 =\bigotimes_{j=1}^{M}\omega_j
 =2^{-M}\bigoplus_{S\subseteq[M]}q^{-|S|}I_S .
 \label{eq:selected-capacity-reference}
\end{equation}
For every faithful boundary state,
\begin{equation}
 D\!\left(e^{t\mathcal L_\partial^*}\rho\middle\Vert\omega_\partial\right)
 \le e^{-\gamma t}D(\rho\Vert\omega_\partial),
 \label{eq:reference-entropy-contraction}
\end{equation}
with a rate independent of \(M\) and \(q\).  If
\(\widehat N=\sum_j\widehat n_j\) and
\(\widehat A=a_0^2\widehat N\), then
\begin{equation}
 -\log\omega_\partial
 =\frac{\log q}{a_0^2}\widehat A+M\log2 ,
 \qquad
 \operatorname{spec}(-\mathcal L_\partial)
 =\{0,\gamma,\ldots,M\gamma\}.
 \label{eq:selected-modular-area}
\end{equation}
The coefficient \((\log q)/a_0^2\) is unchanged by additive blocking of
microscopic tokens.
\end{theorem}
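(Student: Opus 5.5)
The proof reduces everything to a single token and then tensorizes.  The key structural fact is that the boundary generator is a sum of commuting local replacement channels,
\[
\mathcal L_\partial^*=\gamma\sum_{j=1}^M(\mathcal E_j^*-\mathrm{id}),\qquad
\mathcal E_j^*(\rho)=\omega_j\otimes\Tr_j\rho .
\]
Here \(\mathcal E_j^*\) is idempotent and \(\mathcal E_j^*\mathcal E_k^*=\mathcal E_k^*\mathcal E_j^*\).  The first step is to derive the boundary form of \(\mathcal E_j^*\) from Eq.~\eqref{eq:reference-twirl} via the intertwining with \(\mathcal R_j\).  Under the readout \eqref{eq:reference-readout}, the state \(I_j/(2q)\) maps to \(\tfrac12\oplus I_q/(2q)=\omega_j\), since \(\Tr_p\) of \(I_q/(2q)\) gives \(1/2\) on the \(r\) branch.

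Because the terms commute and each satisfies \(e^{\gamma t(\mathcal E-\mathrm{id})}=e^{-\gamma t}\mathrm{id}+(1-e^{-\gamma t})\mathcal E\), one gets the product formula
\[
e^{t\mathcal L_\partial^*}=\prod_{j}\bigl[e^{-\gamma t}\mathrm{id}+(1-e^{-\gamma t})\mathcal E_j^*\bigr].
\]
Expanding this product gives the spectrum directly.  The operators \(\prod_{j\in S}\mathcal E_j^*\prod_{j\notin S}(\mathrm{id}-\mathcal E_j^*)\) are commuting projections summing to the identity, and on the range labelled by \(S\) the generator acts as \(-\gamma(M-|S|)\).  All of these ranges are nonzero because each factor \(\mathbb C\oplus M_q\) has dimension greater than one.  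Hence \(\operatorname{spec}(-\mathcal L_\partial)=\{0,\gamma,\dots,M\gamma\}\).  The zero eigenspace is the range of \(\prod_j\mathcal E_j^*\), which is spanned by \(\bigotimes_j\omega_j\).  This proves primitivity and uniqueness of \(\omega_\partial\).

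The identification \(\bigotimes_j\omega_j=2^{-M}\bigoplus_S q^{-|S|}I_S\) follows by expanding the tensor product of the direct sums \(\tfrac12\oplus I_q/(2q)\), where \(S\) is the set of tokens located at \(\partial\).  On block \(S\), \(-\log\omega_\partial=M\log2+|S|\log q\).  With \(\widehat N|_S=|S|\) and \(\widehat A=a_0^2\widehat N\), this gives Eq.~\eqref{eq:selected-modular-area}.  For blocking, merge \(b\) microscopic tokens into one block token with payload dimension \(q^b\) and area \(b a_0^2\).  The ratio \(\log(q^b)/(ba_0^2)\) is unchanged, and the product structure of \(\omega_\partial\) is preserved, so the coefficient is invariant under additive blocking.

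The entropy contraction \eqref{eq:reference-entropy-contraction} is the main obstacle, because the rate must be independent of \(M\).  The approach is to use the product formula.  Set \(p=1-e^{-\gamma t}\).  The map \(e^{t\mathcal L_\partial^*}\) is a mixture over random subsets \(S\), each token independently included with probability \(p\), of the replacement channels \(\rho\mapsto\omega_S\otimes\Tr_S\rho\).  Joint convexity of relative entropy gives
\[
D\!\left(e^{t\mathcal L_\partial^*}\rho\,\middle\Vert\,\omega_\partial\right)\le\sum_S p^{|S|}(1-p)^{M-|S|}\,D(\rho_{S^c}\Vert\omega_{S^c}).
\]
It remains to bound this average by \((1-p)D(\rho\Vert\omega)\), which is the standard approximate tensorization, or strong subadditivity, inequality for product references.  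I would try to obtain it from Shearer-type subadditivity of relative entropy with respect to a product state:
\[
\sum_S\pi(S)\,D(\rho_{S^c}\Vert\omega_{S^c})\le(1-p)\,D(\rho\Vert\omega),
\]
where each site lies in \(S^c\) with probability \(1-p\).  This follows from the chain rule together with strong subadditivity.  If this version is delicate, the fallback is to prove \(\mathrm{EP}\ge\gamma D\) directly: differentiate at \(t=0\), use \(D(\mathcal E_j^*\rho\Vert\omega)\le D(\rho\Vert\omega)-D(\rho\Vert\mathcal E_j^*\rho)\) on the sum over \(j\), and invoke the tensorization \(\sum_jD(\rho\Vert\mathcal E_j^*\rho)\ge D(\rho\Vert\omega)\), which holds for commuting product conditional expectations.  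Grönwall then gives the rate \(\gamma\), uniform in \(M\) and \(q\).
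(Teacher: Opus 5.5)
Your proposal is correct and follows essentially the paper's route: the commuting local replacement maps give the product formula, the spectrum, and the unique faithful stationary state (which the paper phrases as the readout of the uniform physical state). The entropy contraction then comes from convexity of relative entropy plus the quantum Shearer inequality, and the chain-rule-plus-strong-subadditivity proof of that inequality is valid here precisely because every token is retained with the same probability \(e^{-\gamma t}\), so negative quantum conditional entropies cause no trouble.

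One small correction to the blocking step: a block of \(b\) tokens is \(\bigoplus_{T\subseteq B}M_{q^{|T|}}\), not a single two-sector token with payload \(q^b\). Your conclusion still stands, because \(-\log\omega_\partial\) and \(\widehat A\) are both sums of single-token terms, so on each block \(-\log\omega_B=\frac{\log q}{a_0^2}\widehat A_B+|B|\log 2\).
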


The stationary state follows by reducing the uniform physical state
\((I_{2q}/2q)^{\otimes M}\).  For a visible set \(S\), tracing the
\(q^{M-|S|}\) complement payload states gives eigenvalue
\(2^{-M}q^{-|S|}\) to every visible payload vector.  Hence the total
\(n\)-record probability is
\(\binom{M}{n}2^{-M}\): payload multiplicity is canceled without appearing
in a collision rate.  Convexity of relative entropy and the quantum Shearer
inequality give Eq.~\eqref{eq:reference-entropy-contraction}; the commuting
local expectations give the spectrum in
Eq.~\eqref{eq:selected-modular-area}
\cite{MullerHermes2016,Capel2021,Dong2026}.

The unbiased location law is falsifiable.  If a visible token has stationary
probability \(p\), the modular increment is
\begin{equation}
 \Delta K(p)=\log\!\left[\frac{q(1-p)}{p}\right],
 \label{eq:reference-bias-control}
\end{equation}
so a pure capacity slope \(\log q\) occurs only at \(p=1/2\).  The finite
model therefore removes a chosen-reference input for the factorized family
\(g_n=\binom{M}{n}\).  It does not derive the location/payload
factorization, unbiased Weyl symmetry, the identification
\(\widehat A=a_0^2\widehat N\), an arbitrary correlated geometric
degeneracy \(g_n\), or a regulator-independent Newton coefficient.

\section{A finite nonlinear information--focusing bridge}
\label{sec:nonlinear-information-focusing}

\textbf{Status: \StatusModel\ for the finite information, optical, and
curved-frame theorems; \StatusConditional\ for nonlinear Einstein closure.}

The affine bridge ends at first order in state exchange and curvature.  The
companion nonlinear bridge \cite{XuNonlinearInformationFocusing2026} asks a
more limited but sharper question: which terms can be retained exactly
before a continuum nonlinear gravitational equation is assumed?  It removes
the entropy-first-law and linearized-focusing truncations in one finite
record regulator and leaves the missing dynamics as a measured Ward defect.
It neither supplies the autonomous order/capacity inputs of the generic
bridge nor identifies a finite-difference equation with general relativity
by definition.

\subsection{Record capacity and the modular area coefficient}

Let the boundary-record Hilbert space be graded by elementary record number,
\begin{equation}
 \mathcal H_{\partial}
 =\bigoplus_{n=0}^{n_{\max}}
 \mathcal G_n\otimes(\mathbb C^{q_g})^{\otimes n},
 \qquad
 \widehat A=a^2\sum_{n=0}^{n_{\max}}nP_n ,
 \label{eq:nonlinear-graded-boundary}
\end{equation}
where \(g_n=\dim\mathcal G_n\), \(P_n\) is the sector projector, and
\(\operatorname{Tr}P_n=g_nq_g^n\).  The capacity-balanced reference is
\begin{equation}
 \rho_{\partial}^0
 =Z_g^{-1}\bigoplus_{n=0}^{n_{\max}}q_g^{-n}I_n,
 \qquad Z_g=\sum_n g_n .
 \label{eq:nonlinear-capacity-reference}
\end{equation}
Its total weight in sector \(n\) is \(g_n/Z_g\): the eigenvalue compensates
the \(q_g^n\) payload multiplicity rather than erasing the geometric
degeneracy \(g_n\).  Blockwise functional calculus gives the exact identity
\begin{equation}
 K_{\partial}^0=-\log\rho_{\partial}^0
 =\frac{\log q_g}{a^2}\widehat A+\log Z_g .
 \label{eq:nonlinear-capacity-modular}
\end{equation}
Consequently,
\begin{equation}
 \frac{1}{4G_{\rm eff}\hbar}=\frac{\log q_g}{a^2},
 \qquad
 G_{\rm eff}=\frac{a^2}{4\hbar\log q_g}.
 \label{eq:nonlinear-geff}
\end{equation}
Unlike a fit of an area law, Eq.~\eqref{eq:nonlinear-geff} is fixed once the
reference \eqref{eq:nonlinear-capacity-reference} is selected.
Theorem~\ref{thm:capacity-reference-selection} supplies exactly such a unique
attractive state for the factorized family
\(g_n=\binom{M}{n}\), with a size-independent contraction rate.  The
nonlinear theorem here allows arbitrary \(g_n\), however, and it does not
couple the selection semigroup to the record--bulk--geometry update.
Selection for correlated record geometry, the record--area identification,
and regulator universality therefore remain open.

\subsection{Exact finite information balance}

Let \(\rho_b^0\) be any full-rank bulk reference and
\(\rho_0=\rho_{\partial}^0\otimes\rho_b^0\).  For an arbitrary unitary
record--bulk exchange \(U\), write
\(\rho=U\rho_0U^\dagger\) and
\(K_b^0=-\log\rho_b^0\).  Entropy invariance and the product-reference
chain rule give
\begin{align}
 \frac{\Delta A}{4G_{\rm eff}\hbar}
 +\Delta\langle K_b^0\rangle
 &=D(\rho\Vert\rho_0)\nonumber\\
 &=D(\rho_{\partial}\Vert\rho_{\partial}^0)
 +D(\rho_b\Vert\rho_b^0)
 +I_\rho(\partial{:}b)\nonumber\\
 &:=\mathcal R_{\rm info}.
 \label{eq:nonlinear-finite-information}
\end{align}
No exchange-angle expansion enters.  All three terms in
\(\mathcal R_{\rm info}\) are nonnegative.  For
\(U_\theta=e^{-i\theta V}\), the first derivative at \(\theta=0\) vanishes
and the Hessian is the Bogoliubov--Kubo--Mori information metric:
\begin{equation}
 \mathcal R_{\rm info}(\theta)
 =\frac{\theta^2}{2}
 g_{\rm BKM,\rho_0}
 \!\left(-i[V,\rho_0],-i[V,\rho_0]\right)+O(\theta^3).
 \label{eq:nonlinear-bkm}
\end{equation}
The entropy first law is therefore recovered as the tangent statement of an
exact finite balance; the information metric is the first omitted term, not
an optional correction \cite{Petz1996,Lashkari2016}.

\subsection{Exact discrete nonlinear optics}

For affine steps \(\lambda_j=ja\), let \(J_j\) be the transverse Jacobi map
and define the finite regulator by
\begin{equation}
 J_{j+1}-2J_j+J_{j-1}=-a^2\mathcal R_jJ_j ,
 \label{eq:nonlinear-jacobi}
\end{equation}
where \(\mathcal R_j\) is the symmetric optical tidal matrix.  Before the
first caustic, define
\begin{equation}
 F_j=J_jJ_{j-1}^{-1}=\mathbf1+aB_j .
\end{equation}
Equation~\eqref{eq:nonlinear-jacobi} is then exactly equivalent to
\begin{align}
 B_{j+1}
 &=B_j(\mathbf1+aB_j)^{-1}-a\mathcal R_j,
 \label{eq:nonlinear-riccati}\\
 \frac{B_{j+1}-B_j}{a}
 &=-B_j^2(\mathbf1+aB_j)^{-1}-\mathcal R_j ,
 \label{eq:nonlinear-discrete-raychaudhuri}\\
 \frac{A_j}{A_{j-1}}
 &=\det(\mathbf1+aB_j).
 \label{eq:nonlinear-area-update}
\end{align}
The quadratic expansion, shear, and twist terms arise from the Riccati
transform; they are not appended to a scalar area law.  The continuum
expansion of Eq.~\eqref{eq:nonlinear-discrete-raychaudhuri} gives the full
Raychaudhuri equation.  At a caustic the Riccati chart fails while the doubled
transfer relation for \((J_{j+1},J_j)\) remains regular.  A global extension
requires a Lagrangian-Grassmannian or Maslov-index description and is kept
open.

\subsection{Curved null frames and tensor stability}

The same 30 rational null directions used in the affine construction may be
transported by an arbitrary invertible coframe.  Let
\begin{equation}
 g=e^T\eta e,\qquad k_r=e^{-1}v_r .
 \label{eq:nonlinear-curved-frame}
\end{equation}
For a symmetric tensor \(S\), let \(A_gS\) be the vector of row-normalized
quadratic forms \(S_{\mu\nu}k_r^\mu k_r^\nu\), with the symmetric-tensor
vectorization chosen isometrically in the Frobenius norm.  Congruence by
\(e^{-1}\) is invertible, so the affine rank certificate implies
\begin{equation}
 \operatorname{rank}A_g=9,\qquad
 \ker A_g=\operatorname{span}\{g\}.
 \label{eq:nonlinear-curved-rank}
\end{equation}
If \(\sigma_9(A_g)\) is the smallest nonzero singular value and
\(A_gS=\epsilon\), then
\begin{equation}
 \inf_{\phi\in\mathbb R}\|S-\phi g\|_F
 \le\frac{\|\epsilon\|_2}{\sigma_9(A_g)} .
 \label{eq:nonlinear-curved-stability}
\end{equation}
Thus the finite frame remains informationally complete for the trace-free
tensor after arbitrary local coframe deformation.  Curvature changes its
conditioning, not its rank.  Spacetime dimension and the reference
direction set are still regulator inputs.

\subsection{The finite Ward-defect equation}

For a small null diamond in direction \(r\), use a fixed Green weight to
define the integrated contractions \(Q_R^{(r)}\) and \(Q_T^{(r)}\).  The
exact Jacobi area and the bulk modular response define separately measurable
remainders:
\begin{align}
 \Delta A^{(r)}
 &=-A_{\perp,r}Q_R^{(r)}+\varepsilon_{\rm foc}^{(r)},\label{eq:nonlinear-foc-def}\\
 \Delta\langle K_b^0\rangle^{(r)}
 &=\frac{2\pi}{\hbar}A_{\perp,r}Q_T^{(r)}
 +\varepsilon_{\rm mod}^{(r)}.\label{eq:nonlinear-mod-def}
\end{align}
Substitution into Eq.~\eqref{eq:nonlinear-finite-information} yields the exact
finite identity
\begin{equation}
 \boxed{\begin{aligned}
 Q_R^{(r)}-8\pi G_{\rm eff}Q_T^{(r)}
 ={}&\frac{\varepsilon_{\rm foc}^{(r)}}{A_{\perp,r}}\\
 &+\frac{4G_{\rm eff}\hbar}{A_{\perp,r}}
 \left(\varepsilon_{\rm mod}^{(r)}
 -\mathcal R_{\rm info}^{(r)}\right).
 \end{aligned}}
 \label{eq:nonlinear-null-defect}
\end{equation}
Variation of the coframe and Green weight across a shrinking diamond adds a
separately defined localization defect
\(\varepsilon_{\rm loc}^{(r)}\) to the projection data in
Eq.~\eqref{eq:nonlinear-curved-stability}.  The zero-defect target at finite
resolution is therefore the Ward relation
\begin{equation}
 \mathcal R_{\rm info}^{(r)}
 =\varepsilon_{\rm mod}^{(r)}
 +\frac{\varepsilon_{\rm foc}^{(r)}}
 {4G_{\rm eff}\hbar},
 \label{eq:nonlinear-ward-target}
\end{equation}
together with a controlled localization limit.  Equation
\eqref{eq:nonlinear-ward-target} is not imposed in the companion model.  It
is the microscopic identity required of a common
record--matter--geometry update.

\begin{theorem}[Finite nonlinear remainder bridge]
\label{thm:finite-nonlinear-remainder}
For the capacity-balanced reference
\eqref{eq:nonlinear-capacity-reference}, a full-rank product bulk reference,
an arbitrary unitary record--bulk exchange, the discrete Jacobi regulator
\eqref{eq:nonlinear-jacobi} before its first caustic, and the coframe-
transported 30-direction null design, Eqs.
\eqref{eq:nonlinear-capacity-modular},
\eqref{eq:nonlinear-finite-information},
\eqref{eq:nonlinear-riccati}--\eqref{eq:nonlinear-area-update},
\eqref{eq:nonlinear-curved-rank}--\eqref{eq:nonlinear-curved-stability}, and
\eqref{eq:nonlinear-null-defect} hold exactly at finite regulator size.
\end{theorem}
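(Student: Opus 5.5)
The theorem bundles five independent finite identities, and I will prove them in turn; only the last combines the others. Nothing asymptotic is needed, since each step is functional calculus, an entropy identity, a matrix recurrence, or finite linear algebra.

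\textbf{Step 1: the modular identity.} The reference \eqref{eq:nonlinear-capacity-reference} commutes with every \(P_n\) and acts as the scalar \(Z_g^{-1}q_g^{-n}\) on the range of \(P_n\). Blockwise functional calculus therefore gives
\[
-\log\rho_\partial^0=\sum_n\bigl(n\log q_g+\log Z_g\bigr)P_n .
\]
Comparing with \(\widehat A=a^2\sum_n nP_n\) yields Eq.~\eqref{eq:nonlinear-capacity-modular}.

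\textbf{Step 2: the finite information balance.} Set \(\rho=U\rho_0U^\dagger\). Unitary invariance gives \(S(\rho)=S(\rho_0)\). Since \(\rho_0\) is a full-rank product state, \(-\log\rho_0=K_\partial^0\otimes\mathbf1+\mathbf1\otimes K_b^0\). Hence
\[
D(\rho\Vert\rho_0)=-S(\rho)+\Tr\rho(-\log\rho_0)=\Delta\langle K_\partial^0\rangle+\Delta\langle K_b^0\rangle .
\]
Step 1 converts \(\Delta\langle K_\partial^0\rangle\) into \((\log q_g/a^2)\Delta A\), because the constant \(\log Z_g\) cancels in the difference of two normalized states. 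Eq.~\eqref{eq:nonlinear-geff} then rewrites this as \(\Delta A/(4G_{\rm eff}\hbar)\).

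For the second equality in \eqref{eq:nonlinear-finite-information} I use the chain rule
\[
D(\rho\Vert\sigma_\partial\otimes\sigma_b)=D(\rho_\partial\Vert\sigma_\partial)+D(\rho_b\Vert\sigma_b)+I_\rho(\partial{:}b),
\]
which follows by expanding \(\log(\sigma_\partial\otimes\sigma_b)\). The expansion \eqref{eq:nonlinear-bkm} is the standard second-order expansion of relative entropy along the unitary orbit. The first derivative vanishes because \(D\ge0\) with equality at \(\theta=0\), and the Hessian is the BKM metric \cite{Petz1996,Lashkari2016}.

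\textbf{Step 3: discrete optics.} Before the first caustic every \(J_j\) is invertible. Right-multiplying \eqref{eq:nonlinear-jacobi} by \(J_j^{-1}\) gives
\[
F_{j+1}-2+F_j^{-1}=-a^2\mathcal R_j .
\]
Substituting \(F=\mathbf1+aB\) and writing \((\mathbf1+aB_j)^{-1}=\mathbf1-aB_j(\mathbf1+aB_j)^{-1}\) produces \eqref{eq:nonlinear-riccati}. The identity \(B_j(\mathbf1+aB_j)^{-1}-B_j=-aB_j^2(\mathbf1+aB_j)^{-1}\) then gives \eqref{eq:nonlinear-discrete-raychaudhuri}. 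Each step is reversible, so the two forms are exactly equivalent. Defining \(A_j\propto\det J_j\), multiplicativity of the determinant gives \eqref{eq:nonlinear-area-update}.

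\textbf{Step 4: curved rank.} With \(k_r=e^{-1}v_r\), one has
\[
S_{\mu\nu}k_r^\mu k_r^\nu=v_r^T\bigl(e^{-T}Se^{-1}\bigr)v_r .
\]
So \(A_g\) is the affine design \(A\) composed with the invertible congruence \(S\mapsto e^{-T}Se^{-1}\), followed by an invertible diagonal row normalization. Rank is preserved, and the kernel is the preimage of \(\operatorname{span}\{\eta\}\) from \eqref{eq:affine-rank-nine}, namely \(e^T\eta e=g\). The bound \eqref{eq:nonlinear-curved-stability} is the pseudoinverse estimate on the orthogonal complement of the kernel.

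\textbf{Step 5: the Ward-defect identity.} Insert \eqref{eq:nonlinear-foc-def} and \eqref{eq:nonlinear-mod-def} into the balance of Step 2 applied to the direction-\(r\) exchange. Multiplying by \(4G_{\rm eff}\hbar/A_{\perp,r}\) gives
\[
-Q_R^{(r)}+8\pi G_{\rm eff}Q_T^{(r)}+\frac{\varepsilon_{\rm foc}^{(r)}}{A_{\perp,r}}+\frac{4G_{\rm eff}\hbar}{A_{\perp,r}}\bigl(\varepsilon_{\rm mod}^{(r)}-\mathcal R^{(r)}_{\rm info}\bigr)=0 ,
\]
where the coefficient \(8\pi\) arises as \(4\cdot 2\pi\). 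Rearranging gives the boxed identity \eqref{eq:nonlinear-null-defect}.

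\textbf{Main obstacle.} I expect the main obstacle to be this last bookkeeping, specifically the sign and normalization conventions. I must check that \(\Delta A\) in \eqref{eq:nonlinear-foc-def} is the same area increment as the one that enters through \(\widehat A\) in Step 2; that identification is definitional, and I would state it explicitly. I must also confirm that the row normalization in Step 4 does not alter the kernel, which holds because it is a diagonal invertible rescaling.
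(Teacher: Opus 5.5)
Your proposal is correct and follows the same route as the paper. The paper obtains the modular identity by blockwise functional calculus and the finite balance from unitary entropy invariance plus the product-reference chain rule. It derives the optical identities from the Riccati transform $F_j=J_jJ_{j-1}^{-1}$ of the Jacobi recurrence, the curved rank and kernel from invertibility of the congruence by $e^{-1}$, and the boxed Ward-defect identity by direct substitution of the two remainder definitions. The paper also makes the identification you flag, of the Jacobi area change $\Delta A^{(r)}$ with the record-area change $\Delta\langle\widehat A\rangle$, only implicitly; the BKM expansion you additionally justify is not among the exact equations the theorem claims.
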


The theorem closes \(P9_{\rm nonlinear\ remainder,model}\), not P9.  If the
normalized right-hand side of Eq.~\eqref{eq:nonlinear-null-defect} and the
localization defect vanish for all frame directions, then the curvature--
stress difference is a cell-dependent multiple \(\phi_xg_{\mu\nu}(x)\).
Turning that local scalar freedom into one constant is a separate finite
completion problem.  It can be controlled quantitatively, but only after the
response tensor is identified with the geometric Euler derivative of the
same process.

\section{A quantitative finite Ward--Bianchi completion bridge}
\label{sec:ward-bianchi-completion}

\textbf{Status: \StatusAbstract\ for the finite completion theorem and
\StatusNumerical\ for the periodic four-lattice certificate;
\StatusOpen\ for the constitutive Ward identification and its uniform
interacting continuum limit.}

Let \(\mathcal S_h\) and \(\mathcal V_h\) be finite Hilbert spaces of
symmetric tensor and covector fields.  A metric field \(g_x\) defines
\[
 M_g\phi=\{\phi_xg_x\}_{x\in C_h}.
\]
Let \(A_h:\mathcal S_h\to\mathcal Y_h\) collect normalized contractions with
a local null frame and assume the exact finite completeness condition
\begin{equation}
 \ker A_h=\operatorname{ran}M_g,\qquad
 A_h^+A_h=P_g^\perp .
 \label{eq:finite-null-kernel}
\end{equation}
Thus null data determine every nonmetric component but are blind to an
independent scalar \(\phi_x\) at each cell.  Let
\(B_h:\mathcal S_h\to\mathcal V_h\) be the covariant divergence constructed
from the same finite geometry, with
\begin{equation}
 B_hM_g=D_h,\qquad
 \|\phi-\bar\phi\|
 \le\lambda_h^{-1/2}\|D_h\phi\|,
 \quad \lambda_h>0 .
 \label{eq:finite-compatible-poincare}
\end{equation}
The second relation is a scalar Poincar\'e estimate after removal of the
constant mode.

For noisy direction-resolved data
\[
 y=A_hW+n,\qquad \|n\|\le\nu,\qquad
 \widehat E=A_h^+y,
\]
define the two reconstructed budgets
\begin{equation}
 \eta_0=\|\widehat E\|+\|A_h^+\|\nu,\qquad
 \eta_1=\|B_h\widehat E\|+\|B_hA_h^+\|\nu .
 \label{eq:finite-reconstructed-budgets}
\end{equation}

\begin{theorem}[Noisy finite Ward--Bianchi completion]
\label{thm:ward-bianchi-completion}
Assume Eqs.~\eqref{eq:finite-null-kernel} and
\eqref{eq:finite-compatible-poincare}.  If
\(\|B_hW\|\le\varepsilon_{\rm W}\), then there is one constant
\(\bar\phi\) such that
\begin{equation}
 \boxed{
 \|W-M_g\bar\phi\|
 \le
 \eta_0+\frac{\|M_g\|}{\sqrt{\lambda_h}}
 \left(\varepsilon_{\rm W}+\eta_1\right).}
 \label{eq:ward-bianchi-completion}
\end{equation}
\end{theorem}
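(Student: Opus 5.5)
The plan is to split \(W\) into its nonmetric part, which the null data see, and its pointwise-metric part, which they do not. By Eq.~\eqref{eq:finite-null-kernel}, \(A_h^+A_h=P_g^\perp\) and \(\ker A_h=\operatorname{ran}M_g\). We therefore write
\[
 W=P_g^\perp W+M_g\phi
\]
for a unique cell scalar \(\phi\). This uses that \(M_g\) is injective, which holds because \(g_x\neq0\); if injectivity were in doubt we would simply pick any preimage.

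\textbf{Nonmetric part.} We have \(\widehat E=A_h^+y=P_g^\perp W+A_h^+n\). Hence
\[
 \|P_g^\perp W\|\le\|\widehat E\|+\|A_h^+\|\nu=\eta_0 .
\]

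\textbf{Metric part.} Apply \(B_h\) and use the compatibility \(B_hM_g=D_h\):
\[
 D_h\phi=B_hW-B_hP_g^\perp W=B_hW-B_h\widehat E+B_hA_h^+n .
\]
The triangle inequality then gives
\[
 \|D_h\phi\|\le\varepsilon_{\rm W}+\|B_h\widehat E\|+\|B_hA_h^+\|\nu=\varepsilon_{\rm W}+\eta_1 .
\]
The scalar Poincar\'e estimate in Eq.~\eqref{eq:finite-compatible-poincare} converts this into
\[
 \|\phi-\bar\phi\|\le\lambda_h^{-1/2}(\varepsilon_{\rm W}+\eta_1),
\]
where \(\bar\phi\) is the constant-mode projection of \(\phi\). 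This \(\bar\phi\) is the single constant in the statement.

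\textbf{Combination.} Write
\[
 W-M_g\bar\phi=P_g^\perp W+M_g(\phi-\bar\phi)
\]
and apply the triangle inequality with operator norm \(\|M_g\|\). This gives Eq.~\eqref{eq:ward-bianchi-completion}. Orthogonality of the two pieces would even allow a root-sum-square bound, but the stated additive bound is all that is needed.

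\textbf{Main obstacle.} The key step is the bookkeeping that makes the divergence of the unobserved scalar part controllable using only reconstructed quantities. Directly, \(B_hP_g^\perp W\) is unknown, and it is expressed through \(\widehat E\) and the noise, which produces the \(\|B_hA_h^+\|\nu\) term. One must also confirm that the decomposition identity \(P_g^\perp W=W-M_g\phi\) is exact. This is guaranteed because \(P_g^\perp\) is the orthogonal projector onto \((\ker A_h)^\perp=(\operatorname{ran}M_g)^\perp\), so \(W-P_g^\perp W\in\operatorname{ran}M_g\). No other earlier results are needed.
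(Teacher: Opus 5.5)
Your proposal is correct and follows the paper's proof essentially step for step. You split \(W=M_g\phi+P_g^\perp W\), bound \(\|P_g^\perp W\|\le\eta_0\) and \(\|B_hP_g^\perp W\|\le\eta_1\) via \(\widehat E=P_g^\perp W+A_h^+n\), use compatibility \(B_hM_g=D_h\), then apply the Poincar\'e estimate and the triangle inequality with \(\|M_g\|\); the only difference is that you write out the noise bookkeeping the paper compresses into ``the noisy reconstruction gives.''
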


\begin{proof}
Write \(W=M_g\phi+E\), with
\(E=P_g^\perp W=A_h^+A_hW\).  The noisy reconstruction gives
\(\|E\|\le\eta_0\) and \(\|B_hE\|\le\eta_1\).  Compatibility then yields
\[
 \|D_h\phi\|
 =\|B_hM_g\phi\|
 \le\|B_hW\|+\|B_hE\|
 \le\varepsilon_{\rm W}+\eta_1.
\]
Apply the Poincar\'e estimate and
\(\|M_g(\phi-\bar\phi)\|\le\|M_g\|\|\phi-\bar\phi\|\).
\end{proof}

The Ward residual can itself be tied to one finite process functional.
Let \(\Gamma_h(z,m)\) depend differentiably on geometric and matter
variables.  For the tangent generators \(R_z\xi,R_m\xi\), define Euler
covectors \(\mathcal E_z=D_z\Gamma_h\),
\(\mathcal E_m=D_m\Gamma_h\) and the finite symmetry defect
\(\delta_{\rm sym}\) by differentiation along the joint transformation.
The chain rule gives the exact identity
\begin{equation}
 R_z^*\mathcal E_z+R_m^*\mathcal E_m
 =\delta_{\rm sym}.
 \label{eq:finite-noether-defect}
\end{equation}
Consequently, if
\begin{equation}
 W=\mathcal E_z,\qquad B_h=R_z^*,
 \label{eq:constitutive-ward-identification}
\end{equation}
then
\begin{equation}
 \varepsilon_{\rm W}
 \le\|\delta_{\rm sym}\|
 +\|R_m^*\|\,\|\mathcal E_m\|.
 \label{eq:finite-ward-budget}
\end{equation}
More generally, retain the constitutive defect
\[
 C_h=W-\mathcal E_z,\qquad
 \|C_h\|\le\varepsilon_{\rm cons}^{(0)},\qquad
 \|B_hC_h\|\le\varepsilon_{\rm cons}^{(1)} .
\]
The same proof gives the explicit certificate
\begin{align}
\inf_{\bar\phi\in\mathbb R}
\|\mathcal E_z-M_g\bar\phi\|
\le{}&\eta_0+\varepsilon_{\rm cons}^{(0)}
+\frac{\|M_g\|}{\sqrt{\lambda_h}}\bigl[
\eta_1+\varepsilon_{\rm cons}^{(1)}
\nonumber\\[-2pt]
&+\|\delta_{\rm sym}\|
+\|R_m^*\|\,\|\mathcal E_m\|\bigr].
\label{eq:finite-constitutive-certificate}
\end{align}
Equations~\eqref{eq:ward-bianchi-completion} and
\eqref{eq:finite-constitutive-certificate} put null tomography, constitutive
mismatch, finite symmetry breaking, off-shell matter, measurement noise, and
scalar conditioning in one error budget \cite{XuWardBianchi2026}.

The irreducible open arrow is the simultaneous decay of the two
constitutive defects in Eq.~\eqref{eq:finite-constitutive-certificate};
Eq.~\eqref{eq:constitutive-ward-identification} is their exact endpoint.
Gauge covariance cannot prove that the tensor assembled from the information
and optical sides of Eq.~\eqref{eq:nonlinear-null-defect} approaches the Euler
covector of the same functional, either before or after applying the spatial
divergence.  That is the microscopic information--geometry constitutive Ward
relation.  Nor may a holonomy identity be substituted for it.  Products of
face holonomies obey exact nonlinear lattice Bianchi identities
\cite{Batrouni1982,HamberKagel2004}, and finite Noether laws have exact
discrete formulations \cite{Skopenkov2023,Arrighi2023}; nevertheless, a
generic curved Regge lattice on a fixed triangulation does not have exact
vertex-displacement symmetry \cite{BahrDittrich2009}.  Kinematic holonomy
closure and dynamical diffeomorphism symmetry are different statements.

The companion certificate uses a periodic physical-side-one \(L^4\)
lattice, backward-incidence divergence, and the 30 rational null directions
of the earlier bridges.  Its row-normalized design has rank nine,
\(\ker A_h=\mathbb R\eta\), and
\(\sigma_9=0.8498365856\).  All 96 independently generated noisy smooth
fields on \(L=4,6,8,10\) satisfy
Eq.~\eqref{eq:ward-bianchi-completion}; the certified-bound to
actual-error ratios range from \(1.586\) to \(6.776\).  A frozen refinement
family over
\(L=4,6,8,10,12,16\) has actual and certified powers \(2.000\) and \(2.057\).
Two exact controls expose nonredundancy:
\[
 W(x)=\phi(x)\eta
 \quad\Longrightarrow\quad
 A_hW=0\ \text{but}\ B_hW\ne0,
\]
whereas a constant trace-free anisotropy has
\(B_hW=0\) but \(A_hW\ne0\).  Only a constant metric mode makes both
residuals vanish.

For a refinement family, the finite theorem gives a continuum-use
corollary.  If \(\|M_g\|\) and \(\|A_h^+\|\) remain bounded,
\(\lambda_h\) stays bounded below in the physical normalization, and
\[
\|A_h^+\|\nu_h+\|B_hA_h^+\|\nu_h
+\|\widehat E_h\|+\|B_h\widehat E_h\|
+\|\delta_{{\rm sym},h}\|
+\|R_{m,h}^*\|\,\|\mathcal E_{m,h}\|
\longrightarrow0 ,
\]
then, under the exact constitutive identification,
\[
 \inf_{\bar\phi\in\mathbb R}\|W_h-M_{g,h}\bar\phi\|\longrightarrow0.
\]
For an approximate identification, the two constitutive terms in
Eq.~\eqref{eq:finite-constitutive-certificate} must vanish as well.
This does not prove those uniform hypotheses.  It identifies exactly which
conditioning or dynamical defect would obstruct the limit and leaves the
constant metric mode to boundary or cosmological dynamics.

\section{A same-update finite Einstein--matter closure}
\label{sec:same-update-einstein}

\textbf{Status: \StatusModel\ for the local Ramsey action-phase identity and
its central-difference theorem; \StatusNumerical\ for the nonlinear FLRW and
four-dimensional tensor refinements; \StatusOpen\ for dynamical selection of
the action and a regulator-independent interacting limit.}

The constitutive quantities in
Eq.~\eqref{eq:finite-constitutive-certificate} need not remain independently
specified.  Let one finite regulator have variables \(q=(z,m)\), positive
cell weights \(\omega_{a,\alpha}\), and one declared local phase functional
\begin{equation}
 \Gamma_a(q)=\sum_{\alpha\in\Lambda_a}
 \omega_{a,\alpha}\gamma_{a,\alpha}(q_{\mathcal N(\alpha)}),
 \qquad
 \widehat\Gamma_a=\sum_q\Gamma_a(q)|q\rangle\langle q|.
 \label{eq:same-update-local-action}
\end{equation}
A fresh ancilla coherently controls the two interventions
\(q\mapsto q\pm\varepsilon e_\alpha\), applies
\(U_a(s)=e^{-is\widehat\Gamma_a}\), and undoes the intervention.  Its
\(X/Y\) coherence gives the operational score
\begin{equation}
 W^\varepsilon_{a,\alpha}
 =\frac{\Gamma_a(q+\varepsilon e_\alpha)
              -\Gamma_a(q-\varepsilon e_\alpha)}
        {2\varepsilon\omega_{a,\alpha}} ,
 \label{eq:same-update-operational-score}
\end{equation}
without evaluating the analytic derivative
\(\mathcal E_a=D\Gamma_a\).

\begin{theorem}[Same-update response--Euler certificate]
\label{thm:same-update-einstein}
Suppose the normalized third-derivative densities of \(\Gamma_a\) and the
first finite differences of their central-difference remainder field are
bounded by \(M_{0,a}\) and \(M_{1,a}\).  Then
\begin{equation}
 \|W_a^\varepsilon-\mathcal E_a\|_a
 \le\frac{\varepsilon^2}{6}M_{0,a},
 \qquad
 \|B_a(W_a^\varepsilon-\mathcal E_a)\|_a
 \le\varepsilon^2M_{1,a}.
 \label{eq:same-update-two-defects}
\end{equation}
For a quadratic phase action both defects vanish exactly.  Let \(r_a\)
denote the right-hand side of
Eq.~\eqref{eq:finite-constitutive-certificate} after substituting these two
budgets.  If one additional operational volume covector \(v_a\) satisfies
\(v_a(M_g1)=1\), \(\|v_a\|\le c_{v,a}\), and
\(|v_a(W_a^\varepsilon)|\le\eta_{V,a}\), then the complete geometric Euler
covector obeys
\begin{equation}
 \boxed{\;
 \|\mathcal E_{z,a}\|_a
 \le (1+c_{v,a})r_a+\eta_{V,a}
       +c_{v,a}\frac{\varepsilon^2M_{0,a}}{6}\; .}
 \label{eq:same-update-full-certificate}
\end{equation}
\end{theorem}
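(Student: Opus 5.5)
The proof splits into two parts: a pointwise Taylor estimate for the central-difference score, and a finite linear-algebra argument that removes the constant metric mode using the volume covector.

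\emph{Constitutive defects.} Fix a cell index \(\alpha\) and write \(f(s)=\Gamma_a(q+s e_\alpha)\). By Eq.~\eqref{eq:same-update-operational-score}, \(W^\varepsilon_{a,\alpha}\omega_{a,\alpha}=[f(\varepsilon)-f(-\varepsilon)]/(2\varepsilon)\), while \(\mathcal E_{a,\alpha}\omega_{a,\alpha}=f'(0)\) in the weighted normalization. Taylor's theorem with integral remainder gives
\[
\frac{f(\varepsilon)-f(-\varepsilon)}{2\varepsilon}-f'(0)=\frac{\varepsilon^2}{12}\int_0^1(1-u)^2\bigl[f'''(u\varepsilon)+f'''(-u\varepsilon)\bigr]\,du .
\]
The right-hand side is bounded by \(\frac{\varepsilon^2}{6}\sup|f'''|\). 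After dividing by \(\omega_{a,\alpha}\), this supremum is exactly the normalized third-derivative density, whose bound is \(M_{0,a}\). Taking the cell norm \(\|\cdot\|_a\) then gives the first bound in Eq.~\eqref{eq:same-update-two-defects}.

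For the divergence bound, define the remainder field \(\rho_\alpha=(W^\varepsilon_a-\mathcal E_a)_\alpha\), which equals \(\varepsilon^2\) times the normalized remainder integral above. Since \(B_a\) is a local first finite-difference operator, \(B_a\rho\) is \(\varepsilon^2\) times first differences of that remainder field, and those differences are bounded by \(M_{1,a}\) by hypothesis. If \(\Gamma_a\) is quadratic in each \(q_\alpha\), then \(f'''\equiv0\), so both defects vanish identically.

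\emph{Certificate for the nonmetric part.} Substitute \(\varepsilon^{(0)}_{\rm cons}=\varepsilon^2M_{0,a}/6\) and \(\varepsilon^{(1)}_{\rm cons}=\varepsilon^2M_{1,a}\) into Eq.~\eqref{eq:finite-constitutive-certificate}, with \(C_h=W^\varepsilon_a-\mathcal E_{z,a}\). This yields a constant \(\bar\phi\) with \(\|\mathcal E_{z,a}-M_g\bar\phi\|_a\le r_a\), which is the definition of \(r_a\).

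\emph{Removing the constant mode.} This is the main obstacle: Theorem~\ref{thm:ward-bianchi-completion} controls everything except \(\bar\phi\), and the volume score has to fix it. Set \(D=\mathcal E_{z,a}-M_g\bar\phi\). Apply \(v_a\) and use \(v_a(M_g1)=1\):
\[
\bar\phi=v_a(\mathcal E_{z,a})-v_a(D).
\]
Next write \(v_a(\mathcal E_{z,a})=v_a(W^\varepsilon_a)-v_a(W^\varepsilon_a-\mathcal E_{z,a})\). This gives
\[
|\bar\phi|\le\eta_{V,a}+c_{v,a}\frac{\varepsilon^2M_{0,a}}{6}+c_{v,a}r_a .
\]
The triangle inequality \(\|\mathcal E_{z,a}\|_a\le\|D\|_a+|\bar\phi|\,\|M_g1\|_a\) then gives Eq.~\eqref{eq:same-update-full-certificate}, provided \(\|M_g1\|_a\le1\).

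The point I would have to check is this last normalization. With a normalized cell norm and a row-normalized metric (\(\|g_x\|\) normalized cellwise), \(\|M_g1\|_a=1\) holds. Otherwise \(\|M_g1\|_a\) must be absorbed into \(c_{v,a}\), for example by normalizing \(v_a\) against \(M_g1/\|M_g1\|_a\). The constant-mode step is the only one that goes beyond Theorem~\ref{thm:ward-bianchi-completion}, so I would settle this normalization first.
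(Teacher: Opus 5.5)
Your proof is correct and follows the paper's own route: Taylor's formula with integral remainder, kept as a lattice field before \(B_a\) is applied; then the null--Ward decomposition \(\mathcal E_{z,a}=M_g\bar\phi_a+\rho_a\) with \(\|\rho_a\|_a\le r_a\); then \(v_a\) to bound \(|\bar\phi_a|\) including the response and central-difference errors, and the triangle inequality. The normalization \(\|M_g1\|_a\le1\) that you flag is indeed implicit in the paper's boxed bound, and there is one small slip: the prefactor in your remainder identity should be \(\varepsilon^2/4\), not \(\varepsilon^2/12\) (check with \(f(s)=s^3\)), and it is the \(1/4\) that gives your correct bound \(\tfrac{\varepsilon^2}{6}\sup|f'''|\).
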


The first two inequalities are Taylor's formula with its integral remainder,
retained as a lattice field before applying \(B_a\).  For the last inequality,
the null--Ward argument gives
\(\mathcal E_{z,a}=M_g\bar\phi_a+\rho_a\) with
\(\|\rho_a\|_a\le r_a\).  Applying \(v_a\) controls
\(|\bar\phi_a|\), including its response and central-difference errors, and
then the triangle inequality proves
Eq.~\eqref{eq:same-update-full-certificate}.  Without \(v_a\), null and
divergence data still leave the constant metric mode undetermined.

The companion calculation \cite{XuSameUpdateEinstein2026} keeps the
operational and analytic pipelines independent.  In a midpoint discretization
of the Einstein--massless-scalar FLRW action, the response--Euler field and
derivative defects converge with fitted powers \(2.994\) and \(2.915\);
the finite symmetry and full Einstein--scalar residuals converge with powers
\(2.104\) and \(1.962\).  On a periodic \(L^4\) transverse-traceless
lattice the quadratic Ramsey--Euler defect is below
\(2.8\times10^{-14}\), the null and null-plus-volume frames have ranks nine
and ten with
\(\sigma_9=\sigma_{10}=0.8498365856\), and the complete tensor residual has
power \(1.888\).  A quartic FLRW phase term and a quadratic TT term inserted
only in the Ramsey branch leave respective defects above
\(3.37\times10^{-2}\) and \(1.27\times10^{-3}\).  These controls exclude
agreement obtained by projecting the response into an Einstein sector after
measurement.

The result removes the finite-model sub-assumption \(A11S\), not the general
gravitational assumption \(A11\).  Equation
\eqref{eq:same-update-local-action} already contains an Einstein--matter phase
action on a four-dimensional Lorentzian regulator.  The Ramsey theorem
certifies the Euler equation of that action; it does not explain why a less
structured autonomous quantum process selects the action, its variables, or
its normalization.  A general continuum claim additionally requires uniform
control of \(M_{0,a},M_{1,a},A_a^+,\lambda_a^{-1/2}\), phase-unwrapping
resources, finite symmetry and matter residuals, and the regulator response
coefficient.

The remaining steps are:
\begin{enumerate}[label=(N\arabic*)]
\item couple the selected factorized reference to the same
record--matter--geometry update, extend its contraction theorem to correlated
record geometry, and derive the record--area map;
\item prove an interacting fixed-physical-band modular theorem for local
wave packets;
\item replace the prescribed phase action in
Theorem~\ref{thm:same-update-einstein} by one autonomous
record--matter--Lorentz-holonomy dynamics whose infrared phase is proved to
approach the Einstein--Hilbert form, while retaining independent response and
variation pipelines;
\item bound the same update's finite symmetry defect and matter
equation-of-motion residual, while keeping the exact cell-holonomy Bianchi
identity as a distinct geometric consistency check;
\item continue the optical and record data through caustics with a
patch-independent Maslov index; and
\item prove a uniform shrinking-diamond and regulator-independence limit.
\end{enumerate}
These are stronger requirements than discretizing Einstein's equation.  They
ask the quantum information cost, optical nonlinearity, and conservation
identities to be consequences of one dynamics.

\section{A generic order--volume bridge to Lorentzian metric--measure geometry}
\label{sec:generic-order-volume}
\label{sec:updated-order-volume}

\textbf{Status: \StatusAbstract\ for the finite construction, intrinsic
compactness theorem, and continuum volume-clock identity;
\StatusConditional\ for smooth-manifold consistency; \StatusOpen\ for
autonomous generation of the input order and capacity-density law.}

The affine regulator above demonstrates one finite composition, but it is no
longer the only bridge from order and volume to geometry.  The companion
manuscript \cite{XuGenericOrderVolume2026} gives a coordinate-free
construction whose generic output is a Lorentzian metric--measure space,
not a fitted member of a prescribed manifold family:
\begin{equation}
 \boxed{\begin{gathered}
 \text{margin-certified response order}
 +\text{positive additive record measure}\\
 \longrightarrow
 (X,\tau,\mu)\quad\text{or an explicit rejection}.
 \end{gathered}}
 \label{eq:updated-order-volume-target}
\end{equation}
The theorem closes the mathematical middle of
Eq.~\eqref{eq:updated-order-volume-target}.  It does not assert that every
local channel passes the input gates.

\subsection{Operational order and information volume}

For measured intervention strengths \(\kappa_{i\to j}\), define the
antisymmetric score
\begin{equation}
 A_{ij}=\log\frac{\kappa_{i\to j}+\kappa_0}
 {\kappa_{j\to i}+\kappa_0},
 \qquad A_{ji}=-A_{ij}.
 \label{eq:generic-response-score}
\end{equation}
An edge is retained only when its simultaneous confidence interval clears a
declared margin.  A full directed-cycle test is performed before transitive
closure; antisymmetry alone excludes only two-cycles.  If a robust cycle
remains, the output is \emph{order rejected}.  Thus no latent time coordinate
is fitted to repair the data.

Let \(\Z_i=Z(\A_i^{\rm IR})\simeq\mathbb C^{q_i}\) be a record-bearing
Boolean center, \(q_i\ge2\).  Its zero-error classical capacity is
\(\log q_i\), additive under independent tensor composition.  Assign
\begin{equation}
 w_i=v_\star\frac{\log\dim\Z_i}{c_\star},
 \qquad
 \widehat\mu_N=\sum_iw_i\delta_i,
 \qquad
 \widehat V_N(i,j)=\widehat\mu_N[I_N(i,j)] .
 \label{eq:generic-record-volume}
\end{equation}
The ratio \(v_\star/c_\star\) is one retained microscopic conversion between
record capacity and physical volume.  Nonpositive weights, nonadditivity, or
uncontrolled drift of the renormalized capacity density
\(\rho\,\mathbb E w_i\) reject the volume interpretation.  In this sense the
bridge uses an operational measure but does not claim to derive its absolute
unit or refinement law from dimensionless order data.

\subsection{Dimension and the finite volume clock}

For a flat \(d\)-dimensional Alexandrov interval, the comparable-pair
fraction is
\begin{equation}
 p_d=\frac{\Gamma(d+1)\Gamma(d/2)}{2\Gamma(3d/2)} .
 \label{eq:generic-myrheim-meyer}
\end{equation}
A separated plateau of the weighted local ordering fraction selects
\(\widehat d_N\); an unresolved runner-up margin produces abstention.  With
\(\zeta_d=\omega_{d-1}/(d2^{d-1})\), define
\begin{equation}
 \ell_{V,N}(i,j)
 =\zeta_{\widehat d_N}^{-1/\widehat d_N}
 \widehat V_N(i,j)^{1/\widehat d_N}
 \label{eq:generic-local-volume-clock}
\end{equation}
on a pair-independent local interval window
\(\mathcal E_N^{\rm loc}\), containing empty-interval saturated links and
pairs with \(v_N^-\le\widehat V_N\le v_N^+\).  Its canonical completion is
\begin{equation}
 \widehat\tau_N(i,j)
 =\max_{\substack{i=i_0\prec_N\cdots\prec_N i_m=j\\
                  (i_{r-1},i_r)\in\mathcal E_N^{\rm loc}}}
 \sum_{r=1}^{m}\ell_{V,N}(i_{r-1},i_r).
 \label{eq:generic-volume-clock}
\end{equation}
This object is permutation covariant and uses no coordinate, tangent frame,
signature matrix, curvature radius, or manifold label.

\begin{theorem}[Finite order--volume geometry]
\label{thm:generic-finite-order-volume}
If the robust response graph is acyclic and the record weights are positive,
then Eq.~\eqref{eq:generic-volume-clock} obeys
\begin{equation}
 \widehat\tau_N(i,k)\ge
 \widehat\tau_N(i,j)+\widehat\tau_N(j,k)
 \qquad(i\prec_N j\prec_N k).
 \label{eq:generic-reverse-triangle}
\end{equation}
If the admissible edge set is unchanged and every interval mass changes by a
relative amount at most \(\eta<1\), the clock changes multiplicatively by at
most
\(\max\{(1+\eta)^{1/d}-1,1-(1-\eta)^{1/d}\}\).
\end{theorem}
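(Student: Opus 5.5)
The reverse triangle inequality should follow from the fact that \(\widehat\tau_N\) is a longest-path functional, once we check that the maximum is well defined. Acyclicity of the robust response graph makes the transitive closure \(\prec_N\) a strict partial order on a finite set. Every admissible chain \(i=i_0\prec_N\cdots\prec_N i_m=j\) is therefore a path in a finite DAG, has length at most \(N-1\), and there are finitely many of them. Positivity of the weights \(w_i\) makes each interval mass \(\widehat V_N\) nonnegative, and strictly positive on nonempty intervals or saturated links if endpoint conventions include the endpoints. Hence each \(\ell_{V,N}\) is a finite nonnegative number and the maximum in Eq.~\eqref{eq:generic-volume-clock} is a maximum over a finite nonempty set. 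For comparable pairs with no admissible chain, I will adopt the convention \(\widehat\tau_N=-\infty\) or \(0\) and state it explicitly. The inequality is then asserted only where both right-hand terms are realized.

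Given \(i\prec_N j\prec_N k\), I choose maximizing admissible chains \(i\to j\) and \(j\to k\) and concatenate them. The result is again a chain of \(\prec_N\)-related consecutive pairs with every step in \(\mathcal E_N^{\rm loc}\). This uses that \(\mathcal E_N^{\rm loc}\) is pair-independent, i.e.\ a fixed edge set not depending on the endpoints of the chain. The concatenated chain's total length is \(\widehat\tau_N(i,j)+\widehat\tau_N(j,k)\), and it is a competitor in the maximum defining \(\widehat\tau_N(i,k)\), which gives Eq.~\eqref{eq:generic-reverse-triangle}. The only subtle point, which I expect to be the main (if modest) obstacle, is this pair-independence. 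If the admissible window depended on the outer pair \((i,k)\), concatenation could fail, so the proof must stress that \(\mathcal E_N^{\rm loc}\) and the dimension estimate \(\widehat d_N\) are fixed globally before maximization. Acyclicity is also what prevents unbounded cycling, which would otherwise make the supremum infinite.

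For stability, fix the admissible edge set and \(\widehat d_N=d\). If each interval mass satisfies \((1-\eta)V\le V'\le(1+\eta)V\), then monotonicity of \(x\mapsto x^{1/d}\) gives, for each link,
\[
(1-\eta)^{1/d}\ell_{V,N}\le\ell'_{V,N}\le(1+\eta)^{1/d}\ell_{V,N}.
\]
Summing over a chain preserves these factors, because all terms are nonnegative. Taking maxima over the same finite chain set, which is unchanged since the edge set is unchanged, then gives \((1-\eta)^{1/d}\widehat\tau_N\le\widehat\tau'_N\le(1+\eta)^{1/d}\widehat\tau_N\). The relative change \(|\widehat\tau'_N/\widehat\tau_N-1|\) is therefore at most the stated maximum. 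Since \(1-(1-\eta)^{1/d}\ge(1+\eta)^{1/d}-1\) by concavity, the bound is in fact governed by the lower branch, but stating both branches is harmless.
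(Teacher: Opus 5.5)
Your proposal is correct and follows the paper's own argument. The reverse triangle comes from concatenating maximizing admissible chains through the pair-independent window \(\mathcal E_N^{\rm loc}\), and the stability bound comes from applying monotonicity of \(x\mapsto x^{1/d}\) link by link and carrying the factors through sums and maxima over the unchanged chain set. Your convention for pairs with no admissible chain is never actually needed: the window contains the empty-interval links, so in the finite order every \(i\prec_N j\) is joined by a chain of links.
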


The reverse triangle is not imposed as a fitting penalty.  It follows because
maximizing paths from \(i\) to \(j\) and \(j\) to \(k\) concatenate to an
admissible path from \(i\) to \(k\).  The numerical implementation uses a
cross-fitted estimator to remove finite-sample maximization bias; a single
cross-fit need not obey the exact identity and is not substituted for the
canonical theorem object.

\subsection{Generic compactness without a manifold}

Define the strong profile pseudometric
\begin{equation}
 D_N(x,y)=\max_z\left\{
 |\widehat\tau_N(x,z)-\widehat\tau_N(y,z)|,\,
 |\widehat\tau_N(z,x)-\widehat\tau_N(z,y)|
 \right\}.
 \label{eq:generic-strong-profile}
\end{equation}
After quotienting coincident profiles, it is a metric computable before any
embedding.

\begin{theorem}[Intrinsic Lorentzian metric--measure compactness]
\label{thm:generic-order-volume-compactness}
Let the normalized finite order--volume geometries have uniformly bounded
time diameter, uniformly bounded \(D_N\)-covering numbers at every fixed
resolution, and one margin-certified dimension for all sufficiently large
\(N\).  Then a subsequence converges to a compact measured space
\((X,D,\mu)\) carrying a continuous time separation \(\tau\).  The reverse
triangle passes to the limit, so the profile quotient is a bounded
Lorentzian metric--measure space.  No smooth manifold is assumed.
\end{theorem}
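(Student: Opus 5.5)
The plan is a Gromov-compactness argument in the style of the metric-measure limits of Braun--S\"amann, applied directly to the profile-quotient spaces $(X_N,D_N,\widehat\mu_N,\widehat\tau_N)$.

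\emph{First step: control of $\widehat\tau_N$ by $D_N$.} The definition of the strong profile pseudometric, Eq.~\eqref{eq:generic-strong-profile}, gives immediately
\[
 |\widehat\tau_N(x,z)-\widehat\tau_N(y,w)|\le D_N(x,y)+D_N(z,w).
\]
So each $\widehat\tau_N$ is $1$-Lipschitz in each argument for the product of the profile metrics. After normalization, the time diameter is uniformly bounded, and hence so is $D_N$.

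\emph{Second step: passage to a limit space.} The covering bounds hold uniformly at every fixed resolution. Gromov's precompactness theorem then provides a subsequence and isometric embeddings of all $(X_N,D_N)$ into one compact metric space $(Z,d_Z)$ such that $X_N\to X$ in Hausdorff distance. I would fix a countable dense set $\{x_k\}\subset X$ and choose approximants $x_k^N\in X_N$. The functions $\widehat\tau_N$ are uniformly bounded and uniformly Lipschitz. A diagonal Arzel\`a--Ascoli argument therefore gives a further subsequence on which $\widehat\tau_N(x_k^N,x_l^N)\to\tau(x_k,x_l)$. The limit $\tau$ is Lipschitz on the dense set, so it extends continuously to $X\times X$.

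For the measures, the normalized masses $\widehat\mu_N(X_N)$ are bounded. This holds because the margin-certified dimension $\widehat d_N=d$ is eventually fixed and the local clock is $\ell_V\propto \widehat V^{1/d}$: the total interval mass is then bounded by the time diameter to the power $d$, up to the constant $\zeta_d$. Pushing $\widehat\mu_N$ forward into $Z$, Prokhorov's theorem (on a compact space) yields weak convergence of a subsequence to some $\mu$ supported on $X$.

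\emph{Third step: passage of the structure to the limit.} The reverse triangle inequality of Theorem~\ref{thm:generic-finite-order-volume} is a closed condition in $(x,y,z)$, as are nonnegativity and the vanishing of $\tau(x,x)$. It holds for the approximants, so it passes to $\tau$ by continuity. Two further properties must hold for the limit to be a bounded Lorentzian metric--measure space in the sense of the stated framework: the limiting $D$ must again equal the profile metric of $\tau$, and $D$ must separate points. The $\max_z$ in $D_N$ converges by equicontinuity together with Hausdorff density of the $z$'s. After quotienting coincident profiles, $\tau$ then determines $D$, which gives the intrinsic quotient claimed in the statement.

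\emph{Main obstacle.} The delicate point is the measure step, together with showing that $\mu$ is compatible with $\tau$ rather than being an arbitrary weak limit. A priori, mass could concentrate on profile-coincident points. I would handle this by passing to the limit in the local volume-clock relation $\ell_V\approx(\widehat V/\zeta_d)^{1/d}$ on local windows, using the fixed certified dimension. This would show that $\mu$ of small causal diamonds is comparable to $\tau^d$, which rules out atoms. If that comparison fails, I would weaken the claim to weak convergence of the measure only, and keep the certified dimension as the explicit input that is retained.
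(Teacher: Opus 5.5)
Your architecture is the same as the paper's. You use the $1$-Lipschitz bound $|\widehat\tau_N(x,y)-\widehat\tau_N(x',y')|\le D_N(x,x')+D_N(y,y')$, Gromov precompactness from the uniform diameter and covering bounds, Prokhorov for the measures, and passage of the reverse triangle to the limit. Your diagonal Arzel\`a--Ascoli step is sound, and it makes explicit what the paper's sketch leaves implicit. So is your check that the Gromov--Hausdorff limit metric equals the profile metric of the limiting $\tau$, which means it separates points with no further quotient. One small precision: the reverse triangle holds only for $x\prec y\prec z$, so it is not a closed condition on all triples. Pass the order to the limit as a closed relation, or argue on the open set where both separations are positive.

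The step that fails is your derivation of the mass bound. Bounded time diameter and a fixed certified dimension do not control $\widehat\mu_N(X_N)$. First, the total mass is not the mass of an interval. Second, even for intervals, the inequality $\widehat V_N(i,j)\le\zeta_d\,\widehat\tau_N(i,j)^d$ follows from $\widehat\tau_N\ge\ell_{V,N}$ only on the local window $\mathcal E_N^{\rm loc}$. For a concrete counterexample, adjoin to any $X_N$ events that are incomparable to everything, with arbitrary positive weight. They lie in no interval, so every interval statistic, $\widehat\tau_N$, and $D_N$ among the original events are unchanged. The new events collapse to a single zero-profile point of the quotient, so the covering numbers grow by at most one. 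Yet the total mass is unbounded.

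The mass bound must therefore be part of the ``normalized'' hypothesis, which is how the paper's Prokhorov step uses it. In this argument the certified dimension only freezes the clock exponent $1/\widehat d_N$ along the tail of the sequence.

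You can also drop the atom-exclusion program. The theorem asserts only a weak limit and explicitly admits singular or lower-dimensional endpoints. In any case, weak convergence would not transfer the local-window inequality to small diamonds of the limit.
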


Metric Gromov compactness supplies \((X,D)\), Prokhorov compactness supplies
\(\mu\), and
\[
 |\widehat\tau_N(x,y)-\widehat\tau_N(x',y')|
 \le D_N(x,x')+D_N(y,y')
\]
passes the time separation uniformly to the limit.  The generic limit may be
smooth, singular, lower dimensional, or rejected when covering numbers
diverge.

\subsection{Why local interval volume determines proper time}

For a future-timelike curve \(\gamma\) and partition \(\Pi\), let
\begin{equation}
 L_V(\gamma,\Pi)=\sum_r
 \zeta_d^{-1/d}
 \mu_g[I(\gamma(s_{r-1}),\gamma(s_r))]^{1/d}.
 \label{eq:generic-continuum-clock}
\end{equation}
Small causally convex diamonds obey
\[
 \mu_g[I(p,q)]
 =\zeta_d\tau_g(p,q)^d
 \left[1+O(\tau_g(p,q)^2/\lambda^2)\right].
\]
The local clock error is therefore cubic in the segment time.  Its partition
sum vanishes when the mesh does, even if the full end-to-end diamond is large
or curved.

\begin{theorem}[Continuum volume-clock identity]
\label{thm:generic-volume-clock-identity}
On every smooth strongly causal spacetime,
\begin{equation}
 \tau_g(x,y)=
 \sup_{\gamma:x\to y}\lim_{\|\Pi\|\to0}L_V(\gamma,\Pi).
 \label{eq:generic-clock-identity}
\end{equation}
\end{theorem}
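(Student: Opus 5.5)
The proof splits into two inequalities: for every admissible $\gamma$ the mesh limit of $L_V(\gamma,\Pi)$ equals the proper length $L_g(\gamma)=\int\sqrt{-g(\dot\gamma,\dot\gamma)}\,ds$, and the supremum of $L_g$ over future-timelike (or causal) curves from $x$ to $y$ is $\tau_g(x,y)$.

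The second fact is the standard definition, or characterization, of the Lorentzian time separation. In a strongly causal spacetime it can be taken over piecewise smooth future-timelike curves, with causal curves contributing no more by the usual approximation. We use the conventions $\tau_g=0$ and $\sup\emptyset=0$ when $y\notin I^+(x)$. So everything reduces to the local lemma
\[
\lim_{\|\Pi\|\to0}L_V(\gamma,\Pi)=L_g(\gamma).
\]

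For the local lemma, fix a compact timelike $\gamma\colon[0,1]\to M$. Strong causality lets us cover the image by finitely many causally convex normal neighbourhoods $U_k$ with compact closure. On each $U_k$ the small-diamond expansion
\[
\mu_g[I(p,q)]=\zeta_d\tau_g(p,q)^d\bigl[1+O(\tau_g^2/\lambda^2)\bigr]
\]
holds uniformly. Here $\lambda$ is controlled by a bound on the Riemann tensor over $\overline U_k$, uniformly in the frame adapted to $\dot\gamma$. Because $\dot\gamma$ stays in a compact subset of the open future cone, the relevant boost is bounded. Taking the $d$-th root gives, for consecutive partition points,
\[
\zeta_d^{-1/d}\mu_g[I]^{1/d}=\tau_g\bigl(\gamma(s_{r-1}),\gamma(s_r)\bigr)\bigl[1+O(\tau_g^2)\bigr].
\]
Inside a convex normal neighbourhood, $\tau_g$ between nearby points on a smooth timelike curve equals the geodesic length. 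That length is
\[
\int_{s_{r-1}}^{s_r}\sqrt{-g(\dot\gamma,\dot\gamma)}\,ds\;+\;O\bigl((s_r-s_{r-1})^2\bigr)
\]
relative error, namely an $O(\Delta s^3)$ absolute error uniformly, via Taylor expansion in normal coordinates. Summing gives
\[
\bigl|L_V(\gamma,\Pi)-L_g(\gamma)\bigr|\le C\sum_r\Delta s_r^3\le C\|\Pi\|^2,
\]
which tends to zero. Piecewise smooth curves are handled by refining the partition to include the break points; the finitely many segments straddling a break have vanishing total contribution.

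The main obstacle is the uniformity of the diamond expansion. The $O(\tau^2/\lambda^2)$ constant must be controlled uniformly along $\gamma$ and across the boosts of $\dot\gamma$. Nearly null tangents would make the adapted frame degenerate and blow up $\lambda^{-1}$. I would handle this by restricting the supremum to curves whose unit tangent lies in a compact subset of the timelike cone. This loses nothing: any timelike curve is smooth, hence has such a compact tangent set, and null segments contribute zero to both sides. I would also make sure the diamonds $I(\gamma(s_{r-1}),\gamma(s_r))$ lie inside one causally convex $U_k$ once the mesh is small. Strong causality supplies exactly this, so that $\mu_g[I]$ is computed locally, as in the Myrheim--Gibbons--Solodukhin volume expansion.
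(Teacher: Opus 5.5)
Your proposal is correct and takes the same route as the paper. The small causally convex diamond expansion makes each segment's clock error cubic in the segment time, so the partition sum tends to the proper length of $\gamma$, and the supremum over future-timelike curves then gives $\tau_g$. You also make explicit two points the paper's sketch leaves implicit: strong causality keeps the global diamonds $I(\gamma(s_{r-1}),\gamma(s_r))$ inside one causally convex normal neighbourhood, and $\tau_g$ between nearby points of $\gamma$ differs from the arc length only by $O(\Delta s^3)$.
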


This is the step absent from a single use of
\(\tau=\zeta_d^{-1/d}V^{1/d}\): subdivision removes the integrated curvature
bias without fitting a curvature model or assuming a longest-chain/proper-
time calibration.

\subsection{Smooth-limit consistency, topology, and uniqueness}

Suppose only for the consistency statement that the unknown target is a
globally hyperbolic bounded-geometry \(d\)-spacetime, records sample its
physical volume at intensity \(\rho\), capacity weights satisfy
\(\rho\,\mathbb E(w_i\mid x)=1\) with uniform relative interval error
\(\eta_w(r)\), and the capacity-weighted interval-membership symmetric
difference of the robust and chronological orders is bounded by
\(\varepsilon_{\rm ord}\).  At a scale
\(h_\rho\ll r\ll\lambda\),
\[
 h_\rho=C_d\left[\frac{\log(\rho\mu_gK)}{\rho}\right]^{1/d},
\]
use a guarded positive window
\(v_N^-\asymp\zeta_d(r/2)^d\),
\(v_N^+\asymp\zeta_dr^d\).  Then
the companion theorem gives
\begin{align}
 |\widehat\tau_N(x,y)-\tau_g(x,y)|
 &\le C_d(\tau_g(x,y)+r)\,\mathcal E_\rho(r),\label{eq:updated-lor-error}\\
 \mathcal E_\rho(r)
 &=\left(\frac r\lambda\right)^2
 +\sqrt{\frac{\log(\rho\mu_gK)}{\rho\zeta_dr^d}}
 +\frac{h_\rho}{r}+\eta_w(r)+\varepsilon_{\rm ord}.
 \label{eq:generic-error-budget}
\end{align}
The terms are respectively curvature bias, interval-count fluctuation,
event-net resolution, capacity calibration, and response-order error.
Response uncertainty and sampling density are therefore not merged into one
fit residual.

Finite Alexandrov intervals in a stable mass window form a nerve.  When they
are certified as a causally convex good cover, the nerve lemma identifies its
homotopy type with that of the smooth region.  Without a good-cover
certificate, only the finite nerve is reported.  Finally, Lorentzian
metric--measure reconstruction makes two admissible smooth limits with the
same limiting distance laws measure-preserving isometric
\cite{BraunSaemann2025}; chronological order--number reconstruction gives the
same uniqueness endpoint under its stated smooth hypotheses
\cite{BraunOrderNumber2026}.  Quantitative finite-embedding work instead
retains a separate longest-chain/proper-time condition
\cite{MadsenCausalEmbedding2026}; the volume-clock identity above constructs
the time separation rather than assuming that condition.  These published
results enter only at the comparison and uniqueness endpoints.

Blind simulations in \(d=2,3,4\) select the correct dimension in every
certified trial, with one underresolved trial abstaining.  At \(N=1050\), the
median flat-clock errors are \(4.6\%\), \(9.8\%\), and \(9.4\%\).
In a curved conformal diamond, fine subdivision reduces the median error from
\(6.8\%\) to \(4.3\%\), and an independent Alexandrov cover of
\([0,1]\times S^1\) gives \((b_0,b_1)=(1,1)\).  These are synthetic
calculations, not experiments.

\begin{remark}[What has and has not closed]
The generic order--volume problem separates into two steps.  Once a robust
order and a positive additive record measure pass their operational gates,
the finite geometry, generic metric--measure limit, smooth consistency,
topology test, and identifiability chain are closed.  What remains open is a
single autonomous microscopic dynamics that generates generic adjacency and
acyclicity, derives the capacity-density conversion \(v_\star\), and supplies
its own runtime reference.  ``Order plus number'' does not remove those
upstream tasks.
\end{remark}

\section{A unified operational continuum limit}
\label{sec:unified-operational-continuum}

\textbf{Status: \StatusAbstract\ for the joint-interface theorem;
\StatusNumerical\ for its FLRW and four-dimensional tensor realization;
\StatusOpen\ for autonomous selection of the continuum phase, the
Einstein--Hilbert action, and a universal Newton coefficient.}

The order--volume and same-update results close two different interfaces.
The first reconstructs a Lorentzian metric--measure geometry from certified
response order and additive record capacity.  The second compares a Ramsey
response with the Euler derivative of one declared finite action.  Neither
statement alone proves that the geometry in the field equation is the one
reconstructed from the microscopic records.  Two separately convergent
sequences can approach inequivalent metrics, or can silently use different
refinement maps and actions.

The companion work \cite{XuUnifiedContinuum2026} closes this compatibility
problem for one explicit refinement family.  Its main requirement is
stronger than simultaneous convergence but weaker than autonomous action
selection: the order, capacity, and phase branches are evaluated
independently, while all three consume the same finite registers and the
same lattice spacing.

\subsection{One update record and two independent pipelines}

Let \(h\downarrow0\) label finite event sets \(X_h\).  A refinement record is
\begin{equation}
 {\cal R}_h=(X_h,q_h,\kappa_h,K_h,\Gamma_h,{\cal I}_h),
 \label{eq:unified-record}
\end{equation}
where \(q_h\) contains the finite geometry and matter registers,
\(\kappa_h(i\to j)\) is an intervention response, \(K_h(i)\) is an integer
Boolean-record capacity, \(\Gamma_h(q_h)\) is a local real phase action, and
\({\cal I}_h\) is the declared interpolation into continuum test fields.
The order--volume pipeline produces
\[
 {\cal X}_h=(X_h,\prec_h,\widehat\tau_h,\widehat\mu_h),
\]
whereas the response pipeline uses a fresh Ramsey ancilla to measure
\begin{equation}
 W_{h,\alpha}^{\varepsilon}
 =\frac{\Gamma_h(q_h+\varepsilon e_\alpha)
              -\Gamma_h(q_h-\varepsilon e_\alpha)}
        {2\varepsilon\,\omega_{h,\alpha}} .
 \label{eq:unified-ramsey-score}
\end{equation}
An independent analytic variation gives
\begin{equation}
 {\cal E}_{h,\alpha}
 =\omega_{h,\alpha}^{-1}
  \frac{\partial\Gamma_h}{\partial q_{h,\alpha}} .
 \label{eq:unified-euler-score}
\end{equation}
The comparison is therefore falsifiable: the measured response is not
defined by calling it an Euler tensor.

\begin{definition}[Single-update refinement family]
\label{def:unified-single-update}
A family \(\{{\cal R}_h\}\) is single-update when the response order,
capacity measure, and phase action are functions of the same configuration
\(q_h\), use one common refinement and interpolation, and every
response--Euler comparison refers to that same \(\Gamma_h\).
\end{definition}

For an attempted split construction, let \(q_h^{\rm OV}\) denote the
registers supplied to the order--volume branch and \(q_h^\Gamma\) those
supplied to the phase branch.  The mismatch
\begin{equation}
 \Delta_h^{\rm same}
 =\|{\cal I}_hq_h^{\rm OV}-{\cal I}_hq_h^\Gamma\|_{\cal Q}
 +d_{\mathrm{mL}}\!\left[
  {\cal G}_h(q_h^{\rm OV}),{\cal G}_h(q_h^\Gamma)\right]
 \label{eq:unified-same-mismatch}
\end{equation}
is retained in the certificate.  It vanishes identically for a
single-update family and cannot be fitted away when the two branches
approach different continuum geometries.

\begin{landscape}
\begin{figure}[p]
 \centering
 \includegraphics[width=0.97\linewidth]
 {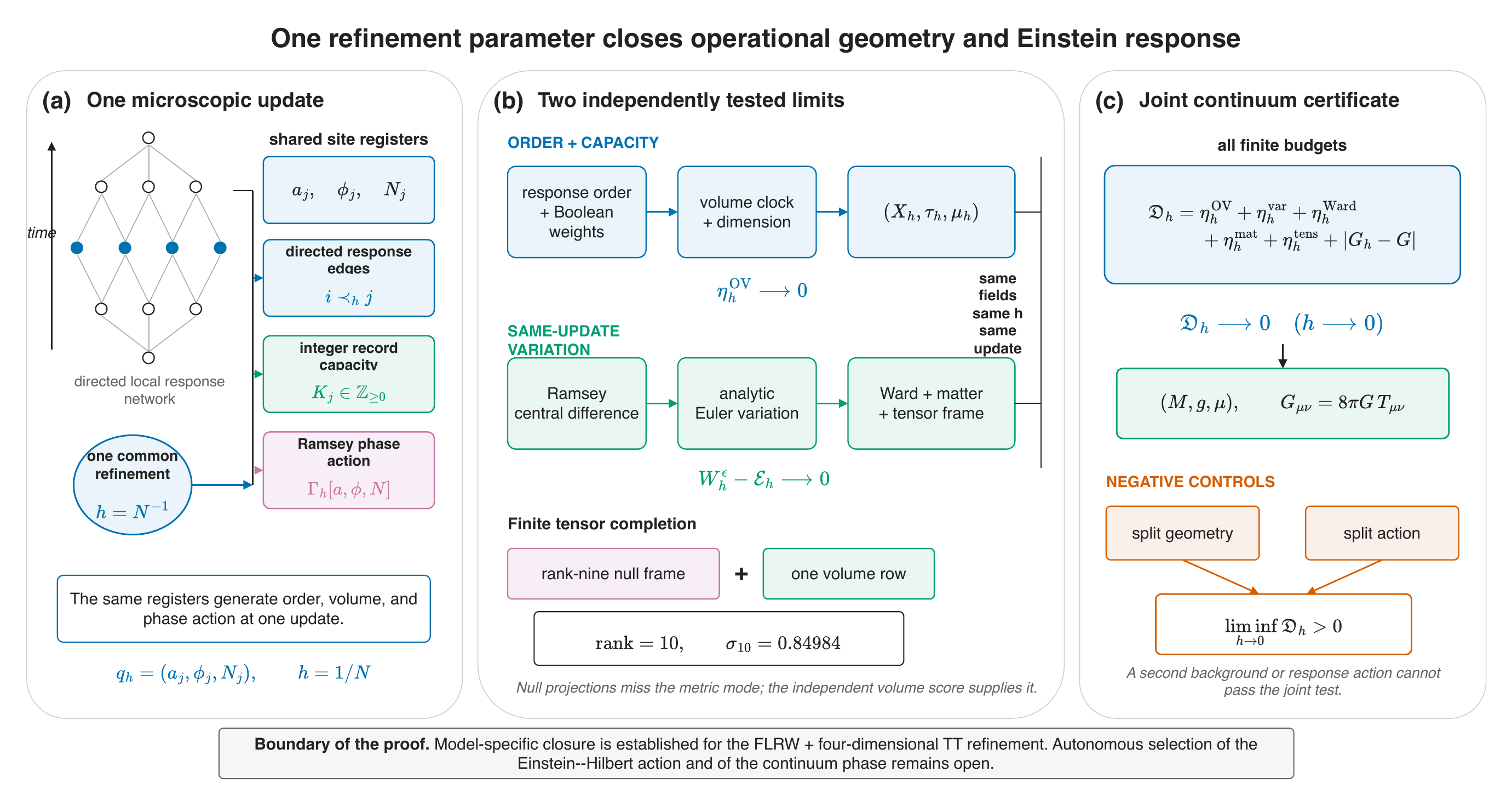}
 \caption{\textbf{Unified operational continuum limit.}
 Panel (a) records the common microscopic data: the same
 \(q_h=(a_j,\phi_j,N_j)\) and \(h=N^{-1}\) generate directed response
 records, integer Boolean capacities, and a Ramsey phase action.
 Panel (b) keeps geometry reconstruction and analytic variation independent;
 rank-nine null data plus one volume row determine all ten components of a
 symmetric four-tensor.  Panel (c) collects every finite defect before the
 limit is taken.  Split-geometry and split-action controls retain a nonzero
 floor.  The gray boundary box lists the inputs that the theorem does not
 select.  The figure is vector and its editable source is distributed with
 the companion manuscript.}
 \label{fig:unified-continuum-mechanism}
\end{figure}
\end{landscape}

\subsection{The joint finite certificate}

Let \(\eta_h^{\rm OV}\) bound the measured Lorentzian reconstruction error,
\(\eta_h^{\rm var}\) the error made by interchanging variation and
refinement, \(\eta_h^{\rm Ward}\) the finite Noether defect,
\(\eta_h^{\rm mat}\) the independently evaluated matter equation residual,
and \(\eta_h^{\rm tens}\) the error in completing the symmetric metric Euler
tensor from operational projections.  Let \(B_h\) be the compatible finite
divergence.  Taylor's formula applied to
Eq.~\eqref{eq:unified-ramsey-score} gives
\begin{equation}
 \|W_h^\varepsilon-{\cal E}_h\|_h
 \le\frac{\varepsilon^2}{6}M_{0,h},
 \qquad
 \|B_h(W_h^\varepsilon-{\cal E}_h)\|_h
 \le\varepsilon^2M_{1,h}.
 \label{eq:unified-central-difference}
\end{equation}
The second bound is imposed on the differentiated remainder field itself;
the cruder estimate \(\|B_h\|M_{0,h}\) can lose a power of \(h\).

At each event, contract a symmetric tensor with a finite set of normalized
null rows
\(A_\ell[X]=X_{\mu\nu}\ell^\mu\ell^\nu\).  These rows have rank at most nine
in four dimensions because they annihilate the metric direction.  Adding
one normalized volume row \(A_V[X]=g^{\mu\nu}X_{\mu\nu}\) gives a full frame
\(A_h\).  If
\(\sigma_{\min}(A_h)\ge\sigma_*>0\), then
\begin{equation}
 \|X\|_2\le\sigma_*^{-1}\|A_hX\|_2 .
 \label{eq:unified-frame-bound}
\end{equation}
The volume score is thus information-theoretically nonredundant; no further
null direction can replace it.

Define
\begin{align}
 {\mathfrak D}_h={}&
 \eta_h^{\rm OV}
 +\Delta_h^{\rm same}
 +\eta_h^{\rm var}
 +\frac{\varepsilon_h^2}{6}M_{0,h}
 +\varepsilon_h^2M_{1,h}\nonumber\\
 &+\eta_h^{\rm Ward}
 +\eta_h^{\rm mat}
 +\sigma_*^{-1}\eta_h^{\rm tens}
 +|G_h-G|.
 \label{eq:unified-certificate}
\end{align}
Every term is a finite-resolution quantity.  In particular, the Newton
coefficient is not hidden in the tensor norm: if it is supplied rather than
derived, the term \(|G_h-G|\) records that fact.

\begin{theorem}[Unified operational continuum limit]
\label{thm:unified-operational-continuum}
Let \(\{{\cal R}_h\}\) be a single-update refinement family on a compact
globally hyperbolic test region.  Suppose:
\begin{enumerate}[label=(U\arabic*)]
\item the order--volume reconstruction obeys
\(d_{\mathrm{mL}}[{\cal X}_h,({\cal M},\tau_g,\mu_g)]
\le\eta_h^{\rm OV}\to0\);
\item sampling and interpolation are stable and
\(\|{\cal I}_h{\cal E}_h-{\cal E}(g,\psi)\|_{H^{-1}}
\le\eta_h^{\rm var}\to0\);
\item Eq.~\eqref{eq:unified-central-difference} holds and
\(\varepsilon_h^2(M_{0,h}+M_{1,h})\to0\);
\item the null-plus-volume frame satisfies
\(\sigma_{\min}(A_h)\ge\sigma_*>0\); and
\item the Ward, matter, tensor, and coupling errors in
Eq.~\eqref{eq:unified-certificate} vanish.
\end{enumerate}
Then the operational geometry and response converge jointly:
\begin{align}
 d_{\mathrm{mL}}[{\cal X}_h,({\cal M},\tau_g,\mu_g)]
 &\le {\mathfrak D}_h,\label{eq:unified-geometry-limit}\\
 \|{\cal I}_hW_h^{\varepsilon_h}-{\cal E}(g,\psi)\|_{H^{-1}}
 &\le C_{\cal I}{\mathfrak D}_h .
 \label{eq:unified-dynamics-limit}
\end{align}
If a compactly convergent sequence also has vanishing operational score,
\(\|{\cal I}_hW_h^{\varepsilon_h}\|_{H^{-1}}\to0\), then
\({\cal E}(g,\psi)=0\) on the same metric--measure geometry selected by
Eq.~\eqref{eq:unified-geometry-limit}.  For the Einstein--matter action,
\begin{equation}
 G_{\mu\nu}[g]+\Lambda g_{\mu\nu}
 =8\pi G\,T_{\mu\nu}[g,\psi],
 \qquad {\cal E}_{\psi}(g,\psi)=0
 \label{eq:unified-einstein-limit}
\end{equation}
in the weak sense.
\end{theorem}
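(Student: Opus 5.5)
The plan is a triangle-inequality assembly of the finite budgets already isolated in Eq.~\eqref{eq:unified-certificate}, followed by a passage to the limit. For the geometry bound \eqref{eq:unified-geometry-limit}, hypothesis (U1) gives \(d_{\mathrm{mL}}[{\cal X}_h,({\cal M},\tau_g,\mu_g)]\le\eta_h^{\rm OV}\). Every other term of \({\mathfrak D}_h\) is nonnegative, so \(\eta_h^{\rm OV}\le{\mathfrak D}_h\) and the first inequality is immediate. The point worth recording is the single-update hypothesis. By Definition~\ref{def:unified-single-update}, \(q_h^{\rm OV}=q_h^\Gamma=q_h\), so \(\Delta_h^{\rm same}=0\). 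The geometry \({\cal G}_h(q_h^\Gamma)\) on which \(\Gamma_h\) and its Euler covector are evaluated therefore coincides with the reconstructed \({\cal X}_h\), up to a zero defect. This is what licenses using one metric \(g\) in both displays.

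For the dynamics bound \eqref{eq:unified-dynamics-limit}, I would split
\[
{\cal I}_hW_h^{\varepsilon_h}-{\cal E}(g,\psi)
={\cal I}_h(W_h^{\varepsilon_h}-{\cal E}_h)
+\bigl({\cal I}_h{\cal E}_h-{\cal E}(g,\psi)\bigr).
\]
The second term is at most \(\eta_h^{\rm var}\) by (U2). For the first term I would use stability of \({\cal I}_h\) from the discrete norm \(\|\cdot\|_h\) into \(H^{-1}\) with a constant \(C_{\cal I}\ge1\). I would then apply Eq.~\eqref{eq:unified-central-difference}, as in Theorem~\ref{thm:same-update-einstein}, to get \(C_{\cal I}\varepsilon_h^2M_{0,h}/6\).

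However, only the contracted null-plus-volume data, together with the divergence, are operationally controlled. So the tensor reconstruction must pass through the frame bound \eqref{eq:unified-frame-bound}, which costs \(\sigma_*^{-1}\eta_h^{\rm tens}\). It must also pass through the Ward--Bianchi certificate \eqref{eq:finite-constitutive-certificate}, which costs the terms \(\eta_h^{\rm Ward}\), \(\eta_h^{\rm mat}\) and \(\varepsilon_h^2M_{1,h}\), the last entering through the constitutive divergence defect. Finally, the coupling mismatch \(|G_h-G|\) enters linearly, because the Einstein--matter Euler covector depends affinely on \(1/G\) on bounded fields. Summing these contributions and enlarging the constant gives \(C_{\cal I}{\mathfrak D}_h\).

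For the limiting statement, assume (U1)--(U5). Then \({\mathfrak D}_h\to0\), and the triangle inequality gives \(\|{\cal E}(g,\psi)\|_{H^{-1}}\le\|{\cal I}_hW_h^{\varepsilon_h}\|_{H^{-1}}+C_{\cal I}{\mathfrak D}_h\to0\). Hence \({\cal E}(g,\psi)=0\) in \(H^{-1}\) on the limit geometry selected by \eqref{eq:unified-geometry-limit}. Uniqueness of that geometry up to measure-preserving isometry, as discussed after Theorem~\ref{thm:generic-volume-clock-identity} and citing \cite{BraunSaemann2025}, makes the statement independent of the subsequence. For the Einstein--scalar action, the metric and matter components of \({\cal E}\) are the weak forms of \eqref{eq:unified-einstein-limit}. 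Here \(\Lambda\) enters as the constant metric mode fixed by the volume row.

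The main obstacle is the \(H^{-1}\) stability of \({\cal I}_h\) applied to the first-term error. The null/volume frame control is pointwise in the discrete norm, while the Ward completion uses the divergence \(B_h\). Transferring the divergence estimate into \(H^{-1}\) needs a uniform bound \(\|{\cal I}_h\|_{h\to H^{-1}}\le C_{\cal I}\) independent of \(h\). That bound is not among (U1)--(U5) as stated, so I would absorb it into the ``stable sampling and interpolation'' clause of (U2) and state explicitly that \(C_{\cal I}\) is this constant.
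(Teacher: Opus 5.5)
Your proposal is correct and follows the paper's own argument: you insert \({\cal E}_h\) between the Ramsey score and the continuum Euler covector, bound the two pieces by stable interpolation with Eq.~\eqref{eq:unified-central-difference} and (U2), route the tensor reconstruction through the null-plus-volume frame with the Ward, matter and coupling terms, use \(\Delta_h^{\rm same}=0\) for the single-update family, and close with the triangle inequality and the limit of vanishing operational scores. You also note that the uniform bound on \({\cal I}_h\) into \(H^{-1}\) is not listed separately and must be read into the ``stable sampling and interpolation'' clause of (U2); that is accurate, and it is exactly where the paper's constant \(C_{\cal I}\) comes from.
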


\begin{proof}
Insert \({\cal E}_h\) between the Ramsey score and the continuum Euler
covector.  Stable interpolation, the two estimates in
Eq.~\eqref{eq:unified-central-difference}, and variational consistency give
the projected response bound.  The rank-nine null frame reconstructs the
nonmetric tensor sector; Eq.~\eqref{eq:unified-frame-bound} and the volume
row add the missing metric component.  The Ward and matter terms control the
finite conservation and off-shell matter defects.  The order--volume term
fixes the continuum geometry, and
\(\Delta_h^{\rm same}\) prevents a second geometry from entering the phase
branch.  The triangle inequality yields
Eq.~\eqref{eq:unified-dynamics-limit}.  Passing to a compact limit of
vanishing operational scores gives
Eq.~\eqref{eq:unified-einstein-limit}.  No unproved interchange of
variation and limit occurs; its entire cost is \(\eta_h^{\rm var}\).
\end{proof}

\begin{corollary}[Split-limit rejection]
\label{cor:unified-split-rejection}
If \(\liminf_{h\to0}\Delta_h^{\rm same}>0\), or if the Ramsey branch uses
\(\Gamma_h+Q_h\) while the analytic branch varies \(\Gamma_h\) and
\(\liminf\|{\cal I}_hDQ_h\|>0\), then
\(\liminf_{h\to0}{\mathfrak D}_h>0\).
\end{corollary}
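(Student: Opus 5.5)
The corollary follows from the definition of \({\mathfrak D}_h\) in Eq.~\eqref{eq:unified-certificate} once we note that it is a sum of nonnegative terms. Each of the two hypotheses then forces one of those terms, or a combination of them, to stay bounded away from zero. I would treat the two cases separately.

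\emph{Split geometry.} Every summand in Eq.~\eqref{eq:unified-certificate} is nonnegative: norms, distances, \(\varepsilon_h^2\) times nonnegative bounds, and an absolute value. Hence \({\mathfrak D}_h\ge\Delta_h^{\rm same}\) for every \(h\), and taking \(\liminf\) gives the claim. The only point to check is that \(\Delta_h^{\rm same}\) really enters \({\mathfrak D}_h\) with coefficient one. Eq.~\eqref{eq:unified-same-mismatch} makes clear that it is not absorbed into \(\eta_h^{\rm OV}\). If that term is evaluated with the order--volume registers only, the mismatch is still counted separately.

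\emph{Split action.} Here the terms \(\varepsilon_h^2M_{0,h}/6\) and \(\eta_h^{\rm var}\) were built under the single-update assumption, so I would read them as controlling the actual Ramsey-branch output against the analytic Euler covector of \(\Gamma_h\). I would argue by contradiction. Suppose \(\liminf{\mathfrak D}_h=0\) and pass to a subsequence along which \({\mathfrak D}_h\to0\). For the perturbed action the Ramsey score is \(W_h^{\varepsilon}[\Gamma_h+Q_h]=W_h^\varepsilon[\Gamma_h]+W_h^\varepsilon[Q_h]\), by linearity of the central difference in Eq.~\eqref{eq:unified-ramsey-score}. Apply Taylor's formula to \(Q_h\) as well, so that its central difference is \(DQ_h\) plus an \(O(\varepsilon_h^2)\) remainder. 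That remainder is absorbed in the \(M_{0,h}\) budget, which is now computed for the Ramsey action \(\Gamma_h+Q_h\). The triangle inequality then gives
\[
\|{\cal I}_hDQ_h\|\le \|{\cal I}_hW_h^{\varepsilon_h}-{\cal I}_h{\cal E}_h\|+O(\varepsilon_h^2M_{0,h})\le C\,{\mathfrak D}_h\to0 .
\]
Here the first term is the response--Euler defect that the certificate charges to \({\mathfrak D}_h\), by the argument in the proof of Theorem~\ref{thm:unified-operational-continuum}, with \(C_{\cal I}\) and stable interpolation. This contradicts \(\liminf\|{\cal I}_hDQ_h\|>0\).

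\emph{Main obstacle.} The delicate step is the bookkeeping that makes the certificate charge the response--Euler discrepancy against the analytic \(\Gamma_h\) rather than against the action actually run in the Ramsey branch. Otherwise a split action could hide inside redefined \(M_{0,h}\). I would make this explicit by splitting \(W_h^\varepsilon-{\cal E}_h\) into the Taylor part plus \(W^\varepsilon[Q_h]\). Then \(\|{\cal I}_hDQ_h\|\) becomes a lower bound on the measured constitutive defect entering \({\mathfrak D}_h\) through \(\eta_h^{\rm var}\) and Eq.~\eqref{eq:unified-dynamics-limit}. This uses the norm equivalence supplied by stable interpolation, which is assumed in (U2).
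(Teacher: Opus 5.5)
Your split-geometry case is correct: every summand of $\mathfrak D_h$ is nonnegative, so $\mathfrak D_h\ge\Delta_h^{\rm same}$ and the $\liminf$ bound is immediate.

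The split-action case has a gap at exactly the step you flag as the main obstacle. You bound $\|\mathcal I_h(W_h^{\varepsilon_h}-\mathcal E_h)\|$ by $C\mathfrak D_h$ ``by the argument in the proof of Theorem~\ref{thm:unified-operational-continuum}''. That argument obtains Eq.~\eqref{eq:unified-central-difference} from Taylor's formula for \emph{one} action used in both branches, which is precisely the single-update property that fails here. In the split family, $W_h^{\varepsilon}-\mathcal E_h$ contains $W^{\varepsilon}[Q_h]\approx DQ_h$. Invoking that argument therefore presupposes the smallness of $DQ_h$ that you are trying to refute.

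Your repair in the last paragraph is also misrouted. $\eta_h^{\rm var}$ bounds $\|\mathcal I_h\mathcal E_h-\mathcal E(g,\psi)\|$, where $\mathcal E_h$ is the analytic variation of $\Gamma_h$. It never sees the Ramsey branch, and for linear $Q_h$ it is literally unchanged. Eq.~\eqref{eq:unified-dynamics-limit} is a conclusion proved only for single-update families, so it cannot be fed back in.

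What is missing is the summand that actually charges the measured mismatch. This is the response--Euler term $\tfrac{\varepsilon_h^2}{6}M_{0,h}$, read as hypothesis (U3) reads it: as a certified upper bound on the \emph{measured} $\|W_h^{\varepsilon_h}-\mathcal E_h\|_h$, not as a Taylor consequence.

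You also need $M_{0,h}$ to bound the third-derivative density of the action $\Gamma_h+Q_h$ that is actually run in the Ramsey branch. This is not optional bookkeeping. A $Q_h$ whose central differences vanish at step $\varepsilon_h$, for example $\varepsilon_h\sin(\pi q_\alpha/\varepsilon_h)$, leaves every Ramsey score unchanged while $DQ_h\neq0$. Your convention ``$M_{0,h}$ computed for $\Gamma_h+Q_h$'' is the right one here.

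With both readings, and with $DQ_h$ weighted as in Eq.~\eqref{eq:unified-euler-score}, the identity
\[
\omega_h^{-1}DQ_h=\bigl(W_h^{\varepsilon_h}-\mathcal E_h\bigr)-\bigl(W_h^{\varepsilon_h}-\omega_h^{-1}D(\Gamma_h+Q_h)\bigr)
\]
gives $\|\omega_h^{-1}DQ_h\|_h\le\tfrac{\varepsilon_h^2}{3}M_{0,h}\le2\mathfrak D_h$. The stability bound $\|\mathcal I_hX\|\le C\|X\|_h$ from (U2) then yields $\|\mathcal I_hDQ_h\|\le C'\mathfrak D_h$ directly. No contradiction argument, no $C_{\mathcal I}$, and no appeal to $\eta_h^{\rm var}$ is needed.
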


\subsection{Explicit FLRW--tensor realization}

The theorem is realized on
\begin{equation}
 {\cal M}=[1,2]\times\mathbb T^3,\qquad
 \mathrm ds^2=-\mathrm dt^2+a(t)^2\mathrm d\bm x^2 ,
 \label{eq:unified-flrw-metric}
\end{equation}
with \(8\pi G=1\) and
\begin{equation}
 a(t)=t^{1/3},\qquad
 \phi(t)=\sqrt{\frac23}\log t,\qquad
 N(t)=1 .
 \label{eq:unified-flrw-solution}
\end{equation}
For \(h=N^{-1}\), the same samples
\(q_h=(a_j,\phi_j,N_j)\) determine all three branches.  With
\(\eta(t)=\tfrac32t^{2/3}\), the response order is
\begin{equation}
 (j,\bm n)\prec_h(k,\bm m)
 \quad\Longleftrightarrow\quad
 j<k,\qquad
 h\,d_{\mathbb T^3}(\bm n,\bm m)
 <\eta(t_k)-\eta(t_j).
 \label{eq:unified-flrw-order}
\end{equation}
Each four-cell carries the Boolean capacity and weight
\begin{equation}
 C_j=\operatorname{round}\!\left[(N^3+1)a(t_{j+1/2})^3\right],
 \qquad
 w_j=h^4\frac{C_j}{N^3+1}.
 \label{eq:unified-capacity}
\end{equation}
The weighted counting measure converges to
\(a^3\,\mathrm dt\,\mathrm d^3x\).  Applying the intrinsic local
volume clock of Sec.~\ref{sec:generic-order-volume} on pieces of proper size
\(r_h\) gives
\begin{equation}
 |\widehat\tau_h-\tau_g|
 \le C\left(r_h^2+\frac{h}{r_h}+h^2+h^3\right).
 \label{eq:unified-order-volume-rate}
\end{equation}
The terms arise from small-diamond curvature, the lattice boundary shell,
midpoint quadrature, and capacity quantization.  Choosing
\(r_h=h^{1/3}\) gives the conservative complete clock rate
\(\eta_h^{\rm OV}=O(h^{2/3})\); the weak measure and local homogeneous metric
converge faster.

The phase branch uses the same samples in the midpoint
Einstein--massless-scalar functional
\begin{align}
 \Gamma_h[a,\phi,N]
 =\sum_{j=0}^{N-1}h\bigg[
 -\frac{3\bar a_j}{N_j}
  \left(\frac{a_{j+1}-a_j}{h}\right)^2
 +\frac{\bar a_j^3}{2N_j}
  \left(\frac{\phi_{j+1}-\phi_j}{h}\right)^2
 \bigg],
 \label{eq:unified-discrete-action}\\
 \bar a_j=\tfrac12(a_j+a_{j+1}).
 \nonumber
\end{align}
It generates the diagonal Ramsey phase and is varied independently to
obtain \({\cal E}_h\).  The variational, finite Ward, and scalar Euler
residuals converge at second order.  A periodic \(N^4\)
transverse-traceless sector supplies genuinely four-dimensional tensor
data.  Thirty rational null directions give rank nine; adding the volume
row gives rank ten with
\begin{equation}
 \sigma_{10}=0.8498365856 .
 \label{eq:unified-frame-sigma}
\end{equation}
The TT Ramsey--Euler defect is below \(3.01\times10^{-14}\), and the tensor
reconstruction error converges with fitted power \(1.951\).

\begin{figure}[htbp]
 \centering
 \includegraphics[width=\linewidth]
 {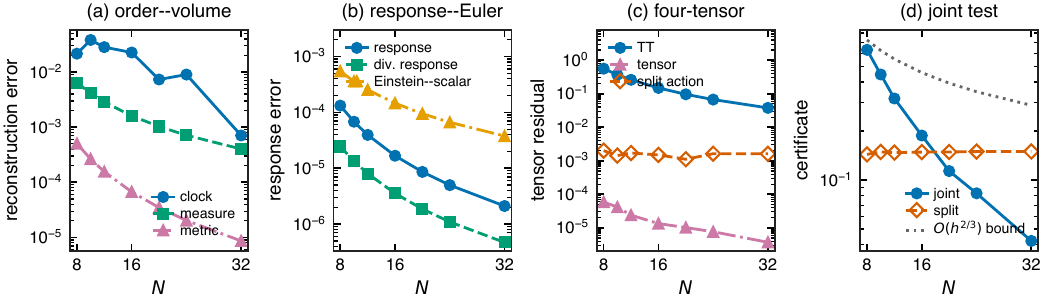}
 \caption{\textbf{Deterministic synthetic refinement tests for the unified
 continuum certificate.}
 All panels use the common sizes \(N=8,10,12,16,20,24,32\).
 The order--volume observables, independent Ramsey/variation defects, and
 four-dimensional tensor residuals decrease under the same refinement.
 The complete certificate falls from \(0.630515\) to \(0.042254\), with
 fitted power \(1.949\) over the displayed sizes.  The gray
 \(O(h^{2/3})\) line is the conservative analytic envelope for the complete
 volume clock, not a fitted exponent.  Deliberately split geometry and
 action branches remain between \(0.1437\) and \(0.1498\).  These are
 reproducible synthetic calculations, not experimental data.}
 \label{fig:unified-continuum-data}
\end{figure}

The numerical archive contains 467 independent algebraic, refinement,
conditioning, split-control, and figure-data checks with no failed test.
The measured joint certificate decreases from \(0.630515\) at \(N=8\) to
\(0.042254\) at \(N=32\).  The split control remains between \(0.1437\) and
\(0.1498\), the minimum theorem-bound/actual-tensor-error ratio is \(1.7405\),
and the fitted powers of the Ward, Einstein--scalar, TT, tensor, and joint
errors are respectively \(1.898\), \(1.927\), \(1.945\), \(1.951\), and
\(1.949\).  These fitted powers diagnose this smooth family; the proved
order--volume statement remains the conservative
\(O(h^{2/3})\) estimate in
Eq.~\eqref{eq:unified-order-volume-rate}.

\subsection{Boundary of the result}

The model removes the possibility that the order--volume and
Einstein-response limits are mutually incompatible merely because they use
different discrete fields or different refinement maps.  It establishes a
nonempty, falsifiable realization in which one \(h\), one register family,
one capacity measure, and one phase action pass a common finite certificate.
It does \emph{not} remove the general assumptions \(A9V\) or \(A11\).
Specifically:
\begin{enumerate}
\item the four-dimensional manifold and \(\mathbb T^3\) topology are inputs;
\item the Einstein--Hilbert plus massless-scalar action is supplied, not
selected by a less structured microscopic channel;
\item the integer capacity unit and \(G_h=G\) are fixed microscopically;
\item the proof covers nonlinear homogeneous geometry and local
four-dimensional TT refinement, not arbitrary nonlinear inhomogeneous
spacetimes;
\item quantum fluctuations, renormalization, autonomous clock selection, and
a probability-law continuum limit remain open.
\end{enumerate}
The separate same-link construction in
Sec.~\ref{sec:same-link-induced-gravity} removes the supplied-curvature-action
input and controls large-\(N_s\) quantum fluctuations only for one
fixed-volume TT polarization at fixed microscopic spacing.  It does not
upgrade the present FLRW--Einstein--matter theorem to a general quantum
probability-law continuum limit.
The result is therefore a \emph{model-specific unified continuum closure}.
The general target remains one autonomous local quantum dynamics that
selects its manifoldlike phase, capacity-density law, gravitational action,
and regulator-independent Newton coefficient.

\section{Open bridge II: modular response and gravitational dynamics}
\label{sec:updated-gravity-bridge}

\textbf{Status: \StatusAbstract\ for the relative-entropy identity;
\StatusNumerical\ for the free-lattice certificate;
\StatusModel\ for the affine linearized closure, finite nonlinear
remainder theorem, and balanced quadratic regulator limit;
\StatusModel\ for the separate free positive-Hilbert spectral-balance
realization on supplied finite graph ensembles; \StatusOpen\ for a partial
interacting Ward identity, dynamically generated refinement measure,
nonlinear Einstein dynamics, Lorentzian RG protection, and a common
nonperturbative measure.}

Let \(\mathcal M_D\) be a local algebra and
\(\omega_D^0\) a faithful reference state.  In a Type-I regulator,
\(K_D^0=-\log\rho_D^0\), and
\begin{equation}
 S(\rho_s\|\rho_0)
 =\Delta\langle K_D^0\rangle_s-\Delta S_s .
 \label{eq:updated-relative-entropy}
\end{equation}
The first variation at \(s=0\) vanishes:
\begin{equation}
 \delta S=\delta\langle K_D^0\rangle .
 \label{eq:updated-first-law}
\end{equation}
For Type-III algebras the corresponding statement uses Araki relative
entropy and the relative modular operator \cite{Araki}.  Equation
\eqref{eq:updated-first-law} is an algebraic theorem.  It does not identify
\(K_D^0\) with stress-energy or curvature.

The lattice result in Sec.~\ref{sec:loop-modular} asks a narrower,
operationally well-posed question.  For a free critical interval, can the
linear modular response in a fixed physical frequency band be replaced by a
sparse local boost generator with a certified error?  The answer is yes in
that model.  The band is chosen from the local operator \(B_N\), its retained
dimension grows with the cutoff, and the response error is bounded in trace
norm.  The range-three correction has a frozen no-refit numerical validation.

To reach a gravitational equation, a common microscopic model must still
establish:
\begin{enumerate}[label=(E\arabic*)]
\item local diamond algebras on the response-generated geometry;
\item a modular-to-boost limit at fixed physical bandwidth in more than one
spatial dimension;
\item several independent null directions, so that an integrated scalar
projection does not masquerade as a tensor equation;
\item a transverse information density whose normalization determines
\(1/(4G_{\rm eff}\hbar)\), rather than importing \(G\);
\item a nonlinear focusing or area-response identity with controlled lattice,
localization, and state errors;
\item same-update decay of both constitutive defects in
\eqref{eq:finite-constitutive-certificate}, controlled finite symmetry and
matter-equation residuals, and the uniform conditioning required by
Theorem~\ref{thm:ward-bianchi-completion}.
\end{enumerate}
The affine null-wire regulator of
Sec.~\ref{sec:affine-null-wire-closure} realizes model versions of
(E1)--(E6): translated lattice subdiamonds, a fixed-band transverse wire
bundle, 30 null directions, boundary record entropy, a discrete focusing
recurrence, conserved wire stress, and an exact central-difference Bianchi
identity.  The qualifications are essential.  Its ``higher-dimensional''
matter is a direct sum of decoupled wires; the response cone and area
calibration are specified; and the equation is linearized around the
reconstructed constant metric.

The finite nonlinear bridge of
Sec.~\ref{sec:nonlinear-information-focusing} removes two further
approximations without claiming the remaining implication.  It replaces the
entropy first law by Eq.~\eqref{eq:nonlinear-finite-information}, the scalar
linear focusing law by the matrix identities
\eqref{eq:nonlinear-riccati}--\eqref{eq:nonlinear-area-update}, and the flat
rank certificate by Eq.~\eqref{eq:nonlinear-curved-rank}.  Its output is the
defect equation \eqref{eq:nonlinear-null-defect}.  The missing content of
(E6) is precisely the microscopic Ward relation
\eqref{eq:nonlinear-ward-target}, not an unspecified collection of
``higher-order corrections.''

The finite completion theorem of
Sec.~\ref{sec:ward-bianchi-completion} sharpens (E6) further.  Once the tensor
defined by the null defect is the geometric Euler covector, its nonmetric
part, divergence, local scalar mode, measurement noise, symmetry defect, and
off-shell matter residual enter the explicit bound
\eqref{eq:ward-bianchi-completion}.  The theorem therefore removes a
qualitative tensor-completion assumption but not the constitutive
identification or its uniform dynamical suppression.

The same-update theorem of Sec.~\ref{sec:same-update-einstein} closes that
identification for one declared local phase action and fixes the remaining
constant metric mode with a volume score.  Thus (E6) is complete in the FLRW
and \(L^4\) transverse-traceless regulators at finite resolution.  What
remains open in the generic program is not another tensor inversion: it is
the dynamical selection of this Einstein--matter phase from the preceding
record--geometry update, together with interacting modular input and uniform
regulator control.

Only after (E1)--(E6) are available would the familiar conditional chain
\begin{equation}
 \delta S_{\rm alg}
 =\delta\langle K\rangle
 \longrightarrow
 \delta A+4G_{\rm eff}\hbar\,\delta S_{\rm bulk}=0
 \longrightarrow
 \delta G_{\mu\nu}=8\pi G_{\rm eff}\delta T_{\mu\nu}
 \label{eq:updated-einstein-chain}
\end{equation}
become an internal derivation.  It is internal to the affine regulator under
the hypotheses of Theorem~\ref{thm:affine-model-closure}; it remains
conditional in the generic program.  The last two arrows are closely related
to thermodynamic and entanglement-equilibrium routes to the Einstein equation
\cite{Jacobson1995,Jacobson2015}; citing those routes alone would not supply
the microscopic area density, separating null probes, or focusing identity
constructed in the regulator.

At finite exchange and curvature, the corresponding nonlinear chain is not
obtained by replacing every \(\delta\) in
Eq.~\eqref{eq:updated-einstein-chain} with \(\Delta\).  The positive
\(\mathcal R_{\rm info}\), nonlinear optical remainder, modular defect, and
localization error must instead satisfy
Eqs.~\eqref{eq:nonlinear-null-defect} and
\eqref{eq:nonlinear-ward-target}.  This distinction is the main advance of
the second bridge and the reason the full nonlinear claim remains open.

\section{de Sitter response: benchmark, inverse theorem, and non-emergence}
\label{sec:updated-desitter}

\textbf{Status: \StatusNumerical/\StatusConditional.}
Two de Sitter constructions appeared in the earlier manuscript and should
not be conflated.

First, a Poisson sprinkling of a de Sitter causal diamond, equipped with
response gates and record dephasing, is a consistency benchmark.  If the
continuum geometry is used to generate the graph, recovering de Sitter
relations shows that the model class is nonempty and that the discretization
can preserve the intended symmetries.  It does not show that the geometry
emerges.

Second, the quantum-field inverse theorem of
Sec.~\ref{sec:loop-lorentz} treats the radius as unknown.  The data are Ramsey
coherences, not coordinate distances, and the rank-five condition fixes
\[
 R^2=(\mathbf1^TA^{-1}\mathbf1)^{-1}.
\]
This is a genuine output within a specified background and state family.
Its strongest tests are redundant sextuple radii, Lorentzian inertia,
target hyperboloid norms, and target pairs withheld until the estimator is
frozen.  Failure of any of those tests rejects the assumed kernel--geometry
pair.  This result builds on the established use of field propagators and
localized quantum measurements as geometric probes
\cite{SaravaniAslanbeigiKempf2016,PercheMartinMartinez2022}; it is not a
claim to have invented response-based geometry.

The two constructions can support a future closed loop only if a single
background-free process produces both the laboratories and the field
response used by the inverse theorem.  Until then, the correct claim is
\[
 \text{physical quantum response can infer an unknown de Sitter scale},
\]
not
\[
 \text{de Sitter spacetime has been derived from the RQCP dynamics}.
\]

\section{Central charge, metric volume, and vacuum counterterms}
\label{sec:updated-vacuum}

\textbf{Status: \StatusModel\ for the charge center;
\StatusConditional\ for fixed-volume cancellation;
\StatusOpen\ for cosmology.}

The local model of Sec.~\ref{sec:loop-center} proves
\begin{equation}
 \ker\mathcal L^*
 =\operatorname{span}\{P_0,\ldots,P_N\},
 \qquad
 \Delta\ge\min\{\gamma/2,\eta\lambda_2(L_G)\}.
 \label{eq:updated-charge-center}
\end{equation}
This is the part that is derived.  The projectors \(P_v\) are the complete
fixed observable algebra of a local Lindbladian, and weak homogeneous flips
lift the degeneracy into an exact Ehrenfest spectrum.  No global center
variable is appended to obtain this result.

\subsection{The conditional fixed-volume identity}

Suppose, in addition, that \(v\) converges to Euclidean four-volume and that
one computes normalized response in a fixed sector:
\[
 \mathcal W_v[J]=\log Z_v[J]-\log Z_v[0].
\]
If a matter loop adds a source-independent term \(c_0m^4v\), then
\begin{equation}
 Z_v[J]\mapsto e^{-c_0m^4v}Z_v[J]
 \quad\Longrightarrow\quad
 \mathcal W_v[J]\mapsto\mathcal W_v[J].
 \label{eq:updated-a0-cancel}
\end{equation}
Every connected \(J\)-derivative is unchanged.  This is more informative
than cancellation of a global state-vector phase, because it concerns a
normalized generating functional.

The selectivity is equally important.  For a regulated local matter theory,
\begin{equation}
 \Gamma_{\rm eff}[g]
 =\int\!\sqrt g\,
 \left(c_0m^4+c_1m^2R+c_2R^2
 +c_3R_{\mu\nu}R^{\mu\nu}+\cdots\right).
 \label{eq:updated-heat-kernel}
\end{equation}
Only the \(c_0m^4v\) term is constant on a fixed metric-volume sector.
Curvature, anomaly, boundary, and field-dependent terms remain in connected
response and undergo ordinary effective-field-theory renormalization.

\subsection{The missing central-volume bridge}

Equation~\eqref{eq:updated-a0-cancel} becomes relevant to gravity only after
proving all of the following in one model:
\begin{enumerate}[label=(V\arabic*)]
\item an additive microscopic charge with the exact center property
\eqref{eq:updated-charge-center};
\item convergence of that charge to metric four-volume, including units and
finite-size error;
\item sectorwise normalization derived from process composition rather than
imposed as a choice of ensemble;
\item a gravitational response map that is local within a sector and cannot
measure cross-sector weights;
\item preservation of (V1)--(V4) under coarse graining and matter
counterterms.
\end{enumerate}
The present work establishes (V1) for a hard-core channel and the algebraic
consequence of (V2)--(V3) if they hold.  It does not establish (V2)--(V5).

The earlier action with a hand-introduced variable \(\lambda_R\) is therefore
best read as a comparison with sequestering or unimodular constructions, not
as a microscopic result.  Likewise,
\(\rho_\Lambda\propto M_{\rm Pl}^2L_R^{-2}\) is dimensional scaling unless
the central stiffness, the response length \(L_R\), and their coupling to
curvature are all derived.  The present construction makes no prediction for
\(\Lambda\) \cite{FiolGarriga2010,KaloperPadilla,WeinbergCC}.

\section{Matter reconstruction as a sequence of open classification
problems}
\label{sec:updated-matter}

\textbf{Status: \StatusOpen.}
The earlier long form attached gauge registers to causal links and imposed a
finite spectral triple.  Both are useful target constructions, but neither
derives the Standard Model from the foundational channel.

\subsection{Gauge symmetry from influence sectors}

The nontrivial target is to begin with localized endomorphisms or residual
influence sectors \(\{\rho_a\}\) and prove that they form a rigid symmetric
tensor \(C^*\)-category,
\begin{equation}
 \rho_a\otimes\rho_b
 \simeq\bigoplus_cN_{ab}^{\ \ c}\rho_c ,
 \label{eq:updated-fusion}
\end{equation}
with conjugates, intertwiners, locality, and a faithful fiber functor.
Tannaka--Krein or Doplicher--Roberts reconstruction would then identify a
compact group \(G\) with
\(\mathcal C_{\rm infl}\simeq\operatorname{Rep}(G)\)
\cite{DoplicherRoberts}.  The theorem to be
proved is the emergence of the category from one local channel, not the
known reconstruction result once the category is supplied.

The first credible milestone is one finite non-Abelian group or \(SU(2)\) in
an explicit model, together with an operational holonomy or charge-fusion
observable.  Attaching \(U_{xy}\) by hand and obtaining
\(U_p\simeq\exp(iF_{\mu\nu}\Sigma^{\mu\nu})\) shows how a given gauge register
has a continuum limit; it does not explain why the register or group exists.

\subsection{Chiral matter on causal defects}

A domain-wall Dirac operator can carry an index, and that index can protect
chiral zero modes.  The required RQCP result is a discrete causal or
non-Hermitian Dirac operator whose spectral flow, anomaly inflow, and index
remain stable under the same coarse graining that produces the causal
records.  Selecting topological data with \(\operatorname{Index}=3\) does
not explain three generations.  The first closed result should instead prove
index stability under disorder, weak nonunitarity, and record RG; a later
dynamical selection problem may ask why a particular index is preferred.

\subsection{Finite noncommutative geometry}

The algebra
\[
 \mathcal A_F=\mathbb C\oplus\mathbb H\oplus M_3(\mathbb C)
\]
is highly constrained once the axioms of a finite real spectral triple,
unimodularity, anomaly cancellation, and a representation are imposed.  In
RQCP-QG these are selection conditions on a candidate residual algebra.
They are not presently consequences of the influence-channel flow.  The
correct dependency is
\[
 \text{derived residual sector category}
 \longrightarrow
 \text{classification constraints}
 \longrightarrow
 \text{candidate matter algebra},
\]
with the first arrow still open.

\section{Response horizon and dark condensate: phenomenological branches}
\label{sec:updated-cosmology}

\textbf{Status: \StatusOpen.}
The response-horizon and response-superfluid constructions are retained as
testable long-horizon branches, not as support for the foundational chain.

\subsection{Response-horizon attractor}

For
\begin{equation}
 F_R(L)=\alpha M_{\rm Pl}^2H^2L^3
 -\beta M_{\rm Pl}^2HL^2+\cdots ,
 \label{eq:updated-horizon-free-energy}
\end{equation}
stationarity gives
\[
 L_R=\frac{2\beta}{3\alpha}H^{-1}.
\]
Imposing first-law matching at \(L=H^{-1}\) and then deriving
\(L_R=H^{-1}\) is circular.  A microscopic result must compute
\(\alpha\) and \(\beta\) independently of the desired radius, show
\(F_R''(L_R)>0\), derive the relaxation rate, and include matter and anomaly
corrections.  Until that calculation is performed,
\(3\alpha=2\beta\) is a model check, not a prediction.

\subsection{Response condensate}

The hydrodynamic ansatz
\[
 S_R=\int\!P_R(X)-\alpha_b\int\!\varphi\rho_b+\cdots
\]
can reproduce baryonic scaling if
\(P_R(X)\propto X\sqrt{|X|}\), as in the superfluid-dark-matter mechanism
\cite{BerezhianiKhoury}.  That exponent is currently an assumption.
A closed theory must derive the infrared singularity from a transfer spectrum
or response-charge critical point and produce at least one joint prediction
that cannot be adjusted independently, for example
\[
 \text{BTFR normalization}
 \longleftrightarrow
 \eta_{\rm slip}(k,z)
 \longleftrightarrow
 \text{cluster melting scale}.
\]
Rotation curves alone do not identify the mechanism.  Weak lensing,
clusters, the CMB, and linear growth must be treated by one parameter set,
with the condensate/normal-fluid boundary included.

\section{A status-preserving conditional synthesis}
\label{sec:updated-synthesis}

The value of a synthesis theorem is bookkeeping: it identifies a sufficient
set of bridges and prevents an omitted input from reappearing in the
conclusion.  It is not evidence that the bridges hold.

\begin{theorem}[Conditional composition with visible external inputs]
\label{thm:updated-conditional-synthesis}
Suppose:
\begin{enumerate}[label=(C\arabic*)]
\item the process-functional and influence-algebra hypotheses of
Sec.~\ref{sec:process-algebra} hold;
\item a local microscopic dilation has a controlled kinetic limit and an
infrared Boolean center;
\item a single model produces a locally finite causal order and a positive
information-volume measure with a Lorentzian metric--measure limit;
\item its local diamond algebras satisfy a modular-to-boost limit in a fixed
physical band, with a curved rank-nine family of local null probes;
\item the transverse information density fixes \(G_{\rm eff}\); the finite
information, modular, focusing, and localization remainders define the
geometric Euler covector of the same process; its symmetry and matter
residuals vanish uniformly; and the hypotheses of
Theorem~\ref{thm:ward-bianchi-completion} hold in a shrinking-diamond limit;
\item an additive central charge converges to metric four-volume and
sectorwise process normalization is stable under matter loops.
\end{enumerate}
Then the exact record events form a Boolean algebra; the continuum order and
measure determine the stated Lorentzian geometry; the relative-entropy first
law gives the tangent response, while the exact finite balance and
assumptions (C4)--(C5) imply the corresponding nonlinear Einstein dynamics;
and pure-volume counterterms cancel from normalized
fixed-sector response under (C6).
\end{theorem}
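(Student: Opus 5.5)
The proof will be a bookkeeping composition: each conclusion is obtained by invoking one earlier result on exactly the hypotheses that feed it. No new estimate is introduced.

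\emph{Boolean records.} I begin with (C1)--(C2). By (C1) every exact event lies in \(\Proj Z(\I_Y^{C;\B})\). The Boolean-facts theorem of Sec.~\ref{sec:process-algebra} then makes this a complete Boolean algebra with meet \(PQ\), join \(P+Q-PQ\), and complement \(1-P\). Hypothesis (C2) supplies a faithful invariant state and asymptotic abelianness on the infrared range. So Theorem~\ref{thm:v62-ir-bool}, or Theorem~\ref{thm:block-primitive-abelian} when the dilation has block-primitive form, identifies the infrared record projections with a Boolean algebra as well.

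\emph{Geometry.} Next I pass to (C3). The finite order--volume object obeys the reverse triangle by Theorem~\ref{thm:generic-finite-order-volume}. Theorem~\ref{thm:generic-order-volume-compactness} gives a subsequential Lorentzian metric--measure limit. Whenever that limit is an admissible smooth strongly causal spacetime, Theorem~\ref{thm:generic-volume-clock-identity} identifies the limiting time separation with \(\tau_g\). The cited Braun--S\"amann order--number uniqueness then makes the limit unique up to measure-preserving isometry; this is what ``determine the stated Lorentzian geometry'' means.

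\emph{Dynamics.} This is the main step. The tangent statement is Eq.~\eqref{eq:updated-first-law}, which follows from Eq.~\eqref{eq:updated-relative-entropy} with no further input. For the nonlinear statement I take the finite identity Eq.~\eqref{eq:nonlinear-null-defect}, built from Theorem~\ref{thm:finite-nonlinear-remainder} in each direction \(r\) of the curved rank-nine frame supplied by (C4). Hypothesis (C5) identifies the assembled tensor with the geometric Euler covector \(\mathcal E_z\); in the notation of Eq.~\eqref{eq:finite-constitutive-certificate} this sets \(C_h=0\). I then apply the certificate Eq.~\eqref{eq:finite-constitutive-certificate} along the shrinking-diamond family. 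The uniform conditioning of \(A_h^+\), \(\|M_g\|\) and \(\lambda_h\), together with the vanishing symmetry, matter, noise and localization defects, makes the right-hand side tend to zero. This is exactly the continuum-use corollary following Theorem~\ref{thm:ward-bianchi-completion}. The result is \(\mathcal E_z\to M_g\bar\phi\), that is, \(G_{\mu\nu}-8\pi G_{\rm eff}T_{\mu\nu}=\Lambda_{\rm int}g_{\mu\nu}\) with a single constant.

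The obstacle I expect is passing this finite-regulator statement to the continuum geometry of (C3) without an unproved interchange of limits. My first attempt will be the mechanism of Theorem~\ref{thm:unified-operational-continuum}: a common refinement family with vanishing \(\Delta_h^{\rm same}\), and variational consistency \(\eta_h^{\rm var}\to0\), both read into (C3)--(C5) as part of ``the same process.'' If those hypotheses are not strong enough, the conclusion will be stated only in the weak sense, with \(\Lambda_{\rm int}\) kept as an integration constant.

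\emph{Vacuum.} Finally, under (C6) Eq.~\eqref{eq:updated-a0-cancel} is an exact algebraic identity on each fixed sector. Stability under matter loops keeps the sector decomposition intact, so the pure-volume factor cancels from every normalized connected response.
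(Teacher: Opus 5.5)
Your overall architecture matches the paper's own bookkeeping: center projections for the records, (C3) supplying the reconstruction hypotheses, Eq.~\eqref{eq:updated-first-law} for the tangent law, Theorems~\ref{thm:finite-nonlinear-remainder} and~\ref{thm:ward-bianchi-completion} for the nonlinear law, and fixed-sector cancellation for (C6). The nonlinear step, however, fails as written.

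You assert that uniform conditioning plus vanishing symmetry, matter, noise and localization defects drives the right-hand side of Eq.~\eqref{eq:finite-constitutive-certificate} to zero. That right-hand side contains \(\eta_0=\|\widehat E\|+\|A_h^+\|\nu\) and \(\eta_1=\|B_h\widehat E\|+\|B_hA_h^+\|\nu\) from Eq.~\eqref{eq:finite-reconstructed-budgets}. Here \(\widehat E=A_h^+y\) is the reconstructed nonmetric part of the tensor itself. The symmetry and matter residuals control only the divergence budget \(\varepsilon_{\rm W}\).

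The paper's own control shows the step cannot close without more. A constant trace-free anisotropy has \(B_hW=0\) for any well-conditioned frame, yet \(A_hW\neq0\), and the certificate then returns \(\|W\|\) rather than zero. The continuum-use corollary you cite lists \(\|\widehat E_h\|+\|B_h\widehat E_h\|\to0\) as a separate hypothesis, and you never supply it.

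The missing ingredient is decay of the null data themselves. By Eq.~\eqref{eq:nonlinear-null-defect} those data are the normalized focusing, modular and information remainders. So \(\widehat E\to0\) amounts to the Ward relation Eq.~\eqref{eq:nonlinear-ward-target} holding asymptotically, together with \(\varepsilon_{\rm loc}\to0\). This is exactly where the paper's proof invokes two ingredients: the modular-to-boost limit of (C4), which sends \(\varepsilon_{\rm mod}\) in Eq.~\eqref{eq:nonlinear-mod-def} to zero, and the constitutive Ward identification of (C5). In your argument, (C4) is used only for the curved rank-nine frame and (C5) only to set \(C_h=0\), so neither carries this content. Add it and the step closes as in the paper.

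Two lesser points:
\begin{itemize}
\item (C3) already provides the metric--measure limit. You should not rederive it from Theorem~\ref{thm:generic-order-volume-compactness}, whose covering and diameter hypotheses are not in (C3).
\item The detour through Theorem~\ref{thm:unified-operational-continuum} imports single-update and variational-consistency hypotheses the statement does not make. The paper does not attempt that interchange of limits, since the theorem only records the consequences of (C3)--(C6).
\end{itemize}
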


\begin{proof}
The Boolean conclusion follows from the center-projection theorem and the
assumed infrared convergence.  Assumption (C3) supplies, rather than derives,
the hypotheses of the relevant order--measure reconstruction theorem.
Equation~\eqref{eq:updated-first-law} gives the linear tangent statement.
At finite exchange, Theorem~\ref{thm:finite-nonlinear-remainder}, the
modular-to-boost limit, the constitutive Ward identification, and
Theorem~\ref{thm:ward-bianchi-completion} give the nonlinear gravitational
response under (C4)--(C5).  Finally, a pure volume counterterm
is constant in a fixed sector, so its multiplicative factor cancels between
\(Z_v[J]\) and \(Z_v[0]\).
No step proves (C3)--(C6); the theorem only records their consequences.
\end{proof}

\begin{remark}[Use of Theorem~\ref{thm:updated-conditional-synthesis}]
Theorem~\ref{thm:updated-conditional-synthesis} may be used to check that a
future microscopic model closes the whole chain.  It must not be cited as
evidence that RQCP-QG already derives a generic or nonlinear Einstein
equation, a cosmological constant, or a matter spectrum.
\end{remark}

The affine null-wire regulator supplies a model-specific linearized instance
of portions of (C3)--(C5).  The nonlinear finite-regulator theorem supplies
the exact information and optical identities and the curved null-frame
certificate.  Theorem~\ref{thm:capacity-reference-selection} additionally
selects the capacity-balanced reference for factorized tokens, but not from
the same nonlinear update and not for arbitrary correlated record geometry.
Theorem~\ref{thm:ward-bianchi-completion} additionally supplies the finite
tensor estimate and propagates nonzero response--Euler field and divergence
defects.  Theorem~\ref{thm:same-update-einstein} derives those two
constitutive budgets for one declared phase update, and
Theorem~\ref{thm:unified-operational-continuum} places that update and the
order--volume geometry on one common FLRW--TT refinement.  These model
theorems still do not select the Einstein--matter action, dimension,
capacity-density law, interacting modular dynamics, or universal \(G\).
They also do not supply (C6) or turn the generic continuum hypotheses into
background-independent theorems.  Accordingly, the warning in the preceding
remark continues to apply to autonomous quantum gravity, the cosmological
constant, and the matter spectrum.

\begin{landscape}
\section{Outstanding assumptions and decisive tests}
\label{sec:updated-ledger}

The remaining assumptions are ordered by logical dependence.  The last
column identifies an outcome that would falsify the proposed mechanism,
rather than merely leave its proof incomplete.

\small
\begin{longtable}{@{}P{0.17\linewidth}P{0.25\linewidth}
P{0.34\linewidth}P{0.16\linewidth}@{}}
\caption{Outstanding assumptions, the results needed to remove them, and
decisive failure tests.}
\label{tab:updated-assumption-ledger}\\
\toprule
Assumption & Required new result & Decisive observable or test & Failure
criterion\\
\midrule
\endfirsthead
\toprule
Assumption & Required new result & Decisive observable or test & Failure
criterion\\
\midrule
\endhead
One-pass memoryless bath
& Repeated-interaction limit with quantified initial correlations or finite
  bath reset
& Response-kernel error and information-backflow witness
& Recurrence survives on the claimed kinetic time\\
\addlinespace
Restricted block-cactus adjacency
& Bounded local cycle certificate or controlled nonlocal defect theory on a
  broader graph family
& Window DAG probability and critical order gap
& Long chordless cycles evade the certificate\\
\addlinespace
Autonomous adjacency and capacity-density law
& One closed dynamics selects the interaction support, enters a robust
  acyclic response phase, and derives the stable conversion from Boolean
  record capacity to the measure consumed by
  Theorem~\ref{thm:generic-order-volume-compactness}
& Adjacency susceptibility, cycle margin, capacity-density flow, and blind
  passage through the dimension, clock, covering, and topology certificates
& The operational gates fail persistently, or inequivalent capacity
  normalizations survive the same infrared dynamics\\
\addlinespace
Specified field state and de Sitter class
& Joint state--geometry identifiability or independent state calibration
& Redundant radii, inertia, hyperboloid norms, blind pairs
& State deformation passes every geometric test\\
\addlinespace
Free or decoupled-wire modular regulator
& Fixed-band bound in an interacting higher-dimensional model
& Multi-direction modular response and transverse density
& Error remains finite at fixed physical band\\
\addlinespace
Capacity-balanced boundary reference
& Extend Theorem~\ref{thm:capacity-reference-selection} from factorized tokens
  to correlated record geometry and couple it to the nonlinear update
& Correlated sector weights, modular area coefficient, contraction rate, and
  same-update Ward residual
& Selection fails without a record-number-dependent reset, or contraction
  vanishes under refinement\\
\addlinespace
Record count equals boundary area
& Derive \(\widehat A=a_0^2\widehat N\) from order--volume geometry and prove
  a universal refinement flow of \(a_0^2/\log q\)
& Blind area reconstruction and regulator flow of \(G_{\rm eff}\)
& Different microscopic tokenizations give inequivalent infrared
  coefficients\\
\addlinespace
Dynamical phase, action, and continuum selection
& Extend Theorems~\ref{thm:same-update-einstein} and
  \ref{thm:unified-operational-continuum} so that one
  record--matter--holonomy dynamics derives its effective phase action,
  rather than receiving the Einstein--matter form, selects its dimension and
  capacity law, and makes the finite symmetry and matter-equation residuals
  vanish uniformly in an interacting quantum refinement
& Direction-resolved information, modular, optical, constitutive, symmetry,
  matter, holonomy, order--volume, and coupling residuals, inserted into
  Eq.~\eqref{eq:unified-certificate}
& The infrared phase is not Einstein--Hilbert, the independent response and
  variation pipelines disagree, split controls pass, or the joint
  certificate stays finite under refinement\\
\addlinespace
Pre-caustic and integrated-diamond restriction
& Maslov-complete optical propagation and a uniform shrinking-diamond
  localization theorem
& Conjugate-point index, patch independence, and localization defect
& Area or tensor reconstruction becomes chart dependent at a caustic\\
\addlinespace
Regulator-dependent Newton normalization
& Renormalization or universality theorem for the boundary information
  coefficient
& The same infrared coefficient survives changes of record dimension and
  regulator
& Independent continuum extrapolations disagree\\
\addlinespace
Charge equals metric volume
& Scaling theorem from additive charge to four-volume
& Sector volume calibration and finite-size residual
& Charge fluctuations fail geometric extensivity\\
\addlinespace
Sectorwise vacuum normalization
& Derive the fixed-sector functional from process composition and perform a
  regulated matter-loop test
& \(a_0\) cancels while \(a_1,a_2,\ldots\) remain
& Source-dependent or curvature terms are spuriously removed\\
\addlinespace
Gauge category
& Residual influence sectors form a rigid symmetric tensor category
& Fusion, conjugates, intertwiners, and holonomy
& No stable category survives RG\\
\addlinespace
Chiral index and three generations
& RG-stable causal-defect Dirac index, followed by a separate selection
  mechanism
& Spectral flow and anomaly inflow under disorder
& Zero modes gap or index changes under allowed RG\\
\addlinespace
Horizon coefficient matching
& Compute \(\alpha,\beta,F_R''\), and relaxation from one model
& \(L_R/H^{-1}\) and attraction rate
& Equality requires imposing the target radius\\
\addlinespace
Dark equation of state
& Derive \(P_R(X)\) and a cross-scale parameter relation
& Rotation, lensing, clusters, CMB, and growth
& No common parameter region exists\\
\bottomrule
\end{longtable}
\end{landscape}

The order of work follows the dependency graph rather than the apparent
phenomenological reach:
\begin{enumerate}
\item submit and experimentally anchor the Boolean and memory loops;
\item unite Boolean records with dissipative causal order in one local
process;
\item extend the finite relational clock/order/volume model beyond its
compiled four-dimensional port language, then test the intrinsic inverse
geometry gates blindly;
\item extend and couple the factorized capacity-reference selection theorem,
derive the record--area map, and establish an interacting fixed-physical-band
modular response on that geometry;
\item couple record exchange, nonlinear Jacobi optics, Lorentz holonomy, and
matter transport so that the response tensor is the geometric Euler covector
of the same update and its finite symmetry and matter residuals vanish;
\item extend the common-refinement theorem from FLRW plus TT perturbations to
general nonlinear inhomogeneous four-dimensional configurations, quantum
fluctuations, and renormalized observables;
\item prove a caustic-complete, shrinking-diamond, regulator-independent
interacting limit and derive rather than impose the physical spectral
balance and Newton coefficient;
\item derive a metric-volume center and perform the selective loop
cancellation;
\item address gauge, chirality, horizon dynamics, and dark phenomenology
after their upstream inputs exist.
\end{enumerate}

\section{Conclusion}
\label{sec:updated-conclusion}

RQCP begins with an operational statement: response differences generate an
influence algebra, and its central projections are the jointly readable exact
events.  The explicit models studied here show that this statement has
nontrivial dynamics behind it.  Fresh-cell collisions produce the
dephasing--exchange kinetics microscopically; the resulting Lindbladian has
an exact Boolean charge algebra, a complete graph-dependent gap, and a
controlled metastable Markov reduction.  An absorbing noise field drives a
transition between quantum memory and Boolean records, and a
reversal-covariant defect dynamics produces a partial order before any
Lyapunov rank is introduced.

The order--volume result identifies the next mathematical step.  A certified
partial order and a positive additive record measure define an intrinsic
finite volume clock and, under compactness and identifiability conditions, a
Lorentzian metric--measure subsequential limit.  Smooth admissible limits
recover proper time, dimension, measure, and local topology with explicit
error and rejection criteria.  The theorem does not generate its order or
its absolute record-density unit.  The de Sitter Ramsey construction is
complementary: it determines an unknown curvature scale and blind invariants
from a physical field response, but only within a specified state and
geometry class.

The gravitational constructions establish several finite compatibility
results.  They control a physical modular band, null-frame inversion,
finite information balance, nonlinear optical propagation, Ward completion,
and the comparison between a measured response and the variation of the same
phase functional.  A common-refinement model places order, record measure,
and Einstein--matter response on one family of registers.  A same-link
determinant model derives a positive transverse-traceless two-derivative
kernel without a bare curvature term; relational histories, quadratic
constraint completion, and spectral balance extend this to a
regulator-independent Gaussian response.  These results demonstrate that
the required ingredients are mutually compatible in restricted models.
They do not select the dimension, microscopic program, nonlinear action,
interacting matter content, or continuum normalization.

The remaining foundational problem can be stated without reference to the
auxiliary regulators:
\begin{equation}
 \boxed{\begin{gathered}
 \text{one autonomous background-free process}
 \dashrightarrow
 \text{certified order + capacity-density law}\\
 \xrightarrow{\ \text{intrinsic volume clock}\ }
 (X,\tau,\mu)\\
 \dashrightarrow
 \text{selected Einstein phase + interacting quantum limit}.
 \end{gathered}}
 \label{eq:updated-open-core}
\end{equation}
The middle implication is proved under operational compactness and
identifiability conditions.  The first remains open because no general local
dynamics in the manuscript selects both adjacency and the absolute
capacity-density conversion.  The last is realized only in prescribed
finite or quadratic model families.  An autonomous completion must also
derive the gauge and clock structure, control nonlinear constraints and
renormalized observables, and show that the relevant limits are independent
of the microscopic regulator.

Projective causal histories and operator-growth rigidity constrain possible
solutions of the first implication, but they do not yet yield a covariant
non-Abelian geometry dynamics with backreaction.  Likewise, fixed-sector
normalization identifies a selective pure-volume cancellation, but a
metric-volume center and a regulated matter-loop calculation remain absent.
Gauge reconstruction, chirality, horizon dynamics, and dark phenomenology
therefore remain downstream questions.

The value of the framework at its present stage is the separation of these
claims.  Exact models establish the information and finite-response
mechanisms; conditional theorems specify how geometry and gravity would
follow under additional hypotheses; and the unresolved hypotheses are
accompanied by failure tests.  The framework will approach a microscopic
theory of quantum gravity only when the two dashed implications in
Eq.~\eqref{eq:updated-open-core} are obtained from the same dynamics.

\section{A same-link induced quantum-gravity chain}
\label{sec:same-link-induced-gravity}

The unified-continuum theorem of
Sec.~\ref{sec:unified-operational-continuum} solves a compatibility problem:
one register family can supply order, volume, and an Einstein--matter
response on the same limiting geometry.  It does not solve the origin of the
Einstein--Hilbert coefficient, because the action phase is part of that
model's microscopic update.  This section closes that narrower missing
interface in a different explicit regulator.  No curvature action is placed
in the update.  Scalar matter propagating on the same relational links that
carry order and volume induces a positive two-derivative transverse-
traceless (TT) kernel, and that kernel has a controlled large-species quantum
infrared limit \cite{XuSameLinkInduced2026}.

The result is model-specific.  Four dimensions, a colored graph language, a
low-entropy constructor/clock state, a fair pair scheduler, one physical
volume unit per record, the scalar species, a fixed microscopic spacing, and
the fixed-volume TT sector are declared inputs.  The theorem therefore
removes neither the generic-order assumption A7 nor the universal
capacity-density assumption A9V.  It removes the bare-curvature-action input
only in the displayed TT branch.

\subsection{Anonymous constructor and common record support}

Let
\begin{equation}
 {\cal G}_L=P_L\times T_L^3
 \label{eq:same-link-language}
\end{equation}
be a path of \(L\) clock layers times a periodic cubic spatial graph.
Constant-size gadgets encode its clock and spatial edge colors in a simple
unlabeled graph.  The resulting permutation-invariant language is decidable
in logarithmic workspace: a verifier checks gadget type, consistent layer
increments, periodic three-axis incidence, and the graph-size relation.
Universal stable-network-constructor results then compile an anonymous
finite-state pair protocol whose absorbing output is isomorphic to
\({\cal G}_L\) \cite{MichailSpirakis2016}.  This invokes constructor
universality; it is not a new optimal constructor theorem and does not
dynamically select \(3+1\) dimensions.

For the final relational verification stage, let \(M=n(n-1)/2\) be the
number of unordered pair registers and \(W_r\) the number of wrong edge
bits after \(r\) random pair inspections.  Overwriting the inspected bit by
its relational certificate gives
\begin{equation}
 {\mathbb E}(W_{r+1}\mid W_r)
 =W_r\left(1-\frac1M\right),
 \qquad
 \eta_{\rm con}(s)
 :=\frac{{\mathbb E}W_{sM}}{M}\leq
 \eta_{\rm con}(0)e^{-s}.
 \label{eq:same-link-constructor-error}
\end{equation}
Every stochastic pair transition \(P(c'|c)\) has the exact fresh-cell
isometry
\begin{equation}
 V|c\rangle|0\rangle_E
 =\sum_{c'}\sqrt{P(c'|c)}\,|c'\rangle|c,c'\rangle_E ,
 \label{eq:same-link-stinespring}
\end{equation}
which extends to a unitary and recovers the pair map after the cell is
traced out \cite{Stinespring1955}.  A history-state construction can encode
the finite collision word \cite{Feynman1985,PageWootters1983}; it does not
make the program or clock state dynamical.

The directed clock links generate a partial order after transitive closure.
Each accepted vertex carries one Boolean record cell.  For any record subset
\(A\),
\begin{equation}
 \mu_L(A)=a_*^4\,\# A ,
 \label{eq:same-link-volume}
\end{equation}
which is positive and exactly additive.  For a macroscopic profile
\(F_R(x)=f(x/R)\), \(R=La_*\), midpoint summation yields
\begin{equation}
 \left|
 \frac{a_*^4}{R^4}\sum_{x\in{\cal G}_L}F_R(x)
 -\int_{[0,1]^4}f(u)\,\diff^4u
 \right|
 \leq C_fL^{-2}.
 \label{eq:same-link-volume-rate}
\end{equation}
The same active links define the scalar graph Laplacian.  Its spatial
dispersion is
\begin{equation}
 \widehat{\bm k}^{\,2}
 =\frac4{a_*^2}\sum_{i=1}^{3}
 \sin^2\!\left(\frac{a_*k_i}{2}\right)
 =\bm k^2+O(a_*^2|\bm k|^4).
 \label{eq:same-link-dispersion}
\end{equation}
Thus order, integer volume, and long-wave matter response are not evaluated
on three separately supplied graphs.

\begin{figure}[p]
 \centering
 \includegraphics[width=\linewidth]
 {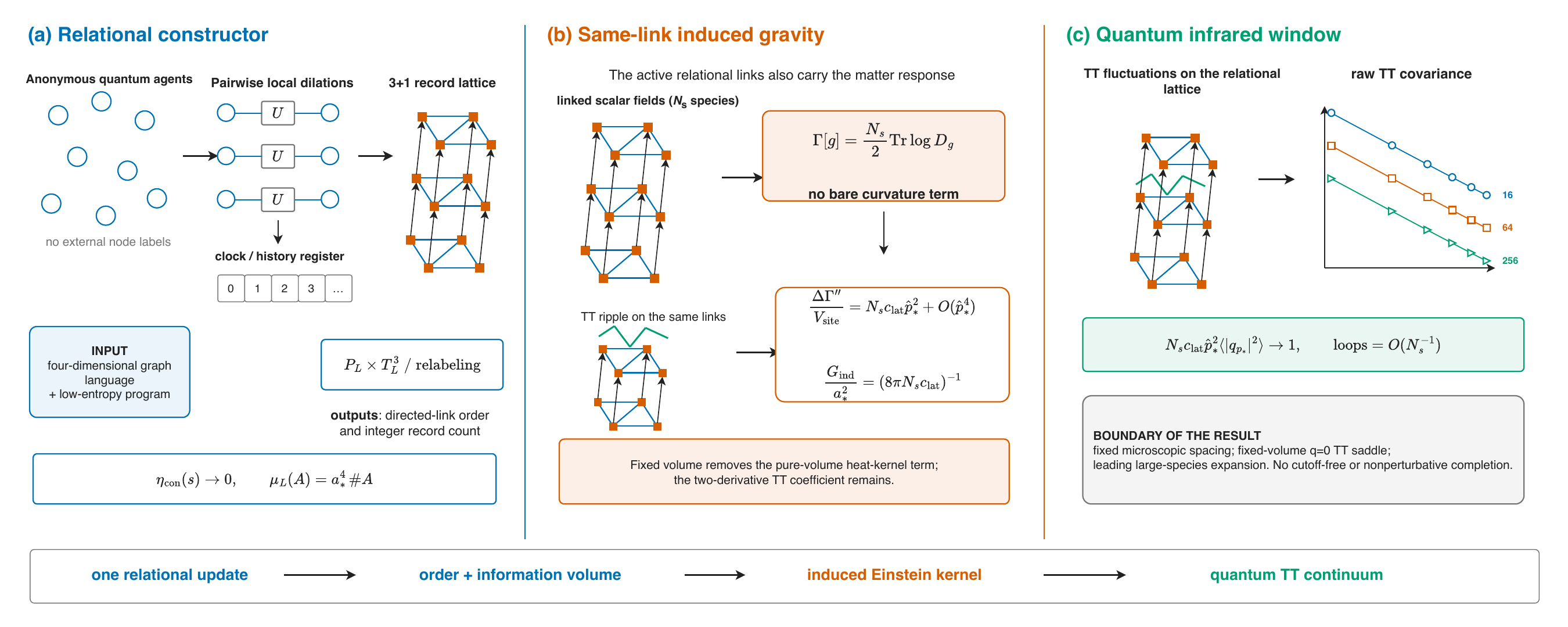}
 \caption{\textbf{Model-specific same-link chain.}
 (a) Anonymous local pair interactions compile an unlabeled
 \(P_L\times T_L^3\) record lattice.  The four-dimensional language,
 program/clock state, and scheduler are inputs; order and integer record
 volume are outputs on the absorbed support.
 (b) Linked scalar species propagate on those active edges.  Their
 determinant, with no bare curvature term, induces the positive
 two-derivative TT kernel.
 (c) At fixed microscopic spacing, the rescaled TT covariance approaches a
 Gaussian massless infrared propagator as \(N_s\) increases.  The lower box
 states the boundary that is part of the theorem.  The plotted covariance is
 calculated from the archived regulator, not illustrative data.}
 \label{fig:same-link-mechanism}
\end{figure}

\subsection{Positive Einstein kernel without a bare curvature action}

On the absorbed Euclidean support, take \(N_s\) identical real scalar fields.
Put \(\theta_\alpha=a_*k_\alpha\), \(\mathfrak m=ma_*\), and define the
dimensionless nearest-neighbor operator
\begin{equation}
 D_0(\theta)=\mathfrak m^2+\sum_{\alpha=1}^4\ell_\alpha(\theta),
 \qquad
 \ell_\alpha(\theta)=4\sin^2\frac{\theta_\alpha}{2}.
 \label{eq:same-link-D0}
\end{equation}
The physical operator is \(a_*^{-2}D_0\); its overall power of \(a_*\)
adds only a \(q\)-independent constant to the determinant.
The same link weights carry one plus-polarized, unimodular TT register,
\begin{equation}
 g_{11}=e^{2q\cos(p_*n_4)},\qquad
 g_{22}=e^{-2q\cos(p_*n_4)},\qquad
 g_{33}=g_{44}=1 .
 \label{eq:same-link-TT}
\end{equation}
Here \(p_*=a_*p\) and \(n_4=x_4/a_*\).
Because \(\det g=1\), this perturbation leaves
Eq.~\eqref{eq:same-link-volume} fixed.  Its weighted scalar operator is
\begin{equation}
 D(q)=e^{-2q\cos(p_*n_4)}\ell_1
     +e^{2q\cos(p_*n_4)}\ell_2+\ell_3+\ell_4+\mathfrak m^2 ,
\end{equation}
and the only TT action is the matter determinant
\begin{equation}
 \Gamma[q]=\frac{N_s}{2}\Tr\log D(q).
 \label{eq:same-link-determinant}
\end{equation}

Write \(D=D_0(\theta)\), \(A=\ell_2-\ell_1\), and
\(S=\ell_1+\ell_2\).  For a nonzero cosine mode, direct
differentiation gives, per species and lattice site,
\begin{equation}
 \frac{\Gamma''(p_*)}{N_sV_{\rm site}}
 =\frac12\int_{[-\pi,\pi]^4}\frac{\diff^4\theta}{(2\pi)^4}
 \left[
 \frac{2S}{D}
 -\frac{A^2}{D}
 \left(\frac1{D_{\theta+p_*e_4}}+\frac1{D_{\theta-p_*e_4}}\right)
 \right].
 \label{eq:same-link-exact-kernel}
\end{equation}
The \(p_*=0\) subtraction is taken with the limiting normalization of the
nonzero cosine.  Expanding the shifted resolvents and integrating by parts
in \(\theta_4\) yields
\begin{align}
 \frac{\Gamma''(p_*)-\Gamma''(0)}{N_sV_{\rm site}}
 &=c_{\rm lat}(\mathfrak m)\widehat p_*^{\,2}
   +O(\widehat p_*^{\,4}),\nonumber\\
 c_{\rm lat}(\mathfrak m)
 &=2\int_{[-\pi,\pi]^4}\frac{\diff^4\theta}{(2\pi)^4}
 \frac{(\ell_2-\ell_1)^2\sin^2\theta_4}
 {\left[\mathfrak m^2+\sum_\alpha\ell_\alpha(\theta)\right]^4}>0 .
 \label{eq:same-link-positive-c}
\end{align}
Here \(\widehat p_*^{\,2}=4\sin^2(p_*/2)\) is dimensionless.
Strict positivity follows because the integrand is nonnegative and is
positive on a set of nonzero measure whenever matter hops in both transverse
directions.  It is not inferred from a fit.  Deleting intersite hopping makes
\(\ell_2-\ell_1=0\) and gives the exact zero control.

For the normalization \eqref{eq:same-link-TT}, the Euclidean convention
\begin{equation}
 S_E=-\frac{M_{\rm P}^2}{2}\int\sqrt g\,R
\end{equation}
has quadratic TT response
\(\tfrac12M_{\rm P}^2a_*^2V_{\rm site}
\widehat p_*^{\,2}|q_{p_*}|^2\).
Consequently
\begin{equation}
 M_{\rm P,ind}^2=\frac{N_sc_{\rm lat}}{a_*^2},
 \qquad
 G_{\rm ind}=\frac{a_*^2}{8\pi N_sc_{\rm lat}} .
 \label{eq:same-link-newton}
\end{equation}
This is a determined coefficient of the specified cutoff theory, not a
universal low-energy number.  The heat-kernel interpretation is the usual
Sakharov mechanism: the linked matter determinant produces the curvature
term \cite{Sakharov1968,Adler1982,Visser2002,Vassilevich2003}.  The novelty
claimed here is the exact positive lattice coefficient and its composition
with the relational record support, not induced gravity by itself.
The pure-volume heat-kernel coefficient cancels from the fixed-volume
difference; curvature-squared and nonlocal \(O(\widehat p_*^4)\) terms remain.
This cancellation is not a vacuum-energy sequestering theorem.

\begin{theorem}[Same-link induced Einstein kernel]
\label{thm:same-link-induced-kernel}
For the scalar operator \eqref{eq:same-link-D0} on an absorbed relational
lattice, the fixed-volume TT determinant has no \(p_*^0\) response and has the
strictly positive leading coefficient
\eqref{eq:same-link-positive-c}.  Hence the induced Newton coefficient
\eqref{eq:same-link-newton} is finite and positive at fixed \(a_*\).
Heavy linked matter decouples as \(ma_*\to\infty\), whereas an ultralocal
matter control gives \(c_{\rm lat}=0\).
\end{theorem}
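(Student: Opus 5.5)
Plan: work in Fourier variables on the absorbed periodic support. The determinant \(\Gamma[q]=\tfrac{N_s}{2}\Tr\log D(q)\) will be expanded to second order in \(q\). Write \(D(q)=D_0+q\,\delta_1+q^2\delta_2+O(q^3)\). The first-order piece is \(\delta_1=2\cos(p_*n_4)(\ell_2-\ell_1)\). The second-order piece is \(\delta_2=2\cos^2(p_*n_4)(\ell_1+\ell_2)\).

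\emph{First and second order, and the zero-momentum check.} The linear term \(\Tr(D_0^{-1}\delta_1)\) vanishes for \(p_*\neq0\), because a nonzero cosine has zero lattice average. The second-order term is
\[
\tfrac{N_s}{2}\bigl[\Tr(D_0^{-1}\delta_2)-\tfrac12\Tr(D_0^{-1}\delta_1D_0^{-1}\delta_1)\bigr].
\]
The multiplication operator \(\cos(p_*n_4)\) shifts \(\theta_4\) by \(\pm p_*\). This reproduces \eqref{eq:same-link-exact-kernel}: a tadpole \(2S/D\) and a bubble with shifted resolvents.

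For the absence of \(p_*^0\) response, evaluate at \(p_*=0\), with the limiting normalization. There the kernel is \(\int[2S/D-2A^2/D^2]\). The perturbation is then a constant anisotropic rescaling \(\ell_1\to e^{-2q}\ell_1\), \(\ell_2\to e^{2q}\ell_2\). I would check that this has vanishing second variation after subtraction. In fact the theorem's statement concerns \(\Gamma''(p_*)-\Gamma''(0)\), so the constant cancels by construction. Volume is fixed because \(\det g=1\), as in \eqref{eq:same-link-volume}.

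\emph{The \(\widehat p_*^{\,2}\) coefficient.} Expand \(1/D_{\theta\pm p_*e_4}\) to second order in \(p_*\). The odd terms cancel between \(\pm\), leaving
\[
\frac{2}{D}+p_*^2\Bigl(\partial_4^2\frac1D\Bigr)+O(p_*^4).
\]
The coefficient of \(p_*^2\) is then \(-\tfrac12\int A^2D^{-1}\partial_4^2D^{-1}\). Since \(A\) is independent of \(\theta_4\), integrate by parts once in \(\theta_4\), using periodicity. This gives \(+\tfrac12\int A^2(\partial_4D^{-1})^2\). With \(\partial_4D=\partial_4\ell_4=2\sin\theta_4\), this equals \(2\int A^2\sin^2\theta_4/D^4\).

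The remainder is \(O(p_*^4)\) with a uniform constant, because \(D\ge\mathfrak m^2>0\) makes every resolvent derivative bounded. Replacing \(p_*^2\) by \(\widehat p_*^{\,2}\) changes only the \(O(p_*^4)\) term. This gives \eqref{eq:same-link-positive-c}.

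\emph{Positivity, the matching to \(G_{\rm ind}\), and the controls.} The integrand in \eqref{eq:same-link-positive-c} is nonnegative. It is continuous and strictly positive near, e.g., \(\theta=(0,\pi/2,0,\pi/2)\), where \(\ell_2\neq\ell_1\) and \(\sin\theta_4\neq0\), so \(c_{\rm lat}>0\). Comparing with the quadratic TT response of \(S_E\) identifies \(M_{\rm P,ind}^2\), and hence \(G_{\rm ind}\) in \eqref{eq:same-link-newton}. This result is finite and positive at fixed \(a_*\), since \(c_{\rm lat}<\infty\) by the bound \(D\ge\mathfrak m^2\). For \(\mathfrak m=0\) one needs separate integrability near \(\theta=0\): there the integrand is \(O(|\theta|^6/|\theta|^8)\), which is integrable in four dimensions.

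Decoupling follows from \(c_{\rm lat}\le C\mathfrak m^{-8}\to0\) by dominated convergence. In the ultralocal control, the \(\ell_{1,2}\) hopping terms are absent. Then \(A\equiv0\), the integrand vanishes, and \(c_{\rm lat}=0\) exactly.

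The main obstacle is bookkeeping the \(p_*=0\) subtraction correctly. The cosine's normalization changes at \(p_*=0\): \(\cos^2\) averages to \(1\) rather than \(1/2\). I would handle this by defining \(\Gamma''(0)\) as the \(p_*\to0\) limit of the nonzero-mode formula, which is continuous. The stated expansion is then just a Taylor expansion of a smooth even function of \(p_*\).
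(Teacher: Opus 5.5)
Your proposal is correct and follows the paper's own route. You expand \(\Tr\log D(q)\) to second order to obtain the tadpole-minus-shifted-bubble kernel \eqref{eq:same-link-exact-kernel} and subtract its \(p_*\to0\) limit with the nonzero-cosine normalization; that subtraction is all ``no \(p_*^0\) response'' means, because the unsubtracted constant-anisotropy second variation is a nonzero lattice artifact (of order \(\mathfrak m^{-2}\) at large \(\mathfrak m\)), so do not try to show that it vanishes. You then Taylor-expand the shifted resolvents symmetrically, integrate by parts once in \(\theta_4\), and get positivity from a nonnegative integrand that is positive on an open set, with \(A\equiv0\) giving the ultralocal control. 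Your explicit bound \(c_{\rm lat}\le2\mathfrak m^{-8}\int A^2\sin^2\theta_4\,\diff^4\theta/(2\pi)^4\) for heavy-matter decoupling and your massless integrability check near \(\theta=0\) are small additions that the paper leaves implicit.
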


\subsection{Quantum infrared limit of the induced field}

Integrate the TT link register on the perturbative basin of the stable
\(q=0\) saddle with a translation-invariant local reference measure and the
determinant action \eqref{eq:same-link-determinant}.  Around that saddle,
\begin{equation}
 \Gamma[q]
 =N_s\sum_{n\geq2}\frac1{n!}\Gamma_n[q^n],
 \qquad
 \Gamma_2(p_*)=c_{\rm lat}\widehat p_*^{\,2}
 \left[1+O(\widehat p_*^{\,2})\right].
 \label{eq:same-link-vertices}
\end{equation}
With \(\varphi_{p_*}=\sqrt{N_sc_{\rm lat}}\,q_{p_*}\),
\begin{equation}
 N_sc_{\rm lat}\widehat p_*^{\,2}
 \langle|q_{p_*}|^2\rangle
 =1+O(\widehat p_*^{\,2})+O(N_s^{-1}).
 \label{eq:same-link-covariance}
\end{equation}
The \(n\)-point vertex of \(\varphi\) scales as
\(N_s^{1-n/2}\), and every additional metric loop costs \(N_s^{-1}\).
Thus connected non-Gaussian cumulants vanish for \(n>2\) on the declared
fixed infrared band.  Reflection positivity of the leading
nearest-neighbor Gaussian kernel gives its standard
Osterwalder--Schrader continuation \cite{OsterwalderSchrader1973}; the
statement is only for this TT field, not for a gauge-complete graviton.

Let \({\cal P}_{\Lambda,L}\) be a finite set of nonzero modes with
\(\max_{p\in{\cal P}_{\Lambda,L}}a_*|p|\to0\), and define
\begin{align}
 \epsilon_{\rm det}(L,\Lambda)
 &=\sup_{p\in{\cal P}_{\Lambda,L}}
 \left|
 \frac{\Gamma_2(p_*)}
 {c_{\rm lat}\widehat p_*^{\,2}}-1
 \right|,\nonumber\\
 {\mathfrak Q}_{L,s,N_s}
 &=\eta_{\rm con}(s)+C_{\rm vol}L^{-2}
 +C_{\rm disp}L^{-2}
 +\epsilon_{\rm det}(L,\Lambda)
 +\epsilon_{\rm BZ}+N_s^{-1}.
 \label{eq:same-link-joint-certificate}
\end{align}
Here \(\epsilon_{\rm BZ}\) is the deterministic Brillouin-zone quadrature
error.  Every summand is either a finite observable or a remainder with a
declared norm.

\begin{theorem}[Composed relational quantum TT limit]
\label{thm:same-link-quantum-limit}
Run the constructor so that \(\eta_{\rm con}\to0\), take \(L\to\infty\) at
fixed \(a_*\) with \(a_*|p|\to0\) on the probed band, refine the
Brillouin-zone quadrature, and take \(N_s\to\infty\).  Then the normalized
integer record measure and graph response converge on their declared
macroscopic tests; the scalar determinant on those same links converges in
the TT sector to the Einstein two-derivative kernel with coefficient
\eqref{eq:same-link-newton}; and every saddle-selected finite-dimensional
Schwinger function of \(\varphi\) on the probed band converges to that of a
massless Gaussian field.  Connected \(n>2\) cumulants are
\(O(N_s^{1-n/2})\), and the composed error is
\(O({\mathfrak Q}_{L,s,N_s})\).
\end{theorem}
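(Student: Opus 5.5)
The proof will be a triangle-inequality composition of the four error sources in \({\mathfrak Q}_{L,s,N_s}\). Each is controlled by a statement already proved in this section, and the limits are taken in the order listed in the theorem.

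First comes the constructor stage. By Eq.~\eqref{eq:same-link-constructor-error}, the expected fraction of wrong edge bits after \(sM\) inspections is at most \(\eta_{\rm con}(0)e^{-s}\). On the complementary event the absorbed output is isomorphic to \({\cal G}_L\). Because every observable used below is bounded per site, I will bound the discrepancy between the random output and \({\cal G}_L\) by \(\eta_{\rm con}(s)\) times a constant. Markov's inequality handles the bad event, and the normalization on the good event is handled by linearity in wrong bits. On \({\cal G}_L\) itself, the integer measure \eqref{eq:same-link-volume} converges on test profiles at rate \(C_fL^{-2}\) by the midpoint estimate \eqref{eq:same-link-volume-rate}. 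The graph Laplacian response converges by the dispersion expansion \eqref{eq:same-link-dispersion}, uniformly on the band \(a_*|p|\to0\), giving \(C_{\rm disp}L^{-2}\) once the lowest probed momentum scales as \(1/L\). The order statement is exact on \({\cal G}_L\) after transitive closure of the clock links.

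Second is the determinant stage. Theorem~\ref{thm:same-link-induced-kernel} supplies \(c_{\rm lat}>0\) and the expansion \(\Gamma_2=c_{\rm lat}\widehat p_*^{\,2}[1+O(\widehat p_*^{\,2})]\). On the finite mode set \({\cal P}_{\Lambda,L}\), the relative deviation is by definition \(\epsilon_{\rm det}\to0\) as \(a_*|p|\to0\). Replacing the exact Brillouin-zone integral \eqref{eq:same-link-exact-kernel} by the finite-\(L\) lattice sum, or by the declared quadrature, adds \(\epsilon_{\rm BZ}\). Since \(D_0\ge\mathfrak m^2>0\) and the integrand is smooth and periodic, this error is exponentially small in \(L\) for the lattice sum. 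This yields convergence to the Einstein kernel with coefficient \eqref{eq:same-link-newton}.

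Third, and the main obstacle, is the large-\(N_s\) quantum stage. I will rescale to \(\varphi=\sqrt{N_sc_{\rm lat}}\,q\), so that the action becomes \(\tfrac12\sum\widehat p_*^{\,2}(1+O(\widehat p_*^{\,2}))|\varphi_p|^2\) plus vertices of order \(N_s^{1-n/2}\) by \eqref{eq:same-link-vertices}. The finite-dimensional marginals on the probed band then come from a Laplace-type integral in finitely many variables, restricted to the saddle basin. I would first try a standard finite-dimensional Laplace expansion with an explicit remainder. Cut off to a ball of radius \(N_s^{\delta}\) in \(\varphi\), use the strict positivity \(c_{\rm lat}>0\) to get a uniform Gaussian lower bound on the quadratic form, and bound the cubic and higher Taylor remainders by \(N_s^{-1/2}|\varphi|^3\). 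The tail outside the ball is exponentially small, and the translation-invariant reference measure contributes only an \(O(N_s^{-1})\) shift to the covariance. Parity of the cosine-mode action, \(q\to-q\) combined with a half-period shift, kills the odd \(N_s^{-1/2}\) term. This gives \eqref{eq:same-link-covariance} and cumulants \(O(N_s^{1-n/2})\) via the usual tree and loop counting on finitely many modes. The delicate point is uniformity: the smallest eigenvalue \(c_{\rm lat}\widehat p_*^{\,2}\) tends to \(0\) as \(L\to\infty\). I will therefore require the fixed-band \(\varphi\) normalization, in which the covariance is \(O(1)\) per mode, and note that the error in \({\mathfrak Q}\) is relative and holds mode by mode. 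Nonperturbative contributions outside the \(q=0\) basin are excluded by the declared saddle selection rather than estimated.

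Finally, I will sum the four contributions using the triangle inequality for each Schwinger function. The Gaussian limit's Schwinger functions are Lipschitz in the covariance on a finite band, so the composed error is \(O({\mathfrak Q}_{L,s,N_s})\).
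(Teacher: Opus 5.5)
The overall decomposition matches the paper's. The paper gives no separate proof of this theorem: its support is the sum of the separately certified interface errors that make up \({\mathfrak Q}_{L,s,N_s}\) (constructor contraction, midpoint volume, dispersion, \(\epsilon_{\rm det}\), \(\epsilon_{\rm BZ}\)), plus the \(N_s\) power counting after \(\varphi=\sqrt{N_sc_{\rm lat}}\,q\), in the declared order of limits.

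The gap is in the step you add to make the large-\(N_s\) stage quantitative. That is exactly where the constant in \(O({\mathfrak Q})\) must be uniform in \(L\).
\begin{itemize}
\item In your rescaled action the quadratic form is \(\tfrac12\widehat p_*^{\,2}|\varphi_p|^2\), so the \(\varphi\) covariance is \(\widehat p_*^{\,-2}\). This diverges as \(a_*|p|\to0\), like \(L^2/(2\pi n)^2\) for \(p=2\pi n/(La_*)\). It is not \(O(1)\) per mode, and \(c_{\rm lat}>0\) gives no Gaussian lower bound uniform in \(L\).
\item A cutoff ball of radius \(N_s^{\delta}\) then captures the measure only if \(N_s^{\delta}\gg\widehat p_*^{\,-1}\). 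That ties \(N_s\) to \(L\), against the stated order in which \(L\to\infty\) first.
\item In the normalization you wrote, any higher vertex or reference-measure term with a nonvanishing zero-momentum part gives self-energy corrections that are generically at least of order \(N_s^{-1}\widehat p_*^{\,-2}\) relative to the kinetic eigenvalue, not \(N_s^{-1}\).
\end{itemize}
So ``the error is relative and holds mode by mode'' is the claim that needs proof; it does not follow from the normalization.

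What is missing is control of the momentum and volume structure of \(\Gamma_n\), \(n\ge3\), and of the reference measure on band modes. Possible routes:
\begin{itemize}
\item Keep the extensive factor \(V_{\rm site}=L^4\) for collective band amplitudes. The quartic correction then has relative size \(\Gamma_4/(N_sV_{\rm site}c_{\rm lat}^2\widehat p_*^{\,4})\), with \(V_{\rm site}\widehat p_*^{\,4}\approx(2\pi n)^4\). This also requires a reference measure that adds no extensive mass term, and odd vertices removed by the symmetry noted below.
\item Establish a derivative structure of the vertices.
\item Work in a joint regime \(N_s\widehat p_*^{\,2}\to\infty\).
\end{itemize}
Neither \(c_{\rm lat}>0\) nor Eq.~\eqref{eq:same-link-vertices}, which fixes only \(\Gamma_2\), supplies this. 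The paper itself offers only the \(N_s\) counting here, so you are not missing a lemma it contains, but your mechanism does not close the step.

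The constructor stage has a defect of the same kind.
\begin{itemize}
\item \(\eta_{\rm con}(s)=\mathbb E W_{sM}/M\) is normalized by the \(M=n(n-1)/2\) pair registers, not by sites.
\item Markov's inequality therefore bounds the probability that the output differs from \({\cal G}_L\) by \(M\eta_{\rm con}(s)\).
\item Linearity in wrong bits gives per-site discrepancies of order \(\mathbb E W_{sM}/n\approx n\,\eta_{\rm con}(s)/2\).
\item For the TT kernel, which is itself \(O(\widehat p_*^{\,2})\) per site, the relative discrepancy carries a further factor \(\widehat p_*^{\,-2}\).
\end{itemize}
This still vanishes in the stated order (\(\eta_{\rm con}\to0\) at fixed \(L\)), but it does not give ``\(\eta_{\rm con}(s)\) times a constant.''

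Finally, a half-period shift flips only one cosine mode, and it is a lattice translation only when \(L/(2n)\) is an integer. The symmetry that makes \(\Gamma\) even on the whole plus-polarized register is the axis exchange \(\ell_1\leftrightarrow\ell_2\). In any case, the \(O(N_s^{-1/2})\) correction to even Schwinger functions already vanishes by Gaussian parity.
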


The order of limits is part of the theorem.  It is an infrared scale-
separation limit at fixed microscopic spacing, followed by the controlled
large-\(N_s\) expansion.  It is not the removal of a UV cutoff at fixed
physical \(G\).

\subsection{Independent finite certificate and controls}

The canonical calculation checks every interface separately before applying
the composition theorem.  Eight independent relabelings give a maximum
graph-spectrum residual \(6.04\times10^{-14}\); the wrong-link fraction after
ten sweeps is \(2.05\times10^{-5}\), compared with the exact expectation
\(2.28\times10^{-5}\).  The explicit Stinespring isometry and induced channel
have zero residual at numerical precision.  The normalized volume and
dispersion errors scale as \(L^{-1.99997}\) and \(L^{-1.98997}\).

Independent positive-integrand quadrature gives
\begin{equation}
 c_{\rm lat}(ma_*=10^{-3})=0.0029784498 .
 \label{eq:same-link-numerical-c}
\end{equation}
The finite-\(p\) joint extrapolation gives \(0.0029784389\), while the first
periodic momentum gives \(0.00297180\); these are cross-checks rather than
definitions of the coefficient.  At \(N_s=64\) and \(a_*=1\),
\(G_{\rm ind}/a_*^2=0.2087324\).  Raising the lattice mass to \(ma_*=8\) suppresses
\(c_{\rm lat}\) to \(5.07\times10^{-5}\) of its light-mass value.  In the
declared \(L=48\) band, the largest finite-momentum covariance-collapse
error for \(N_s=64\) is \(6.80\%\); the expansion parameter for metric loops
is \(1/64\).

\begin{figure}[p]
 \centering
 \includegraphics[width=\linewidth]
 {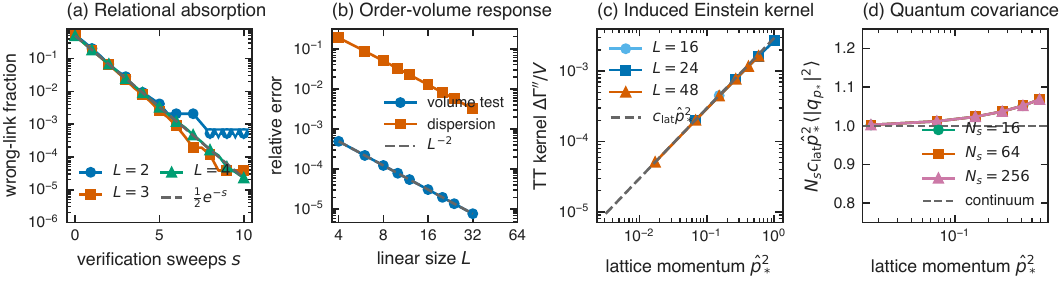}
 \caption{\textbf{Independent certificate for the composed limit.}
 (a) Random-pair verification after vertex relabeling.
 (b) Macroscopic record-volume quadrature and linked scalar dispersion.
 (c) The subtracted determinant and its positive
 \(c_{\rm lat}\widehat p_*^2\) limit.
 (d) Collapse of the rescaled TT covariance for three species numbers.
 Open downward markers in panel (a) denote zero observed errors plotted at
 half the aggregate detection floor.}
 \label{fig:same-link-joint-certificate}
\end{figure}

The coefficient has two sharp controls.  It decreases with the linked
matter mass, and it vanishes exactly when intersite hopping is removed.
These distinguish a genuine link-metric response from a fitted positive
constant.  The numerical tables and the scripts used to reproduce the
calculation are provided with the supporting material.

\subsection{What this closes, and what remains}

The model closes the following chain:
\begin{equation}
 \boxed{\begin{gathered}
 \text{anonymous local pair process}\\
 \Downarrow\\
 \text{one unlabeled record support carrying order, integer volume,
 and matter response}\\
 \Downarrow\\
 \text{strictly positive induced TT Einstein kernel}\\
 \Downarrow\\
 \text{controlled saddle-selected fixed-band large-\(N_s\) quantum TT field}.
 \end{gathered}}
 \label{eq:same-link-closed-chain}
\end{equation}
The action and quantum covariance are therefore no longer validated on
microscopic supports unrelated to the order--volume records.

Five boundaries prevent this theorem from being promoted to a general
quantum-gravity result.
\begin{enumerate}
\item The constructor compiles a four-dimensional graph language; it does
not select dimension or topology as a thermodynamic phase.
\item The program/clock state, fair scheduler, orientation color, and record
volume unit \(a_*^4\) remain microscopic inputs.
\item Matter is integrated out after the record support forms.  Its
backreaction on graph formation and a fully autonomous arrow of time are
absent.
\item Only the perturbative basin of the stable \(q=0\), fixed-volume TT
polarization is quantized.  Competing metric saddles, lapse, shift,
scalar/vector constraints, ghosts, nonlinear diffeomorphism symmetry,
caustics, and realistic gauge/fermion matter are not included.
\item The continuum statement keeps \(a_*\) fixed.  Higher-curvature
operators remain, and no regulator-independent trajectory for \(G\) or
cutoff-free quantum measure is established.
\end{enumerate}

Accordingly, A11I is marked \StatusModel, whereas A7, A9V, A11, and A19
remain open.  The next mathematical target is a phase-selection theorem that
replaces the compiled language and clock, followed by a constraint-complete
induced metric measure whose renormalization makes a regulator-independent
prediction.  The present chain is the model-level bridge on which that
stronger theorem must improve.

\section{Three decisive model chains beyond the fixed-regulator TT sector}
\label{sec:three-decisive-chains}

The same-link construction of
Sec.~\ref{sec:same-link-induced-gravity} ended at three visible boundaries:
the update word was still prepared relative to an external schedule, the
induced metric response was restricted to its transverse--traceless
quadratic component, and its Newton coefficient was defined at fixed
microscopic spacing.  Three companion results address these interfaces at
the model-specific level stated below
\cite{XuRelationalOrderVolume2026,XuConstraintCompleteEinstein2026,
XuRegulatorIndependentNewton2026}.  Their composition is a
regulator-independent \emph{quadratic} quantum Einstein band.  It is not a
nonlinear, background-independent completion of A11.

\subsection{Why the three arrows must be separated}

The three problems are logically independent:
\begin{equation}
\begin{split}
&\text{external update word}
 \longrightarrow \text{finite relational history},\\
&\text{two TT polarizations}
 \longrightarrow \text{ten components + constraints + ghosts},\\
&\text{fixed cutoff coefficient}
 \longrightarrow \text{regulator-independent Newton limit}.
\end{split}
\label{eq:three-chain-separation}
\end{equation}
A history-state representation does not generate a gauge identity.  A
Fierz--Pauli completion does not remove the cutoff.  A balanced determinant
does not explain the clock or the causal records.  Combining the conclusions
is legitimate only after the record support, metric derivative, and
determinant response are identified through explicit interfaces.

\subsection{Chain I: finite relational time, order, and information volume}

\paragraph{The obstruction.}
Let \(H=H^\dagger\) and \(T=T^\dagger\) act in finite dimension.  If the
Heisenberg velocity is nonnegative in every state, then
\begin{equation}
 i[H,T]\geq0,\qquad \Tr i[H,T]=0
 \quad\Longrightarrow\quad [H,T]=0.
\label{eq:three-clock-no-go}
\end{equation}
Thus no bounded observable is a state-independent strictly increasing clock
for a finite closed system.  The appropriate finite replacement is a
stationary state with conditional clock correlations, not a monotone
operator.

\paragraph{Exact history sector.}
For local reversible record updates \(U_0,\ldots,U_{M-1}\), introduce
\begin{align}
 H_{\rm prop}
 =\frac12\sum_{r=0}^{M-1}\big(&
 |r\rangle\langle r|+|r+1\rangle\langle r+1|\nonumber\\[-1mm]
 &-|r+1\rangle\langle r|\otimes U_r
 -|r\rangle\langle r+1|\otimes U_r^\dagger\big).
\label{eq:three-hprop}
\end{align}
Writing
\(|\psi_r\rangle=U_{r-1}\cdots U_0|\psi_0\rangle\), the state
\begin{equation}
 |\Psi_{\rm hist}\rangle
 =\frac1{\sqrt{M+1}}\sum_{r=0}^{M}
 |r\rangle_C|\psi_r\rangle
\label{eq:three-history}
\end{equation}
has exactly zero propagation energy.  Conditioning on
\(|r\rangle\langle r|\) returns \(|\psi_r\rangle\), and the first positive
eigenvalue is
\begin{equation}
 \Delta_M=1-\cos\frac{\pi}{M+1}.
\label{eq:three-history-gap}
\end{equation}
This follows by conjugating Eq.~\eqref{eq:three-hprop} to one half the path
Laplacian.  The result replaces an external runtime by a finite internal
clock correlation.  The low-entropy clock endpoint and the program remain
state-preparation inputs.

\paragraph{Causal rank and the common volume record.}
An accepted event \(v\) is appended only after the event records in its
parent-port set \(P(v)\) are present.  The update writes
\begin{equation}
 r(v)=
 \begin{cases}
 0,&P(v)=\varnothing,\\
 1+\max_{u\in P(v)}r(u),&P(v)\neq\varnothing .
 \end{cases}
\label{eq:three-causal-rank}
\end{equation}
Every accepted edge \(u\to v\) raises \(r\), so a directed cycle is
impossible.  Append operations at incomparable targets use disjoint
registers and commute.  Adjacent exchanges of such operations connect all
topological schedules, proving that their final incidence record is the
same even when the schedule register is coherently superposed.

On the explicit \(P_L\times T_L^3\) port branch, the accepted events also
carry \(q\) stable record qubits.  If a local coupling \(J\) fixes the
distinguishable clock tick \(\tau_*=\pi/(2J)\), define
\begin{equation}
 \mu_L(A)=\tau_*^4\#A,\qquad
 C_L(A)=q\#A,\qquad
 \frac{C_L(A)}{\mu_L(A)}=\frac{q}{\tau_*^4}.
\label{eq:three-record-volume}
\end{equation}
The ratio is exactly preserved by additive \(b^4\)-cell blocking.  For
midpoint tests \(f\in C^2([0,1]^4)\),
\begin{equation}
 \left|L^{-4}\sum_{\bm n}f\!\left(
 \frac{\bm n+\bm1/2}{L}\right)-\int_{[0,1]^4}f\,\diff^4x\right|
 \leq\frac1{24L^2}
 \sum_{\alpha=1}^4\|\partial_\alpha^2f\|_\infty+O(L^{-4}).
\label{eq:three-weak-volume}
\end{equation}
Hence conditional history, order, capacity, and weak volume use one record
family and one refinement.  Exact diagonalization and 96 randomized
spacelike schedules give zero history, rank, cycle, activation, and blocking
defects within the archived precision; the fitted weak-volume exponent is
\(-1.99996\).

\begin{theorem}[Finite relational order--volume chain]
\label{thm:three-relational-order-volume}
For the Hamiltonian \eqref{eq:three-hprop} and the declared rank-increasing
append program, every conditional history branch has a schedule-independent
strict partial order.  Its accepted-event records define the positive
additive measure \eqref{eq:three-record-volume}; order and measure share the
same blocking maps, and the normalized weak measure converges at
\(O(L^{-2})\).
\end{theorem}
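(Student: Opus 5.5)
The proof separates into four independent parts, which are then combined using the fact that order and measure are defined from the same accepted-event records.

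\textbf{The history sector.} Put $W=\sum_r|r\rangle\langle r|\otimes U_{r-1}\cdots U_0$, with the empty product equal to $\mathbf 1$. This $W$ is unitary. Conjugating Eq.~\eqref{eq:three-hprop} by it turns each hopping term $|r+1\rangle\langle r|\otimes U_r$ into $|r+1\rangle\langle r|\otimes\mathbf1$. Hence $W^\dagger H_{\rm prop}W=\tfrac12 L_{P_{M+1}}\otimes\mathbf1$, where $L_{P_{M+1}}$ is the path Laplacian on $M+1$ vertices. Its kernel is spanned by the uniform vector, and its first positive eigenvalue is $2(1-\cos\frac{\pi}{M+1})$. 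This gives Eq.~\eqref{eq:three-history-gap}, and it also shows that Eq.~\eqref{eq:three-history} lies in the zero-energy space. Projecting onto $|r\rangle\langle r|_C$ returns $|\psi_r\rangle$. Therefore every conditional branch is a definite record configuration produced by the declared program, and the remaining statements can be proved branchwise.

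\textbf{Order on a fixed branch.} Fix a branch $r$ and take the relation generated by accepted edges $u\to v$ with $u\in P(v)$. By Eq.~\eqref{eq:three-causal-rank}, $r(v)\ge r(u)+1$ along every edge. Summing around a putative directed cycle would give $0\ge$ the cycle length, which is impossible. So the graph is acyclic, and its transitive closure is a strict partial order.

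\textbf{Schedule independence.} This is the main obstacle, because the schedule register may be coherently superposed. I would first show that two appends at incomparable targets act on disjoint registers. Each such append reads only records of its own parents, which are already present and are not overwritten, and writes only its own event register. Disjoint supports make the two unitaries commute. Next, the valid schedules are exactly the linear extensions of the program's precedence order. Any two linear extensions are connected by a finite sequence of adjacent transpositions of incomparable elements; this is the standard bubble-sort argument on linear extensions. Each transposition leaves the final incidence and rank records unchanged, so the final record is the same for every schedule. Because the record is schedule-independent, it factors out of any coherent superposition over schedules.

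\textbf{Measure, blocking, and convergence.} The measure $\mu_L(A)=\tau_*^4\#A$ is positive and additive since counting measure is. Under additive $b^4$-blocking, $\#A$ and $q\#A$ are both preserved, so the ratio $C_L/\mu_L=q/\tau_*^4$ is exact. Blocking maps accepted events to blocks, and it preserves ranks up to the induced quotient because it is defined on the same vertex set; hence order and measure share one blocking map. For the weak rate, I would Taylor-expand $f$ about each midpoint of a cell of side $1/L$. The linear term integrates to zero by symmetry. The second-order term contributes $\frac{1}{24L^2}\sum_\alpha\partial_\alpha^2 f$ per unit volume, and the remainder is $O(L^{-4})$. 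Summing over cells gives Eq.~\eqref{eq:three-weak-volume}, and hence $O(L^{-2})$ convergence of the normalized weak measure.
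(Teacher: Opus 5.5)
Your proposal is correct and follows the paper's argument essentially step for step. It conjugates $H_{\rm prop}$ to one half the path Laplacian, excludes cycles because the rank increases along every accepted edge, obtains schedule independence from commuting appends at incomparable targets together with adjacent transpositions connecting all topological schedules, gets exact blocking of the capacity-to-volume ratio, and finishes with a midpoint-rule estimate. One small refinement: your $O(L^{-4})$ Taylor remainder needs $f\in C^4$, whereas for $f\in C^2$ you can apply the exact one-dimensional midpoint error $\frac{h^3}{24}g''(\xi)$ one coordinate at a time, which gives the bound $\frac{1}{24L^2}\sum_\alpha\|\partial_\alpha^2 f\|_\infty$ with no remainder term at all.
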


This proves A19M.  It does not prove the general A19: the finite clock
boundary, parent rule, four-dimensional port language, and volume unit are
not dynamically selected.

\subsection{Chain II: the constraint-complete quadratic Einstein field}

The same-link determinant previously fixed one positive TT coefficient,
\(Z_{\rm TT}\).  To determine what follows for the remaining eight tensor
components, consider the most general parity-even isotropic two-derivative
quadratic form of a symmetric tensor at nonzero momentum:
\begin{align}
 Q[h;q]={}&a_1q^2h_{\mu\nu}h_{\mu\nu}
 +a_2(q_\mu h_{\mu\nu})(q_\rho h_{\rho\nu})\nonumber\\
 &+a_3(q_\mu h_{\mu\nu}q_\nu)h
 +a_4q^2h^2.
\label{eq:three-quadratic-ansatz}
\end{align}
The declared linear gauge generator is
\begin{equation}
 (R\xi)_{\mu\nu}=q_\mu\xi_\nu+q_\nu\xi_\mu .
\label{eq:three-gauge-generator}
\end{equation}
Solving \(KR=0\) gives a one-dimensional coefficient space,
\begin{equation}
 (a_1,a_2,a_3,a_4)=\zeta(1,-2,2,-1).
\label{eq:three-fp-unique}
\end{equation}
TT matching fixes \(\zeta=Z_{\rm TT}/2\).  Thus no second coupling is
introduced when extending the determinant response to
\begin{align}
 S_E^{(2)}=\frac{Z_{\rm TT}}2\sum_q\big[&
 q^2h_{\mu\nu}h_{\mu\nu}
 -2(q_\mu h_{\mu\nu})^2\nonumber\\
 &+2(q_\mu h_{\mu\nu}q_\nu)h-q^2h^2\big].
\label{eq:three-fierz-pauli}
\end{align}

On the lattice, every occurrence of \(q_\mu\) is replaced by the same
\(\widehat q_\mu=2a^{-1}\sin(aq_\mu/2)\).  Define the de Donder map
\begin{equation}
 (Bh)_\nu=\widehat q^\mu h_{\mu\nu}
 -\frac12\widehat q_\nu h .
\label{eq:three-de-donder}
\end{equation}
Then the finite identities are
\begin{equation}
 K_E R=0,\qquad
 B R=\widehat q^{\,2}\mathbb 1_4,\qquad
 \operatorname{rank}K_E=6,\qquad
 \operatorname{rank}R=4.
\label{eq:three-finite-ward}
\end{equation}
The gauge-fixed kernel
\(K_\alpha=K_E+\alpha^{-1}B^TB\) has rank ten.  The
Faddeev--Popov operator is \(\widehat q^2\mathbb 1_4\), and the linear BRST
differential
\begin{equation}
 sh=Rc,\qquad sc=0,\qquad
 s\bar c=b,\qquad sb=0
\label{eq:three-brst}
\end{equation}
is nilpotent.

For a source satisfying \(R^TJ=0\), the Gaussian response is independent of
the gauge parameter:
\begin{equation}
 \frac{\partial}{\partial\alpha}
 \left(J^TK_\alpha^{-1}J\right)=0 .
\label{eq:three-source-gauge}
\end{equation}
A source with a gauge component supplies a required failure control and
changes by order unity across the tested \(\alpha\)-range.  On every fixed
finite set of nonzero physical Fourier modes,
\begin{equation}
 \frac{\|K_E(\widehat q)-K_E(q)\|}{\|K_E(q)\|}
 =O(a^2).
\label{eq:three-full-band}
\end{equation}
The deterministic archive gives a maximum normalized Ward residual
\(1.98\times10^{-16}\), BRST-square residual zero, conserved-source gauge
spread \(3.47\times10^{-18}\), and continuum exponent \(-1.988\).

\begin{theorem}[Constraint-complete induced Gaussian response]
\label{thm:three-constraint-complete}
Given a positive induced TT normalization and the linear gauge law
\eqref{eq:three-gauge-generator}, locality, isotropy, parity, and
two-derivative scaling uniquely determine the full Fierz--Pauli kernel.
Equations~\eqref{eq:three-finite-ward}--\eqref{eq:three-brst} form an exact
finite lattice gauge--ghost complex, and all conserved-source Gaussian
Schwinger functions have a gauge-independent fixed-band continuum limit.
\end{theorem}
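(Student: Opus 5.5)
The plan has three parts: pin down the coefficients algebraically, verify the finite complex identities, and then derive gauge independence and the continuum limit from them.

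\textbf{Uniqueness.} Write $Q[h;q]=h^TK(q)h$ with $K$ built from the four isotropic, parity-even tensor structures in Eq.~\eqref{eq:three-quadratic-ansatz}. Locality, isotropy, parity and two-derivative scaling already restrict to this four-parameter family; the only nontrivial input is that no other rank-two structures exist. Impose gauge invariance $K R=0$ by substituting $h_{\mu\nu}=q_\mu\xi_\nu+q_\nu\xi_\mu$ into the bilinear form and demanding that the result vanish for arbitrary $h$ and $\xi$. The resulting vector $K R\xi$ decomposes along the independent structures $q^2 q\cdot\xi\,h$, $q^2 h_{\mu\nu}q_\mu\xi_\nu$, $(q h q)(q\cdot\xi)$ and the analogous terms. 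Setting each coefficient to zero gives a homogeneous linear system in $(a_1,\dots,a_4)$. I would check that its solution space is one-dimensional and spanned by $(1,-2,2,-1)$, which reproduces Eq.~\eqref{eq:three-fp-unique}.

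Restricting to transverse--traceless $h$ kills the $a_2,a_3,a_4$ terms and leaves $a_1q^2|h|^2$. Matching this to $\tfrac12 Z_{\rm TT}q^2|h|^2$, with $Z_{\rm TT}$ from Theorem~\ref{thm:same-link-induced-kernel}, fixes $\zeta=Z_{\rm TT}/2$.

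\textbf{Finite complex.} On the lattice, every $q_\mu$ is replaced by the same $\widehat q_\mu$. The continuum identities used only commutativity of the components $q_\mu$ and the algebraic relations among them, never the specific form of $q$. They therefore hold verbatim with $\widehat q$, giving $K_ER=0$ exactly. A direct computation gives $BR\xi=\widehat q^{\,2}\xi+\widehat q(\widehat q\cdot\xi)-\widehat q(\widehat q\cdot\xi)=\widehat q^{\,2}\xi$.

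For the ranks, $R$ is injective for $\widehat q\neq0$, so $\operatorname{rank}R=4$. The kernel of $K_E$ contains $\operatorname{ran}R$. Equality $\ker K_E=\operatorname{ran}R$ follows from the spin decomposition: spin-2 has eigenvalue $\propto Z_{\rm TT}\widehat q^2$, and the spin-0 sector carries one nonzero eigenvalue. This yields rank $10-4=6$.

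Adding $\alpha^{-1}B^TB$ makes the kernel full rank. $B$ is injective on $\operatorname{ran}R$ because $BR=\widehat q^2\mathbb 1$, and $K_E$ is nondegenerate on a complement of $\operatorname{ran}R$, so $K_\alpha$ has rank ten. The Faddeev--Popov operator is $BR=\widehat q^2\mathbb1_4$. Nilpotency $s^2=0$ is immediate from $sc=0$ and $sb=0$ together with linearity of $s$.

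\textbf{Gauge independence and continuum limit.} First, $\partial_\alpha(J^TK_\alpha^{-1}J)=\alpha^{-2}J^TK_\alpha^{-1}B^TBK_\alpha^{-1}J$. I need to show $BK_\alpha^{-1}J=0$ whenever $R^TJ=0$. Transposing $K_\alpha R=\alpha^{-1}B^T\widehat q^2$ gives $R^TK_\alpha=\alpha^{-1}\widehat q^2B$, hence $BK_\alpha^{-1}=\alpha\widehat q^{-2}R^T$. Then $BK_\alpha^{-1}J=\alpha\widehat q^{-2}R^TJ=0$, which proves Eq.~\eqref{eq:three-source-gauge}.

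Second, for the continuum limit I use that $K_E$ is a fixed polynomial in $q$. On a finite set of nonzero modes, $\widehat q_\mu=q_\mu(1+O(a^2q^2))$ gives Eq.~\eqref{eq:three-full-band}. The conserved-source Gaussian Schwinger functions are built from $J^TK_\alpha^{-1}J$, which by the previous step equals its $\alpha$-independent value. These functions therefore converge at rate $O(a^2)$, with the gauge-invariant inverse controlled on the physical subspace.

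\textbf{Main obstacle.} I expect the hardest step to be the rank and kernel statement $\ker K_E=\operatorname{ran}R$. Verifying it cleanly needs the lattice spin projectors built from $\widehat q$, and one must check that the spin-0 and spin-1 blocks contribute no extra null directions. I would first try writing $K_E$ in the Barnes--Rivers projector basis with $\widehat q$ substituted and reading off its eigenvalues.
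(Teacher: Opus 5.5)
Your proposal is correct and follows essentially the paper's route. You restrict to the four-structure ansatz, solve $KR=0$ to get $\zeta(1,-2,2,-1)$, and fix $\zeta=Z_{\rm TT}/2$ by TT matching; you substitute one common $\widehat q_\mu$ so that the Ward, de Donder, Faddeev--Popov and linear BRST identities hold exactly; and you then deduce $\alpha$-independence for $R^{T}J=0$ and the $O(a^2)$ fixed-band limit. Your identity $BK_\alpha^{-1}=\alpha\,\widehat q^{\,-2}R^{T}$ and your spin-projector count for $\operatorname{rank}K_E=6$ make explicit what the paper only asserts and checks numerically. Two points need tightening in the write-up. First, obtain $\operatorname{rank}K_\alpha=10$ from $n^{T}K_\alpha v=\alpha^{-1}(Bn)\cdot(Bv)$ for $n\in\operatorname{ran}R$, not block by block, since $K_E$ is indefinite in the spin-0 sector. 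Second, carry the $a$-dependence of the lattice-conserved subspace $\ker R(\widehat q)^{T}$ through the resolvent identity when you take the limit.
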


This proves A11Q.  Linear gauge symmetry and a conformal-contour or
Lorentzian \(i0\) prescription are model inputs.  The theorem contains no
cubic vertex, nonlinear constraint algebra, Slavnov--Taylor hierarchy, or
BV master equation.

\subsection{Chain III: a regulator-independent induced Newton coefficient}

Let determinant sector \(i\) have fixed mass \(m_i>0\), signed weight \(w_i\),
and a common curvature normalization \(\kappa_g\).  For the declared local
regulator class, its dimension-two response has
\begin{align}
 I_r(m,\Lambda)
={}&A_r\Lambda^2
 -m^2\ln\frac{\Lambda^2}{\mu^2}
 +B_rm^2\nonumber\\
 &+m^2\ln\frac{m^2}{\mu^2}
 +\rho_r(m,\Lambda),\qquad \rho_r\to0 ,
\label{eq:three-regulator-class}
\end{align}
where \(A_r>0\) and \(A_r,B_r\) may be scheme dependent.  The induced
normalization is
\begin{equation}
 Z_r(\Lambda)=\kappa_g\sum_iw_iI_r(m_i,\Lambda).
\label{eq:three-running-z}
\end{equation}
Writing
\begin{equation}
 S_0=\sum_iw_i,\qquad
 S_1=\sum_iw_im_i^2,\qquad
 F=\sum_iw_im_i^2\ln\frac{m_i^2}{\mu^2},
\label{eq:three-spectral-moments}
\end{equation}
gives
\begin{equation}
 \frac{Z_r}{\kappa_g}
 =A_rS_0\Lambda^2
 -S_1\ln\frac{\Lambda^2}{\mu^2}
 +B_rS_1+F+o(1).
\label{eq:three-running-decomposition}
\end{equation}

If all \(w_i\geq0\) and the spectrum is nonzero, then \(S_0>0\), so a
cutoff-free coefficient is impossible without a bare counterterm or an
additional signed sector.  If \(S_0=0\) but \(S_1\neq0\), a logarithmic
divergence remains.  The two conditions
\begin{equation}
 S_0=0,\qquad S_1=0
\label{eq:three-balance}
\end{equation}
remove the power, logarithmic, and scheme-dependent finite local terms and
leave
\begin{equation}
 Z_*=\kappa_gF
 =\kappa_g\sum_iw_im_i^2\ln\frac{m_i^2}{\mu^2}.
\label{eq:three-zstar}
\end{equation}
The reference-scale derivative is
\(\partial F/\partial\ln\mu^2=-S_1=0\).

The explicit example
\begin{equation}
 w=(1,-2,1),\qquad
 m^2=M_0^2(1,2,3)
\label{eq:three-balanced-example}
\end{equation}
has
\begin{equation}
 \frac{Z_*}{\kappa_gM_0^2}
 =3\ln3-4\ln2
 =0.523248143764548\ldots>0 .
\label{eq:three-balanced-value}
\end{equation}
A sharp momentum cutoff, sharp proper-time cutoff, and smooth proper-time
cutoff agree to a spread \(1.03\times10^{-5}\) at
\(\Lambda/M_0=1200\).  On the unit lattice, with
\begin{equation}
 C_{r,L}(n)=
 \frac1{Z_r(LM_0)[2L\sin(\pi n/L)]^2},
\label{eq:three-balanced-covariance}
\end{equation}
every fixed nonzero mode tends to
\([Z_*(2\pi n)^2]^{-1}\), with
\(O(L^{-2}\ln L)\) error for the three schemes.

\begin{theorem}[Two-moment Newton universality]
\label{thm:three-newton-universality}
Within the regulator class \eqref{eq:three-regulator-class}, the two
conditions \eqref{eq:three-balance} are sufficient for a finite,
scheme-independent induced Newton coefficient and a common fixed physical
Gaussian band.  They are necessary when the class contains independent
power and finite-local scheme coefficients and the spectrum is not tuned as
a function of the cutoff.
\end{theorem}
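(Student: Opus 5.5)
Sufficiency comes first and is essentially algebraic. I substitute the regulator-class expansion \eqref{eq:three-regulator-class} into \eqref{eq:three-running-z} and collect terms by spectral moment. This reproduces \eqref{eq:three-running-decomposition} with remainder $\kappa_g\sum_i w_i\rho_r(m_i,\Lambda)$. That remainder vanishes as $\Lambda\to\infty$ because the spectrum is finite and fixed independently of $\Lambda$, so each $\rho_r(m_i,\Lambda)\to0$.

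Imposing \eqref{eq:three-balance} removes the $A_r\Lambda^2$, $\ln\Lambda^2$ and $B_r$ contributions, and $Z_r(\Lambda)\to Z_*=\kappa_gF$ for every $r$ in the class. The limit is $\mu$-independent because $\partial F/\partial\ln\mu^2=-S_1=0$. If $Z_*\neq0$, the induced coefficient $G\propto Z_*^{-1}$ is finite and scheme independent. For the common Gaussian band I would take the fixed-band covariance \eqref{eq:three-balanced-covariance}, or its analogue on any finite set of nonzero physical modes. On such a set, $Z_r(LM_0)\to Z_*$ and the lattice symbol $[2L\sin(\pi n/L)]^2\to(2\pi n)^2$ with relative error $O(L^{-2})$. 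Hence all schemes share the limit $[Z_*(2\pi n)^2]^{-1}$, and Gaussianity makes convergence of covariances sufficient. If $\rho_r=O(\Lambda^{-2}\ln\Lambda)$ is available the rate is $O(L^{-2}\ln L)$, but the theorem only needs convergence.

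Necessity is where the care lies, and I expect it to be the main obstacle, since the statement's hypotheses must be made precise. I read "contains independent power and finite-local scheme coefficients" as follows: for fixed $r$ there are members $r'$ of the class with $A_{r'}\neq A_r$ and $B_{r'}\neq B_r$, varied independently. I read "not tuned" as $w_i,m_i$ independent of $\Lambda$. The argument then runs in three steps.

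\begin{enumerate}
\item Suppose $S_0\neq0$. Then $Z_r/\kappa_g\sim A_rS_0\Lambda^2$ diverges, and $A_r>0$ prevents any cancellation, so no finite limit exists. This is also the positive-weight obstruction: if all $w_i\ge0$ and the spectrum is nonempty, then $S_0>0$.
\item Given $S_0=0$, suppose $S_1\neq0$. The term $-S_1\ln(\Lambda^2/\mu^2)$ diverges, and no other term in \eqref{eq:three-running-decomposition} can cancel it because the remaining terms are bounded.
\item Even if one instead allowed a finite subtraction, scheme independence fails when $S_1\neq0$. Two schemes differing only in $B$ have limits differing by $\kappa_g(B_r-B_{r'})S_1\neq0$. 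Likewise a scheme shift of $A$ exposes any residual $S_0\neq0$.
\end{enumerate}

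The obstacle is the hidden loophole of letting $w_i$ or $m_i$ depend on $\Lambda$ to cancel divergences. This is excluded exactly by the non-tuning clause, so I would state it explicitly as a hypothesis rather than try to derive it.

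Finally, I would check nondegeneracy of the endpoint on the example \eqref{eq:three-balanced-example}. There $S_0=1-2+1=0$ and $S_1=M_0^2(1-4+3)=0$, and $F/M_0^2=3\ln3-4\ln2>0$ after the $\mu$-terms cancel. This shows that the sufficient conditions yield a positive coefficient, so the resulting Gaussian band is nondegenerate.
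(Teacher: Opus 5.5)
Your proposal is correct and follows essentially the same route as the paper. You substitute the regulator class into the signed sum, organize it by $S_0$, $S_1$, and $F$, identify the $A_rS_0\Lambda^2$, $S_1\ln(\Lambda^2/\mu^2)$, and $B_rS_1$ terms as the obstructions, and obtain the common limit $Z_*=\kappa_gF$, with $\partial F/\partial\ln\mu^2=-S_1=0$ and fixed-mode covariances tending to $[Z_*(2\pi n)^2]^{-1}$. Your caveat that $Z_*\neq0$ (indeed $Z_*>0$ for a genuine Gaussian band) is a hypothesis the paper leaves implicit here and states only in its later composition theorem, so flagging it is appropriate rather than a gap.
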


This proves A11RG.  Taken alone, the negative weight in
\eqref{eq:three-balanced-example} is only a formal determinant supertrace
or constrained determinant ratio, not an ordinary negative-multiplicity
particle.  Section~\ref{sec:positive-hilbert-balance} supplies a separate
free positive-Hilbert realization of exactly this spectrum, proves that it
persists before averaging over any supplied finite positive graph
ensemble, and thereby closes the model-level node A11BM.  It also proves
that exact all-level pairing would erase the desired finite invariant.  It
is followed by the selective-Ward closure in
Sec.~\ref{sec:selective-ward-closure}, which protects the moments for
arbitrary finite positive interacting physical operators and supplies a
positive finite-geometry channel on a declared language.  The
projective-growth extension in
Sec.~\ref{sec:projective-response-growth} then constructs exact compatible
cylinder laws and an infinite positive measure on a supplied naturally
labelled causal-birth language.  The covariant-growth extension in
Sec.~\ref{sec:covariant-response-growth} then maps contact-star responses to
a nonvanishing classical sequential-growth law with exact discrete general
covariance and Bell causality.  Its one-arity \(r\ge2\) complex branch has
bounded variation and therefore a countably additive history measure,
strongly positive rank-one decoherence functional, and nonzero
finite-volume interference.  The fixed-strip construction in
Sec.~\ref{sec:nonabelian-history-measure} separately supplies a
fresh-ancilla, noncommuting, rank-four, countably extendible history
measure.  The arbitrary-causet holonomy construction in
Sec.~\ref{sec:covariant-nonabelian-holonomy} then combines the classical CSG
density with intrinsic fresh-event noncommuting gates and proves a
relabeling-covariant countably additive operator measure with a
non-Abelian integrated range.  Next,
Sec.~\ref{sec:central-operator-growth} solves the central operator-valued
CSG branch and classifies it as a direct integral of scalar laws: its
infinite geometry POVM is exactly covariant but cannot detect
coupling-sector coherence, while a positive noncentral normalized control
fails DGC.  Section~\ref{sec:bell-causal-rigidity} then proves exact
rigidity of three natural neighborhoods of the remaining nonsingular CPOBC
branch without constructing or globally excluding that branch.  These
results do not prove the stronger A11B
requirement of deriving the protected commutant, maximal-birth/port language,
contact motifs, response multiplet, coherent arity and phase, an
operator-Markov or square-operator Bell law, or quantum geometry
backreaction; nor do they produce a tight
Lorentzian continuum, establish spin--statistics and anomaly freedom, or
couple the result to nonlinear BV constraints and graph backreaction.  The
cosmological and higher-curvature heat-kernel moments are not cancelled by
Eq.~\eqref{eq:three-balance}.

\subsection{The composed quadratic limit}

The three theorems compose under the following compatibility conditions.
Let the \(P_L\times T_L^3\) record support of
Theorem~\ref{thm:three-relational-order-volume} carry the determinant
sectors of Eq.~\eqref{eq:three-running-z}.  Assume that, away from the two
open clock boundaries, their local heat-kernel response belongs uniformly
to Eq.~\eqref{eq:three-regulator-class}.  Use the same local incidence
derivative in the scalar determinants, the symmetric-tensor kernel, and the
de Donder operator.  Boundary-insensitive fixed-band observables are
selected before \(L\to\infty\).  These are interface hypotheses, not
additional conclusions.

Define a joint discrepancy
\begin{align}
 {\cal D}_{M,L,r}={}&
 \varepsilon_{\rm hist}+\varepsilon_{\rm ord}
 +\varepsilon_{\rm block}
 +\varepsilon_{\rm vol}(L)+\varepsilon_{\rm link}(L)\nonumber\\
 &+\varepsilon_{\rm Ward}+\varepsilon_{\rm BRST}
 +\varepsilon_{\rm src}+\varepsilon_{Z,r}(L)
 +\varepsilon_{\rm band}(L),
\label{eq:three-joint-discrepancy}
\end{align}
where:
\begin{itemize}
\item \(\varepsilon_{\rm hist}\) is the zero-energy plus
conditional-state residual;
\item \(\varepsilon_{\rm ord}\) is the integer sum of rank, cycle, and causal
activation defects;
\item \(\varepsilon_{\rm block}\) and \(\varepsilon_{\rm vol}\) are the
capacity-density blocking and weak-measure errors;
\item \(\varepsilon_{\rm link}\) measures disagreement between the record support
and the links used in the determinant and tensor derivatives;
\item \(\varepsilon_{\rm Ward}\), \(\varepsilon_{\rm BRST}\), and
\(\varepsilon_{\rm src}\) measure \(K_ER\), \(s^2\), and
conserved-source gauge spread;
\item \(\varepsilon_{Z,r}=|Z_r-Z_*|/Z_*\), and
\(\varepsilon_{\rm band}\) is the lattice-dispersion/resolvent error.
\end{itemize}
Every term is a finite observable or a remainder in a stated norm.
In the explicit archives, all algebraic terms vanish to roundoff,
\(\varepsilon_{\rm vol}=O(L^{-2})\), and
\(\varepsilon_{Z,r}+\varepsilon_{\rm band}=O(L^{-2}\ln L)\).

\begin{theorem}[Relational, constraint-complete, regulator-independent
quadratic composition]
\label{thm:three-chain-composition}
Suppose the interface contract above holds, the two spectral moments vanish,
and \(Z_*>0\).  Condition on a finite internal clock history, take the common
record refinement \(L\to\infty\) on a fixed boundary-insensitive physical
band, and remove the regulator along \(\Lambda=LM_0\).  Then:
\begin{enumerate}
\item the accepted records converge weakly to their additive
order--volume measure while retaining an exact finite causal rank;
\item the ten-component Gaussian metric complex obeys the finite Ward and
BRST identities and has four gauge directions before fixing;
\item for every finite collection of conserved sources, the connected
two-point generating functional converges to the same gauge- and
regulator-independent Fierz--Pauli response with normalization \(Z_*\);
\item the total error is bounded by a constant times
\({\cal D}_{M,L,r}\) on the chosen band.
\end{enumerate}
\end{theorem}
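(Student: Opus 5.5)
The plan is to prove Theorem~\ref{thm:three-chain-composition} as a composition of the three chain theorems already established. The interface contract supplies the identifications. The only new work is a triangle-inequality bookkeeping that charges each discrepancy to one term of Eq.~\eqref{eq:three-joint-discrepancy}.

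\emph{Step 1 (conclusion 1).} Condition on the zero-energy history state \eqref{eq:three-history}. By the exact diagonalization of \eqref{eq:three-hprop} as one half the path Laplacian, each conditional branch is \(|\psi_r\rangle\) up to \(\varepsilon_{\rm hist}\). Theorem~\ref{thm:three-relational-order-volume} then gives, on that branch:
\begin{itemize}
\item a schedule-independent strict partial order with exact causal rank \eqref{eq:three-causal-rank}, up to the integer defect \(\varepsilon_{\rm ord}\);
\item the additive measure \eqref{eq:three-record-volume}, with blocking error \(\varepsilon_{\rm block}\);
\item weak convergence of the measure with error \(\varepsilon_{\rm vol}(L)=O(L^{-2})\) by \eqref{eq:three-weak-volume}.
\end{itemize}

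\emph{Step 2 (conclusion 2).} The contract requires that one incidence derivative \(\widehat q_\mu\) be used in the scalar determinants, in \(K_E\), and in \(B\), up to \(\varepsilon_{\rm link}\). Under that requirement I would invoke Theorem~\ref{thm:three-constraint-complete} directly. It gives the finite identities \eqref{eq:three-finite-ward}, namely \(K_ER=0\), \(\operatorname{rank}R=4\) (the four gauge directions), \(BR=\widehat q^{\,2}\mathbb 1_4\), and nilpotence of \eqref{eq:three-brst}. The residuals of these identities are \(\varepsilon_{\rm Ward}\) and \(\varepsilon_{\rm BRST}\).

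\emph{Step 3 (conclusion 3).} This is the normalization step. The Fierz--Pauli coefficient is fixed by TT matching, \(\zeta=Z_{\rm TT}/2\). The contract places the induced TT coefficient on the record support in the regulator class \eqref{eq:three-regulator-class}, so \(Z_{\rm TT}=Z_r(LM_0)\) as in \eqref{eq:three-running-z}. Because \(S_0=S_1=0\), Theorem~\ref{thm:three-newton-universality} and the decomposition \eqref{eq:three-running-decomposition} give \(Z_r(LM_0)\to Z_*>0\) with relative error \(\varepsilon_{Z,r}\). This holds along \(\Lambda=LM_0\) for every scheme in the class.

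For a conserved source \(R^TJ=0\), the response \(J^TK_\alpha^{-1}J\) is \(\alpha\)-independent by \eqref{eq:three-source-gauge}, up to \(\varepsilon_{\rm src}\). It scales as \(Z_r^{-1}\) times a fixed kernel. Two perturbations then have to be controlled:
\begin{itemize}
\item replacing \(Z_r\) by \(Z_*\) costs \(O(\varepsilon_{Z,r})\), because \(Z_*>0\) bounds the inverse;
\item replacing \(K_E(\widehat q)\) by \(K_E(q)\) costs \(O(\varepsilon_{\rm band})\) by \eqref{eq:three-full-band}, on the finite band of nonzero modes, where the restricted inverse is uniformly bounded.
\end{itemize}
Together these give convergence to the gauge- and regulator-independent Fierz--Pauli response with normalization \(Z_*\). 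Since the Gaussian generating functional is quadratic in \(J\), finite collections of sources follow by polarization.

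\emph{Step 4 (conclusion 4).} Summing the step-wise bounds with the triangle inequality gives a total error bounded by a constant times \({\cal D}_{M,L,r}\). The constant depends on the band, on \(Z_*^{-1}\), and on the smallest nonzero eigenvalue of the gauge-fixed kernel on the band.

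The main obstacle is Step 3. One must verify that the inverse of the gauge-fixed kernel restricted to conserved sources is uniformly bounded on the band as \(L\to\infty\), so that the \(O(L^{-2}\ln L)\) errors in \(Z\) and in the dispersion propagate linearly. One must also check that taking the band before \(L\to\infty\) keeps the boundary-sensitive clock endpoints out of the estimate. I would handle this first by diagonalizing \(K_\alpha\) in the spin-projector basis at fixed \(q\neq0\). That reduces the response to scalar factors \((Z\widehat q^{\,2})^{-1}\) on the transverse-traceless and trace channels, whose perturbation is elementary.
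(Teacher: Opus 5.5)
Your proposal is correct and follows essentially the paper's own proof. Conclusion 1 comes from Theorem~\ref{thm:three-relational-order-volume}; the total TT coefficient is reduced by determinant additivity over the sectors carried on the record support (a step the paper states explicitly and you leave to the contract) to Eq.~\eqref{eq:three-running-z} and replaced by \(Z_*\) through Theorem~\ref{thm:three-newton-universality}; the unique Fierz--Pauli extension \eqref{eq:three-fp-unique}, the exact Ward/BRST identities, and Eq.~\eqref{eq:three-source-gauge} remove gauge dependence; and a resolvent bound on the covariance change is summed into \({\cal D}_{M,L,r}\), where your spin-projector diagonalization is a concrete way of carrying out the paper's one-line resolvent-identity step.
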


\begin{proof}
The first conclusion is
Theorem~\ref{thm:three-relational-order-volume}.  Determinant additivity
reduces the total TT coefficient to Eq.~\eqref{eq:three-running-z}, and
Theorem~\ref{thm:three-newton-universality} replaces it by \(Z_*\) uniformly
on the fixed band.  The unique coefficient extension
\eqref{eq:three-fp-unique} then fixes the complete quadratic kernel without
introducing another coupling.  Exact finite Ward/BRST identities and
Eq.~\eqref{eq:three-source-gauge} remove gauge-parameter dependence.
The resolvent identity bounds the covariance change by the coefficient,
lattice-momentum, and source-projection errors.  Adding the independent
record/interface errors gives Eq.~\eqref{eq:three-joint-discrepancy}.
\end{proof}

\subsection{Boundary after the three chains}

Theorem~\ref{thm:three-chain-composition} is the strongest quantum-gravity
statement presently justified by these three additions.  It closes a
finite-clock, common-record, gauge-complete, regulator-independent
\emph{linear} quantum response under a supplied four-dimensional program,
linear gauge law, and balanced signed determinant spectrum.  It does not
close the following arrows:
\begin{enumerate}
\item dynamical selection of the clock/program state, adjacency language,
dimension, signature, or absolute record-volume unit;
\item dynamical derivation of the protected internal commutant, the
maximal-birth graph language/topology/move law, contact motifs, response
multiplet, coherent arity and phase, a square-operator
Markov/Bell-complete quantum-backreacting geometry law, and a tight
Lorentzian anomaly-free continuum realization of
the finite selective-Ward/projective/covariant-growth models in
Secs.~\ref{sec:selective-ward-closure}
--\ref{sec:bell-causal-rigidity};
\item nonlinear gauge symmetry, the BV master equation, graviton
self-interactions, or a nonperturbative metric probability measure;
\item graph backreaction, interacting matter, and a joint continuum limit
away from the declared fixed physical band;
\item simultaneous renormalization of vacuum and higher-curvature
coefficients;
\item the matter, central-volume, and cosmological branches.
\end{enumerate}
Accordingly A19M, A11Q, A11RG, A11BM, the finite selective-Ward node
A11BW, the labelled projective-history node A11PG, and the covariant
rank-one history node A11CG, together with the fixed-strip non-Abelian node
A11NH, arbitrary-causet holonomy node A11CH, and central operator-growth
classification A11OC, together with the CPOBC boundary-rigidity node A11BR,
are marked proved at model or regulator-class level.
General A7,
A9V, A11, A11B, and A19 remain open.  This distinction
is the reason the
three chains
strengthen the program without turning the conditional synthesis theorem
into a claim of completed quantum gravity.

\section{A positive-Hilbert origin of the balanced determinant spectrum}
\label{sec:positive-hilbert-balance}

\textbf{Status: \StatusModel\ for a balanced positive-Hilbert multiplet and
its exact or approximate selective Ward protection over an arbitrary
positive interacting operator on every member of a supplied finite geometry
language, including a positive finite geometry channel; \StatusOpen\ for the
microscopic origin of the protected commutant, a dynamically selected
geometry language and projective refinement measure, continuum Lorentz
spin--statistics and anomaly freedom, nonlinear BV closure, and a
background-independent continuum quantum measure.}

The regulator theorem in
Theorem~\ref{thm:three-newton-universality} identifies exactly which
spectral moments must vanish, but it does not explain why a physical
microscopic system should have the signed weights needed for that
cancellation.  Entering \((1,-2,1)\) as a list of ``multiplicities'' is not
an answer: a negative multiplicity is not a positive state space.  The
companion construction \cite{XuPositiveHilbertBalance2026} closes this
narrower free-model problem by separating two notions that must not be
conflated:
\[
 \text{sign of a closed Gaussian loop}
 \quad\ne\quad
 \text{sign of a state norm}.
\]

\subsection{Exterior parity and an additive mass ladder}

Let \(G\) be any finite weighted graph and let \(K_G\geq0\) be one real
symmetric local kernel on it.  Introduce the positive internal Hilbert space
\begin{equation}
 \Hh_{\rm int}=\Lambda^\bullet\mathbb C^k
 =\bigoplus_{n=0}^{k}\Lambda^n\mathbb C^k,
 \qquad
 N=\sum_{a=1}^{k}c_a^\dagger c_a ,
 \label{eq:ph-exterior}
\end{equation}
with its standard exterior-product inner product.  The degree-\(n\)
subspace has dimension \(\binom{k}{n}\) and parity
\((-1)^F=(-1)^N\).  Give every component the same graph operator and the
additive mass-squared operator
\begin{equation}
 M^2=x\mathbf 1+\Delta N,\qquad x,\Delta>0,\qquad
 h_n=K_G+(x+n\Delta)\mathbf 1 .
 \label{eq:ph-mass}
\end{equation}
Even components are represented by commuting Gaussian variables and odd
components by Grassmann variables.  Integrating out the complete multiplet
therefore gives
\begin{equation}
 \Gamma_k[K_G]
 =\sum_{n=0}^{k}(-1)^n\binom{k}{n}
   \Tr\log\!\left[K_G+(x+n\Delta)\mathbf 1\right].
 \label{eq:ph-gamma}
\end{equation}
Both the multiplicity and the loop sign in this expression are fixed by
\(\Lambda^\bullet\mathbb C^k\); they are not independently tuned
coefficients.

\begin{theorem}[Exterior-algebra spectral balance]
\label{thm:ph-balance}
For Eqs.~\eqref{eq:ph-exterior} and \eqref{eq:ph-mass},
\begin{equation}
 \Str_{\Hh_{\rm int}}e^{-sM^2}
 =e^{-sx}(1-e^{-s\Delta})^k .
 \label{eq:ph-heat}
\end{equation}
Consequently,
\begin{equation}
 \Str(M^2)^j=0\quad(0\leq j<k),\qquad
 \Str(M^2)^k=(-1)^k k!\Delta^k .
 \label{eq:ph-moments}
\end{equation}
The identities hold at every graph size and for every common kernel
\(K_G\).
\end{theorem}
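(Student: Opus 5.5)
The plan is to compute the supertrace directly on the graded decomposition \(\Hh_{\rm int}=\bigoplus_n\Lambda^n\mathbb C^k\). The argument has three steps.

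First, I will diagonalize the generating function. Since \(M^2=x\mathbf 1+\Delta N\) and \(N\) acts as the scalar \(n\) on \(\Lambda^n\mathbb C^k\), the operator \(e^{-sM^2}\) acts there as \(e^{-s(x+n\Delta)}\). Because \((-1)^F=(-1)^N\) and \(\dim\Lambda^n\mathbb C^k=\binom kn\), we get
\[
 \Str e^{-sM^2}=\sum_{n=0}^k(-1)^n\binom kn e^{-sx}e^{-sn\Delta}
 =e^{-sx}(1-e^{-s\Delta})^k
\]
by the binomial theorem. This proves Eq.~\eqref{eq:ph-heat}. An equivalent route uses the tensor factorization \(\Lambda^\bullet\mathbb C^k\simeq(\Lambda^\bullet\mathbb C)^{\otimes k}\). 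Under it the supertrace is multiplicative over the \(k\) graded factors, and each factor contributes \(1-e^{-s\Delta}\); the multiplicative supertrace property makes this step transparent.

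Second, I will extract the moments. The map \(s\mapsto\Str e^{-sM^2}\) is entire, and its Taylor coefficients are \(\Str(M^2)^j(-s)^j/j!\), since the space is finite-dimensional and the supertrace is linear. The factor \((1-e^{-s\Delta})^k=(s\Delta)^k(1+O(s))\) vanishes to order exactly \(k\) at \(s=0\), while \(e^{-sx}=1+O(s)\). Hence every coefficient of \(s^j\) with \(j<k\) vanishes, and the coefficient of \(s^k\) equals \(\Delta^k\). Matching this with \((-1)^k\Str(M^2)^k/k!\) gives \(\Str(M^2)^k=(-1)^kk!\Delta^k\), which is Eq.~\eqref{eq:ph-moments}.

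A purely algebraic check is also available. For any polynomial \(p\) of degree \(<k\), the alternating sum \(\sum_n(-1)^n\binom kn p(n)\) is the \(k\)-th finite difference of \(p\), which vanishes. Applying this to \(p(n)=(x+n\Delta)^j\) gives the vanishing moments directly. For \(j=k\), the leading coefficient \(\Delta^k\) gives \((-1)^kk!\Delta^k\), since \(\sum_n(-1)^n\binom kn n^k=(-1)^kk!\).

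Third, I will establish independence from the graph and the kernel. The internal operator \(M^2\) acts on \(\Hh_{\rm int}\) alone, so neither \(G\) nor \(K_G\) enters the supertrace. The statement about graph size and kernel therefore means that these internal moments multiply every term of the expansion of \(\Gamma_k[K_G]\) in Eq.~\eqref{eq:ph-gamma}. To make that explicit, I will note that \(h_n=K_G\otimes\mathbf 1+M^2|_{\Lambda^n}\) with commuting summands. Consequently, \(\Gamma_k=\Str_{\rm int}\Tr_G\log(K_G+M^2)\). A heat-kernel representation then factorizes as \(\Tr_Ge^{-sK_G}\cdot\Str e^{-sM^2}\), and this holds for every finite \(G\) and every \(K_G\geq0\).

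The only step needing care is the sign convention. The loop sign \((-1)^n\) attached to Grassmann components has to coincide with the grading \((-1)^N\) that defines \(\Str\). I will fix this at the start by defining \(\Str=\Tr((-1)^N\,\cdot\,)\). Everything else is the binomial identity.
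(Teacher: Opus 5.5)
Your proposal is correct and follows essentially the same route as the paper: resolve the supertrace by exterior degree, apply the binomial theorem, and read off the moments from the order-$k$ zero of $(1-e^{-s\Delta})^k$ at $s=0$. Your finite-difference identity $\sum_n(-1)^n\binom kn n^j=0$ for $j<k$ and $=(-1)^k k!$ for $j=k$, the graded tensor factorization, and the heat-kernel remark on $K_G$-independence are all valid supplements.
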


\begin{proof}
Resolving the supertrace by exterior degree gives
\[
 \Str e^{-sM^2}
 =e^{-sx}\sum_{n=0}^{k}(-1)^n\binom{k}{n}e^{-sn\Delta}.
\]
The binomial theorem yields Eq.~\eqref{eq:ph-heat}.  Since
\(1-e^{-s\Delta}=s\Delta+O(s^2)\), its zero at \(s=0\) has order exactly
\(k\).  Taking the first \(k\) derivatives at the origin proves
Eq.~\eqref{eq:ph-moments}.
\end{proof}

For the induced Newton term the minimal choice is \(k=2\):
\begin{equation}
 (w_0,w_1,w_2)=(1,-2,1),\qquad
 m_n^2=x+n\Delta,\qquad
 \Str\mathbf 1=\Str M^2=0 .
 \label{eq:ph-k2}
\end{equation}
The logarithmic invariant that survives the two cancellations is
\begin{align}
 F_2(x,\Delta)
 &:=\sum_{n=0}^{2}w_n m_n^2
          \log\frac{m_n^2}{\mu^2}\nonumber\\
 &=\int_0^\Delta\!\diff u\int_0^\Delta\!\diff v\,
     \frac{1}{x+u+v}>0 .
 \label{eq:ph-positive-invariant}
\end{align}
The integral is the second finite-difference identity for
\(f(y)=y\log(y/\mu^2)\).  It proves both positivity and independence of
\(\mu\), without subtracting large cutoff terms.  At \(x=\Delta=M^2\),
\begin{equation}
 F_2=M^2(3\log3-4\log2)
 =0.523248143764548\ldots\,M^2 .
 \label{eq:ph-value}
\end{equation}

\subsection{Where positivity is proved}

The determinant supertrace is not used as the physical inner product.
Associate a bosonic Fock factor \(\F_+(h_n)\) with each even component and a
fermionic Fock factor \(\F_-(h_n)\) with each odd component:
\begin{equation}
 \F_{\rm phys}
 =\bigotimes_{n\ {\rm even}}
   \F_+(h_n)^{\otimes\binom{k}{n}}
 \otimes
 \bigotimes_{n\ {\rm odd}}
   \F_-(h_n)^{\otimes\binom{k}{n}} .
 \label{eq:ph-fock}
\end{equation}
Every factor has its standard positive norm.  On the finite graph,
\begin{equation}
 \widehat H=\sum_{n,\alpha}
 \diff\Gamma_\pm(h_n)_{n,\alpha}\geq0,\qquad
 T_a=e^{-a\widehat H},\quad 0<T_a\leq\mathbf1 .
 \label{eq:ph-transfer}
\end{equation}

\begin{proposition}[Finite transfer-form positivity]
\label{prop:ph-positive}
For every positive density operator \(\rho\) and finite operator family
\(\{A_\alpha\}\), the matrix
\begin{equation}
 (G_T)_{\alpha\beta}
 =\Tr\!\left[
 (T_a^{1/2}A_\alpha\rho^{1/2})^\dagger
 (T_a^{1/2}A_\beta\rho^{1/2})\right]
 \label{eq:ph-gram}
\end{equation}
is positive semidefinite.  This statement is independent of the loop
weights \(w_n\).
\end{proposition}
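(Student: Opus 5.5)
The plan is to exhibit the matrix \(G_T\) as a Gram matrix in the Hilbert--Schmidt inner product, which is positive by construction. Set \(V_\alpha:=T_a^{1/2}A_\alpha\rho^{1/2}\). This is well defined because \(T_a=e^{-a\widehat H}\) is a positive bounded operator with \(0<T_a\le\mathbf1\) by Eq.~\eqref{eq:ph-transfer}. Its positive square root \(T_a^{1/2}=e^{-a\widehat H/2}\) exists by functional calculus, and \(\rho^{1/2}\) exists because \(\rho\ge0\). On the finite graph every Fock factor in Eq.~\eqref{eq:ph-fock} carries its standard positive norm, so \(\F_{\rm phys}\) is a genuine Hilbert space. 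The pairing \(\langle X,Y\rangle_{\rm HS}=\Tr(X^\dagger Y)\) is then a positive semidefinite sesquilinear form on Hilbert--Schmidt operators, and \((G_T)_{\alpha\beta}=\langle V_\alpha,V_\beta\rangle_{\rm HS}\).

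Positivity then follows from a single computation. For any finitely supported complex vector \(c\), sesquilinearity gives
\[
\sum_{\alpha,\beta}\bar c_\alpha c_\beta (G_T)_{\alpha\beta}=\Tr\!\bigl(V_c^\dagger V_c\bigr)=\|V_c\|_{\rm HS}^2\ge0,\qquad V_c=\sum_\beta c_\beta V_\beta .
\]
Hermiticity \((G_T)_{\beta\alpha}=\overline{(G_T)_{\alpha\beta}}\) comes from \(\Tr(Y^\dagger X)=\overline{\Tr(X^\dagger Y)}\). The statement that this is independent of the loop weights \(w_n\) needs no separate argument. The weights \((-1)^n\binom{k}{n}\) enter only the determinant \(\Gamma_k\) in Eq.~\eqref{eq:ph-gamma} and the supertrace in Theorem~\ref{thm:ph-balance}. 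Neither \(T_a\), nor the inner product on \(\F_{\rm phys}\), nor the trace defining \(G_T\) uses those signs. Odd-degree components contribute fermionic Fock factors, which are themselves positive spaces. So the proof should note explicitly that the supertrace never appears in \(G_T\).

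The main obstacle is making sure the traces are finite and the manipulations are legitimate, since the bosonic Fock factors \(\F_+(h_n)\) are infinite dimensional even on a finite graph. I would handle this by first taking \(A_\alpha\) bounded. Then \(\rho^{1/2}\) is Hilbert--Schmidt, because \(\Tr\rho=1\), and the product of bounded operators with a Hilbert--Schmidt operator is Hilbert--Schmidt. Each \(V_\alpha\) is therefore Hilbert--Schmidt and each entry of \(G_T\) is finite.

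For unbounded \(A_\alpha\), such as polynomials in field operators, the approach is to restrict to families for which \(T_a^{1/2}A_\alpha\rho^{1/2}\) is Hilbert--Schmidt. The gap \(h_n\ge x\mathbf1>0\) gives \(\widehat H\) a discrete spectrum with \(T_a\) trace class, which makes this restriction reasonable. Alternatively, one can approximate by spectral truncations of \(\widehat H\) and use the fact that positive semidefinite matrices form a closed set.
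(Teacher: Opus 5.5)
Your proposal is correct and follows the paper's own proof. The paper likewise writes $z^\dagger G_T z=\|T_a^{1/2}(\sum_\alpha z_\alpha A_\alpha)\rho^{1/2}\|_2^2\ge 0$, invokes the spectral theorem only to obtain $\widehat H\ge0$ and $0<T_a\le\mathbf 1$, and treats independence from the $w_n$ as immediate because no supertrace enters $G_T$. Your Hilbert--Schmidt bookkeeping (bounded $A_\alpha$, $\|\rho^{1/2}\|_2^2=\Tr\rho=1$) makes explicit a finiteness point the paper leaves implicit. The closing truncation of $\widehat H$ is harmless but unnecessary: once the entries of $G_T$ are finite, each $T_a^{1/2}A_\alpha\rho^{1/2}$ is already Hilbert--Schmidt and the Gram argument applies directly.
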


\begin{proof}
For every vector \(z\),
\[
 z^\dagger G_Tz
 =\left\|T_a^{1/2}
 \left(\sum_\alpha z_\alpha A_\alpha\right)\rho^{1/2}\right\|_2^2
 \geq0.
\]
The spectral theorem gives \(\widehat H\geq0\) and \(0<T_a\leq\mathbf1\).
\end{proof}

Thus the minus signs in Eq.~\eqref{eq:ph-gamma} are ordinary closed
fermion-loop signs.  Proposition~\ref{prop:ph-positive} is a finite-graph
transfer statement, not an interacting Osterwalder--Schrader
reconstruction \cite{OsterwalderSchrader1973}.

\subsection{Regulator and same-link interfaces}

In the regulator class used in
Theorem~\ref{thm:three-newton-universality}, every scheme-dependent term is
proportional either to \(\Str\mathbf1\) or to \(\Str M^2\).  Hence the
two-mode multiplet gives the common limit
\begin{equation}
 Z_*=\kappa_g F_2(x,\Delta)>0 .
 \label{eq:ph-zstar}
\end{equation}
For \(x=\Delta=1\) and \(\Lambda=1200\), the archived sharp-momentum,
sharp-proper-time, and smooth-proper-time evaluations are respectively
\[
 0.5232467549,\qquad 0.5232474487,\qquad 0.5232388701,
\]
with a total scheme spread \(8.58\times10^{-6}\).  The exact target is
Eq.~\eqref{eq:ph-value}; these values are deterministic quadratures rather
than fits.

The same graph links can be varied rather than merely used as a common
spectral label.  If \(K_G(\eta)\) is differentiable, then
\begin{equation}
 \frac{\diff\Gamma_2}{\diff\eta}
 =\sum_{n=0}^{2}w_n
 \Tr\!\left[
 (K_G+m_n^2\mathbf1)^{-1}\frac{\diff K_G}{\diff\eta}\right].
 \label{eq:ph-same-link}
\end{equation}
On the archived inhomogeneous \(17\)-cycle, the analytic value is
\(0.022621032016492876\).  A central difference converges with measured
exponent \(1.99996\) and reaches absolute error
\(9.77\times10^{-11}\) before floating-point saturation.  This verifies
the weight--mass--link interface that is needed when the multiplet is
placed on the record support of Sec.~\ref{sec:same-link-induced-gravity}.

\subsection{Finite geometry sums and why exact pairing is insufficient}

Because the graph operator and internal mass act on different tensor
factors, the balance is pointwise in geometry.  Let \(\Omega\) be any finite
family of positive graph kernels \(K_\omega\), of possibly different
dimensions, with a supplied positive prior \(p_\omega\).  Define
\begin{equation}
 \Gamma_\omega
 =\sum_{n=0}^{k}w_n\Tr_\omega\log(K_\omega+m_n^2\mathbf1),
 \qquad
 \mathcal Z_\Omega
 =\sum_{\omega\in\Omega}p_\omega e^{-\Gamma_\omega}.
 \label{eq:ph-graph-sum}
\end{equation}

\begin{proposition}[Geometry-by-geometry spectral balance]
\label{prop:ph-graph-sum}
For every \(\omega\) and \(s>0\),
\begin{align}
 &\sum_{n=0}^{k}w_n
 \Tr_\omega e^{-s(K_\omega+m_n^2\mathbf1)}
 \nonumber\\
 &\hspace{1.5cm}
 =\Tr_\omega e^{-sK_\omega}\,
 e^{-sx}(1-e^{-s\Delta})^k .
 \label{eq:ph-graph-heat}
\end{align}
Thus the first \(k\) internal ultraviolet moments vanish on each graph
before geometry averaging, and \(\mathcal Z_\Omega\) is finite and strictly
positive.  For every finite move
\(K(t)=K_0+t\,\delta K\geq0\),
\begin{align}
 \Gamma[K_0+\delta K]-\Gamma[K_0]
 &=
 \int_0^1\!\diff t\sum_n w_n
 \nonumber\\[-2mm]
 &\quad\times
 \Tr[(K(t)+m_n^2\mathbf1)^{-1}\delta K] .
 \label{eq:ph-finite-move}
\end{align}
\end{proposition}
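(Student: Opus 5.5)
The plan is to reduce everything to the tensor-product structure \(K_\omega\otimes\mathbf 1+\mathbf 1\otimes M^2\) and to reuse Theorem~\ref{thm:ph-balance} pointwise in \(\omega\). For the heat identity \eqref{eq:ph-graph-heat}, I use that \(K_\omega\) and the scalar \(m_n^2\mathbf1\) commute. This gives the factorization \(e^{-s(K_\omega+m_n^2\mathbf1)}=e^{-sK_\omega}e^{-sm_n^2}\). Taking \(\Tr_\omega\) and summing with \(w_n=(-1)^n\binom kn\) yields
\[
\Tr_\omega e^{-sK_\omega}\,\Str_{\Hh_{\rm int}}e^{-sM^2},
\]
and \eqref{eq:ph-heat} then gives the right-hand side. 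Because \(\Tr_\omega e^{-sK_\omega}\) is analytic at \(s=0\) with value \(\dim\omega\), the product still vanishes to order \(k\) at \(s=0\). So the small-\(s\) expansion has no terms \(s^0,\ldots,s^{k-1}\), and these are the first \(k\) internal ultraviolet moments. The cancellation happens for each \(\omega\) separately, before any sum over \(\Omega\) is formed.

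Finiteness and positivity of \(\mathcal Z_\Omega\) are next. Each \(\Gamma_\omega\) is a finite sum of \(\Tr\log\) of strictly positive matrices, since \(K_\omega\geq0\) and \(m_n^2=x+n\Delta>0\). It is therefore a finite real number, and \(e^{-\Gamma_\omega}\in(0,\infty)\). With \(p_\omega>0\) and \(\Omega\) finite, the sum \(\mathcal Z_\Omega\) is a finite sum of strictly positive terms. No cancellation among the signed weights enters here, because the signs sit inside \(\Gamma_\omega\) and not in the prior.

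For the finite-move formula \eqref{eq:ph-finite-move}, I set \(f(t)=\Gamma[K(t)]\). Along the segment \(K(t)=(1-t)K_0+t(K_0+\delta K)\), both endpoints are \(\geq0\) by hypothesis, so convexity gives \(K(t)\geq0\). Hence \(K(t)+m_n^2\mathbf1\geq m_0^2\mathbf1>0\) for all \(t\in[0,1]\). On this set \(\log\) is smooth, and the first-order Fréchet formula \(\frac{d}{dt}\Tr\log A(t)=\Tr[A(t)^{-1}\dot A(t)]\) holds; this is the same computation as \eqref{eq:ph-same-link}. The fundamental theorem of calculus on \([0,1]\) then gives the identity, with continuity of the integrand secured by the uniform lower bound.

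I expect the main obstacle to be interpretive rather than technical: making precise that "the first \(k\) moments vanish" refers to the internal supertrace moments \(\Str(M^2)^j\), which are weighted by the geometry heat trace. It does not mean that the joint graph-times-internal heat expansion loses all low coefficients in \(K_\omega\). The \(s\)-expansion from the first step makes this explicit, since the joint expansion begins at order \(s^k\).
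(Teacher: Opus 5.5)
Your proposal is correct and follows the same route as the paper. You factor the heat operator across the graph and internal factors and evaluate the internal supertrace by Theorem~\ref{thm:ph-balance}; you use strict positivity of every shifted determinant to get \(0<\mathcal Z_\Omega<\infty\); and you apply the fundamental theorem of calculus to the log-determinant for the finite move, with your extra details (convexity of the segment, the uniform bound \(K(t)+m_n^2\mathbf1\geq x\mathbf1\), and the explicit small-\(s\) expansion) filling in steps the paper's brief proof leaves implicit.
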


\begin{proof}
The heat operator factorizes on graph and internal tensor factors, and
Theorem~\ref{thm:ph-balance} evaluates the internal supertrace.  Every
shifted determinant is strictly positive, so every term in
Eq.~\eqref{eq:ph-graph-sum} has positive real weight.  The final identity
is the fundamental theorem of calculus applied to the log determinant.
\end{proof}

This extension is nonperturbative in the amplitude of a finite graph move
and uses no cancellation between geometries.  It is not yet a dynamical
sum over causal complexes: \(\Omega\), \(p_\omega\), and the refinement maps
are inputs.  In the supporting data, 64 connected weighted graphs with
7--15 vertices give a maximum pointwise heat-factorization residual
\(1.07\times10^{-14}\); an order-one chord insertion agrees with
Eq.~\eqref{eq:ph-finite-move} to \(2.98\times10^{-15}\).

The most obvious interaction-protection proposal also has a precise
failure mode.  Exact supersymmetry can be realized in interacting lattice
models and on arbitrary triangulations
\cite{CatterallKaramov2002,BerensteinCatterall2025}, but exact all-level
pairing cancels too much.

\begin{theorem}[Exact-pairing obstruction]
\label{thm:ph-pairing-obstruction}
Let \(\mathcal V=\mathcal V_+\oplus\mathcal V_-\) be a finite positive
graded Hilbert space, let \(D=D^\dagger\) be odd, and set
\(\mathcal H_D=D^2\).  Then for every continuous spectral function \(f\),
\begin{equation}
 \Str f(\mathcal H_D)=\operatorname{ind}(D_+)\,f(0).
 \label{eq:ph-pairing-index}
\end{equation}
In particular,
\(\Str[\mathcal H_D\log(\mathcal H_D/\mu^2)]=0\), with
\(0\log0:=0\).
\end{theorem}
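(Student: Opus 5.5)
The plan is to reduce the supertrace to the kernel of \(D\) by exhibiting \(D\) itself as an isometry that pairs the nonzero spectrum of \(\mathcal H_D\) across the grading.

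First, write \(D\) in block form with respect to \(\mathcal V=\mathcal V_+\oplus\mathcal V_-\). Because \(D\) is odd and self-adjoint,
\[
 D=\begin{pmatrix}0&D_+^\dagger\\ D_+&0\end{pmatrix},
 \qquad D_+:\mathcal V_+\to\mathcal V_-,
\]
so \(\mathcal H_D=D^2=D_+^\dagger D_+\oplus D_+D_+^\dagger\). Both blocks are positive semidefinite and \(\mathcal H_D\) is even. Hence \(f(\mathcal H_D)\) is even for every continuous \(f\), and
\[
 \Str f(\mathcal H_D)=\Tr_{\mathcal V_+}f(D_+^\dagger D_+)-\Tr_{\mathcal V_-}f(D_+D_+^\dagger).
\]
In finite dimension continuity is only needed on the finite spectrum, so \(f\) may be taken to be any function there.

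The key step is the standard spectral pairing. For each eigenvalue \(\lambda>0\), the map \(\lambda^{-1/2}D_+\) is a unitary from \(\ker(D_+^\dagger D_+-\lambda)\) onto \(\ker(D_+D_+^\dagger-\lambda)\). Indeed, \(D_+\) intertwines the two operators, the inverse is \(\lambda^{-1/2}D_+^\dagger\), and \(\|D_+v\|^2=\lambda\|v\|^2\). So every nonzero eigenvalue has equal multiplicity in the two sectors, and these contributions cancel in the difference, eigenvalue by eigenvalue. What remains is the \(\lambda=0\) contribution
\[
 f(0)\bigl[\dim\ker D_+^\dagger D_+-\dim\ker D_+D_+^\dagger\bigr].
\]
Since \(\ker D_+^\dagger D_+=\ker D_+\) and \(\ker D_+D_+^\dagger=\ker D_+^\dagger=\operatorname{coker}D_+\), this is \(\operatorname{ind}(D_+)f(0)\), which proves Eq.~\eqref{eq:ph-pairing-index}.

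For the corollary, apply the identity to \(f(y)=y\log(y/\mu^2)\) with \(f(0):=0\). This function is continuous on \([0,\infty)\), so the supertrace is \(\operatorname{ind}(D_+)\cdot0=0\), independent of \(\mu\).

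There is no serious obstacle here. The only points needing care are checking that the positivity of the graded inner product makes \(D^\dagger\) and the block adjoint \(D_+^\dagger\) agree, and that the convention \(0\log0=0\) is exactly what makes \(f\) continuous at the kernel.
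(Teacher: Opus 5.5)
Your proposal is correct and follows essentially the same route as the paper: on each positive eigenspace of \(\mathcal H_D\), \(\lambda^{-1/2}D\) (in your block form, \(\lambda^{-1/2}D_+\)) is an isometric bijection between even and odd states, so only the zero-mode difference \(\operatorname{ind}(D_+)f(0)\) survives, and \(f(0)=0\) gives the logarithmic corollary. Your block decomposition and kernel identifications spell out steps that the paper compresses into one line.
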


\begin{proof}
On each positive \(\lambda\)-eigenspace,
\(D/\sqrt{\lambda}\) is an isometric bijection between even and odd
states, so their contributions cancel.  The zero-mode difference is
\(\operatorname{ind}(D_+)\).  The logarithmic function vanishes at zero.
\end{proof}

The required interacting symmetry is therefore selective: it must protect
\(\Str\mathbf1=\Str M^2=0\) but leave positive-energy spectra sufficiently
misaligned that \(F_2>0\).  Broken-supersymmetry mass-sum rules provide
continuum examples of partial rather than exact pairing
\cite{ItoyamaMaru2015,BenaEtAl2016}; they do not yet derive the required
record-lattice Ward identity.

\subsection{Selective Ward closure on a supplied finite geometry language}
\label{sec:selective-ward-closure}

The companion selective-Ward construction
\cite{XuSelectiveLadderWard2026} closes the finite interacting part of the
preceding problem without imposing exact spectral pairing.  Let
\(\K\) be any finite positive Hilbert space, let
\(\F_k=\Lambda^\bullet\mathbb C^k\), and let \(c_a^\dagger\), \(N\), and
\(\Pi=(-1)^N\) be the canonical creation, number, and parity operators on
\(\F_k\).  All physical interactions and all degrees of freedom of one
fixed geometry are contained in an arbitrary self-adjoint operator on
\(\K\).

\begin{theorem}[Selective ladder characterization and stability]
\label{thm:selective-ladder-closure}
For a self-adjoint \(H\) on \(\K\otimes\F_k\) and a real
\(\Delta\), the identities
\begin{equation}
 [H,\mathbf1\otimes c_a^\dagger]
 =\Delta(\mathbf1\otimes c_a^\dagger),
 \qquad a=1,\ldots,k,
 \label{eq:selective-ladder-ward}
\end{equation}
hold if and only if
\begin{equation}
 H=A\otimes\mathbf1+\Delta\,\mathbf1\otimes N
 \label{eq:selective-ladder-normal}
\end{equation}
for one self-adjoint \(A\) on \(\K\).  Consequently,
\begin{equation}
 \Str e^{-sH}=\Tr_\K e^{-sA}(1-e^{-s\Delta})^k,
 \qquad
 \Str H^j=0\quad(0\leq j<k).
 \label{eq:selective-heat}
\end{equation}
For an arbitrary self-adjoint \(H\), define
\[
 \varepsilon=\max_a
 \|[H,c_a^\dagger]-\Delta c_a^\dagger\|.
\]
Its Hilbert--Schmidt projection \(H_0\) onto the affine manifold
in Eq.~\eqref{eq:selective-ladder-normal} satisfies
\begin{equation}
 \|H-H_0\|\leq4k\varepsilon.
 \label{eq:selective-distance}
\end{equation}
If \(H,H_0\geq m\mathbf1\), \(D=2^k\dim\K\), and \(s>0\), then
\begin{equation}
 \left|\Str(e^{-sH}-e^{-sH_0})\right|
 \leq4kD\,s e^{-sm}\varepsilon .
 \label{eq:selective-heat-bound}
\end{equation}
\end{theorem}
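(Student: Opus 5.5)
The plan is to reduce everything to a commutant computation after subtracting the ladder part. Put \(H'=H-\Delta\,\mathbf1\otimes N\). Because \([N,c_a^\dagger]=c_a^\dagger\), the Ward identities \eqref{eq:selective-ladder-ward} become \([H',\mathbf1\otimes c_a^\dagger]=0\) for every \(a\). Since \(H'\) is self-adjoint, taking adjoints gives \([H',\mathbf1\otimes c_a]=0\) as well. The operators \(c_a,c_a^\dagger\) generate all of \(\mathcal B(\F_k)\), because the CAR representation on \(\Lambda^\bullet\mathbb C^k\) is irreducible. Hence \(H'\) lies in \((\mathbf1\otimes\mathcal B(\F_k))'=\mathcal B(\K)\otimes\mathbf1\), so \(H'=A\otimes\mathbf1\) with \(A\) self-adjoint. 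The converse direction is the one-line computation \([\Delta N,c_a^\dagger]=\Delta c_a^\dagger\), which proves \eqref{eq:selective-ladder-normal}.

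For \eqref{eq:selective-heat} I would use the normal form directly: \(A\otimes\mathbf1\) commutes with \(\mathbf1\otimes N\), so \(e^{-sH}=e^{-sA}\otimes e^{-s\Delta N}\). The parity grading acts only on \(\F_k\), so the supertrace factorizes. Resolving by exterior degree exactly as in Theorem~\ref{thm:ph-balance} gives \(\Str_{\F_k}e^{-s\Delta N}=\sum_n(-1)^n\binom kn e^{-sn\Delta}=(1-e^{-s\Delta})^k\). The function \(s\mapsto\Str e^{-sH}\) therefore has a zero of order at least \(k\) at \(s=0\). Since \(\Str H^j=(-1)^j\partial_s^j\Str e^{-sH}|_{s=0}\), the moments vanish for \(j<k\).

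For the stability bound \eqref{eq:selective-distance}, I would write the affine manifold as \(\Delta\,\mathbf1\otimes N+\mathcal B(\K)\otimes\mathbf1\). Its Hilbert--Schmidt projection is then \(H_0=\Delta\,\mathbf1\otimes N+\mathcal E(H')\), where \(\mathcal E=\mathrm{id}\otimes(2^{-k}\Tr)\) is the orthogonal projection onto \(\mathcal B(\K)\otimes\mathbf1\). To estimate \(\|H'-\mathcal E(H')\|\) in operator norm, I would realize \(\mathcal E\) as a twirl over Majorana monomials. Set \(\gamma_{2a-1}=c_a+c_a^\dagger\) and \(\gamma_{2a}=i(c_a^\dagger-c_a)\). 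The \(4^k\) monomials \(\gamma_S\), made unitary by a phase, form an orthogonal unitary basis of \(\mathcal B(\F_k)\). Therefore \(4^{-k}\sum_S\gamma_SX\gamma_S^\dagger=\mathcal E(X)\) for every operator \(X\).

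Next I bound each conjugation. By the triangle inequality and the adjoint relation, \(\|[H',\gamma_j]\|\le2\varepsilon\) for each Majorana. Each monomial has at most \(2k\) factors, so telescoping gives \(\|\gamma_SH'\gamma_S^\dagger-H'\|\le 2k\cdot2\varepsilon=4k\varepsilon\). Averaging over \(S\) then yields \(\|H-H_0\|=\|H'-\mathcal E(H')\|\le4k\varepsilon\).

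Finally, for \eqref{eq:selective-heat-bound} I would apply Duhamel's formula:
\[
e^{-sH}-e^{-sH_0}=-\int_0^s e^{-(s-u)H}(H-H_0)e^{-uH_0}\,\diff u .
\]
Together with \(H,H_0\ge m\) this gives operator norm at most \(s e^{-sm}\|H-H_0\|\). The supertrace is bounded by \(D\) times the operator norm, which combined with the previous step gives \(4kDse^{-sm}\varepsilon\).

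I expect the main obstacle to be the middle step. One must check carefully that the Majorana twirl coincides with the Hilbert--Schmidt projection onto the affine manifold, which rests on orthogonality of \(\mathbf1\otimes N\) to the translation directions modulo the \(\mathcal E\)-invariant part. One must also track the constants: the passage from \(c_a^\dagger\) to Majorana commutators, and the number of factors per monomial, are exactly what produce the stated \(4k\).
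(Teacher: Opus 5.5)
Your proposal is correct and follows the paper's proof essentially step for step. Both subtract \(\Delta N\) and invoke irreducibility of the CAR algebra, factorize the supertrace, realize the conditional expectation as a Majorana twirl bounded by telescoping over at most \(2k\) factors of commutator norm \(2\varepsilon\), and finish with Duhamel. The obstacle you flag at the end is not a real one: for any affine subspace, the Hilbert--Schmidt projection of \(H\) onto \(\Delta\,\mathbf1\otimes N+\mathcal B(\K)\otimes\mathbf1\) is automatically \(\Delta\,\mathbf1\otimes N+\mathcal E(H-\Delta\,\mathbf1\otimes N)\), with no orthogonality condition on the base point.
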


\begin{proof}
After subtracting \(\Delta N\), Eq.~\eqref{eq:selective-ladder-ward}
commutes with the irreducible CAR algebra and therefore acts trivially on
\(\F_k\), proving the characterization.  Equation
\eqref{eq:selective-heat} follows by tensor factorization.  For the
approximate statement, average \(H-\Delta N\) over the finite Majorana
conjugation group.  The average is the conditional expectation onto the
CAR commutant, and a telescoping product of at most \(2k\) Majoranas bounds
the distance by \(4k\varepsilon\).  The Duhamel formula and
\(\|e^{-tH}\|\leq e^{-tm}\) give
Eq.~\eqref{eq:selective-heat-bound}.
\end{proof}

This theorem is nonperturbative in \(A\): \(A\) may contain arbitrary
finite interacting physical and geometric terms.  For \(k=2\), the finite
response is
\begin{align}
 \I_\Delta(A)
 &:=
 \Tr\!\left[
 A\log A-2(A+\Delta)\log(A+\Delta)\right.\nonumber\\[-1mm]
 &\hspace{29mm}\left.
 +(A+2\Delta)\log(A+2\Delta)\right]\nonumber\\
 &=\int_0^\Delta\!\diff u\int_0^\Delta\!\diff v\,
 \Tr(A+u+v)^{-1}>0 .
 \label{eq:selective-invariant}
\end{align}
Thus the protected low moments coexist with a nonzero response.  A fresh-cell
unitary of the factorized form
\[
 U_h=U_{\rm phys}\otimes e^{-ih\Delta N}
\]
preserves the shifted ladder eigenoperator exactly even when
\(U_{\rm phys}\) is strong.  This realizes the Ward law microscopically, but
also exposes its present boundary: the ladder factor is installed in the
commutant rather than derived from a Lorentzian continuum theory.

The positive invariant also defines a genuine finite geometry dynamics.
For a supplied finite language \(\Omega\), positive base weights
\(p_\omega\), and positive interacting operators \(A_\omega\), set
\begin{equation}
 \pi_\omega=
 \frac{p_\omega e^{-g\I_\Delta(A_\omega)}}
 {\sum_{\nu\in\Omega}p_\nu e^{-g\I_\Delta(A_\nu)}} .
 \label{eq:selective-geometry-law}
\end{equation}
On any connected symmetric local move graph, the Metropolis transition
\[
 P_{\omega\nu}=q_{\omega\nu}
 \min\{1,\pi_\nu/\pi_\omega\}
\quad(\nu\neq\omega)
\]
has unique stationary law \(\pi\), obeys exact detailed balance, and admits
the positive-Hilbert isometry
\begin{equation}
 V|\omega\rangle
 =\sum_\nu\sqrt{P_{\omega\nu}}\,
 |\nu\rangle\otimes|e_{\omega\nu}\rangle .
 \label{eq:selective-stinespring}
\end{equation}
Extending \(V\) to a unitary and tracing a fresh environment gives a CPTP
geometry channel.  Hence the finite weights are dynamically sampled rather
than only written as a formal partition sum.

The archived calculation exhausts all 728 connected labelled simple graphs
on five vertices, using
\(A_\omega=0.8\mathbf1+L_\omega\) and \(\Delta=0.72\).
The maximum heat-factorization, detailed-balance, stationary-law, and
Stinespring-normalization residuals are respectively
\(7.00\times10^{-15}\), \(2.71\times10^{-20}\),
\(4.34\times10^{-19}\), and \(2.22\times10^{-16}\).
The exact geometry-channel gap is \(0.1474055064\).  An independent
\(2.0\times10^5\)-sample retained trajectory has total-variation distance
\(0.0248875\) and mean-edge error \(7.24\times10^{-4}\); random Ward
perturbations never exceed one quarter of the analytic distance bound.
These computations test all supplied geometries rather than a selected
saddle.

The finite closure removes the old ``partial interacting Ward protection''
and ``positive graph-weight dynamics'' gaps only at fixed regulator and for
a supplied language.  It does not select \(\Omega\), its topology, or its
move set; it does not give deletion-consistent measures as the regulator
grows; and it does not establish continuum spin--statistics, anomaly
freedom, nonlinear BV closure, or background-independent graph
backreaction.

\subsection{Projective response-weighted causal growth}
\label{sec:projective-response-growth}

The next companion construction
\cite{XuProjectiveCausalGrowth2026} removes one of those qualifications:
the positive response can generate compatible laws at every finite event
number and hence a genuine infinite-history probability measure.  The result
is exact, but it is deliberately stated on a naturally labelled causal-birth
language; quotienting by labellings and taking a Lorentzian continuum limit
remain separate tasks.

Let \(\Omega_n^{(r)}\) be the naturally labelled \(n\)-event partial orders
obtained by adjoining one maximal event at a time.  A birth selects a
down-set \(S\subseteq C\) whose set of maximal elements has size
\[
 p_C(S):=|\max_C S|\leq r
\]
and writes \(C\oplus S\) for the child.  Deleting the largest label gives a
map \(\rho_n:\Omega_{n+1}^{(r)}\to\Omega_n^{(r)}\), and every labelled child
has exactly one parent under this deletion.  On the undirected Hasse graph
of \(C\), define
\begin{equation}
 A_C=m^2\mathbf1+\kappa L_C,\qquad
 \I_\Delta(C)=\Tr f_\Delta(A_C)>0 ,
 \label{eq:pcg-response}
\end{equation}
where
\begin{align}
 f_\Delta(y)
 &=y\log y-2(y+\Delta)\log(y+\Delta)\nonumber\\
 &\quad +(y+2\Delta)\log(y+2\Delta)
 \label{eq:pcg-f}
\end{align}
and, equivalently,
\begin{equation}
 \I_\Delta(C)=
 \int_0^\Delta\!\diff u\int_0^\Delta\!\diff v\,
 \Tr(A_C+u+v)^{-1}.
 \label{eq:pcg-positive}
\end{equation}
This is the \(k=2\) selective-ladder response of
Eq.~\eqref{eq:selective-invariant}, now evaluated on each causal history.

The response of one birth is available from the parent without recomputing
the child spectrum.  Put \(B_C=A_C\oplus m^2\) and
\[
 U_S=\sqrt\kappa\,[e_a-e_{n+1}]_{a\in\max_C S}.
\]
Because a maximal birth changes only its incident Hasse links,
\(A_{C\oplus S}=B_C+U_SU_S^\dagger\).  Woodbury's identity then gives
\begin{align}
 \Delta\I_C(S)
 &:=\I_\Delta(C\oplus S)-\I_\Delta(C)\nonumber\\
 &=f_\Delta(m^2)
 -\int_0^\Delta\!\diff u\int_0^\Delta\!\diff v\,
 \mathcal W_{u+v}(C,S),
 \label{eq:pcg-increment}\\
 \mathcal W_t(C,S)
 &=\Tr\!\left[
 R_tU_S(\mathbf1+U_S^\dagger R_tU_S)^{-1}
 U_S^\dagger R_t\right],
 \qquad R_t=(B_C+t\mathbf1)^{-1}.
 \label{eq:pcg-woodbury}
\end{align}
Thus a \(p\)-parent birth requires only a \(p\times p\) inverse.  Positivity
also yields the nonperturbative parent-number bound
\begin{equation}
 f_\Delta(m^2)-\frac{2\kappa p_C(S)\Delta^2}{m^4}
 \leq\Delta\I_C(S)\leq f_\Delta(m^2).
 \label{eq:pcg-increment-bound}
\end{equation}

For real \(g,\alpha,\beta\), conditionally normalize the allowed births:
\begin{equation}
 G_{n,g}(C,S)=
 \frac{
 \exp[-g\Delta\I_C(S)-\alpha p_C(S)+\beta|S|/n]}
 {\displaystyle\sum_{S'\in\mathcal D_r(C)}
 \exp[-g\Delta\I_C(S')-\alpha p_C(S')+\beta|S'|/n]} .
 \label{eq:pcg-kernel}
\end{equation}
Starting from the one-event order, set
\begin{equation}
 \mu_{n+1,g}(C\oplus S)=\mu_{n,g}(C)G_{n,g}(C,S).
 \label{eq:pcg-recursion}
\end{equation}

\begin{theorem}[Projective response law and positive dilation]
\label{thm:pcg-projective-law}
For every finite \(n\), the laws in
Eq.~\eqref{eq:pcg-recursion} are normalized and satisfy
\begin{equation}
 (\rho_n)_*\mu_{n+1,g}=\mu_{n,g}.
 \label{eq:pcg-projectivity}
\end{equation}
They therefore define a unique probability measure on infinite
past-finite naturally labelled causal histories.  Moreover,
\begin{equation}
 V_{n,g}|C\rangle=
 \sum_{S\in\mathcal D_r(C)}
 \sqrt{G_{n,g}(C,S)}\,
 |C\oplus S\rangle\otimes|e_{C,S}\rangle
 \label{eq:pcg-isometry}
\end{equation}
is an isometry.  Extending it to a unitary and tracing a fresh environment
implements each birth by a CPTP channel.
\end{theorem}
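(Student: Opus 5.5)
Three things have to be established, in this order: normalization together with projectivity of the finite laws, extension to infinite histories, and the isometry/CPTP statement. All three rest on two structural facts. First, for each naturally labelled $C\in\Omega_n^{(r)}$ the set $\mathcal D_r(C)$ of admissible down-sets is finite and nonempty; the empty down-set $S=\varnothing$ has $p_C(\varnothing)=0\le r$, so it is always allowed. Second, the map $S\mapsto C\oplus S$ is injective and its image is exactly $\rho_n^{-1}(C)$. Injectivity holds because deleting the top label recovers $C$, and the down-set is recovered as the past of the new event. Surjectivity holds because every child of $C$ in $\Omega_{n+1}^{(r)}$ is by definition $C\oplus S$ for some admissible $S$.

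Given these facts, the denominator in Eq.~\eqref{eq:pcg-kernel} is a finite sum of strictly positive reals, because $\Delta\I_C(S)$ is finite. This follows from Eq.~\eqref{eq:pcg-positive} and the bound \eqref{eq:pcg-increment-bound}, which rely on $A_C\ge m^2\mathbf1>0$. Hence $G_{n,g}(C,\cdot)$ is a probability vector on $\mathcal D_r(C)$. By induction on $n$, each $\mu_{n,g}$ is a strictly positive probability measure on the finite set $\Omega_n^{(r)}$. For projectivity, I would compute
\[
(\rho_n)_*\mu_{n+1,g}(C)=\sum_{C'\in\rho_n^{-1}(C)}\mu_{n+1,g}(C')=\mu_{n,g}(C)\sum_{S\in\mathcal D_r(C)}G_{n,g}(C,S)=\mu_{n,g}(C).
\]
The middle equality uses the bijection $S\leftrightarrow C\oplus S$, and the unique-parent property guarantees that no child is counted twice. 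This gives Eq.~\eqref{eq:pcg-projectivity}.

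For the infinite measure, I would realize the space of past-finite naturally labelled infinite histories as the projective limit $\varprojlim(\Omega_n^{(r)},\rho_n)$. Equivalently, it is the set of infinite growth paths, and past-finiteness is automatic because each event has a finite label. I would equip this space with the product-of-discrete topology, under which it is a closed subset of $\prod_n\Omega_n^{(r)}$ and hence compact. Since each $\Omega_n^{(r)}$ is finite, the Kolmogorov extension theorem yields a unique probability measure whose cylinder marginals are $\mu_{n,g}$. Alternatively, the Ionescu--Tulcea theorem applies directly to the Markov kernels $G_{n,g}$. Uniqueness follows because cylinder sets form a $\pi$-system generating the Borel $\sigma$-algebra.

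For the dilation, I would check that $V_{n,g}^\dagger V_{n,g}=\mathbf1$ on $\ell^2(\Omega_n^{(r)})$. The vectors $|C\oplus S\rangle\otimes|e_{C,S}\rangle$ are orthonormal across all pairs $(C,S)$, by the injectivity above together with the orthonormal environment labels. Consequently $\langle C'|V^\dagger V|C\rangle=\delta_{CC'}\sum_S G_{n,g}(C,S)=\delta_{CC'}$. Any isometry between finite-dimensional spaces extends to a unitary on the space padded with a fresh ancilla. Tracing out the environment gives the map $\rho\mapsto \Tr_E V\rho V^\dagger$, which is CPTP by Stinespring's theorem. Its action on diagonal states reproduces the recursion \eqref{eq:pcg-recursion}.

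The only step requiring real care is the bijection/unique-parent claim. One must make sure that births with $p_C(S)\le r$ are closed under the deletion map, so that $\rho_n$ actually lands in $\Omega_n^{(r)}$, and that no two distinct down-sets produce the same labelled child. I would verify both directly from the definition of naturally labelled maximal births.
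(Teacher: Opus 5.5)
Your proposal is correct and follows essentially the same route as the paper: unique-parent deletion plus the row normalization $\sum_S G_{n,g}(C,S)=1$ for projectivity, then Kolmogorov extension on the finite discrete cylinder spaces, then orthonormal environment labels plus conditional normalization for $V_{n,g}^\dagger V_{n,g}=\mathbf1$. The paper's proof is more terse. You additionally spell out three points it leaves implicit: the empty down-set makes $\mathcal D_r(C)$ nonempty, so the normalizer is positive; the map $S\mapsto C\oplus S$ is a bijection onto $\rho_n^{-1}(C)$; and compactness of the projective limit is what justifies Kolmogorov extension.
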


\begin{proof}
Every child has the unique labelled parent obtained by deleting \(n+1\).
Summing Eq.~\eqref{eq:pcg-recursion} over that parent's children and using
\(\sum_SG_{n,g}(C,S)=1\) proves
Eq.~\eqref{eq:pcg-projectivity}.  Kolmogorov extension on the finite
discrete cylinder spaces gives the infinite law.  Orthogonality of the
environment labels and conditional normalization imply
\(V_{n,g}^\dagger V_{n,g}=\mathbf1\).
\end{proof}

The same conditional structure supplies exact observables along a path.
For an \(N\)-event history, let
\[
 X_n=\Delta\I_{C_n}(S_n),\qquad
 \bar X_n=\mathbb E_g[X_n\mid\mathcal F_n].
\]
Its response score is the martingale
\begin{equation}
 \mathcal S_N(g)=\partial_g\log\mathbb P_g(C_1,\ldots,C_N)
 =-\sum_{n=1}^{N-1}(X_n-\bar X_n),
 \label{eq:pcg-score}
\end{equation}
and therefore
\begin{align}
 \mathbb E_g\mathcal S_N&=0,\nonumber\\
 \mathcal F_N(g):=\mathbb E_g\mathcal S_N^2
 &=\sum_{n=1}^{N-1}
 \mathbb E_g\operatorname{Var}_g(X_n\mid\mathcal F_n),\label{eq:pcg-fisher}\\
 \partial_g\mathbb E_g O&=\mathbb E_g[O\mathcal S_N]
 \label{eq:pcg-linear-response}
\end{align}
for every bounded path observable \(O\).  Path relative entropy likewise
obeys the exact chain rule
\begin{equation}
 D(\mathbb P_g^{(N)}\Vert\mathbb P_{g'}^{(N)})
 =\sum_{n=1}^{N-1}
 \mathbb E_gD(G_{n,g}(C_n,\cdot)\Vert
 G_{n,g'}(C_n,\cdot)).
 \label{eq:pcg-kl-chain}
\end{equation}
These identities distinguish a response-generated path law from a
postselected terminal partition sum.

The archived \(r=2\) calculation enumerates every allowed state through
seven events:
\[
 1,\ 2,\ 7,\ 39,\ 322,\ 3730,\ 58113.
\]
The largest normalization, deletion-projectivity, and Stinespring residuals
are \(2.22\times10^{-16}\), \(5.55\times10^{-17}\), and
\(4.44\times10^{-16}\).  The Fisher-chain residual is zero at stored
precision, the relative-entropy-chain residual is
\(2.58\times10^{-17}\), and independent differentiation verifies
Eq.~\eqref{eq:pcg-linear-response} to \(1.84\times10^{-12}\).  Forty-seven
independent birth checks bound the Woodbury and integrated-response errors
by \(2.89\times10^{-16}\) and \(8.89\times10^{-15}\).  A
\(2.5\times10^5\)-history sample reproduces a predeclared 131-bin
trajectory coarse graining with total-variation distance \(0.00677\);
the complete \(58113\)-state histogram is retained rather than hidden by
coarse graining.  As a failure control, independently normalized terminal
Gibbs weights have deletion defect as large as \(0.08635\).

This closes A11PG only at model level.  Natural labels, the precursor
language, \(m^2,\kappa,\Delta,g,\alpha,\beta\), the Hasse response operator,
and the external birth index are inputs.  No equality of different linear
extensions, Rideout--Sorkin discrete general covariance, Bell causality,
coherent quantum measure, dimensional selection, tight Lorentzian
metric--measure continuum limit, nonlinear BV master equation, continuum
matter, graph backreaction, spin--statistics theorem, or anomaly
cancellation is derived.  The result is an exact positive projective
history interface, not a completed nonperturbative gravity measure.

\subsection{Covariant response growth and an extendible quantum measure}
\label{sec:covariant-response-growth}

The preceding conditional kernel solves deletion consistency but not the
natural-label gauge problem.  A second companion construction
\cite{XuCovariantQuantumResponseGrowth2026} identifies a response-generated
subfamily for which the stronger CSG constraints are exact.  The
construction also isolates a coherent subfamily whose cylinder amplitudes
extend countably additively.  These two branches should not be conflated:
the positive branch has a local fresh-environment CPTP dilation, whereas the
coherent branch has a strongly positive rank-one decoherence functional but
no derived local unitary transfer operator.

For each \(k\geq0\), let \(S_k=K_{1,k}\) be the \(k\)-contact star, with
\(S_0=K_1\), and use the same positive logarithmic response as in
Eq.~\eqref{eq:pcg-response}.  Define
\begin{align}
 \chi_k&=\I_\Delta(S_k)-\I_\Delta(S_0),&
 t_k(g)&=e^{-g\chi_k}.
 \label{eq:cqg-contact-couplings}
\end{align}
The star spectrum gives
\begin{equation}
 \chi_0=0,\qquad
 \chi_k=(k-1)f_\Delta(m^2+\kappa)
 +f_\Delta[m^2+\kappa(k+1)]>0
 \label{eq:cqg-star-response}
\end{equation}
for \(k\geq1\).  Thus \(t_k(g)>0\) for every \(k\).  If a precursor
down-set \(S\subseteq C\) has \(v=|S|\) elements and \(m=|\max_C S|\)
maxima, put
\begin{align}
 \lambda_g(v,m)
 &=\sum_{k=m}^{v}\binom{v-m}{k-m}t_k(g),\label{eq:cqg-lambda}\\
 q_{n,g}(C,S)
 &=\frac{\lambda_g(|S|,|\max_C S|)}{\lambda_g(n,0)}.
 \label{eq:cqg-transition}
\end{align}
This transition has a microscopic seed sampler.  Choose
\(K\subseteq C\) with weight
\begin{equation}
 \mathbb P_g(K\mid C)=\frac{t_{|K|}(g)}{\lambda_g(n,0)}
 \label{eq:cqg-seed}
\end{equation}
and place the new maximal event above \(S=\downarrow K\).  The seeds with
down-closure \(S\) are exactly
\(\max S\subseteq K\subseteq S\).  Grouping them by cardinality gives the
binomial coefficients in Eq.~\eqref{eq:cqg-lambda} and proves the Markov
identity
\begin{equation}
 \sum_{S\in\mathcal D(C)}
 \lambda_g(|S|,|\max S|)=\lambda_g(n,0).
 \label{eq:cqg-markov}
\end{equation}

\begin{theorem}[Covariant positive response growth]
\label{thm:cqg-positive-growth}
The transition law in Eq.~\eqref{eq:cqg-transition} is normalized and
deletion-projective.  Its path probability has the intrinsic form
\begin{equation}
 p_g(C)=
 \frac{\displaystyle\prod_{x\in C}
 \lambda_g(|J^-(x)|,|\max J^-(x)|)}
 {\displaystyle\prod_{j=0}^{|C|-1}\lambda_g(j,0)},
 \label{eq:cqg-intrinsic-path}
\end{equation}
and is therefore independent of the natural labelling.  For any two
precursors \(S_1,S_2\) and \(B=S_1\cup S_2\),
\begin{equation}
 q_C(S_1)q_B(S_2)=q_C(S_2)q_B(S_1).
 \label{eq:cqg-bell}
\end{equation}
Thus the law satisfies discrete general covariance and the zero-safe
cross-multiplied Bell condition.  Moreover,
\begin{equation}
 V_{n,g}|C\rangle=
 \sum_{K\subseteq C}
 \sqrt{\frac{t_{|K|}(g)}{\lambda_g(n,0)}}\,
 |C\oplus\downarrow K\rangle|C,K\rangle_E
 \label{eq:cqg-isometry}
\end{equation}
is an isometry and gives a fresh-environment CPTP birth channel.
\end{theorem}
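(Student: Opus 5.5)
The plan is to derive everything from the seed sampler \eqref{eq:cqg-seed} and the Markov identity \eqref{eq:cqg-markov}, recognising the transition \eqref{eq:cqg-transition} as a Rideout--Sorkin classical sequential growth law with couplings \(t_k(g)>0\).

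\textbf{Normalization and the Markov identity.} Fix \(C\) with \(n\) events. The map \(K\mapsto\downarrow K\) from subsets of \(C\) to down-sets is surjective. Its fibre over \(S\) is the interval \(\max S\subseteq K\subseteq S\). Summing \(t_{|K|}\) over all \(K\subseteq C\) gives \(\lambda_g(n,0)=\sum_k\binom nk t_k\). Grouping the same sum by fibres gives \(\sum_S\lambda_g(|S|,|\max S|)\), because a fibre contains \(\binom{v-m}{k-m}\) seeds of cardinality \(k\). This proves \eqref{eq:cqg-markov}, and hence \(\sum_S q_{n,g}(C,S)=1\).

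\textbf{Deletion projectivity.} This follows exactly as in Theorem~\ref{thm:pcg-projective-law}. Each labelled child has a unique labelled parent, so pushing forward the product law along \(\rho_n\) reproduces \(\mu_n\).

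\textbf{Intrinsic path formula, covariance and the isometry.} Along a natural labelling \(x_1,\ldots,x_N\), the birth of \(x_{j+1}\) has precursor \(S=J^-(x_{j+1})\). The past of \(x_{j+1}\) is fixed once it is born, since later events are never below it. Therefore the numerator of the path product is \(\prod_x\lambda_g(|J^-(x)|,|\max J^-(x)|)\), and the denominator is \(\prod_j\lambda_g(j,0)\). Both products are functions of the unlabelled order alone, which gives \eqref{eq:cqg-intrinsic-path} and discrete general covariance. The isometry \eqref{eq:cqg-isometry} follows from orthonormality of the environment labels \(|C,K\rangle_E\) together with \(\sum_K t_{|K|}/\lambda_g(n,0)=1\). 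The CPTP channel is then obtained by Stinespring extension, as in Theorem~\ref{thm:pcg-projective-law}. Note that distinct \(K\) may produce the same child, but the environment labels keep the branches orthogonal.

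\textbf{Bell condition, the main obstacle.} I expect \eqref{eq:cqg-bell} to be the hardest step, because it requires care with the parent of \(B\). Here \(q_B(S)\) means the transition from the child \(C\oplus B\)-type stage, i.e.\ from the causet \(C\) enlarged by one event whose past is \(B=S_1\cup S_2\), toward the precursor \(S_i\). The new event is not in \(S_i\), so \(S_i\) is still a down-set of the enlarged causet, with the same size and the same maxima. Each numerator therefore equals \(\lambda_g(|S_i|,|\max S_i|)\) at both stages. The left side of \eqref{eq:cqg-bell} is \(\lambda(S_1)\lambda(S_2)/[\lambda_g(n,0)\lambda_g(n+1,0)]\), and this expression is symmetric in \((S_1,S_2)\). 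This establishes the cross-multiplied (zero-safe) form. It remains to check that \(\max S_i\) is computed intrinsically in \(S_i\), hence is unaffected by the new event lying above \(B\); this verification is routine.
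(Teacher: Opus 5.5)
Your arguments for normalization (the fibre decomposition \(\max S\subseteq K\subseteq S\)), deletion projectivity, the intrinsic path product, covariance and the isometry are correct and follow the paper's proof. The gap is in the Bell step: you have misidentified the parent \(B\). In Eq.~\eqref{eq:cqg-bell}, \(B=S_1\cup S_2\) is not the \((n+1)\)-event causet \(C\oplus B\). It is the subcauset of \(C\) induced on \(S_1\cup S_2\), that is, \(C\) with the spectators \(C\setminus B\) deleted. Correspondingly, \(q_B(S_i)\) is the transition from this \(|B|\)-element parent with precursor \(S_i\). This is what makes the identity the Rideout--Sorkin Bell condition: the ratio of the two competing transitions must not depend on events outside the union of the two precursors. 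The identity you actually prove, \(q_C(S_1)\,q_{C\oplus B}(S_2)=q_C(S_2)\,q_{C\oplus B}(S_1)\), relates two consecutive stages of one growth path. It is true, but it only restates the parent-independence of the numerators already contained in your path formula. It says nothing about spectators, so it does not establish Eq.~\eqref{eq:cqg-bell} or Bell causality.

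The repair uses your own ingredients, applied to deletion rather than addition. Since \(S_1,S_2\) are down-sets of \(C\), so is \(B\), and each \(S_i\subseteq B\) is a down-set of \(B\). Deleting \(C\setminus B\) changes neither \(|S_i|\) nor \(\max S_i\), because the maximal elements are computed inside \(S_i\). This is the check you called routine, and here it is the entire content of the step. Hence
\(q_C(S_i)=\lambda_g(|S_i|,|\max S_i|)/\lambda_g(n,0)\) and \(q_B(S_i)=\lambda_g(|S_i|,|\max S_i|)/\lambda_g(|B|,0)\).
Both sides of Eq.~\eqref{eq:cqg-bell} therefore equal
\(\lambda_g(|S_1|,|\max S_1|)\,\lambda_g(|S_2|,|\max S_2|)/[\lambda_g(n,0)\,\lambda_g(|B|,0)]\).
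This is the paper's argument: the numerators are unchanged under spectator removal and the row normalizers cancel. Once \(B\) is read correctly, this step is short rather than the main obstacle.
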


\begin{proof}
Equation~\eqref{eq:cqg-markov} proves normalization and the unique labelled
parent proves deletion projectivity.  Multiplication along a path gives
Eq.~\eqref{eq:cqg-intrinsic-path}; its numerator is a product of intrinsic
past invariants and its denominator depends only on cardinality.  Removing
the spectators \(C\setminus B\) leaves the two precursor numerators
unchanged, while the row normalizer cancels in
Eq.~\eqref{eq:cqg-bell}.  Finally,
\(\sum_{K\subseteq C}t_{|K|}=\lambda_g(n,0)\) proves
\(V_{n,g}^\dagger V_{n,g}=\mathbf1\).
\end{proof}

The response remains operational along the covariant paths.  Define
\begin{equation}
 R_g(v,m)=-\partial_g\log\lambda_g(v,m)
 =\frac{\sum_{k=m}^{v}\binom{v-m}{k-m}t_k(g)\chi_k}
 {\lambda_g(v,m)}.
 \label{eq:cqg-response-mean}
\end{equation}
The score increment of a birth is
\[
 Y_n=R_g(n,0)-R_g(v_n,m_n).
\]
Differentiating Eq.~\eqref{eq:cqg-markov} gives
\(\mathbb E_g[Y_n\mid\mathcal F_n]=0\), so the Fisher, linear-response, and
path-relative-entropy chain rules in
Eqs.~\eqref{eq:pcg-fisher}--\eqref{eq:pcg-kl-chain} hold with \(Y_n\) in
place of the global action increment.

The coherent branch retains one response-generated contact amplitude:
\begin{equation}
 \widetilde t_0=1,\qquad
 \widetilde t_r=z_r=e^{-g\chi_r/2+i\theta\chi_r},\qquad
 \widetilde t_k=0\quad(k\neq0,r).
 \label{eq:cqg-complex-coupling}
\end{equation}
The complex version of Eqs.~\eqref{eq:cqg-lambda} and
\eqref{eq:cqg-transition} has row sum one, the same intrinsic product, and
the cross-multiplied Bell identity even when some transitions vanish.
Let
\[
 \zeta_n(C)=\sum_{S\in\mathcal D(C)}
 |\widetilde q_n(C,S)|-1,\qquad
 \zeta_n^{\max}=\max_{|C|=n}\zeta_n(C).
\]
For \(s=|z_r|\) and \(\phi=\arg z_r\), the maximum occurs at the
\(n\)-antichain \cite{SuryaZalel2020} and equals
\begin{equation}
 \zeta_n^{\max}
 =\frac{1+\binom nr s}
 {\sqrt{1+2s\binom nr\cos\phi+s^2\binom nr^2}}-1.
 \label{eq:cqg-zeta}
\end{equation}

\begin{theorem}[Extendible coherent response history]
\label{thm:cqg-complex-extension}
If \(r\geq2\), \(g>0\), and
\(\phi=\theta\chi_r\in(0,\pi/2]\), the complex cylinder measure generated by
Eq.~\eqref{eq:cqg-complex-coupling} has bounded variation and extends
uniquely to a countably additive complex measure \(\mu\) on the cylinder
sigma algebra.  The functional
\begin{equation}
 \mathfrak D(A,B)=\mu(A)\overline{\mu(B)}
 \label{eq:cqg-decoherence}
\end{equation}
is normalized, countably additive, and strongly positive; its diagonal
\(\mathfrak q(A)=|\mu(A)|^2\) obeys the grade-2 quantum-measure sum rule.
For \(r=1\) and nonreal \(z_1\), bounded variation fails.
\end{theorem}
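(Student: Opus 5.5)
The plan is to work with the complex transition kernel \(\widetilde q_n(C,S)=\widetilde\lambda(|S|,|\max_C S|)/\widetilde\lambda(n,0)\). Because only \(\widetilde t_0=1\) and \(\widetilde t_r=z_r\) are nonzero, the seed sum simplifies: \(\widetilde\lambda(v,m)=[m=0]+\binom{v-m}{r-m}z_r\) for \(m\le r\), and it vanishes for \(m>r\). The row sum equals one exactly as in Eq.~\eqref{eq:cqg-markov}, since the seed-grouping argument is purely combinatorial. Summing cylinder values \(\mu_n(C)=\prod_j\widetilde q_j(C_j,S_j)\) over children therefore reproduces the parent, so the cylinder set function is finitely additive. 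The total variation of \(\mu\) on level \(N\) is at most \(\prod_{n<N}(1+\zeta_n^{\max})\). I would bound it by induction: sum \(|\mu|\) over children, use the row absolute sum \(1+\zeta_n(C)\le 1+\zeta_n^{\max}\), and take Eq.~\eqref{eq:cqg-zeta} as given from the antichain maximization.

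The main step is to show \(\sum_n\zeta_n^{\max}<\infty\) when \(r\ge2\) and \(\phi\in(0,\pi/2]\). Write \(x=s\binom nr\to\infty\). Then
\[
 \frac{1+x}{\sqrt{1+2x\cos\phi+x^2}}-1
 =\frac{(1+x)^2-(1+2x\cos\phi+x^2)}{\sqrt{\cdots}\,(1+x+\sqrt{\cdots})}
 =\frac{2x(1-\cos\phi)}{\sqrt{\cdots}\,(1+x+\sqrt{\cdots})}=O(1/x).
\]
Hence \(\zeta_n^{\max}=O(n^{-r})\), which is summable exactly when \(r\ge2\). The hypotheses \(g>0\) and \(\phi\in(0,\pi/2]\) make \(s>0\) and keep the denominator \(1+2x\cos\phi+x^2\) bounded away from zero, so no transition normalizer vanishes. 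For \(r=1\), \(x=sn\) and \(\zeta_n^{\max}\sim c/n\) with \(c=(1-\cos\phi)/s>0\) whenever \(z_1\) is nonreal. Since the maximum is attained at antichains, I would exhibit an explicit antichain branch whose accumulated \(|\mu|\) grows like \(\prod(1+c/n)\to\infty\), which shows that bounded variation fails.

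With \(\sup_N\|\mu\|_N<\infty\), extension proceeds as follows. The cylinder algebra is generated by finite discrete spaces, and the cylinders are compact-open in the projective-limit topology on past-finite histories. A bounded-variation finitely additive set function on this algebra is then countably additive, because any countable disjoint cover of a cylinder by cylinders is finite by compactness. Carathéodory and Hahn extension of the real and imaginary parts, through their variation measures, then yields a unique countably additive complex \(\mu\) on the sigma algebra.

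For \(\mathfrak D(A,B)=\mu(A)\overline{\mu(B)}\), normalization holds because \(\mu(\Omega)=1\). Countable additivity in each argument is inherited from \(\mu\). Strong positivity holds because every matrix \([\mathfrak D(A_i,A_j)]\) is the rank-one Gram matrix \(vv^\dagger\) with \(v=(\mu(A_i))_i\). The grade-2 sum rule is the algebraic identity \(|a+b+c|^2-|a+b|^2-|a+c|^2-|b+c|^2+|a|^2+|b|^2+|c|^2=0\). The hardest part is the asymptotic estimate together with the careful justification that the antichain maximum of Eq.~\eqref{eq:cqg-zeta} controls every row, which I would take from the cited result.
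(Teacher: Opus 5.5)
Your treatment of \(r\ge2\) is sound and matches the paper's proof. You use the same rationalization of Eq.~\eqref{eq:cqg-zeta}, giving \(\zeta_n^{\max}\le(1-\cos\phi)/(s\binom nr)\). You use the same product bound on the level-\(N\) variation. Your rank-one Gram matrix and grade-2 identity are exactly the paper's arguments. Where the paper simply cites the Surya--Zalel extension criterion, you prove it: compactness makes any finitely additive cylinder function a premeasure, and the bounded-variation Jordan parts then extend by Carath\'eodory. That self-contained route is correct.

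The step that fails is the \(r=1\) divergence. Knowing that the row excess is maximal at antichains gives only an \emph{upper} bound on the variation. To prove divergence you need large row sums at the nodes that carry the variation mass, and an antichain branch carries almost none: the unique labelled history that stays an antichain has \(|\mu|=\prod_n|1+nz_1|^{-1}\to0\). So the witness you propose does not grow.

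The missing observation is that in this one-arity family every row has the same absolute sum. For \(S\ne\varnothing\) the numerator is \(\binom{|S|-|\max S|}{r-|\max S|}z_r\), a nonnegative multiple of the single phase of \(z_r\). The seed-grouping identity gives \(\sum_{S\ne\varnothing}\binom{|S|-|\max S|}{r-|\max S|}=\binom nr\). Hence \(\sum_S|\widetilde q_n(C,S)|=(1+s\binom nr)/|1+z_r\binom nr|\) for \emph{every} parent \(C\) with \(|C|=n\), not just the antichain.

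Consequently the level-\(N\) variation equals \(\prod_{n<N}(1+\zeta_n^{\max})\) exactly. For \(r=1\) with \(\phi\notin\pi\mathbb Z\), its logarithm grows like \(\frac{1-\cos\phi}{s}\log N\), so no complex measure can extend the cylinder function. The same observation makes immediate the step you flagged as hardest, that the antichain maximum controls every row, with no appeal to the cited maximization result. The paper's one-line treatment of \(r=1\) implicitly relies on this exact product too.
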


\begin{proof}
Put \(x=s\binom nr\).  Rationalizing Eq.~\eqref{eq:cqg-zeta} gives
\[
 \zeta_n^{\max}
 =\frac{2x(1-\cos\phi)}
 {\sqrt{1+2x\cos\phi+x^2}
 [1+x+\sqrt{1+2x\cos\phi+x^2}]}
 \leq\frac{1-\cos\phi}{s\binom nr}.
 \]
The upper bound is summable for \(r\geq2\), so the finite cylinder
variations are bounded by
\(\prod_n(1+\zeta_n^{\max})<\infty\).  The complex-measure extension
criterion of Ref.~\cite{SuryaZalel2020} gives \(\mu\).  Every finite matrix
\([\mathfrak D(A_i,A_j)]\) is the outer product
\([\mu(A_i)][\mu(A_j)]^\dagger\), hence is positive semidefinite.  Ordinary
additivity of \(\mu\) yields the grade-2 sum rule
\cite{Sorkin1994QuantumMeasure}.  For nonreal \(r=1\),
\(\zeta_n^{\max}\sim(1-\cos\phi)/(sn)\), and the variation diverges.
\end{proof}

The coherent measure is not merely a complex rephrasing of a probability
law.  For \(r=2\), the two allowed three-event unlabelled endpoint classes
are the antichain \(A_3\) and the two-parent order \(V_3\), with amplitudes
\[
 \mu_3(A_3)=\frac1{1+z_2},\qquad
 \mu_3(V_3)=\frac{z_2}{1+z_2}.
 \]
Their finite-volume quotient interference is
\begin{equation}
 I_3=
 \mathfrak q(A_3\cup V_3)-\mathfrak q(A_3)-\mathfrak q(V_3)
 =\frac{2\Re z_2}{|1+z_2|^2}.
 \label{eq:cqg-interference}
\end{equation}
For the archived benchmark, \(I_3=0.3312668\).

Complete enumeration gives
\[
 1,\ 2,\ 7,\ 40,\ 357,\ 4824
\]
naturally labelled histories and
\[
 1,\ 2,\ 5,\ 16,\ 63,\ 318
\]
unlabelled classes through six events.  Markov, seed-grouping, intrinsic
path, DGC, Bell, Fisher, and relative-entropy residuals are all below
\(1.8\times10^{-15}\).  The \(r=2,3\) accumulated row excess stabilizes
with the predicted \(N^{-r}\) tail through \(N=200000\), while the
\(r=1\) control grows logarithmically.

There is also a decisive falsification.  If the full Hasse response is used
as a conditionally normalized global potential,
\[
 q^{\rm glob}_{n,g}(C,S)\propto
 e^{-g[\I_\Delta(C\oplus S)-\I_\Delta(C)]},
 \]
two natural labellings of one five-event causal set differ in log path
weight by \(1.624733\), and removing one spectator changes a competing log
transition ratio by \(0.136424\).  Positivity and projectivity therefore do
not imply DGC or Bell causality; the response must enter through a
covariant contact sequence (or another structure satisfying the same
constraints).

This closes A11CG at model level.  It removes the natural-label and Bell
defects for the response-generated CSG subfamily and supplies a
countably-additive strongly positive coherent history measure.  It does not
select the maximal-birth language, contact stars, response multiplet,
parameters, arity, or phase.  The coherent decoherence functional is rank
one and has no derived local unitary complex transfer.  No Lorentzian
continuum tightness, dimension selection, nonlinear gravitational
constraint algebra, continuum matter, anomaly freedom, or graph
backreaction is proved.

\subsection{A local-unitary non-Abelian history measure on a fixed causal
strip}
\label{sec:nonabelian-history-measure}

The rank-one boundary above can be removed without sacrificing countable
extension, but presently only after fixing a smaller causal language
\cite{XuNonAbelianQuantumHistory2026}.  Let
\[
 \Omega_2=\{00,01,10,11\}^{\mathbb N}
\]
be the record space of a ranked two-port strip.  A letter
\(x_n=(a_n,b_n)\) tells the two ports at rank \(n\) which of the two
ports at rank \(n-1\) is their parent.  Every generated edge raises rank,
so acyclicity is kinematic.  This is not an unlabeled causal-set growth law.

On a two-dimensional bond space, for any Pauli involution \(H\), define
\[
 F_H^{0,1}(\epsilon)=\frac{\mathbf 1\pm e^{i\epsilon H}}{2},
 \qquad
 A_n^{ab}=F_{H_a}^{b}(\epsilon_n)F_X^a(\epsilon_n),
 \qquad (H_0,H_1)=(Z,Y).
\]
The angle is supplied by the same positive contact response,
\[
 \epsilon_n=\epsilon_\star e^{-g\chi_n}/n^p,
 \qquad 0<\epsilon_\star<\pi.
\]
The two binary decisions have a Hadamard--controlled-unitary--Hadamard
dilation with fresh ancillas.  Coherently conditioning the second axis on
the first record gives all four blocks.

\begin{theorem}[Local-unitary non-Abelian history extension]
\label{thm:nonabelian-history-extension}
For every cell,
\begin{equation}
 \sum_{a,b}A_n^{ab}=\mathbf 1,
 \qquad
 \sum_{a,b}A_n^{ab\dagger}A_n^{ab}=\mathbf 1.
 \label{eq:nonabelian-history-two-sums}
\end{equation}
The transition algebra is noncommutative:
\begin{equation}
 [A^{00},A^{10}]=\frac{i}{2}\epsilon^2Y+O(\epsilon^3).
 \label{eq:nonabelian-history-commutator}
\end{equation}
For a cylinder word \(w=x_1\cdots x_N\), set
\[
 M([w])=A_N^{x_N}\cdots A_1^{x_1}.
\]
If \(\sum_n\epsilon_n<\infty\), this operator cylinder content extends
uniquely to a countably additive regular
\(M_2(\mathbb C)\)-valued measure on the Borel sigma algebra of
\(\Omega_2\).  For every density matrix \(\rho\),
\begin{equation}
 D(E,F)=\Tr[\rho\,M(E)^\dagger M(F)]
 \label{eq:nonabelian-history-decoherence}
\end{equation}
is normalized and strongly positive, and \(D(E,E)\) is a grade-2 quantum
measure.  Two independent strips obey exact spectator factorization.
\end{theorem}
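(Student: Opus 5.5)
The plan is to verify the two cell identities by direct algebra, then build the measure from the cylinder content using bounded variation on the compact record space, and finally read off positivity and factorization.

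\emph{Cell identities.} Write \(U=e^{i\epsilon H}\), so that \(F_H^{0,1}=(\mathbf1\pm U)/2\). Then \(F_H^0+F_H^1=\mathbf1\), and therefore
\[
\sum_{a,b}A^{ab}=\sum_a\Bigl(\sum_bF^b_{H_a}\Bigr)F_X^a=\sum_aF_X^a=\mathbf1 .
\]
Unitarity of \(U\) gives \(F^{0\dagger}F^0+F^{1\dagger}F^1=\tfrac14\bigl[(\mathbf1+U^\dagger)(\mathbf1+U)+(\mathbf1-U^\dagger)(\mathbf1-U)\bigr]=\mathbf1\). Hence \(\sum_bA^{ab\dagger}A^{ab}=F_X^{a\dagger}F_X^a\), and summing over \(a\) gives the Kraus identity in Eq.~\eqref{eq:nonabelian-history-two-sums}.

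\emph{Commutator.} Expand \(F^0=\mathbf1+i\epsilon H/2+O(\epsilon^2)\) and \(F^1=-i\epsilon H/2+O(\epsilon^2)\). This gives \(A^{00}=\mathbf1+i\epsilon(Z+X)/2+O(\epsilon^2)\) and \(A^{10}=-i\epsilon X/2+O(\epsilon^2)\). The identity part commutes with everything, so the leading contribution is
\[
\bigl[i\epsilon(Z+X)/2,\,-i\epsilon X/2\bigr]=\tfrac{\epsilon^2}{4}[Z,X]=\tfrac{i}{2}\epsilon^2Y .
\]
I still need to check that the \(O(\epsilon^2)\) pieces of \(A^{00}\) only meet the \(O(\epsilon)\) part of \(A^{10}\) at order \(\epsilon^3\). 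They do, because \(A^{10}\) has no order-zero term.

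\emph{Cylinder content.} Cylinders \([w]\) with \(|w|=N\) partition \(\Omega_2\). Refining \([w]\) into \([wx]\) gives \(\sum_xM([wx])=\bigl(\sum_xA_{N+1}^x\bigr)M([w])=M([w])\) by the Markov sum. So \(M\) is finitely additive on the algebra of clopen sets, and \(M(\Omega_2)=\mathbf1\). Since \(\Omega_2\) is compact and cylinders are clopen, any countable disjoint union of cylinders that equals a clopen set is a finite union. Hence \(M\) is already countably additive on that algebra.

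\emph{Bounded variation.} For the Borel extension I need bounded variation, which I expect to be the main obstacle. I would use \(\|F^0\|\le1\), \(\|F^1\|=|\sin(\epsilon/2)|\le\epsilon/2\), and \(\|F_X^a\|\le1\). These give
\[
\sum_{a,b}\|A_n^{ab}\|\le 1+\epsilon_n+\epsilon_n^2/4\le 1+\epsilon_n+\epsilon_n^2 .
\]
The crude estimate \(\sum_{a,b}\|A_n^{ab}\|\le4\) would not suffice, so the small factors \(F^1\) must be used. By submultiplicativity,
\[
\sum_{|w|=N}\|M([w])\|\le\prod_{n\le N}\bigl(1+\epsilon_n+\epsilon_n^2\bigr),
\]
which is bounded uniformly in \(N\) exactly when \(\sum_n\epsilon_n<\infty\).

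\emph{Extension and uniqueness.} With bounded variation, define the linear map \(f\mapsto\int f\,\mathrm dM\) on locally constant functions. It is bounded in sup norm by the total variation, so it extends to \(C(\Omega_2)\). Applying the Riesz representation theorem entrywise yields a regular countably additive \(M_2(\mathbb C)\)-valued Borel measure that agrees with the cylinder content. Uniqueness follows from the \(\pi\)--\(\lambda\) theorem, since cylinders form a \(\pi\)-system generating the Borel \(\sigma\)-algebra.

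\emph{Decoherence functional.} \(D(\Omega_2,\Omega_2)=\Tr\rho=1\). For finitely many events \(E_i\),
\[
D(E_i,E_j)=\bigl\langle M(E_i)\rho^{1/2},\,M(E_j)\rho^{1/2}\bigr\rangle_{\rm HS}
\]
is a Gram matrix, hence positive semidefinite, which is strong positivity. Biadditivity of \(D\), inherited from additivity of \(M\), gives the grade-2 sum rule for \(D(E,E)\).

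\emph{Spectator factorization.} For two independent strips on \(\mathbb C^2\otimes\mathbb C^2\), the product-cylinder content is \(M_1\otimes M_2\). Summing over the spectator's letters at each rank leaves \(\mathbf1\) on that factor by the Markov sum. The factorization therefore holds on product cylinders, and by uniqueness of the extension it holds on all Borel sets.
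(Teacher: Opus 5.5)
Your proposal is correct and follows essentially the same route as the paper: the binary identities summed over \(b\) and then \(a\), the Pauli expansion for the commutator, Markov-sum refinement of cylinders, a per-cell norm product bound on the cylinder variation followed by density of locally constant functions and Riesz--Markov, a Hilbert--Schmidt Gram form for strong positivity, and tensor-product instruments for spectator factorization. The differences are cosmetic and both versions work: the paper uses the sharper per-cell bound \(\bigl(\cos\frac{\epsilon_n}{2}+\sin\frac{\epsilon_n}{2}\bigr)^2=1+\sin\epsilon_n\le e^{\epsilon_n}\) where you use \((1+\epsilon_n/2)^2\), and you make uniqueness explicit through the \(\pi\)--\(\lambda\) theorem.
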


\begin{proof}
For either binary decision,
\[
 F_H^0+F_H^1=\mathbf 1,\qquad
 F_H^{0\dagger}F_H^0+F_H^{1\dagger}F_H^1=\mathbf 1.
\]
Summing first over \(b\) and then \(a\) gives
Eq.~\eqref{eq:nonabelian-history-two-sums}; the circuit above proves local
unitary realizability.  The Pauli expansion gives
Eq.~\eqref{eq:nonabelian-history-commutator}.  Cylinder refinement is exact
because the first identity in
Eq.~\eqref{eq:nonabelian-history-two-sums} is an operator Markov sum.
Moreover,
\begin{align}
 \sum_{|w|=N}\|A_w\|
 &\le
 \prod_{n=1}^{N}
 \left(\cos\frac{\epsilon_n}{2}
      +\sin\frac{\epsilon_n}{2}\right)^2\nonumber\\
 &\le \exp\!\left(\sum_n\epsilon_n\right).
 \label{eq:nonabelian-history-variation}
\end{align}
Thus every matrix entry defines a uniformly bounded functional on the
locally constant functions of the compact product space.  Density in
\(C(\Omega_2)\) followed by Riesz--Markov gives the unique countably
additive matrix measure.  Equation
\eqref{eq:nonabelian-history-decoherence} is a Hilbert--Schmidt Gram
functional, which proves strong positivity and the grade-2 identity.
Tensor-product instruments and product bond states give spectator
factorization after summing the complete spectator outcome partition.
\end{proof}

The response sequence is summable for every \(p\ge0\) when \(g>0\),
because \(\chi_n\ge(n-1)f_\Delta(m^2+\kappa)>0\); at \(g=0\), \(p>1\)
is sufficient and necessary for this power-law tail.  At the archived
benchmark \(\epsilon=0.7\),
\[
 2\Re D([00],[10])=-\frac14\sin^3\epsilon=-0.06684023094.
\]
The history Gram rank is three after one cell and reaches the full
Hilbert--Schmidt bond rank four after two; the transition commutator is
\(0.1949275542\).  The largest Markov, Kraus, dilation, and spectator
residual is \(6.3\times10^{-16}\).  A commuting-axis control gives zero
commutator, while the \(g=0,p=1\) variation envelope grows to
\(1.7604\times10^4\) by \(N=200000\).  The latter is failure of the
sufficient extension certificate, not a proof of nonextension.

This proves A11NH at model level.  It removes the scalar/rank-one and
local-realizability restrictions for one fixed causal strip.  It does not
combine that non-Abelian transfer with the unlabeled DGC and Bell-causal
arbitrary-down-set growth of A11CG.  The strip, rank foliation, port
language, bond dimension, response sequence, and summable tail are inputs.
This restriction is substantive: natural operator orderings of quantum
Bell causality on unrestricted sequential growth can force the transition
algebra to commute, and a general causal-past-ordered representation remains
unknown \cite{SrivastavaSurya2026}.
No fluctuating generic adjacency, Lorentzian continuum tightness, nonlinear
constraint algebra, continuum matter, or graph backreaction is proved.

\subsection{A covariant non-Abelian operator measure on arbitrary causal
histories}
\label{sec:covariant-nonabelian-holonomy}

The fixed-strip restriction can be removed without giving up countable
extension if the noncommuting operator is treated as an event-local density
over the covariant positive growth law, rather than as a square transition
ratio \cite{XuCovariantNonAbelianHolonomy2026}.  Use the response-generated
CSG probability \(\mathbb P_g\) of
Theorem~\ref{thm:cqg-positive-growth}.  If an event \(x\) is born above a
precursor whose volume and number of maximal elements are
\[
 v_x=|J^-(x)|,\qquad m_x=|\max J^-(x)|,
\]
attach a fresh qubit and apply
\begin{equation}
 u_x=
 \exp\!\left[
 i\epsilon_\star e^{-h\chi_{v_x}}H_{m_x\bmod3}
 \right],
 \qquad (H_0,H_1,H_2)=(X,Y,Z).
 \label{eq:cnh-gate}
\end{equation}
Every factor acts only on the new event qubit.  The aggregate
\((v_x,m_x)\) is intrinsic to the precursor and is unchanged when spectators
outside a competing precursor union are deleted.

Let \(a=f_\Delta(m^2+\kappa)\),
\[
 r=e^{-ga},\qquad s=e^{-ha},\qquad
 \rho=\frac{1+rs}{1+r}<1.
\]
Equation~\eqref{eq:cqg-star-response} gives
\(\chi_k=(k-1)a+b_k\), with \(0<b_k\le a\).  Hence
\begin{equation}
 r^k\le t_k\le e^{ga}r^k,\qquad
 \epsilon_\star e^{-h\chi_k}
 \le \epsilon_\star e^{ha}s^k.
 \label{eq:cnh-geometric-comparison}
\end{equation}
At stage \(n\), the CSG seed cardinality has conditional distribution
\[
 \mathbb P_g(|K_n|=k\mid\mathcal F_n)
 =\binom nk t_k/\lambda_g(n,0).
\]
Since the precursor volume \(V_n=|\downarrow K_n|\) is at least
\(|K_n|\), Eq.~\eqref{eq:cnh-geometric-comparison} gives the
state-uniform tail
\begin{equation}
 \mathbb E_g(s^{V_n}\mid\mathcal F_n)
 \le e^{ga}\rho^n,\qquad
 \mathbb E_g\theta_{V_n}
 \le\epsilon_\star e^{(g+h)a}\rho^n.
 \label{eq:cnh-tail}
\end{equation}
Define
\[
 B_{\rm hol}=
 \frac{\epsilon_\star e^{(g+h)a}}{1-\rho}.
\]

\begin{theorem}[Covariant non-Abelian causal-holonomy measure]
\label{thm:covariant-nonabelian-holonomy}
Suppose \(B_{\rm hol}<1\).  On the incomplete event tensor product
\[
 \mathcal H_{\rm ev}
 =\bigotimes_{x\in\mathbb N}(\mathbb C^2,|0\rangle_x),
\]
the finite products \(U_N(\omega)=\bigotimes_{x\le N}u_x(\omega)\)
converge in operator norm for \(\mathbb P_g\)-almost every causal history to
a unitary \(U(\omega)\).  The Bochner integral
\begin{equation}
 \mathsf M(E)=\int_EU(\omega)\,\mathbb P_g(d\omega)
 \label{eq:cnh-operator-measure}
\end{equation}
is a countably additive \(B(\mathcal H_{\rm ev})\)-valued measure of total
variation at most one.  For every permutation-invariant density operator
\(\varrho\),
\begin{equation}
 D(E,F)=
 \frac{\Tr[\varrho\,\mathsf M(E)^\dagger\mathsf M(F)]}
 {\Tr[\varrho\,\mathsf M(\Omega)^\dagger\mathsf M(\Omega)]}
 \label{eq:cnh-decoherence}
\end{equation}
is normalized and strongly positive, and \(D(E,E)\) is grade 2.  Natural
relabelings act equivariantly,
\[
 U(\pi\omega)=\Pi_\pi U(\omega)\Pi_\pi^\dagger,
\]
and the range of \(\mathsf M\) is noncommutative.
\end{theorem}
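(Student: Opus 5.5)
The plan is to control everything through the angle sum \(\Theta(\omega)=\sum_x\theta_{V_x}(\omega)\), where \(\theta_{V_x}=\epsilon_\star e^{-h\chi_{v_x}}\) is the rotation angle of the gate \eqref{eq:cnh-gate} at event \(x\).

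\emph{Operator-norm convergence.} Each \(u_x\) acts only on the fresh tensor factor of \(x\), so \(U_{N+1}-U_N=U_N\otimes(u_{N+1}-\mathbf 1)\) on the incomplete tensor product. Hence
\[
\|U_{N+1}-U_N\|=\|u_{N+1}-\mathbf1\|\le\theta_{V_{N+1}}.
\]
By the tower property and Eq.~\eqref{eq:cnh-tail}, \(\mathbb E_g\Theta\le\sum_n\epsilon_\star e^{(g+h)a}\rho^n\le B_{\rm hol}<\infty\). Therefore \(\Theta<\infty\) almost surely, and \(U_N\) is Cauchy in norm. I would check that on \(\mathcal H_{\rm ev}\) with reference vectors \(|0\rangle_x\) the infinite product is well defined: since \(\sum\|u_x-\mathbf1\|<\infty\), the product converges to a unitary on the incomplete product space. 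Measurability in \(\omega\) holds because each \(U_N\) is a finite-cylinder simple function.

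\emph{Bochner integral and countable additivity.} \(U\) is a norm-limit of simple functions, so it is strongly measurable. Its norm is \(\|U\|=1\), so it is Bochner integrable, and \(\mathsf M(E)=\int_E U\,d\mathbb P_g\) is countably additive by dominated convergence, with \(|\mathsf M|(\Omega)\le\int\|U\|=1\).

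\emph{Decoherence functional.} Expanding \(\sum_{ij}\bar z_iz_jD(E_i,E_j)\) gives a positive multiple of \(\Tr[\varrho Y^\dagger Y]\ge0\) with \(Y=\sum_j z_j\mathsf M(E_j)\). This yields strong positivity, and the grade-2 sum rule follows from bi-additivity. I still need the normalizer to be nonzero. For this I would use \(B_{\rm hol}<1\):
\[
\|\mathsf M(\Omega)-\mathbf1\|\le\mathbb E\|U-\mathbf1\|\le\mathbb E\Theta\le B_{\rm hol}<1,
\]
so \(\mathsf M(\Omega)\) is invertible and the denominator is positive.

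\emph{Equivariance.} A natural relabeling \(\pi\) preserves each event's precursor invariants \((v_x,m_x)\), so \(u_{\pi x}(\pi\omega)\) is the conjugate of \(u_x(\omega)\) by the factor permutation. Passing to the limit gives \(U(\pi\omega)=\Pi_\pi U(\omega)\Pi_\pi^\dagger\). Together with DGC of \(\mathbb P_g\) (Theorem~\ref{thm:cqg-positive-growth}), this makes permutation-invariant \(\varrho\) the natural class.

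\emph{Noncommutativity (main obstacle).} The integration could average away the non-Abelian structure, so I would use a late-cylinder estimate. Pick two positive-probability cylinders \(E,F\) at stage \(N\) in which event \(N\) has different \(m\bmod 3\), for example a birth above a \(1\)-element versus a \(2\)-maximal precursor. Then on \(E\) and \(F\) respectively,
\[
U=U_{<N}\otimes u_N\otimes R_E,\qquad U=U_{<N}\otimes u_N'\otimes R_F,
\]
with \(u_N,u_N'\) noncommuting Pauli rotations, say \(Y\) and \(X\) type with angles \(\theta,\theta'\). Write
\[
\mathsf M(E)=\mathbb P(E)\bigl[A\otimes u_N\otimes\mathbf1+\text{err}_E\bigr],
\]
where \(A\) is common to both cylinders if the first \(N-1\) events are shared, and \(\|\text{err}_E\|\le\) the conditional tail expectation, of order \(\rho^{N}\). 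Restricting to event \(N\)'s factor, the commutator has norm of order \(\mathbb P(E)\mathbb P(F)\sin\theta\sin\theta'\), while the tail error is \(O(\rho^N)\mathbb P(E)\mathbb P(F)\). Because the tail bound in Eq.~\eqref{eq:cnh-tail} is state-uniform, choosing the precursor volumes of event \(N\) small keeps \(\theta,\theta'\) of fixed size while the tail shrinks. This gives \([\mathsf M(E),\mathsf M(F)]\neq0\).

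The delicate points are these. First, the common prefix operator \(A\) must be invertible or at least nonvanishing. I would handle this by taking \(E,F\) to share the prefix cylinder and using \(\|A-\mathbf1\|<1\) via \(B_{\rm hol}<1\). Second, the tail conditioning must use the martingale bound uniformly over the chosen cylinder.
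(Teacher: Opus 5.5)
Your proposal is correct and follows the paper's proof essentially step by step. The common steps are: an almost surely finite angle sum from the state-uniform tail bound and Tonelli, telescoping operator-norm convergence, a bounded Bochner density with total variation at most one, the normalizer bound from \(\|\mathsf M(\Omega)-\mathbf 1\|\le B_{\rm hol}<1\), the Gram identity, intrinsic-past equivariance, and a late-cylinder commutator estimate (the paper compares an empty with a singleton precursor instead of your singleton versus two-maximal pair, but either choice gives fixed nonzero angles). Your worry about the shared prefix operator \(A\) is unnecessary: \(A\) is the fixed finite product of gates determined by the common prefix, hence unitary, so \(\|A^2\otimes[u_N,u_N']\|=\|[u_N,u_N']\|\). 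The fix you proposed via \(B_{\rm hol}\) would not work pathwise in any case, since \(B_{\rm hol}\) bounds only the expected distance from \(\mathbf 1\).
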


\begin{proof}
Equation~\eqref{eq:cnh-tail} and Tonelli's theorem give
\[
 \mathbb E_g\sum_n\theta_{V_n}\le B_{\rm hol}<\infty,
\]
so the angle sum is finite almost surely.  Since
\(\|u_x-\mathbf1\|\le\theta_{V_x}\), telescoping makes \(U_N\)
operator-norm Cauchy.  Its finite-cylinder approximants have separable
range, which proves Bochner measurability.  A bounded Bochner density
defines an operator-norm countably additive vector measure and
\(|\mathsf M|(\Omega)\le\int\|U\|d\mathbb P_g=1\).  Furthermore,
\[
 \|\mathsf M(\Omega)-\mathbf1\|
 \le\mathbb E_g\|U-\mathbf1\|
 \le B_{\rm hol}<1,
\]
so the denominator in Eq.~\eqref{eq:cnh-decoherence} is at least
\((1-B_{\rm hol})^2\).  The Gram identity
\[
 \sum_{ij}\bar c_i c_jD(E_i,E_j)
 \propto
 \Tr\!\left[
 \varrho\left(\sum_i c_i\mathsf M(E_i)\right)^\dagger
 \left(\sum_j c_j\mathsf M(E_j)\right)
 \right]\ge0
\]
proves strong positivity, and biadditivity gives the grade-2 sum rule.
Intrinsic past data prove equivariance under natural relabeling.

For noncommutativity, fix a positive-probability \(n\)-event prefix and
compare a next birth with empty precursor to one with a singleton precursor.
Their fresh-factor gates \(u_{0,0}\) and \(u_{1,1}\) obey
\[
 c_{\rm NA}:=\|[u_{0,0},u_{1,1}]\|
 =2|\sin\theta_0\sin\theta_1|>0.
\]
After integrating unrestricted futures, both conditional tail operators
are within \(B_{\rm hol}\rho^{n+1}\) of the identity.  The corresponding
cylinder commutator is bounded below, apart from strictly positive path
probabilities, by
\begin{equation}
 c_{\rm NA}-4B_{\rm hol}\rho^{n+1}.
 \label{eq:cnh-range-bound}
\end{equation}
This is positive for all sufficiently large \(n\), so integration does not
abelianize the range.
\end{proof}

For the archived parameters
\[
 (m^2,\Delta,\kappa,g,h,\epsilon_\star)
 =(0.85,0.65,0.60,0.82,2.20,0.040),
\]
\(\rho=0.833981\) and \(B_{\rm hol}=0.446848\).
The lower bound~\eqref{eq:cnh-range-bound} becomes positive at \(n=36\).
Complete enumeration through six events gives 4824 naturally labelled
histories in 318 unlabeled classes.  The Markov, scalar Bell, and
spectator-gate residuals are below \(6.7\times10^{-16}\), while intrinsic
decorated-causet covariance is exact.  A deliberately stage-labelled gate
has covariance and spectator defects \(2.34\times10^{-2}\) and
\(1.00\times10^{-2}\).  At four events the normalized decoherence matrix
has interference \(1.10\times10^{-2}\), grade-2 residual
\(5.4\times10^{-19}\), and a nonzero operator-measure commutator.

This proves A11CH at model level.  It combines arbitrary-down-set CSG
covariance with a local-unitary non-Abelian, countably additive history
measure.  The exact spectator statement is
\begin{equation}
 \frac{q_C(S_1)}{q_C(S_2)}
 =\frac{q_B(S_1)}{q_B(S_2)},\qquad
 u_C(S_i)=u_B(S_i),
 \quad B=S_1\cup S_2.
 \label{eq:cnh-isometric-bell}
\end{equation}
It is \emph{isometric Bell locality}, not the square-operator TOBC, NTOBC,
or CPOBC cross-product condition.  The operator density is integrated
against a classical CSG law and does not obey an operator Markov sum at each
parent.  Geometry does not backreact on the gate, precursor locality is not
bounded-neighborhood locality, and no Lorentzian continuum, nonlinear
constraint algebra, matter dynamics, anomaly freedom, or graph backreaction
is proved.

\subsection{Central operator-valued growth and the no-coherence boundary}
\label{sec:central-operator-growth}

The preceding constructions leave a tempting shortcut: promote the scalar
CSG couplings to positive operators and interpret their state dependence as
quantum backreaction.  The commutative branch of that proposal can be solved
and classified completely \cite{XuOperatorValuedCausalGrowth2026}.  The
classification is useful precisely because it proves that central
operator-valued couplings do \emph{not} make the geometry marginal
coherent.

Let \(\mathcal Z\) be a commutative von Neumann algebra with separable
predual,
\[
 \mathcal Z\simeq L^\infty(Z,\nu),
 \qquad T_k=M_{t_k}\in\mathcal Z_+,\qquad T_0=\mathbf1 .
\]
For a birth above a down-set \(S\) of an \(n\)-event parent \(C\), put
\[
 v=|S|,\qquad m=|\max S|,
\]
and define
\begin{align}
 \Lambda(v,m)
 &=\sum_{k=m}^{v}\binom{v-m}{k-m}T_k,\nonumber\\
 \Lambda_n&=\Lambda(n,0),\qquad
 Q_C(S)=\Lambda(v,m)\Lambda_n^{-1}.
 \label{eq:ocg-transition}
\end{align}
The normalization is an operator identity rather than a fitted row sum.
Every seed \(K\subseteq C\) has the unique down-closure
\(\downarrow K=S\), and the seeds in that fiber are exactly
\(\max S\subseteq K\subseteq S\).  Hence
\begin{equation}
 \sum_{S\in\mathcal D(C)}
 \Lambda(|S|,|\max S|)
 =\sum_{K\subseteq C}T_{|K|}
 =\Lambda_n .
 \label{eq:ocg-seed-identity}
\end{equation}
Since \(T_0=\mathbf1\), \(\Lambda_n\geq\mathbf1\); therefore
\(Q_C(S)\geq0\) and \(\sum_SQ_C(S)=\mathbf1\).

\begin{theorem}[Central operator-valued growth and classification]
\label{thm:central-operator-growth-classification}
The effects in Eq.~\eqref{eq:ocg-transition} have the following
properties.
\begin{enumerate}
\item \(K_C(S)=Q_C(S)^{1/2}\) is a trace-preserving L\"uders instrument at
each finite parent.
\item For every finite history \(\omega\),
\begin{equation}
 K_\omega=\prod_jK_{C_j}(S_j)
 =M_{\sqrt{P_z(\omega)}} ,
 \label{eq:ocg-path}
\end{equation}
where \(P_z\) is the scalar CSG law with couplings \(t_k(z)\).  Thus
naturally labeled paths with isomorphic endpoints have identical path
operators.
\item For \(B=S_1\cup S_2\), the exact Bell cross-product is
\begin{equation}
 Q_C(S_1)Q_B(S_2)=Q_C(S_2)Q_B(S_1).
 \label{eq:ocg-bell}
\end{equation}
\item The finite effects extend to a normalized strongly countably
additive POVM on the infinite past-finite history event algebra,
\begin{equation}
 Q(E)=M_{P_z(E)}.
 \label{eq:ocg-infinite-povm}
\end{equation}
\item Conversely, every central operator-valued CSG has this direct-integral
form.  For every normal state \(\rho\),
\begin{equation}
 p_\rho(E)=\Tr[\rho Q(E)]
 =\int_ZP_z(E)\,\diff\mu_\rho(z),
 \label{eq:ocg-mixture}
\end{equation}
where \(\mu_\rho\) is fixed by the restriction of \(\rho\) to
\(\mathcal Z\).
\end{enumerate}
Consequently two states with the same central restriction give identical
geometry statistics on every history event.
\end{theorem}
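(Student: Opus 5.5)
The whole argument rests on commutativity: all operators live in \(\mathcal Z\simeq L^\infty(Z,\nu)\), so every identity can be checked pointwise in \(z\) and then lifted back through the multiplication-operator isomorphism. Concretely, write \(\Lambda(v,m)=M_{\lambda_z(v,m)}\), where \(\lambda_z(v,m)=\sum_k\binom{v-m}{k-m}t_k(z)\) is the scalar function appearing in Eq.~\eqref{eq:cqg-lambda}. Since \(t_0=1\) and \(t_k\ge0\), we have \(\lambda_z(n,0)\ge1\) almost everywhere. Hence \(\Lambda_n^{-1}=M_{1/\lambda_z(n,0)}\) is bounded, and
\[
Q_C(S)=M_{q_{n,z}(C,S)},
\]
where \(q_{n,z}\) is the scalar transition law of Eq.~\eqref{eq:cqg-transition} evaluated at \(z\). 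This remains valid when some \(t_k\) vanish on part of \(Z\), because only the denominator is inverted and the zero-safe form of Eq.~\eqref{eq:cqg-bell} never divides by a numerator.

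Items 1--3 then follow directly from this pointwise representation. \emph{Item 1.} The square root \(K_C(S)=M_{\sqrt{q}}\) is positive. Moreover \(\sum_S K^\dagger K=\sum_S Q=\mathbf1\) by Eq.~\eqref{eq:ocg-seed-identity}, so \(\rho\mapsto\sum_S K\rho K^\dagger\) is a trace-preserving L\"uders instrument. \emph{Item 2.} Multiplication operators commute, so the ordered product of the \(K\)'s along a path equals \(M\) applied to the product of the pointwise square roots. That product is \(\sqrt{P_z(\omega)}\) by the intrinsic-path formula Eq.~\eqref{eq:cqg-intrinsic-path}. Label independence is inherited from the scalar statement of Theorem~\ref{thm:cqg-positive-growth}. \emph{Item 3.} The Bell cross-product Eq.~\eqref{eq:ocg-bell} is the pointwise form of Eq.~\eqref{eq:cqg-bell}, whose proof uses only Eq.~\eqref{eq:cqg-markov} and the intrinsic numerators. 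Both of these hold for arbitrary nonnegative couplings with \(t_0=1\).

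\emph{Item 4.} For each \(z\), Theorem~\ref{thm:cqg-positive-growth} together with Kolmogorov extension gives a probability measure \(P_z\) on past-finite histories. I would first verify that \(z\mapsto P_z(E)\) is measurable. For cylinders this holds because \(P_z(E)\) is a finite product of rational functions of the measurable \(t_k\). For general \(E\) it follows from a monotone-class argument, since the sets with measurable \(z\mapsto P_z(E)\) form a Dynkin system containing the \(\pi\)-system of cylinders. Then \(Q(E)=M_{P_z(E)}\) is a well-defined element of \(\mathcal Z_+\) with \(Q(\Omega)=\mathbf1\). Strong countable additivity follows from dominated convergence in \(L^2\): for disjoint \(E_i\) and any vector \(\xi\), \(\|(Q(\cup E_i)-\sum_{i\le N}Q(E_i))\xi\|^2\to0\), since the multiplier tends to zero pointwise and is bounded by 1. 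On cylinders \(Q\) coincides with the finite effects by Item 2. The extension is unique because any normal POVM agreeing on the generating algebra has matrix coefficients that are measures agreeing on a \(\pi\)-system.

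\emph{Item 5.} For the converse, the hypothesis is that the couplings \(T_k\) are central with \(T_0=\mathbf1\), as in Eq.~\eqref{eq:ocg-transition}. Under the isomorphism \(\mathcal Z\simeq L^\infty(Z,\nu)\), each \(T_k\) is \(M_{t_k}\) for some measurable \(t_k\ge0\), so the previous construction applies. For a normal state \(\rho\), the restriction \(E\mapsto\rho|_{\mathcal Z}\) of \(\rho\) to \(\mathcal Z\) is a normal state on \(L^\infty\). It is therefore given by a probability density, \(\rho(M_f)=\int f\,\diff\mu_\rho\) with \(\mu_\rho\ll\nu\). Applying this to \(f=P_z(E)\) gives Eq.~\eqref{eq:ocg-mixture}. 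The final sentence of the theorem is then immediate, because \(\mu_\rho\) depends only on the restriction \(\rho|_{\mathcal Z}\).

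The main obstacle is the infinite step in Item 4. There we need joint measurability of \(P_z(E)\) in \(z\) for non-cylinder events, together with the fact that a family of countably additive scalar laws assembles into a \emph{strongly} countably additive operator measure. I would handle the first with the monotone-class argument above and the second with dominated convergence in \(L^2(Z,\nu)\otimes\mathcal K\), where \(\mathcal K\) is the multiplicity space of the representation of \(\mathcal Z\).
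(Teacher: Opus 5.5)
Your proposal is correct and follows essentially the same route as the paper: the multiplication representation of \(\mathcal Z\), the seed identity for Kraus completeness, pointwise scalar CSG facts for path covariance and the Bell cross-product, monotone-class measurability with dominated convergence for the infinite POVM, and the \(L^\infty\) description of normal states for the direct-integral formula. One small nit: a general normal representation of \(\mathcal Z\) is a direct sum (or direct integral) of multiplicity pieces, not a single \(L^2(Z,\nu)\otimes\mathcal K\). Your dominated-convergence step goes through unchanged in that setting, or you can replace it by normality of the isomorphism applied to the increasing partial sums.
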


\begin{proof}
Equation~\eqref{eq:ocg-seed-identity} proves Kraus completeness.  In the
multiplication representation,
\[
 Q_C(S)=M_{q_z(C,S)},\qquad
 q_z(C,S)=
 \frac{\sum_{k=m}^{v}\binom{v-m}{k-m}t_k(z)}
 {\sum_{k=0}^{n}\binom nk t_k(z)} .
\]
The scalar path product depends only on the intrinsic past
\((|J^-(x)|,|\max J^-(x)|)\) of each endpoint event, proving
Eq.~\eqref{eq:ocg-path} and DGC.  The two precursor numerators are unchanged
when spectators outside \(B\) are removed, while the parent normalizer is
common; commutativity then gives Eq.~\eqref{eq:ocg-bell}.  The scalar
kernels define probability measures \(P_z\) on infinite histories.
Measurability first holds for cylinders and extends by the monotone-class
theorem.  Scalar countable additivity and dominated convergence give strong
countable additivity of Eq.~\eqref{eq:ocg-infinite-povm}.  Finally, the
representation theorem for commutative von Neumann algebras makes every
central coupling a multiplication function, and every normal functional on
\(\mathcal Z\) is integration against \(\mu_\rho\), which proves
Eq.~\eqref{eq:ocg-mixture}.
\end{proof}

For two central sectors,
\[
 Q(E)=
 \begin{pmatrix}P_0(E)&0\\0&P_1(E)\end{pmatrix},\qquad
 \rho=
 \begin{pmatrix}w&c\\\bar c&1-w\end{pmatrix},
\]
so
\begin{equation}
 p_\rho(E)=wP_0(E)+(1-w)P_1(E)
 \label{eq:ocg-coherence-blind}
\end{equation}
is independent of \(c\).  Even when an external reference makes the
off-diagonal block operational in a larger algebra, the central geometry
POVM cannot detect it.  The geometry can be state controlled or classically
correlated with a hidden sector, but it cannot interfere between sectors.

Positivity and one-step normalization alone do not evade this conclusion.
For noncommuting \(T_k>0\), the symmetrically normalized effects
\begin{equation}
 E_C(S)=
 \Lambda_n^{-1/2}\Lambda(v,m)\Lambda_n^{-1/2}
 \label{eq:ocg-noncentral-effect}
\end{equation}
remain positive and sum to the identity, yet their square roots need not
compose covariantly.  An explicit two-dimensional control is
\begin{equation}
 T_k=e^{-\bar g\chi_k}
 \left[
 \mathbf1+\eta\cos(0.79k)X+\eta\sin(0.79k)Z
 \right],
 \qquad 0<\eta<1 ,
 \label{eq:ocg-noncentral-coupling}
\end{equation}
with the positive contact-star response \(\chi_k\) of
Eq.~\eqref{eq:cqg-star-response}.  For the archived parameters, the two
natural labelings
\[
 (0,1,0),\qquad (0,0,1)
\]
of a two-event chain plus one isolated event have sequential probabilities
\[
 0.14917364759401094,\qquad
 0.14919932629598825.
\]
Their DGC spread is \(2.5678702\times10^{-5}\), although the maximum
parentwise POVM residual is \(1.11\times10^{-15}\).  This is a failure
control, not a proposed noncentral growth law.

Non-Abelian records can nevertheless coexist with the central geometry
marginal.  For \(U_H=e^{i\epsilon H}\), set
\[
 F_H^0=\frac{\mathbf1+U_H}{2},\qquad
 F_H^1=\frac{\mathbf1-U_H}{2},\qquad
 A^{ab}=F_{H_a}^{\,b}F_X^{\,a},
 \quad(H_0,H_1)=(Z,Y).
\]
The two binary identities imply
\[
 \sum_{ab}A^{ab}=\mathbf1,\qquad
 \sum_{ab}A^{ab\dagger}A^{ab}=\mathbf1,
\]
while at \(\epsilon=0.22\),
\[
 \|[A^{00},A^{10}]\|=2.0800387\times10^{-2}.
\]
The combined fresh-event instrument
\begin{equation}
 L_C(S,a,b)=Q_C(S)^{1/2}\otimes A^{ab}
 \label{eq:ocg-combined-instrument}
\end{equation}
is trace preserving, but summing over \(a,b\) returns \(Q_C(S)\); the
record noncommutativity does not backreact on the geometry law.

Complete enumeration through six events covers 4824 naturally labeled
histories in 318 unlabeled classes.  The largest central POVM, DGC, and
Bell residuals are \(6.66\times10^{-16}\),
\(2.78\times10^{-17}\), and \(2.78\times10^{-17}\); direct-mixture and
coherence-visibility residuals vanish at stored precision.  The analytic
classification, rather than these finite residuals, is the infinite-history
result.

This proves A11OC at model level and sharpens the remaining cut.  The
infinite object established here is a POVM, not a countably additive
noncommutative instrument after unobserved future records are discarded.
No noncentral solution of causal-past-ordered Bell causality, coherent
geometry interference, geometry backreaction, tight Lorentzian continuum,
constraint algebra, anomaly theorem, or nonlinear Einstein limit is
obtained.  A viable quantum-growth replacement must therefore leave the
central algebra while preserving normalization, DGC, and operator locality,
or must use a different strongly positive history construction whose
relative phases survive both quotienting and infinite extension.

\subsection{Rigidity of the remaining square-operator Bell branch}
\label{sec:bell-causal-rigidity}

The preceding classification does not decide whether noncentral transition
operators can obey an operator Markov sum, operator discrete general
covariance (DGC), and causal-past-ordered Bell causality (CPOBC)
simultaneously.  CPOBC is the only nonsingular ordering in the current
square-operator formulation that is not already known to force immediate
commutativity \cite{SrivastavaSurya2026}.  The companion analysis
\cite{XuBellCausalGrowthRigidity2026} removes three natural neighborhoods
from that search.

Let \(a_n\) be the \(n\)-element antichain and set
\[
 Q_n=A(a_n,\varnothing).
\]
Pairing the gregarious transition with a singleton precursor and deleting
their common spectators gives the exact CPOBC identity
\begin{equation}
 A_n^{(1)}Q_1=(\mathbf1-Q_1)Q_n,
 \qquad
 A_n^{(1)}=(\mathbf1-Q_1)Q_nQ_1^{-1}.
\label{eq:bcgr-one-precursor}
\end{equation}
If the three displayed transition operators are self-adjoint, then
\begin{equation}
 Q_1\!\left(A_n^{(1)}-A_n^{(1)\dagger}\right)Q_1
 =
 [\,Q_1(\mathbf1-Q_1),Q_n\,].
\label{eq:bcgr-hermiticity}
\end{equation}
Thus self-adjointness makes \(Q_n\) commute with
\(Q_1(\mathbf1-Q_1)\).  The
identity alone has a complementary-spectrum loophole, because the map
\(f(\lambda)=\lambda(1-\lambda)\) identifies
\(\lambda\) and \(1-\lambda\).  The full CPOBC system closes that loophole.

\begin{theorem}[Self-adjoint and all-order triangular CPOBC rigidity]
\label{thm:bcgr-rigidity}
For nonsingular square-operator CPOBC:
\begin{enumerate}
\item if every transition is self-adjoint, then the complete
finite-dimensional transition algebra is commutative, with no positivity
or sign assumption and including all complementary-spectrum blocks;
\item through four events, at the scalar transitive-percolation point
\(t=1/4\), the 79 operator residuals in 30 marked-transition classes have
\begin{equation}
 \operatorname{rank}J=27,\qquad
 \operatorname{rank}[\,J\mid W_{12}\ W_{13}\ W_{23}\,]=30,
\label{eq:bcgr-ranks}
\end{equation}
so the three matrices multiplying every leading formal deformation obey
\([H_1,H_2]=[H_1,H_3]=[H_2,H_3]=0\);
\item for every pair of distinct strictly positive scalar CSG coupling
sequences \(\bm t_L,\bm t_R\in(0,\infty)^{\mathbb N}\), every
\(2\times2\) upper-triangular extension
\[
 A_e=
 \begin{pmatrix}
 a_e(\bm t_L)&x_e\\0&a_e(\bm t_R)
 \end{pmatrix},
\]
is globally similar to the diagonal direct sum; if \(t_k\) is the first
unequal coupling, the universal early obstruction has
\begin{equation}
\det\mathcal O_k=
\frac{t_2}{\Lambda_1\Lambda_2}
\prod_{j=2}^{k-1}\frac{t_1t_{j-1}}{\Lambda_j^2}>0
\qquad(k\ge3).
\label{eq:bcgr-allorder-det}
\end{equation}
Independently, the six-event \(9680\times707\) linear system has rank 706
on every distinct positive five-jet and its kernel is only
\[
 \bm x=\xi\,[\,\bm a(\bm t_L)-\bm a(\bm t_R)\,],
\]
the coboundary generated by a global similarity transform.
\item for coincident positive scalar characters, every \(2\times2\)
upper-triangular self-extension is uniquely
\begin{equation}
 A_e=a_e(\bm t)\mathbf1+
 \left(\sum_{\ell\ge1}c_\ell\partial_{t_\ell}a_e(\bm t)\right)E_{12},
\label{eq:bcgr-equal-character}
\end{equation}
where a finite transition contains only finitely many terms.  Its
transition algebra is commutative, and it splits if and only if all
\(c_\ell\) vanish.
\item if a nonsingular \(2\times2\) CPOBC representation is irreducible
and \(R_n=Q_1^{-1}Q_n\), then \(R_2\) is scalar.  A non-scalar \(R_2\)
reduces to an eigenline-swapping normal form, whose two exceptional
components are both eliminated by the following antichain stage.
\item write the surviving scalar ratio as \(R_2=h\mathbf1\).  The
two-element chain requires \(h\ne0,1\).  In the trace-balanced subchart
\(d=a,\ x=-1\) of the first-simple-spectrum chart
\begin{equation}
 Q_1=\begin{pmatrix}a&1\\q&d\end{pmatrix},\quad
 R_3=\operatorname{diag}(1,x),\quad
 R_4=\operatorname{diag}(1,y)
\label{eq:bcgr-delayed-component}
\end{equation}
the physically saturated stage-three equations have exactly
\[
C_0:y=1,\quad
E_1:a=\tfrac12,\ 4hq-3h+2=0,\quad
E_2:q=a-a^2,\ 2ah-2h+1=0.
\]
None has a nonsingular extension through the next antichain stage.  No
classification of the complementary part of the first-simple-spectrum
chart is asserted.
\end{enumerate}
\end{theorem}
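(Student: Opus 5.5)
The six parts are proved separately, in the order of the statement. Each part is reduced to a finite algebraic identity taken from the CPOBC system, specialized to antichains and small precursors, together with the operator Markov sum and DGC.

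For part 1, start from the one-precursor identity \eqref{eq:bcgr-one-precursor} and its self-adjoint consequence \eqref{eq:bcgr-hermiticity}. Together they give \([Q_1(\mathbf1-Q_1),Q_n]=0\) for every \(n\). Since \(Q_1\) is self-adjoint, decompose the space into spectral subspaces of \(f(Q_1)=Q_1(\mathbf1-Q_1)\). Each such subspace is a sum of at most two eigenspaces of \(Q_1\), with eigenvalues \(\lambda\) and \(1-\lambda\), and \(Q_n\) is block-diagonal for this decomposition. To close the complementary-spectrum loophole inside a block, apply the same Hermiticity trick to a second CPOBC pairing. One candidate pairs two singleton precursors at a two-event parent. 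Another uses \(A_n^{(1)}\) itself: it must be self-adjoint, and \((\mathbf1-Q_1)Q_nQ_1^{-1}\) self-adjoint on a \(\lambda\oplus(1-\lambda)\) block forces the off-diagonal part of \(Q_n\) to satisfy \(\tfrac{1-\lambda}{1-\lambda}\,x=\tfrac{\lambda}{\lambda}\,\bar x^{*}\)-type relations. I expect these relations to force \(x=0\) unless \(\lambda=1/2\), and \(\lambda=1/2\) is the degenerate case in which the block is already an eigenspace of \(Q_1\).

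Once every \(Q_n\) commutes with \(Q_1\), all transitions are generated by the \(Q\)'s through DGC/CPOBC recursions. The planned argument is an induction on event number: every transition is a rational function of commuting \(Q_j\)'s. The main difficulty is showing that general precursors reduce to antichain data; I would rely on the spectator-deletion form of CPOBC to express \(A(C,S)\) through \(A(a_{|S|},\cdot)\).

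Parts 2 and 3 are linear algebra at a scalar point. For part 2, expand \(A=a(t)\mathbf1+\epsilon H+O(\epsilon^2)\) at \(t=1/4\). To first order the residuals are linear in the tangent matrices, and at second order the commutators enter through the \(W_{ij}\) columns. The rank jump from 27 to 30 then says the three commutators are forced to vanish. This is an exact rational computation, and I would verify the ranks by fraction-exact elimination.

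For part 3, the upper-right entries \(x_e\) satisfy linear equations whose coefficients are the diagonal scalar laws. The target is to show that the solution space is exactly the coboundaries \(\xi(a(\bm t_L)-a(\bm t_R))\). Induct on the first unequal coupling \(k\). Below \(k\), the equations are those of a self-extension; at stage \(k\), the early-obstruction matrix \(\mathcal O_k\) has the product determinant \eqref{eq:bcgr-allorder-det}. Compute it by block triangularization along successive antichain stages, with each \(\Lambda_j^{-2}\) coming from a normalization. Positivity of the \(t\)'s then makes the determinant nonzero, so \(x\) is pinned to the coboundary and the extension splits.

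Part 4 is the \(\bm t_L=\bm t_R\) limit. The upper-right entry is then a derivation of the scalar law, since the equations are the first-order Taylor equations of \(a(\bm t)\). Its solutions are the tangents \(\sum c_\ell\partial_\ell a\), obtained by linearizing the Markov/DGC/Bell identities, which hold exactly in \(\bm t\). Commutativity follows because every \(A_e\) lies in \(\mathbb C[E_{12}]\).

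Parts 5 and 6 are case analyses on \(2\times2\) normal forms. For part 5, suppose \(R_2\) is non-scalar. Diagonalize it, or take its Jordan form, and impose irreducibility together with the stage-two identities. This should leave an eigenline-swapping form; substituting it into the \(a_3\) equations yields contradictions on both exceptional components. For part 6, \(h=1\) would make the two-chain transition singular. In the chart \eqref{eq:bcgr-delayed-component} with \(d=a\) and \(x=-1\), the saturated stage-three equations should factor, by a Gröbner or resultant computation, into exactly \(C_0\), \(E_1\) and \(E_2\). The next antichain stage then yields, on each component, a polynomial with no admissible nonsingular root.

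I expect the main obstacle to be the global commutativity step in part 1, which needs more than the one-precursor identity. I would attempt it through the inductive reduction to antichain transitions described above.
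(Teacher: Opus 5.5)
\textbf{Main gap: part 1.} The step you expected to be routine fails. Self-adjointness of $A_n^{(1)}=(\mathbf 1-Q_1)Q_nQ_1^{-1}$ is not a second condition. By \eqref{eq:bcgr-hermiticity} it is equivalent to $[Q_1(\mathbf 1-Q_1),Q_n]=0$, which you have already used. Take a block with $Q_1=\operatorname{diag}(\lambda\mathbf 1,(1-\lambda)\mathbf 1)$ and $Q_n=\begin{pmatrix}U&D\\ D^\dagger&V\end{pmatrix}$. The off-diagonal blocks of $A_n^{(1)}$ are $\frac{1-\lambda}{1-\lambda}D=D$ and $\frac{\lambda}{\lambda}D^\dagger=D^\dagger$. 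So $A_n^{(1)}$ is self-adjoint for \emph{every} $D$, and your displayed relation has both coefficients equal to one and forces nothing. This is exactly the complementary-spectrum loophole, and the paper records it as an explicit failure control. Your other candidate is left unspecified.

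The paper closes the loophole with three further inputs.
\begin{enumerate}
\item The CPOBC identity $[Q_1Q_2^{-1},Q_1^{-1}Q_2]=0$, equivalently $[C,Q_1CQ_1^{-1}]=0$ with $C=Q_1^{-1}Q_2$. On a complementary block its upper diagonal block is $[q^{-2}-(1-q)^{-2}]DD^\dagger$, which gives $[Q_1,Q_2]=0$.
\item With $R=Q_2Q_1^{-1}$, the two-chain transitions are $Q_2$, $(\mathbf 1-Q_1)R$ and $\mathbf 1-R$. Nonsingularity of the timid transition $\mathbf 1-R$ gives $1\notin\operatorname{spec}R$. A three-generator CPOBC relation gives $[R,Q_n]=0$.
\item Self-adjointness of the two-precursor transition $A_n^{(2)}=SQ_nQ_2^{-1}$, with $S=\mathbf 1-(2\mathbf 1-Q_1)R$. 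On joint spectral projections of $(Q_1,R)$ this yields $(2q-1)(1-r)\,P_{q,r}Q_nP_{1-q,r}=0$.
\end{enumerate}
The factor $1-r$ is why nonsingularity of the two-element chain is indispensable. Your proposal never invokes it.

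By contrast, the propagation step you flagged as the main obstacle is short once $[Q_1,Q_n]=0$ holds. The gregarious identity $Q_nQ_1^{-1}Q_k=Q_kQ_1^{-1}Q_n$ then gives $[Q_n,Q_k]=0$, and the nonsingular central-generator lemma \cite{SrivastavaSurya2026} finishes. Your induction ``through commuting $Q_j$'s'' presupposes that pairwise commutation instead of deriving it.

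\textbf{Smaller gap: part 4, and through it the early-stage reduction in part 3.} Linearizing identities that hold identically in $\bm t$ shows that every coupling tangent $\sum_\ell c_\ell\partial_{t_\ell}a_e$ solves the self-extension equations. It does not show that every solution has this form, which is what \emph{uniquely} asserts. The paper gets the converse by noting that the scalar CSG classification uses only sums, products and inverses of gregarious transitions. It therefore base-changes to $\mathbb C[\varepsilon]/(\varepsilon^2)$, where strict positivity makes every denominator a unit, and yields dual couplings $t_\ell+\varepsilon c_\ell$ that are consistent under restriction. The clause ``splits iff all $c_\ell=0$'' additionally needs the nonzero nilpotent entry $-c_k/\Lambda_k^2$ of the gregarious transition from the $k$-antichain.

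\textbf{Further omissions.} In part 3 you still need the cases $k=1,2$. These come from the triangular separator, whose two numerator factors are strictly negative for positive scalar transitions. You also need to propagate diagonality beyond stage $k$ via the gregarious identity, atomization, Bell causality and MSR. In part 5, the exceptional components are the zero set of the determinant of the $R_3$-stage rows. They survive that stage and are eliminated only at the following one, not by the $a_3$ equations.
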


\begin{proof}
Put \(F=Q_1(\mathbf1-Q_1)\).  Equation~\eqref{eq:bcgr-hermiticity}
gives \([F,Q_n]=0\).  A second exact CPOBC relation is
\begin{equation}
 [Q_1Q_2^{-1},Q_1^{-1}Q_2]=0.
\label{eq:bcgr-q1q2}
\end{equation}
The following elementary block argument removes the former sign
restriction.  Let \(A,B\) be invertible self-adjoint matrices satisfying
\[
 [A(\mathbf1-A),B]=0,\qquad [AB^{-1},A^{-1}B]=0 .
\]
On a spectral subspace of \(A(\mathbf1-A)\), the spectrum of \(A\) contains
at most the complementary pair \(q,1-q\).  In the nontrivial case write
\[
 A=\begin{pmatrix}q\mathbf1&0\\0&(1-q)\mathbf1\end{pmatrix},
 \qquad
 B=\begin{pmatrix}U&D\\D^\dagger&V\end{pmatrix}.
\]
With \(C=A^{-1}B\), the second commutator is equivalent to
\([C,ACA^{-1}]=0\).  Its upper diagonal block is
\[
 \bigl[q^{-2}-(1-q)^{-2}\bigr]DD^\dagger .
\]
Invertibility excludes \(q=0,1\); for \(q\ne1/2\) the coefficient is
nonzero and hence \(D=0\), while at \(q=1/2\) the two eigenvalues already
coincide.  Thus \([A,B]=0\).  Applying this lemma to \(Q_1,Q_2\) proves
\([Q_1,Q_2]=0\).

Set \(R=Q_2Q_1^{-1}\).  It is self-adjoint and commutes with \(Q_1\).
The three transitions from the two-element chain reduce to
\begin{equation}
 G_2=Q_2,\qquad
 V_2=(\mathbf1-Q_1)R,\qquad
 T_2=\mathbf1-R .
\label{eq:bcgr-chain}
\end{equation}
Nonsingularity of \(T_2\) gives \(1\notin\operatorname{spec}R\), while the
three-generator CPOBC relation gives \([R,Q_n]=0\).

It remains to exclude mixing of complementary \(Q_1\) eigenvalues.  The
two-element-antichain timid transition is
\[
 S=A_2^{(2)}=\mathbf1-(2\mathbf1-Q_1)R,
 \qquad
 A_n^{(2)}=S Q_nQ_2^{-1}.
\]
Resolve the commuting pair \((Q_1,R)\) into joint spectral projections
\(P_{q,r}\).  The relations \([F,Q_n]=[R,Q_n]=0\) permit an off-diagonal
block only between \((q,r)\) and \((1-q,r)\).  Hermiticity of
\(A_n^{(2)}\), after multiplication by the invertible scalar denominators,
yields
\begin{align}
 &\bigl[q\{1-(2-q)r\}
 -(1-q)\{1-(1+q)r\}\bigr]\nonumber\\
 &\hspace{23mm}\times
 P_{q,r}Q_nP_{1-q,r}=0 .
\label{eq:bcgr-separator}
\end{align}
The coefficient factors exactly as
\((2q-1)(1-r)\).  It is nonzero for \(q\ne1/2\), because \(r\ne1\);
for \(q=1/2\), \(Q_1\) is scalar on the block.  Thus
\([Q_1,Q_n]=0\) for every \(n\), and the nonsingular CPOBC
central-generator lemma propagates commutativity to every transition
\cite{SrivastavaSurya2026}.

For the second statement, exact marked-isomorphism enumeration gives
\[
 8\ {\rm Markov}+26\ {\rm DGC}
 +35\ {\rm unequal\ CPOBC}+10\ {\rm equal\ CPOBC}.
\]
At a scalar reference the linearization \(J\) acts entrywise.  Its
three-dimensional kernel supplies
\(X_e=\sum_{\alpha=1}^3(n_\alpha)_eH_\alpha\).  Symmetric quadratic terms
are tangent to the three-parameter scalar solution manifold.  Modulo
\(\operatorname{im}J\), the remaining second-order equation is
\[
 \sum_{\alpha<\beta}
 [W_{\alpha\beta}]\otimes[H_\alpha,H_\beta]=0.
\]
The augmented rank in Eq.~\eqref{eq:bcgr-ranks} makes the three wedge classes
independent, so all three commutators vanish.  A certified nonzero
\(30\times30\) minor is
\[
 -\frac{1099511627776}{582076609134674072265625}.
\]
For the triangular ansatz, all diagonal equations hold by scalar
covariance.  Through four events a scalar CSG character uses the complete
three-jet \(\bm t=(t_1,t_2,t_3)\), with \(t_0=1\):
\[
 \lambda(v,m;\bm t)
 =\sum_{k=m}^{v}\binom{v-m}{k-m}t_k,\qquad
 a(c,P;\bm t)
 =\frac{\lambda(|P|,m(P);\bm t)}
 {\lambda(|c|,0;\bm t)}.
\]
Write the first three triangular antichain generators as
\[
 Q_i=\begin{pmatrix}a_i&x_i\\0&b_i\end{pmatrix}.
\]
An exact lower CPOBC residual has upper-right entry
\begin{align}
0={}&
\frac{(a_1a_2-a_1b_2-a_2)(a_2a_3-a_2b_3-a_3)}
{a_1a_2b_1b_2}\nonumber\\
&\times\big[(a_1-b_1)x_2-(a_2-b_2)x_1\big].
\label{eq:bcgr-triangular-separator}
\end{align}
For positive scalar transitions, both numerator factors are strictly
negative:
\[
a_1(a_2-b_2)-a_2<0,\qquad
a_2(a_3-b_3)-a_3<0.
\]
Thus \([Q_1,Q_2]=0\).  If the characters first differ at \(t_1\) or
\(t_2\), one global triangular similarity diagonalizes this lower pair,
and an exact \(22\times22\) late block has determinant
\begin{align}
D_1={}&
\frac{t_{L2}t_{R1}^{5}t_{R2}^{2}
(t_{L1}-t_{R1})(t_{R1}+t_{R2})^2}
{(1+t_{L1})(1+t_{R1})^{10}
(1+2t_{L1}+t_{L2})(1+2t_{R1}+t_{R2})^{12}},
\label{eq:bcgr-minor-t1}\\
D_2={}&
-\frac{t_1^{7}t_{R2}^{2}(t_1+t_{R2})^2(t_{L2}-t_{R2})}
{(1+t_1)^{10}(1+2t_1+t_{L2})
(1+2t_1+t_{R2})^{13}},
\label{eq:bcgr-minor-t2}
\end{align}
respectively.  The first expression is used when
\(t_{L1}\ne t_{R1}\); the second when \(t_{L1}=t_{R1}=t_1\) and
\(t_{L2}\ne t_{R2}\).  All undisplayed factors are positive on the
declared domain.

When the first difference is \(t_3\), the two characters agree through
stage two and the early extension coordinates cannot be removed by
similarity.  Exact block elimination instead gives
\begin{align}
D_{3,\mathrm{late}}
&=\frac{t_1^5t_2^2(t_1+t_2)^2}
{(1+t_1)^{10}(1+2t_1+t_2)^{12}},\\
D_{3,\mathrm{Schur}}
&=-\frac{t_1^3t_2(t_{L3}-t_{R3})^2}
{(1+t_1)^2(1+2t_1+t_2)}
\nonumber\\
&\hspace{7mm}\times
\frac{1}
{(1+3t_1+3t_2+t_{L3})^2
(1+3t_1+3t_2+t_{R3})^2}.
\label{eq:bcgr-minor-t3}
\end{align}
Their product is a nonzero \(29\times29\) minor.  The three
first-difference strata exhaust all distinct positive three-jets.  Since
the global-similarity coboundary is always present, the rank is exactly 29
and the kernel is exactly its span.

At five events there are 131 marked transitions and 820 residual
equations.  Suppose first that the characters differ among
\(t_1,t_2,t_3\).  The four-event restriction just proved is a global
coboundary.  Subtract it from all stages; the first 30 off-diagonal
coordinates vanish.  The remaining \(820\times101\) late matrix has exact
\(101\times101\) minors
\begin{align}
E_1={}&
\frac{t_{L2}t_{R1}^{37}t_{R2}^{13}t_{R3}^{3}
(t_{L1}-t_{R1})(t_{R1}+t_{R2})^{13}}
{(1+t_{L1})(1+t_{R1})^{44}
(1+2t_{L1}+t_{L2})(1+2t_{R1}+t_{R2})^{56}}
\nonumber\\[-1mm]
&\times
\frac{(t_{R2}+t_{R3})^8(t_{R1}+2t_{R2}+t_{R3})^6}
{(1+3t_{R1}+3t_{R2}+t_{R3})^{60}},
\label{eq:bcgr-five-t1}\\
E_2={}&-
\frac{t_1^{39}t_{R2}^{13}t_{R3}^{3}
(t_{L2}-t_{R2})(t_1+t_{R2})^{13}}
{(1+t_1)^{44}(1+2t_1+t_{L2})
(1+2t_1+t_{R2})^{57}}
\nonumber\\[-1mm]
&\times
\frac{(t_{R2}+t_{R3})^8(t_1+2t_{R2}+t_{R3})^6}
{(1+3t_1+3t_{R2}+t_{R3})^{60}},
\label{eq:bcgr-five-t2}\\
E_3={}&-
\frac{t_1^{38}t_2^{14}t_{R3}^{3}
(t_1+t_2)^{13}(t_2+t_{R3})^8}
{(1+t_1)^{44}(1+2t_1+t_2)^{56}}
\nonumber\\[-1mm]
&\times
\frac{(t_{L3}-t_{R3})(t_1+2t_2+t_{R3})^6}
{(1+3t_1+3t_2+t_{L3})
(1+3t_1+3t_2+t_{R3})^{61}} .
\label{eq:bcgr-five-t3}
\end{align}
The first formula applies when \(t_1\) first differs, the second when
\(t_1\) agrees and \(t_2\) first differs, and the third when \(t_1,t_2\)
agree and \(t_3\) first differs.  Every factor other than the declared
coupling difference is strictly positive.  These minors are independent of
\(t_{L4},t_{R4}\), so no stage-four coupling can restore a late extension.

It remains to treat a first difference in \(t_4\).  Put
\[
\Lambda_3=1+3t_1+3t_2+t_3,\qquad
\Lambda_{4\alpha}=1+4t_1+6t_2+4t_3+t_{\alpha4}.
\]
Omit one nonzero stage-four coboundary column.  A 100-dimensional late
block has determinant
\[
D_{4,\mathrm{late}}=
\frac{t_1^{37}t_2^{13}t_3^3(t_1+t_2)^{13}
(t_2+t_3)^8(t_1+2t_2+t_3)^6}
{(1+t_1)^{44}(1+2t_1+t_2)^{56}\Lambda_3^{60}}.
\]
Solving this block exactly and substituting into 30 remaining rows gives a
Schur determinant
\[
D_{4,\mathrm{Schur}}=
\frac{t_1^9t_2^4(t_1+t_2)^2(t_{L4}-t_{R4})^3}
{(1+t_1)^{12}(1+2t_1+t_2)^{13}
\Lambda_{4L}^3\Lambda_{4R}^3}.
\]
The first block is removed by 100 degree-one row pivots; 23 further leaf
pivots reduce the Schur block to a \(7\times7\) core.  Their product is a
nonzero \(130\times130\) minor,
\begin{equation}
\begin{split}
D_4={}&
\frac{t_1^{46}t_2^{17}t_3^3(t_1+t_2)^{15}
(t_2+t_3)^8(t_1+2t_2+t_3)^6}
{(1+t_1)^{56}(1+2t_1+t_2)^{69}\Lambda_3^{60}}\\
&\times
\frac{(t_{L4}-t_{R4})^3}
{\Lambda_{4L}^3\Lambda_{4R}^3}.
\end{split}
\label{eq:bcgr-five-t4}
\end{equation}
Thus every first-difference stratum has rank at least 130.  The global
similarity coboundary makes the rank at most 130, proving the third
statement through five events.

At six events the catalog contains
\[
2+6+22+101+576=707
\]
marked transitions and 9680 residual equations: 87 operator Markov sums,
4826 covariance equations, 3483 unequal-size CPOBC equations, and 1284
equal-size CPOBC equations.  If the first unequal coupling is
\(t_j\), \(j\le4\), the five-event restriction is already split.
Subtracting its global-similarity coboundary leaves the 576 stage-five
coordinates.  Exact \(576\times576\) late minors close all four strata.
The \(t_2\) minor is completely row-leaf eliminable; the \(t_1,t_3,t_4\)
minors have 573 degree-one column pivots and a \(3\times3\) core.  For
example, the \(t_3\) core is
\[
\begin{pmatrix}
-t_1/\Lambda_1&1/\Lambda_1&0\\
-t_2/\Lambda_2&0&1/\Lambda_2\\
0&-t_2/\Lambda_{3L}&t_1/\Lambda_{3R}
\end{pmatrix},
\]
where
\[
\Lambda_1=1+t_1,\quad
\Lambda_2=1+2t_1+t_2,\quad
\Lambda_{3\alpha}=1+3t_1+3t_2+t_{\alpha3}.
\]
Its determinant is proportional to \(t_{L3}-t_{R3}\); the \(t_1\) and
\(t_4\) cores have the same staircase form.  Multiplication by the exact
leaf factors gives
\[
F_j=(t_{Lj}-t_{Rj})\,P_j/Q_j,\qquad j=1,\ldots,4,
\]
with \(P_j,Q_j\) products of strictly positive couplings, consecutive
coupling sums, and CSG normalization polynomials.  Their complete
factorizations and row lists are retained in the supporting material
\cite{XuBellCausalGrowthRigidity2026}.

For the only remaining stratum, put
\[
t_{L1}=t_{R1}=t_1,\ldots,t_{L4}=t_{R4}=t_4,
\qquad t_{L5}\ne t_{R5},
\]
and define
\begin{align*}
\Lambda_1&=1+t_1,&\Lambda_2&=1+2t_1+t_2,\\
\Lambda_3&=1+3t_1+3t_2+t_3,\\
\Lambda_4&=1+4t_1+6t_2+4t_3+t_4,\\
\Lambda_{5\alpha}&=1+5t_1+10t_2+10t_3+5t_4+t_{\alpha5}.
\end{align*}
After one nonzero coboundary coordinate is omitted, 575 row leaves
eliminate the remaining late columns.  Four obstruction rows close the
four early tangent directions.  After scaling out
\((t_{L5}-t_{R5})/(\Lambda_{5L}\Lambda_{5R})\), their determinant is
\[
\frac{t_1^4t_2^2t_3}
{\Lambda_1\Lambda_2^3\Lambda_3^2\Lambda_4^2}>0.
\]
Restoring the four scaled factors and the late determinant gives the
nonzero \(706\times706\) minor
\begin{equation}
\begin{split}
D_5={}&-\frac{t_1^{304}t_2^{105}t_3^{26}t_4^4
(t_1+t_2)^{98}(t_2+t_3)^{71}(t_3+t_4)^{19}}
{\Lambda_1^{269}\Lambda_2^{309}\Lambda_3^{355}\Lambda_4^{361}}\\
&\times
\frac{(t_1+2t_2+t_3)^{49}(t_2+2t_3+t_4)^{40}}
{\Lambda_{5L}^4\Lambda_{5R}^4}\\
&\times
(t_1+3t_2+3t_3+t_4)^{22}(t_{L5}-t_{R5})^4 .
\end{split}
\label{eq:bcgr-six-t5}
\end{equation}
Every nondifference factor is positive.  Thus each of the five
first-difference strata has rank at least 706.  The coboundary supplies a
null vector, so the rank is exactly 706 and its span is the full kernel.

The six-event result is the first complete four-tangent audit of an
all-order antichain obstruction.  To derive the latter, let \(t_k\) be the
least unequal coupling and subtract the global-similarity coboundary so that
the upper-right entry of \(Q_k\) vanishes.  Below stage \(k\), the two
diagonal characters coincide and every triangular transition belongs to
the dual-number algebra
\(\mathbb C[\varepsilon]/(\varepsilon^2)\).  The nonsingular scalar CSG
classification uses only sums, products, and inverses of gregarious
transitions, so it base-changes to this algebra.  Consequently every early
coordinate is uniquely
\[
y_e=\sum_{\ell=1}^{k-1}c_\ell\partial_{t_\ell}a_e(\bm t).
\]

For \(k\ge3\), set
\(\Lambda_n=1+\sum_{j=1}^n\binom njt_j\).  The antichain pivots
\[
A_k^{(r)}A_r^{(0)}=A_r^{(r)}A_k^{(0)}
\]
solve the needed stage-\(k\) coordinates.  Substitute them into
\[
[A_k^{(1)},A_2^{(1)}]=0,\qquad
A_k^{(2)}A_2^{(1)}=A_2^{(2)}A_k^{(1)},
\]
and
\[
A_k^{(r)}A_{r+1}^{(1)}
=A_{r+1}^{(r)}A_k^{(1)},\qquad2\le r\le k-2.
\]
After scaling by
\(\Lambda_{kL}\Lambda_{kR}/(t_{Lk}-t_{Rk})\), one row interchange makes
the resulting \((k-1)\)-square matrix lower triangular, with diagonal
\[
-\frac{t_2}{\Lambda_1\Lambda_2},\quad
\frac{t_1^2}{\Lambda_2^2},\quad
\frac{t_1t_2}{\Lambda_3^2},\ldots,
\frac{t_1t_{k-2}}{\Lambda_{k-1}^2}.
\]
This proves Eq.~\eqref{eq:bcgr-allorder-det} and removes all early
tangents.  The cases \(k=1,2\) follow from
Eq.~\eqref{eq:bcgr-triangular-separator}.  The gregarious identity
\[
Q_nQ_1^{-1}Q_k=Q_kQ_1^{-1}Q_n\qquad(1<k<n)
\]
then diagonalizes every later antichain generator for \(k\ge2\); the
\(k=1\) case uses the complementary pair
\([Q_n,Q_2Q_1^{-1}]=0\) and
\([A_n^{(1)},A_2^{(1)}]=0\).  Atomization, Bell causality against the
gregarious alternative, and MSR propagate diagonality to all transitions.
Restoring the coboundary proves the all-order statement.

The symbolic archive derives the complete \(\mathcal O_k\) through
\(k=12\) and reproduces the six-event \(\mathcal O_5\) entry by entry.  A
second implementation using only exact rational arithmetic verifies
normalization cancellation, triangularity, Eq.~\eqref{eq:bcgr-allorder-det},
and positivity through \(k=40\).  These checks audit the formula; the
all-\(k\) conclusion follows from the displayed triangular factorization.

For the coincident-character branch, identify the upper-triangular algebra
with the dual numbers
\(\mathbb D=\mathbb C[\varepsilon]/(\varepsilon^2)\) through
\(\varepsilon\mapsto E_{12}\).  Strict positivity makes every denominator
in the scalar CSG induction a unit of \(\mathbb D\).  Because that induction
uses only sums, products, and inverses of gregarious transitions, it
base-changes to \(\mathbb D\): through parent size \(n\), the unique dual
couplings are
\[
 t_\ell+\varepsilon c_\ell,\qquad1\le\ell\le n.
\]
Restriction from \(n+1\) to \(n\) preserves the coefficients, giving one
infinite sequence \((c_\ell)_{\ell\ge1}\), and differentiation gives
Eq.~\eqref{eq:bcgr-equal-character}.  The image of \(\mathbb D\) is
commutative.  If \(c_k\) is the first nonzero coefficient, the nilpotent
entry of the gregarious transition from the \(k\)-antichain is
\(-c_k/\Lambda_k^2\ne0\), so the Jordan extension is nonsplit; the converse
is immediate.  An independent exact-rational implementation evaluates
47\,904 dual-number transition types through parent size 64.  The complete
four- and five-event systems have nullities three and four, exactly their
coupling-tangent dimensions, and all five tangent columns annihilate the
9680 six-event residuals.

For the irreducible two-dimensional branch, the gregarious identity makes
all \(R_n=Q_1^{-1}Q_n\) commute.  If \(R_2\) is a non-scalar Jordan block,
\([Q_1Q_2^{-1},Q_1^{-1}Q_2]=0\) makes \(Q_1\) preserve its unique
eigenline.  If \(R_2\) has simple spectrum, irreducibility leaves only
\[
 Q_1=\begin{pmatrix}0&1\\p&0\end{pmatrix},\qquad
 R_2=\operatorname{diag}(r,s),
\quad prs(r-s)(p-1)\ne0 .
\]
Writing \(R_3=\operatorname{diag}(u,v)\), two rows of the next CPOBC
relation have determinant
\begin{equation}
 -\frac{p^2(r-s)^2(r+s)[p(r^2+s^2)-rs]}
 {rs^2(p-1)}.
\label{eq:bcgr-early-irred}
\end{equation}
The first exceptional component \(s=-r\) forces \(u=v\).  At the following
stage the required two-precursor transition has determinant
\(-u^2D/r^2\), \(D=9pr^2-4r^2+1\), while two response rows have determinant
\(48pu^2/D\); nonsingularity therefore makes the latter full rank and
forces the next generator to vanish.  The other component has
\(p=rs/(r^2+s^2)\), \(v=-su/r\); away from the overlap \(r+s=0\), the
following two-row determinant is
\[
 \frac{2u^2(r-s)(r+s)^2}{r^3}\ne0.
\]
It gives the same contradiction.  A symbolic certificate reconstructs
these factorizations exactly.  This excludes non-scalar onset at \(R_2\),
not a delayed-onset branch with scalar \(R_2\).

For the surviving scalar ratio, \(Q_2=hQ_1\).  The exact transitions from
the two-element chain are
\begin{equation}
 G_2=hQ_1,\qquad V_2=h(\mathbf1-Q_1),\qquad
 T_2=(1-h)\mathbf1 .
\label{eq:bcgr-delayed-chain}
\end{equation}
Thus nonsingularity gives \(h\ne0,1\), eliminating every antichain
candidate on the \(h=1\) branch.  We next restrict
Eq.~\eqref{eq:bcgr-delayed-component} to the trace-balanced slice
\(d=a,\ x=-1\).  Direct substitution into all twelve
determinant-cleared stage-three residuals gives the common factor
\((y-1)(q-a^2)\).  On \(y\ne1\), nonsingularity permits removal of
\(q-a^2\), and two quotient rows give
\[
(2a-1)(a^2-a+q)^2=0,\qquad
(2a-1)^2(a^2-a+q)=0.
\]
The remaining rows force exactly \(E_1\) or \(E_2\) as stated in the
theorem; \(y=1\) is \(C_0\).  Direct factorization verifies sufficiency, so
these components exhaust the trace-balanced subchart over
\(\mathbb Q\).  The attempted exact elimination of \(d-a\) and \(x+1\)
from the full saturated first-simple-spectrum ideal did not produce a
finite characteristic-zero certificate; the complementary part of that
chart therefore remains open.

On \(C_0\), a putative next relative generator commutes with \(R_3\); write
\(R_5=\operatorname{diag}(1,z)\).  For three stage-four residuals
\(f_1,f_2,f_3\), with
\begin{equation}
 \Delta_3=-a^2h^2+6ah^2-6ah+h^2q-8h^2+18h-9 ,
\end{equation}
exact characteristic-zero reduction gives
\begin{equation}
 \operatorname{NF}_{I}\!\left[z(q-a^2)\Delta_3\right]\ne0,\qquad
 \operatorname{NF}_{I}\!\left([z(q-a^2)\Delta_3]^2\right)=0,
 \quad I=\langle f_1,f_2,f_3\rangle .
\label{eq:bcgr-delayed-radical}
\end{equation}
Hence every common zero has
\(z(q-a^2)\Delta_3=0\).  Yet \(z\ne0\),
\(q-a^2=-\det Q_1\ne0\), and
\(\Delta_3/h^2=\det A_3^{(3)}\ne0\) on the nonsingular branch, a
contradiction.  SymPy and Singular independently reproduce both normal
forms; saturation by all transition determinants gives the unit ideal.

On \(E_1,E_2\), a next-stage residual first forces \(y=\pm1\); \(y=1\)
returns to \(C_0\).  At \(y=-1\), two residuals force
\[
L_1=(2h-3)z-4h+3=0,\qquad
L_2=(3a-2)z-3a+1=0,
\]
respectively.  Their resultants with a third residual core are
\[
-(2h-3)(9h^2-29h+18),\qquad
(3a-2)(26a^2-31a+8).
\]
The linear leading factors cannot vanish with \(L_1,L_2\), while the
quadratics are required stage-three transition determinants.  Both
components are impossible.  Independent full saturated stage-four ideals
are the unit ideal over \(\mathbb Q\).

Transitive percolation is the subfamily \(t_k=t^k\).  With
\(p=t/(1+t)\), one compact corollary is
\[
 p_L^3p_R^{13}(p_R-p_L)^2(1-p_R)^{22}
 =
 \frac{t_L^3t_R^{13}(t_R-t_L)^2}
 {(1+t_L)^5(1+t_R)^{37}} .
\]
At the archived rational check pair
\((t_L,t_R)=(1/4,1/2)\), this minor is
\[
 \frac{16777216}{1407137205909366759375}.
\]
\end{proof}

Equation~\eqref{eq:bcgr-hermiticity} alone cannot prove the first part.  For
\(Q_1=\operatorname{diag}(\lambda,1-\lambda)\), one positive off-diagonal
\(Q_n\) can make Eq.~\eqref{eq:bcgr-one-precursor} positive while
\([Q_1,Q_n]\ne0\).  This local example is a failure control for the
one-precursor argument, not a complete CPOBC growth law: the second
antichain-generator identity, nonsingularity of the chain timid transition,
and the two-precursor self-adjointness condition supply the missing
separators.  The second part excludes a noncommuting
\emph{leading} tangent through second order; it does not exclude
higher-order contact or a disconnected irreducible component in the
non-self-adjoint or singular branches.  The third part applies to every
pair of distinct strictly positive scalar CSG coupling sequences,
irrespective of the stage at which they first differ.  It does not cover
zero/singular coupling boundaries, irreducible nontriangular
representations, or higher-dimensional noncommutative nilpotent structure.
The fourth part closes the positive equal-character \(2\times2\)
upper-triangular branch.  The fifth closes non-scalar onset in an
irreducible nonsingular \(2\times2\) branch.  The sixth removes the scalar
value \(h=1\) and exhausts the trace-balanced \(d=a,\ x=-1\) subchart at
\(R_3\), but does not classify the rest of the first-simple-spectrum chart,
Jordan onset, or first onset after \(R_3\).

This proves A11BR as a boundary theorem.  It narrows the square-operator
search to non-self-adjoint nonsingular transitions, zero/singular
couplings, state-only covariance, higher-dimensional noncommutative
nilpotent thickenings, the unclassified part of the normalized
first-simple-spectrum chart, and the remaining delayed-onset irreducible
nontriangular components with scalar \(R_2=h\mathbf1\), \(h\ne0,1\),
including Jordan onset and later first onset.
It neither constructs nor excludes such a representation.  Nor does it
provide an infinite noncommutative CPOBC measure with consistent cylinder
marginals, a Lorentzian continuum, or geometry backreaction.

\subsection{Falsifying deformations and the remaining assumption}

Three controls isolate the mechanism.  With only one internal mode,
\(\Str M^2=-\Delta\), so the logarithmic cutoff dependence remains.
Erasing parity replaces the supertrace by a positive trace with
\(\Tr\mathbf1=4\), restoring the quadratic divergence.  Most importantly,
the lowest number-conserving nonadditive deformation,
\begin{equation}
 M_\eta^2=x+\Delta N+\eta N(N-1),
 \label{eq:ph-deformation}
\end{equation}
gives
\begin{equation}
 \Str M_\eta^2=2\eta .
 \label{eq:ph-deformation-moment}
\end{equation}
The free result is therefore structural but not generically
interaction-stable.  A partial Ward identity or renormalization-group
invariant manifold must forbid or compensate the \(N(N-1)\) term without
enforcing the complete pairing ruled out by
Theorem~\ref{thm:ph-pairing-obstruction}.

This distinction separates eight statements.  A11BM is proved by the
positive finite-graph multiplet.  A11BW is proved by
Theorem~\ref{thm:selective-ladder-closure} and the finite geometry channel:
the required moments are selectively protected for arbitrary finite
positive interacting \(A_\omega\), with a quantitative defect bound, and
the resulting weights have a local positive dynamics.  A11PG is proved by
Theorem~\ref{thm:pcg-projective-law}: on a supplied naturally labelled
causal-birth language, those weights generate compatible finite cylinder
laws and an infinite positive history measure.  A11CG is proved by
Theorems~\ref{thm:cqg-positive-growth}
and~\ref{thm:cqg-complex-extension}: the response-generated contact
sequence is DGC and Bell causal, and one \(r\geq2\) coherent subfamily has a
countably additive strongly positive rank-one quantum measure.  A11NH is
proved by Theorem~\ref{thm:nonabelian-history-extension}: on a supplied
two-port strip, a fresh-ancilla local unitary generates an extendible
rank-four non-Abelian quantum measure.  A11CH is proved by
Theorem~\ref{thm:covariant-nonabelian-holonomy}: on arbitrary CSG causal
histories, intrinsic fresh-event gates define a covariant countably
additive non-Abelian operator measure and strongly positive decoherence
functional.  A11OC is proved by
Theorem~\ref{thm:central-operator-growth-classification}: central
operator-valued CSG has an infinite POVM and exact covariance, but is
classified as a coherence-blind direct integral of scalar growth laws; the
positive noncentral control proves that one-step normalization alone does
not preserve DGC.  A11BR is proved by
Theorem~\ref{thm:bcgr-rigidity}: every finite-dimensional self-adjoint
nonsingular CPOBC system classicalizes, without a sign assumption and
including complementary-spectrum blocks; independently, the tested scalar
locus has no noncommuting leading tangent, and every \(2\times2\)
upper-triangular extension between distinct strictly positive scalar CSG
coupling sequences splits, at arbitrary first-difference order, while every
coincident-character \(2\times2\) self-extension is a commuting coupling
tangent; an irreducible nonsingular \(2\times2\) branch must have scalar
\(R_2=h\mathbf1\), with \(h\ne0,1\), and the trace-balanced
\(d=a,\ x=-1\) subchart of the normalized first-simple-spectrum chart at
\(R_3\) is exhausted and obstructed at the next antichain stage.
These are exact exclusions, not an existence theorem or a no-go
theorem for non-self-adjoint, singular, higher-dimensional noncommutative
nilpotent, the remainder of the first-simple-spectrum chart, the remaining
delayed-onset irreducible nontriangular, or state-only formulations.
A11B remains open and is stronger:
the \emph{same interacting Lorentzian
dynamics} must generate the relational records, growth language, topology,
contact response, and arity, replace the external classical CSG density by
a genuinely quantum backreacting geometry law satisfying the desired
operator locality/constraint conditions; derive rather than declare the
protected commutant; produce a
tight Lorentzian continuum measure; satisfy spin--statistics and anomaly
freedom; and couple consistently to nonlinear gauge constraints and graph
backreaction.

\subsection{Operational completion criterion for nonperturbative gravity}
\label{sec:nonperturbative-completion-criterion}

To prevent a growing list of model closures from being mistaken for a
nonperturbative definition, we record the minimum completion target used by
the present program.  A refining family \(\{\mathfrak Q_N\}\) will count as
a nonperturbative RQCP-QG completion only if one construction supplies all
of the following, with compatible orders of limits:
\begin{enumerate}[label=(NP\arabic*)]
\item a normalized positive microscopic process or state on each finite
  regulator, with relational rather than external time;
\item autonomous fluctuating adjacency, causal order, additive physical
  volume, and a tight nondegenerate probability law on the resulting
  geometries;
\item a projectively consistent continuum net of local observables with a
  positive Lorentzian reconstruction or an equivalent constructive
  continuation;
\item anomaly-free nonlinear diffeomorphism constraints, expressed by a
  closed BV/BRST master equation or an equivalent exact relational Ward
  algebra;
\item regulator-independent relational observables, including matter
  correlators and the Newton response, with graph topology changes,
  backreaction, and caustic sectors controlled rather than excluded;
\item a semiclassical regime in which the same observables recover
  nonlinear Einstein--matter dynamics with an error that vanishes
  uniformly on a stated physical domain.
\end{enumerate}
Every item is a simultaneous requirement, not a menu.  The present series
has explicit finite or quadratic model realizations of parts of (NP1),
(NP2), (NP4), and (NP6), and the regulator theorem plus the present
multiplet and selective-Ward channel close one finite interacting,
fixed-language contribution to (NP5).  The projective-growth channel closes
the positive finite-to-infinite probability part of (NP2), and the
covariant-growth companion removes the natural-label/Bell defect and
supplies one extendible rank-one coherent measure.  The fixed-strip
companion separately supplies an extendible local-unitary non-Abelian
measure.  The causal-holonomy companion combines arbitrary-causet CSG
covariance with an intrinsic noncommuting operator density and proves a
countably additive measure whose integrated range remains non-Abelian.
The central operator-growth classification supplies an infinite geometry
POVM but proves that its probabilities are only direct mixtures; its
positive noncentral control fails DGC.  The CPOBC rigidity theorem further
excludes the complete finite-dimensional self-adjoint nonsingular branch
without a sign assumption, every noncommuting leading tangent at one exact
four-event scalar point, and every nonsplit \(2\times2\)
upper-triangular extension between distinct strictly positive scalar CSG
coupling sequences; it also classifies every coincident-character
\(2\times2\) self-extension as a commuting coupling tangent.  It leaves
non-self-adjoint nonsingular, zero/singular-coupling, state-only,
higher-dimensional noncommutative nilpotent, delayed-onset irreducible
nontriangular components with scalar
\(R_2=h\mathbf1\), \(h\ne0,1\), Jordan/later onset, and infinite branches
open.
These results do not autonomously
select the language, contact sequence, or arity, produce a genuinely
noncentral operator-Markov/Bell quantum-backreacting law, or provide
continuum tightness.
The series has no single
construction satisfying (NP1)--(NP6), no projectively tight interacting
Lorentzian continuum measure, and no nonlinear anomaly-free continuum
constraint algebra.  Accordingly these bridges shorten the open cut but do
not constitute nonperturbative quantum gravity.

\appendix
\clearpage
\section{Finite-dimensional RQCP and projection logic}\label{app:finite}

This appendix records the finite-dimensional algebraic lemmas that motivate concluding exact events must correspond to center projections. Let $\Hh$ be a finite-dimensional Hilbert space and let $\A=B(\Hh)$. A sharp yes/no experimental proposition is represented by an orthogonal projection $P$ satisfying $P=P^*=P^2$. In a commutative projection algebra, logical conjunction corresponds naturally to operator multiplication. However, in a noncommutative algebra, this association breaks down.

\begin{lemma}[Product of projections]\label{lem:projection-product}
Let $P,Q\in B(\Hh)$ be orthogonal projections. Then $PQ$ is an orthogonal projection if and only if $PQ=QP$.
\end{lemma}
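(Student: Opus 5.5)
The plan is to prove the two implications separately. Throughout I use only the defining relations \(P=P^*=P^2\) and \(Q=Q^*=Q^2\), together with the characterization of an orthogonal projection as a self-adjoint idempotent.

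For the direction \(PQ=QP\Rightarrow PQ\) is a projection, I will verify the two properties directly. Self-adjointness is immediate:
\[
(PQ)^*=Q^*P^*=QP=PQ .
\]
Idempotence follows from reordering the middle factors with the commutation relation and then using idempotence of each factor:
\[
(PQ)^2=PQPQ=PPQQ=P^2Q^2=PQ .
\]
Hence \(PQ\) is a self-adjoint idempotent, i.e.\ an orthogonal projection.

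For the converse, I assume \(PQ\) is an orthogonal projection. In particular \(PQ\) is self-adjoint, so
\[
PQ=(PQ)^*=Q^*P^*=QP ,
\]
which is exactly the commutation relation. This direction therefore uses only the self-adjointness part of being an orthogonal projection, not idempotence.

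There is no substantive obstacle. The one point to state explicitly is that "orthogonal projection" means self-adjoint idempotent, and not merely idempotent. Without self-adjointness the converse fails: there are noncommuting projections \(P,Q\) for which \(PQ\) is an oblique idempotent. I will add a remark that this lemma justifies the earlier claim that, when \(P\) and \(Q\) do not commute, \(PQ\) is not an orthogonal projection. Consequently only commuting (for example, central) projections admit the Boolean meet \(P\wedge Q=PQ\).
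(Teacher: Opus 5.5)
Your proof is correct and follows the paper's argument exactly: the forward direction uses only self-adjointness, $PQ=(PQ)^*=Q^*P^*=QP$, and the converse checks directly that commuting $P,Q$ make $PQ$ a self-adjoint idempotent. Drop the side remark, though, because it is false for orthogonal projections $P,Q$: since $\|PQ\|\le 1$ and a contractive idempotent on a Hilbert space is automatically an orthogonal projection, idempotence of $PQ$ alone already forces $PQ=QP$, so no noncommuting pair of orthogonal projections gives an oblique idempotent $PQ$ (such examples need $P$ or $Q$ itself to be oblique).
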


\begin{proof}
If $PQ$ is an orthogonal projection, it must be self-adjoint. This directly yields
\[
PQ=(PQ)^*=Q^*P^*=QP.
\]
Conversely, if $P$ and $Q$ commute ($PQ=QP$), their product is simultaneously self-adjoint:
\[
(PQ)^*=Q^*P^*=QP=PQ,
\]
and idempotent:
\[
(PQ)^2=PQPQ=P^2Q^2=PQ.
\]
This confirms $PQ$ is indeed an orthogonal projection.
\end{proof}

\begin{corollary}[No Boolean conjunction for noncommuting records]
If $[P,Q]\ne0$, then the formal product $PQ$ cannot represent the Boolean meet of $P$ and $Q$.
\end{corollary}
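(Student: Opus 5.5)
The corollary follows directly from Lemma~\ref{lem:projection-product}, and the plan is a short contrapositive argument.

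First, I will fix what it means for an operator to represent the Boolean meet of two sharp propositions in the projection lattice of $B(\Hh)$. At a minimum it must itself be a sharp proposition, that is, an orthogonal projection. In the lattice it must also coincide with $P\wedge Q$, the projection onto $\operatorname{ran}P\cap\operatorname{ran}Q$.

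Second, I will argue by contradiction. Suppose $PQ$ represents the meet of $P$ and $Q$. Then $PQ$ is an orthogonal projection. By the forward direction of Lemma~\ref{lem:projection-product}, self-adjointness of $PQ$ gives $PQ=(PQ)^*=QP$. This says $[P,Q]=0$, which contradicts the hypothesis $[P,Q]\ne0$.

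Third, for completeness I will note a fact that blocks any weaker reading of ``represents'', such as agreeing only on the common range. The projection $P\wedge Q$ always exists and satisfies $(P\wedge Q)PQ=P\wedge Q$. However, $PQ\ne P\wedge Q$ whenever $PQ\ne QP$, because $P\wedge Q$ is self-adjoint while $PQ$ is not.

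If one wants an explicit witness, the standard choice is two noncommuting rank-one projections in $\mathbb C^2$. For these, $PQ$ is a nonzero non-self-adjoint operator while $P\wedge Q=0$. This also connects to the statement after the Boolean-facts theorem that approximate records are effects rather than exact Boolean facts.

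The only delicate step is the first one: settling the definitional sense of ``represent''. Once that is fixed as being an orthogonal projection equal to the lattice meet, every remaining step is immediate from the lemma.
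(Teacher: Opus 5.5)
Your proposal is correct and follows the paper's proof: assuming $PQ$ is the meet forces it to be an orthogonal projection, and the forward direction of Lemma~\ref{lem:projection-product} then gives $PQ=QP$, contradicting $[P,Q]\ne0$. The extra observations are correct but not needed: $PQ\ne P\wedge Q$ because $P\wedge Q$ is self-adjoint, and the $\mathbb C^2$ witness.
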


\begin{proof}
If $PQ$ were the Boolean meet within a projection algebra, it would necessarily be a projection itself. By Lemma~\ref{lem:projection-product}, this would strictly require $P$ and $Q$ to commute ($PQ=QP$), establishing a contradiction.
\end{proof}

The RQCP event definition is therefore deliberately conservative: a fact is not merely an arbitrary projection in the full local algebra. Instead, it must be a projection belonging to the center of the influence algebra.

\begin{proposition}[Center projections form a Boolean algebra]\label{prop:centerboolean}
Let $M$ be a von Neumann algebra. Then $\Proj(Z(M))$ is a complete Boolean algebra equipped with the standard lattice operations:
\[
P\wedge Q=PQ,\qquad P\vee Q=P+Q-PQ,\qquad \neg P=1-P.
\]
\end{proposition}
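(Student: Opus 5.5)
The plan is to work in the commutative algebra $Z(M)$ and reduce everything to properties of commuting projections, using Lemma~\ref{lem:projection-product} for the lattice operations and monotone completeness of von Neumann algebras for completeness. First, $Z(M)=M\cap M'$ is a commutative von Neumann algebra, being an intersection of two ultraweakly closed unital $*$-subalgebras. Hence any two $P,Q\in\Proj(Z(M))$ commute. By Lemma~\ref{lem:projection-product}, $PQ$ is a projection, and it lies in $Z(M)$. The same holds for $1-P$ and for $P+Q-PQ=1-(1-P)(1-Q)$.

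Next I would check that these are the lattice operations for the order $P\le Q\iff PQ=P$. Here $PQ\le P,Q$, and if $R\le P,Q$ then $RPQ=RQ=R$, so $PQ$ is the meet. The join follows by applying De Morgan with the complement. For complementation, $P(1-P)=0$ and $P+(1-P)=1$. Distributivity, $P(Q+R-QR)=PQ+PR-PQR$, is immediate because all terms commute. This gives a Boolean algebra; equivalently, $\Proj(Z(M))$ is the Boolean algebra of idempotents of a commutative unital ring.

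Completeness is the only step needing analysis, and I expect it to be the main point. Given any family $\{P_i\}\subset\Proj(Z(M))$, take the strong limit of the increasing net of finite joins. This gives the projection onto $\overline{\operatorname{span}}\bigcup_i P_i\Hh$, which is the supremum in $\Proj(B(\Hh))$. Because the finite joins lie in $Z(M)$ and $Z(M)$ is strongly closed, the limit lies in $Z(M)$. Any central upper bound dominates it, so it is also the supremum within $\Proj(Z(M))$. Infima follow by complementation: $\bigwedge P_i=1-\bigvee(1-P_i)$.

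The one subtlety is to make sure the supremum computed in $B(\Hh)$ is the one used. Alternatively, one can cite that an abelian von Neumann algebra $L^\infty(Z,\nu)$ has a projection lattice isomorphic to the measure algebra, which is complete.
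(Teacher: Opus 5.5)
Your proof is correct and follows the same route as the paper. Commutativity of $Z(M)$ reduces the operations to the standard lattice operations for commuting projections, and completeness follows from strong limits of finite joins together with strong closedness of $Z(M)$. You also supply details the paper leaves implicit, notably the check that the supremum computed in $B(\mathcal H)$ is the least upper bound within $\Proj(Z(M))$, and the derivation of infima by complementation.
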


\begin{proof}
By definition, the center $Z(M)$ is a commutative von Neumann algebra; thus, all its internal projections naturally commute. Consequently, the specified operations correspond to the standard lattice operations for commuting projections, natively satisfying all required Boolean identities. Furthermore, arbitrary suprema of projections within any von Neumann algebra are guaranteed to exist in the strong operator topology. Because the center $Z(M)$ itself is strongly closed, these suprema inherently remain central, thereby establishing completeness.
\end{proof}

\subsection*{Finite Wigner-friend check}
Let $\Hh_L=\operatorname{span}\{|a\rangle,|b\rangle\}$ and define
\[
P_a=|a\rangle\langle a|,
\qquad
Q_+=|+\rangle\langle+|,
\qquad
|+\rangle=\frac{|a\rangle+|b\rangle}{\sqrt 2}.
\]
In the ordered basis $(|a\rangle,|b\rangle)$,
\[
P_a=\begin{pmatrix}1&0\\0&0\end{pmatrix},
\qquad
Q_+=\frac12\begin{pmatrix}1&1\\1&1\end{pmatrix}.
\]
Then
\[
[P_a,Q_+]=\frac12\begin{pmatrix}0&1\\-1&0\end{pmatrix}\ne0.
\]
Thus the friend's record proposition and the Wigner coherence proposition cannot be placed in a single exact Boolean algebra.  In RQCP language they may both be meaningful as contextual events, but they are not one global Boolean sample space.

\section{Influence algebras and stabilizers}\label{app:influence}

Let $\D$ be a family of normal linear functionals on a target von Neumann algebra $M$.  In RQCP, $\D$ is the family of response differences
\[
\delta_Y^{C;\beta}(\Phi_X,\Phi'_X)=r_Y^{C;\beta}(\Phi_X)-r_Y^{C;\beta}(\Phi'_X).
\]
Define
\[
\Stab(\D)=\{U\in\mathcal U(M):\delta(U^*AU)=\delta(A),\ \forall A\in M,\ \forall\delta\in\D\}.
\]
The influence algebra is
\[
\I(\D)=\Stab(\D)'\cap M.
\]

\begin{proposition}[Finite-dimensional density form]\label{prop:stabfinite}
Let $M=M_d(\mathbb C)$ and suppose $\delta_k(A)=\Tr(\Delta_k A)$ for matrices $\Delta_k$.  Let $S=\{\Delta_k,\Delta_k^*:k\}$.  Then
\[
\I(\D)=\vN(S).
\]
\end{proposition}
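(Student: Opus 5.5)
The plan is to compute the stabilizer explicitly and then apply the von Neumann double commutant theorem. First I will rewrite the stabilizer condition in terms of the densities. For a unitary $U$ and every $A\in M_d(\mathbb C)$,
\[
\delta_k(U^*AU)=\Tr(\Delta_kU^*AU)=\Tr(U\Delta_kU^*A).
\]
Since the trace pairing on $M_d(\mathbb C)$ is nondegenerate, $\delta_k(U^*AU)=\delta_k(A)$ for all $A$ holds if and only if $U\Delta_kU^*=\Delta_k$, that is, $[U,\Delta_k]=0$. Because $U$ is unitary, $U\Delta_kU^*=\Delta_k$ also implies $U\Delta_k^*U^*=\Delta_k^*$. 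Hence
\[
\Stab(\D)=\mathcal U(M)\cap S'.
\]

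Next I will identify the commutant of this unitary group. The set $S$ is self-adjoint, so $S'$ is a unital $*$-subalgebra of $M_d(\mathbb C)$. In finite dimension it is closed, so it is a von Neumann algebra and is the linear span of its unitaries. Indeed, every element of a unital $C^*$-algebra is a linear combination of four unitaries, obtained by writing a self-adjoint $h$ with $\|h\|\le1$ as $\tfrac12\bigl[(h+i\sqrt{1-h^2})+(h-i\sqrt{1-h^2})\bigr]$. Therefore commuting with all unitaries of $S'$ is the same as commuting with all of $S'$:
\[
\Stab(\D)'=(\mathcal U(S'))'=S''.
\]

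Finally, the bicommutant theorem gives $S''=\vN(S)$, the smallest unital $*$-algebra containing $S$. This algebra already lies in $M$, so
\[
\I(\D)=S''\cap M=\vN(S).
\]
The unital convention matters here: $\vN(S)$ contains $\mathbf 1$. This is consistent with the influence algebra always containing scalars, since every stabilizer commutes with $\mathbf 1$.

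The only step needing care is the middle one: the passage from unitaries in $S'$ to all of $S'$. One must also note that adjoint closure of $S$ is automatic, because $[U,\Delta]=0$ implies $[U,\Delta^*]=0$. This is exactly why the statement lists both $\Delta_k$ and $\Delta_k^*$ in $S$, so that the commutant is a $*$-algebra.
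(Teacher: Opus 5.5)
Your proof is correct and follows the same route as the paper: cyclicity and nondegeneracy of the trace pairing identify $\Stab(\D)$ with the unitary group of $S'$, and the double commutant theorem then yields $\vN(S)$. Your explicit argument that $S'$ is spanned by its unitaries supplies the justification for the step the paper compresses into ``taking commutants.''
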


\begin{proof}
The stabilizer condition is
\[
\Tr(\Delta_kU^*AU)=\Tr(\Delta_kA)
\qquad\forall A,k.
\]
By cyclicity,
\[
\Tr((U\Delta_kU^*-\Delta_k)A)=0\qquad\forall A,k.
\]
Since the trace pairing is nondegenerate, $U\Delta_kU^*=\Delta_k$ for every $k$.  Applying the same condition to the adjoints gives $U\Delta_k^*U^*=\Delta_k^*$.  Hence
\[
\Stab(\D)=\mathcal U(\vN(S)').
\]
Taking commutants inside $M_d(\mathbb C)$ gives $\I(\D)=\vN(S)$.
\end{proof}

\begin{proposition}[Background monotonicity]
If $\B_1\subseteq\B_2$, then
\[
\I_Y^{C;\B_1}\subseteq \I_Y^{C;\B_2}.
\]
\end{proposition}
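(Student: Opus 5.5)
The plan is to argue at the level of response-difference families and then pass through two order-reversing operations. The first step identifies the family \(\D_Y^{C;\B}\) as the union over \(\beta\in\B\) of the differences \(\delta_Y^{C;\beta}(\Phi_X,\Phi_X')\). The inclusion \(\B_1\subseteq\B_2\) then gives \(\D_Y^{C;\B_1}\subseteq\D_Y^{C;\B_2}\) directly. This uses only the definition: every difference generated with a background strategy in \(\B_1\) is, verbatim, a difference generated with a strategy in \(\B_2\), with the same source operations and the same target algebra \(\M_Y\).

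The second step shows that the stabilizer is antitone in the family. If \(\D\subseteq\D'\), a unitary \(U\) that leaves \(\delta(U^*AU)=\delta(A)\) invariant for all \(A\in\M_Y\) and all \(\delta\in\D'\) does so in particular for all \(\delta\in\D\). Hence \(\Stab(\D')\subseteq\Stab(\D)\), and applied to our families this gives \(\Stab(\D_Y^{C;\B_2})\subseteq\Stab(\D_Y^{C;\B_1})\).

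The third step uses that commutants reverse inclusions: if \(S\subseteq T\), then \(T'\subseteq S'\). This gives \(\Stab(\D_Y^{C;\B_1})'\subseteq\Stab(\D_Y^{C;\B_2})'\). Intersecting both sides with \(\M_Y\) preserves the inclusion, and by the boxed definition of \(\I_Y^{C;\B}\) this is exactly \(\I_Y^{C;\B_1}\subseteq\I_Y^{C;\B_2}\). In the finite-dimensional case, Proposition~\ref{prop:stabfinite} gives an independent check: the generating set \(S_1\) of densities is contained in \(S_2\), so \(\vN(S_1)\subseteq\vN(S_2)\).

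I expect no real obstacle. The only point to state explicitly is that the family really is the union over \(\beta\), and that the source and target data \((C,X,Y)\) are held fixed while the background class is enlarged. If instead \(\D\) were defined to allow differences between two distinct backgrounds, the inclusion of families would still hold under \(\B_1\subseteq\B_2\), so the argument is unchanged.
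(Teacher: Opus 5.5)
Your proof is correct and follows essentially the same route as the paper: inclusion of the response-difference families, reverse inclusion of the stabilizers, and reversal again under commutants (with the intersection with \(\M_Y\) preserving the inclusion). Your finite-dimensional cross-check via Proposition~\ref{prop:stabfinite} is a harmless extra that the paper does not include.
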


\begin{proof}
The response-difference family for $\B_1$ is contained in the response-difference family for $\B_2$:
\[
\D^{C;\B_1}_Y\subseteq\D^{C;\B_2}_Y.
\]
A unitary stabilizing the larger family stabilizes the smaller family, so
\[
\Stab(\D^{C;\B_2}_Y)\subseteq\Stab(\D^{C;\B_1}_Y).
\]
Taking commutants reverses inclusion and gives the claim.
\end{proof}

\begin{remark}
The centers need not be monotone under inclusion of von Neumann algebras.  Thus enlarging a background class can increase the influence algebra while either increasing, decreasing, or reorganizing the available central event algebra.  This is why the physical event is the quadruple $(C,Y,\B,P)$ rather than simply $P$.
\end{remark}

\section{Split-record laboratories and Type-III local algebras}\label{app:split}

Local algebras in relativistic QFT are typically Type-III factors.  If $M$ is a factor, then $Z(M)=\mathbb C1$ and
\[
\Proj(Z(M))=\{0,1\}.
\]
Thus nontrivial exact classical records cannot be obtained merely by taking projections in the local field algebra.  The RQCP construction uses split inclusions and record conditional expectations.

\begin{assumption}[Split laboratory]
For a double inclusion of local regions $O\Subset\widetilde O$, the corresponding von Neumann algebras obey a split inclusion
\[
\A(O)\subset\F(O,\widetilde O)\subset\A(\widetilde O),
\]
where $\F$ is a Type-I factor.
\end{assumption}

Since $\F$ is Type I, there exists a Hilbert-space factorization
\[
\Hh\simeq\Hh_{\rm lab}\otimes\Hh_{\rm env}
\]
such that
\[
\F\simeq B(\Hh_{\rm lab})\otimes1_{\rm env}.
\]
However $Z(\F)=\mathbb C1$.  A nontrivial exact record requires a further record algebra.

Let
\[
\Hh_{\rm lab}=\bigoplus_{a\in A}\Hh_a,
\qquad
P_a:\Hh_{\rm lab}\to\Hh_a
\]
be the record decomposition.  Define
\[
E_R(X)=\sum_{a\in A}P_aXP_a.
\]

\begin{proposition}[Record conditional expectation]\label{prop:recordexpectation}
The map $E_R:B(\Hh_{\rm lab})\to B(\Hh_{\rm lab})$ is a normal UCP idempotent conditional expectation onto
\[
\R=\bigoplus_{a\in A}B(\Hh_a).
\]
Moreover
\[
Z(\R)=\left\{\sum_{a\in A}\lambda_aP_a:\lambda_a\in\mathbb C\right\}
\]
and
\[
\Proj(Z(\R))=\left\{\sum_{a\in S}P_a:S\subseteq A\right\}.
\]
\end{proposition}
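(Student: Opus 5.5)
The proof is a direct verification using only the orthogonality and completeness of the record decomposition, $P_aP_b=\delta_{ab}P_a$, $P_a=P_a^*$, and $\sum_aP_a=\mathbf 1$, with the sum converging strongly if $A$ is infinite.

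\emph{Range, idempotence and unitality.} For $X\in B(\Hh_{\rm lab})$, each term $P_aXP_a$ maps $\Hh_a$ to itself and kills the other summands, so $E_R(X)$ is block diagonal and lies in $\R$. Conversely, if $Y\in\R$ then $P_aYP_a=YP_a$, so summing gives $E_R(Y)=Y$. Hence $E_R$ is idempotent with range exactly $\R$. Unitality is $\sum_aP_a=\mathbf 1$.

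\emph{Complete positivity, normality and the bimodule property.} The map $E_R$ is a Kraus sum with operators $P_a$, so it is completely positive. Equivalently, $E_R\otimes\mathrm{id}_n$ is the same construction with $P_a\otimes\mathbf 1_n$. Normality is the step that needs some care when $A$ is infinite. For a normal state $\omega=\Tr(\rho\,\cdot)$ we have
\[
\omega\circ E_R=\sum_a\Tr(P_a\rho P_a\,\cdot),
\]
where the series of trace-class operators converges in trace norm because its terms are positive with total trace $\Tr\rho$. So $\omega\circ E_R$ is again normal. A positive map that preserves normal states under pullback is ultraweakly continuous. For the bimodule property take $Y,Z\in\R$. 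Since both commute with every $P_a$, we get $P_aYXZP_a=YP_aXP_aZ$, and summing gives $E_R(YXZ)=YE_R(X)Z$. By Tomiyama's theorem one could instead deduce this from the fact that $E_R$ is a norm-one projection.

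\emph{Center and central projections.} This is the only substantive step. First, every $P_a$ lies in $\R$ and commutes with all of $\R$, because elements of $\R$ are block diagonal. Hence every strongly convergent bounded sum $\sum_a\lambda_aP_a$ lies in $Z(\R)$. For the converse, let $C\in Z(\R)$. Then $C$ commutes with each $P_a\in\R$, so it is block diagonal with blocks $C_a=P_aCP_a$. Commuting with $\R$ forces each $C_a$ to commute with all of $B(\Hh_a)$. Since $Z(B(\Hh_a))=\mathbb C\mathbf 1_a$, we get $C_a=\lambda_aP_a$. Boundedness of $C$ gives $\sup_a|\lambda_a|<\infty$, so the displayed set should be read as all bounded families $(\lambda_a)$. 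A central element $\sum_a\lambda_aP_a$ is a projection if and only if $\lambda_a^2=\lambda_a=\bar\lambda_a$ for every $a$, that is $\lambda_a\in\{0,1\}$. Taking $S=\{a:\lambda_a=1\}$ gives exactly the stated description of $\Proj(Z(\R))$, and it is isomorphic to the power set of $A$, as the Boolean facts theorem requires.
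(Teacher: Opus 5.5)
Your proof is correct and follows the paper's route: complete positivity from the Kraus form, unitality from $\sum_aP_a=\mathbf 1$, and idempotence and the block-diagonal range from $P_aP_b=\delta_{ab}P_a$; the center then follows blockwise from $Z(B(\Hh_a))=\mathbb C\mathbf 1_a$, and central projections from $\lambda_a\in\{0,1\}$. You also verify normality and the bimodule property, and you note that for infinite $A$ the displayed form of $Z(\R)$ must be read with bounded coefficient families, all of which the paper's proof leaves implicit.
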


\begin{proof}
Complete positivity follows from the Kraus form $E_R(X)=\sum_aP_aXP_a$.  It is unital because $\sum_aP_a=1$.  It is idempotent because $P_aP_b=\delta_{ab}P_a$.  The range is precisely the block-diagonal algebra $\oplus_aB(\Hh_a)$.  An element of this direct sum is central exactly when each block is a scalar multiple of the identity on that block, so the center is $\oplus_a\mathbb CP_a$.  Projections in this commutative algebra are exactly sums over subsets of the minimal central projections $P_a$.
\end{proof}

\begin{theorem}[Split-record event theorem]
Assume a split laboratory $\F$ and a record algebra $\R\subset\F$.  If the background strategy class $\B_R$ is record-nondemolition,
\[
\delta(A)=\delta(E_R(A))\qquad\forall\delta\in\D_Y^{C;\B_R},
\]
and record-complete,
\[
\I_Y^{C;\B_R}=\R,
\]
then
\[
\Ev_0(Y|C;\B_R)=\Proj(Z(\R)).
\]
\end{theorem}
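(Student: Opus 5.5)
The argument is a direct substitution of definitions, and the record-complete hypothesis carries it. By the definition of exact events, \(\Ev_0(Y|C;\B_R)=\Proj(Z(\I_Y^{C;\B_R}))\). Record completeness gives \(\I_Y^{C;\B_R}=\R\), so the event algebra is \(\Proj(Z(\R))\), which is the claim. Proposition~\ref{prop:recordexpectation} then identifies this set concretely as \(\{\sum_{a\in S}P_a:S\subseteq A\}\). Proposition~\ref{prop:centerboolean} shows it is a complete Boolean algebra, so the exact events are exactly the record-center projections.

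I would also show that the nondemolition hypothesis is consistent with completeness, so that the theorem is not vacuous, and explain its role. Nondemolition says every \(\delta\) factors through \(E_R\). In the finite-dimensional density form of Proposition~\ref{prop:stabfinite}, this means each response density satisfies \(\Delta_k=E_R^*(\Delta_k)=E_R(\Delta_k)\), so \(\Delta_k\in\R\). The plan is as follows.

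\begin{enumerate}
\item Any unitary commuting with \(\R\) fixes every \(\Delta_k\). Therefore \(\Stab\supseteq\mathcal U(\R')\), and taking commutants gives \(\I\subseteq\R''=\R\).
\item Nondemolition alone therefore gives only the inclusion \(\I_Y^{C;\B_R}\subseteq\R\).
\item Completeness is the assumption that the responses are rich enough to generate all of \(\R\). This upgrades the inclusion to equality.
\end{enumerate}

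In the general von Neumann setting, I would replace the trace pairing by the predual pairing and use \(E_R\)-invariance of the normal functionals in the same way.

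The main subtlety is that centers are not monotone under inclusion, as the earlier Remark notes. The inclusion \(\I\subseteq\R\) from nondemolition therefore does not by itself give \(Z(\I)\) equal to, or even contained in, \(Z(\R)\). This is why the proof must use the equality \(\I=\R\) from completeness, not just nondemolition.

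A second point to state explicitly is where the algebras sit. The stabilizer and commutant are taken inside \(\M_Y\), which is identified with \(B(\Hh_{\rm lab})\) through the split factorization \(\F\simeq B(\Hh_{\rm lab})\otimes 1\). Because of this identification, \(\R\) is a von Neumann subalgebra of the target algebra and its center is computed there.
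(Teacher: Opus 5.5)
Your proof is correct and follows the paper's argument. Record completeness turns the definition $\Ev_0(Y|C;\B_R)=\Proj(Z(\I_Y^{C;\B_R}))$ into $\Proj(Z(\R))$, and Proposition~\ref{prop:recordexpectation} then writes this out as sums of the $P_a$. Your side analysis of nondemolition is also correct (it gives only $\I_Y^{C;\B_R}\subseteq\R$, and non-monotone centers are why equality is needed), but the main proof does not use it. In fact, under your identification $\M_Y\simeq B(\Hh_{\rm lab})$, completeness implies nondemolition: $\I_Y^{C;\B_R}=\vN\{\Delta_k\}=\R$ forces every $\Delta_k\in\R$, hence $\delta(E_R(A))=\delta(A)$. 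Your consistency check is therefore automatic rather than independent evidence, and it does not by itself show the hypotheses are non-vacuous.
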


\begin{proof}
The event definition gives $\Ev_0=\Proj(Z(\I_Y^{C;\B_R}))$.  By record completeness $\I_Y^{C;\B_R}=\R$, so the result follows from Proposition~\ref{prop:recordexpectation}.
\end{proof}

\section{Multiplicative domains and transfer-channel spectral bounds}\label{app:md}

Let $\Gamma:A\to B$ be a unital completely positive map.  Its multiplicative domain is
\[
\MD(\Gamma)=\{a:\Gamma(a^*a)=\Gamma(a)^*\Gamma(a),\ \Gamma(aa^*)=\Gamma(a)\Gamma(a)^*\}.
\]

\begin{lemma}[Projection preservation]\label{lem:mdprojection}
Let $P=P^*=P^2$.  Then $P\in\MD(\Gamma)$ if and only if $\Gamma(P)$ is a projection.
\end{lemma}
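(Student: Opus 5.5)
The plan is to prove the two implications separately, with the converse as the substantial part, for a unital completely positive map \(\Gamma:A\to B\) between C\(^*\)-algebras. We take ``projection'' to mean an orthogonal projection in \(B\).

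The forward direction is immediate. Suppose \(P\in\MD(\Gamma)\). Then \(\Gamma(P)^*\Gamma(P)=\Gamma(P^*P)=\Gamma(P)\). The right-hand side is therefore positive and self-adjoint, so \(\Gamma(P)^*=\Gamma(P)\). Substituting this back gives \(\Gamma(P)^2=\Gamma(P)\), so \(\Gamma(P)\) is a projection. Only one of the two defining identities of \(\MD(\Gamma)\) is needed here. The second identity holds trivially because \(P\) is self-adjoint.

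For the converse, assume \(Q:=\Gamma(P)\) is a projection. Since \(P\) is self-adjoint, both defining conditions of \(\MD(\Gamma)\) reduce to the single identity \(\Gamma(P^2)=\Gamma(P)^2\). We verify it directly: \(\Gamma(P^2)=\Gamma(P)=Q=Q^2=\Gamma(P)^2\). This shows that \(P\) satisfies the defining equalities of \(\MD(\Gamma)\) as written in the definition above.

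If one instead reads \(\MD(\Gamma)\) as the algebraic multiplicative domain, meaning \(\Gamma(ab)=\Gamma(a)\Gamma(b)\) for all \(b\), then equality in Kadison--Schwarz must be promoted to a bimodule property. This step is Choi's theorem, and it is the only nontrivial ingredient. We would prove it via a Stinespring dilation \(\Gamma(x)=V^*\pi(x)V\) with \(V^*V=1\). The condition \(\Gamma(P^*P)=\Gamma(P)^*\Gamma(P)\) is equivalent to \(V^*\pi(P)(1-VV^*)\pi(P)V=0\), which gives \((1-VV^*)\pi(P)V=0\). From this, \(\Gamma(Pb)=V^*\pi(P)VV^*\pi(b)V=\Gamma(P)\Gamma(b)\), and the identity \(\Gamma(bP)=\Gamma(b)\Gamma(P)\) follows symmetrically. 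The main obstacle is therefore only this dilation step. With the paper's definition, which lists only the two squared identities, the argument above is complete and elementary.
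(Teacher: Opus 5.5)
Your proof is correct and follows the paper's argument. Both reduce the two defining identities of $\MD(\Gamma)$, using $P^*P=PP^*=P$ and the self-adjointness of $\Gamma(P)$, to $\Gamma(P)^2=\Gamma(P)$; the paper takes self-adjointness from $*$-preservation by positive maps, while you get it in the forward direction from $\Gamma(P)=\Gamma(P)^*\Gamma(P)\ge 0$. Your Stinespring remark on the bimodule property is not needed here; it is the content of the paper's next lemma, on multiplicativity on the multiplicative domain.
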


\begin{proof}
Since $P^*P=PP^*=P$, the two multiplicative-domain identities become
\[
\Gamma(P)=\Gamma(P)^*\Gamma(P),\qquad
\Gamma(P)=\Gamma(P)\Gamma(P)^*.
\]
Because positive maps are $*$-preserving on self-adjoint elements, $\Gamma(P)^*=\Gamma(P)$.  Hence the identities are equivalent to $\Gamma(P)^2=\Gamma(P)$.
\end{proof}

\begin{lemma}[Multiplicativity on the multiplicative domain]
If $a\in\MD(\Gamma)$, then for all $b$,
\[
\Gamma(ab)=\Gamma(a)\Gamma(b),
\qquad
\Gamma(ba)=\Gamma(b)\Gamma(a).
\]
\end{lemma}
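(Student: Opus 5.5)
The plan is to derive both identities from the Schwarz (Kadison--Choi) inequality for unital completely positive maps, \(\Gamma(x^*x)\ge\Gamma(x)^*\Gamma(x)\), applied to a \(2\times2\) matrix amplification of \(\Gamma\). This is the standard Choi argument. I will first prove \(\Gamma(ba)=\Gamma(b)\Gamma(a)\) using only the first defining identity \(\Gamma(a^*a)=\Gamma(a)^*\Gamma(a)\). I will then get \(\Gamma(ab)=\Gamma(a)\Gamma(b)\) by taking adjoints and using the second identity \(\Gamma(aa^*)=\Gamma(a)\Gamma(a)^*\).

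For the first identity, fix \(b\) and consider the \(2\)-positive (indeed UCP) map \(\Gamma_2=\Gamma\otimes\mathrm{id}_{M_2}\). Apply the Schwarz inequality to the matrix \(X=\begin{pmatrix}a&b^*\\0&0\end{pmatrix}\). One computes
\[
X^*X=\begin{pmatrix}a^*a&a^*b^*\\ ba&bb^*\end{pmatrix}.
\]
The inequality \(\Gamma_2(X^*X)-\Gamma_2(X)^*\Gamma_2(X)\ge0\) then gives a positive block matrix. Its \((1,1)\) entry is \(\Gamma(a^*a)-\Gamma(a)^*\Gamma(a)=0\), and its \((2,1)\) entry is \(\Gamma(ba)-\Gamma(b)\Gamma(a)\); here I use \(\Gamma(b^*)^*=\Gamma(b)\), since positive maps are \(*\)-preserving.

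The key step is the elementary fact that a positive block operator \(\begin{pmatrix}0&C^*\\ C&D\end{pmatrix}\ge0\) forces \(C=0\). To verify it, test against vectors \((\xi,t\eta)\) and let \(t\to0\): the quadratic form \(2t\,\mathrm{Re}\langle \eta,C\xi\rangle+t^2\langle\eta,D\eta\rangle\) must stay nonnegative for \(t\) of either sign. That is possible only if \(\mathrm{Re}\langle\eta,C\xi\rangle=0\) for all \(\xi,\eta\), and replacing \(\eta\) by \(i\eta\) handles the imaginary part. In a general \(C^*\)-target I would pass to a faithful representation first. This yields \(\Gamma(ba)=\Gamma(b)\Gamma(a)\).

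For the second identity, apply the same argument to \(a^*\) in place of \(a\). The hypothesis needed is \(\Gamma(aa^*)=\Gamma(a^*)^*\Gamma(a^*)=\Gamma(a)\Gamma(a)^*\), which is exactly the second defining condition. The argument gives \(\Gamma(b^*a^*)=\Gamma(b^*)\Gamma(a^*)\) for all \(b\). Taking adjoints and using \(*\)-preservation then yields \(\Gamma(ab)=\Gamma(a)\Gamma(b)\).

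The only delicate point is the choice of the \(2\times2\) matrix and keeping track of which entry carries \(\Gamma(ba)-\Gamma(b)\Gamma(a)\). This is also where the hypothesis is used: complete positivity is needed, not mere positivity, since the Schwarz inequality has to hold for the amplification \(\Gamma_2\).
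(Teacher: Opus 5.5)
Your argument is correct, but it takes a different route from the paper's.

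\textbf{The paper's route.} The paper dilates $\Gamma(x)=V^*\pi(x)V$ by Stinespring. It observes that the Kadison--Schwarz defect equals $\bigl[(\mathbf 1-VV^*)\pi(a)V\bigr]^*\bigl[(\mathbf 1-VV^*)\pi(a)V\bigr]$, so equality forces the intertwining relation $\pi(a)V=V\Gamma(a)$. Multiplicativity is then read off in one line.

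\textbf{Your route.} You never build a dilation. You use only the Schwarz inequality for $\Gamma\otimes\mathrm{id}_{M_2}$, applied to $\begin{pmatrix}a&b^*\\0&0\end{pmatrix}$, together with the zero-corner lemma for positive block operators. This is Choi's original argument.

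\textbf{What each buys.} Your approach is more elementary and needs only finite-order positivity. As you set it up, 4-positivity suffices. Applying the Schwarz inequality for $\Gamma$ itself to $a+tb^*$, with complex $t$, would reduce this to 2-positivity. So your closing remark that complete positivity is needed is stronger than necessary, although mere positivity indeed does not suffice. The dilation approach, in exchange, yields the structural identity $\pi(a)V=V\Gamma(a)$. From that identity it is immediate that $\MD(\Gamma)$ is an algebra on which $\Gamma$ acts as a $*$-homomorphism.

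\textbf{Bookkeeping.} Your version makes explicit which defining identity gives which product. The identity $\Gamma(a^*a)=\Gamma(a)^*\Gamma(a)$ gives $\Gamma(ba)=\Gamma(b)\Gamma(a)$, and $\Gamma(aa^*)=\Gamma(a)\Gamma(a)^*$ gives $\Gamma(ab)=\Gamma(a)\Gamma(b)$. In the paper's displayed computation of $\Gamma(ab)$, the step $V^*\pi(a)=\Gamma(a)V^*$ is really the adjoint of the intertwining relation for $a^*$, so it uses the second defining identity. The relation $\pi(a)V=V\Gamma(a)$, as the paper states it, gives the right-multiplication identity directly.
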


\begin{proof}
By Stinespring, $\Gamma(x)=V^*\pi(x)V$.  Equality in Kadison's inequality for $a$ implies $\pi(a)V=V\Gamma(a)$.  Thus
\[
\Gamma(ab)=V^*\pi(a)\pi(b)V=\Gamma(a)V^*\pi(b)V=\Gamma(a)\Gamma(b).
\]
The right multiplication identity is analogous.
\end{proof}

\begin{theorem}[Boolean homomorphism under RG]\label{thm:booleanrg}
Assume
\[
Z(\I_{\ell+1})\subset\MD(\Gamma_\ell),
\qquad
\Gamma_\ell(Z(\I_{\ell+1}))\subset Z(\I_\ell).
\]
Then
\[
\Gamma_\ell:\Proj(Z(\I_{\ell+1}))\to\Proj(Z(\I_\ell))
\]
is a Boolean algebra homomorphism.
\end{theorem}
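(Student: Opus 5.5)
The plan is to verify the Boolean-homomorphism axioms for the restriction of \(\Gamma_\ell\) to \(\Proj(Z(\I_{\ell+1}))\), using only the two appendix lemmas just proved. First, well-definedness. For \(P\in\Proj(Z(\I_{\ell+1}))\), the first hypothesis gives \(P\in\MD(\Gamma_\ell)\). Lemma~\ref{lem:mdprojection} then makes \(\Gamma_\ell(P)\) a projection. The second hypothesis places \(\Gamma_\ell(P)\) in \(Z(\I_\ell)\). Hence \(\Gamma_\ell(P)\in\Proj(Z(\I_\ell))\). Unitality of \(\Gamma_\ell\) gives \(\Gamma_\ell(1)=1\), and linearity gives \(\Gamma_\ell(0)=0\).

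Next, I check the operations using the lattice formulas of Proposition~\ref{prop:centerboolean}. For the meet, take central \(P,Q\in Z(\I_{\ell+1})\). Since \(P\in\MD(\Gamma_\ell)\), the multiplicativity lemma gives
\[
\Gamma_\ell(P\wedge Q)=\Gamma_\ell(PQ)=\Gamma_\ell(P)\Gamma_\ell(Q)=\Gamma_\ell(P)\wedge\Gamma_\ell(Q).
\]
The last equality holds because both images lie in the commutative algebra \(Z(\I_\ell)\), so their product is their meet there. For the complement, unitality and linearity give \(\Gamma_\ell(1-P)=1-\Gamma_\ell(P)=\neg\Gamma_\ell(P)\). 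For the join, I expand \(P\vee Q=P+Q-PQ\), apply linearity together with the meet identity, and get \(\Gamma_\ell(P)+\Gamma_\ell(Q)-\Gamma_\ell(P)\Gamma_\ell(Q)\), which is \(\Gamma_\ell(P)\vee\Gamma_\ell(Q)\) in \(Z(\I_\ell)\).

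No step is hard. The one point needing care is that the product of the images has to be read inside \(Z(\I_\ell)\) for the lattice formulas to apply, and this is exactly what the center-covariance hypothesis provides. If a complete, rather than merely finitary, homomorphism were intended, the argument would in addition need normality of \(\Gamma_\ell\) to pass to suprema. I would not claim completeness, since the statement asks only for a Boolean homomorphism.
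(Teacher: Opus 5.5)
Your proposal is correct and follows the paper's proof essentially step for step. Well-definedness comes from Lemma~\ref{lem:mdprojection} together with center covariance, meets from the multiplicativity lemma, complements from unitality, and joins from linearity; your added remark that a \emph{complete} homomorphism would also require normality of \(\Gamma_\ell\) is accurate, and the paper does not claim completeness.
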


\begin{proof}
For $P,Q\in\Proj(Z(\I_{\ell+1}))$, multiplicativity gives
\[
\Gamma(PQ)=\Gamma(P)\Gamma(Q).
\]
Unitality gives $\Gamma(1-P)=1-\Gamma(P)$, and linearity gives preservation of joins.  Center covariance ensures images are central; Lemma~\ref{lem:mdprojection} ensures images are projections.
\end{proof}

\subsection*{Transfer-channel convergence}
Let $\Gamma:\A\to\A$ be a normal UCP map on a finite-dimensional algebra or on a Banach space of quasi-local observables.  Suppose there is a normal conditional expectation $E_\Z:\A\to\Z$ onto the fixed-point algebra $\Z=\Fix(\Gamma)$ and a spectral gap on $\ker E_\Z$:
\[
r_0=r(\Gamma|_{\ker E_\Z})<1.
\]
For any $\rho$ with $r_0<\rho<1$, the spectral-radius formula gives a constant $C_\rho$ such that
\[
\|\Gamma^k|_{\ker E_\Z}\|\le C_\rho\rho^k.
\]
Therefore
\[
\|\Gamma^k(A)-E_\Z(A)\|\le C_\rho\rho^k\|A-E_\Z(A)\|
\le 2C_\rho\rho^k\|A\|.
\]
Absorbing the factor of $2$ into $C_\rho$ yields the bound used in the main text.

For response weights $w_k(e)=\ell_e(\Gamma^k(B_e))$ one obtains
\[
|w_k(e)-w_\infty(e)|\le C_\rho\rho^k\|\ell_e\|\|B_e\|.
\]
If in a finite window $W$ the limiting weights have margin
\[
m_W=\min_{e\in W}|w_\infty(e)-\tau|>0,
\]
then for
\[
k\ge \frac{\log(2C_\rho L_W/m_W)}{-\log\rho},
\qquad
L_W=\max_{e\in W}\|\ell_e\|\|B_e\|,
\]
the threshold decision $w_k(e)>\tau$ equals $w_\infty(e)>\tau$ for every edge in $W$.  Hence the finite-window directed graph and its transitive closure stabilize.

\section{Araki relative entropy and the causal-modular first law}\label{app:araki}

In finite dimensions, for density matrices $\rho$ and $\sigma$,
\[
S(\rho\|\sigma)=\Tr(\rho\log\rho)-\Tr(\rho\log\sigma).
\]
If $K_\sigma=-\log\sigma$, then
\[
S(\rho\|\sigma)=\Delta\langle K_\sigma\rangle-\Delta S,
\]
where $\Delta S=S(\rho)-S(\sigma)$ and $\Delta\langle K_\sigma\rangle=\Tr((\rho-\sigma)K_\sigma)$.  Let $\rho_s=\sigma+s\dot\rho+O(s^2)$ with $\Tr\dot\rho=0$.  Since relative entropy has a minimum at $s=0$,
\[
\frac{d}{ds}\bigg|_{0}S(\rho_s\|\sigma)=0.
\]
Hence
\[
\delta S=\delta\langle K_\sigma\rangle.
\]

For Type-III local algebras there is no density matrix.  The appropriate object is Araki relative entropy.  Let $M$ be a von Neumann algebra and $\omega,\phi$ faithful normal states.  The relative modular operator $\Delta_{\omega|\phi}$ defines
\[
S_A(\omega\|\phi)=-\langle\Omega_\omega,\log\Delta_{\phi|\omega}\,\Omega_\omega\rangle
\]
when represented in standard form.  The key properties needed in RQCP are positivity, monotonicity under restriction, and differentiability for smooth perturbations in a split-local chart.

\begin{theorem}[Split-local Araki first law]\label{thm:arakifirstlaw}
Let $M_D$ be a local diamond algebra approximated by a nested family of
split Type-I buffers $\F_{D,n}$.  Let $\omega_s$ be a differentiable family
of faithful normal states with $\omega_0$ as reference.  Suppose the
regulated relative-entropies converge in \(C^1\) on a neighborhood of
\(s=0\) to the Araki relative entropy, and suppose the corresponding
regulated modular-energy and entropy differences have finite first
variations.  Then
\[
\frac{d}{ds}\bigg|_0 S_A(\omega_s\|\omega_0)=0
\]
and
\[
\delta S_{\rm bulk}^{\rm alg}=\delta\langle K_D^0\rangle.
\]
\end{theorem}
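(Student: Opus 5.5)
The plan is to prove the statement at the regulated level first and then pass to the limit using the assumed \(C^1\) convergence. For each split Type-I buffer \(\F_{D,n}\simeq B(\Hh_{{\rm lab},n})\otimes 1\), the restrictions \(\omega_s|_{\F_{D,n}}\) are given by density matrices \(\rho_{s,n}\). Faithfulness of \(\omega_0\) makes \(\rho_{0,n}\) faithful, so \(K_{n}^0=-\log\rho_{0,n}\) is a well-defined self-adjoint operator. We set \(f_n(s)=S(\rho_{s,n}\Vert\rho_{0,n})\). By the finite-dimensional identity recorded just above, equivalently Eq.~\eqref{eq:updated-relative-entropy}, we have \(f_n(s)=\Delta\langle K_n^0\rangle_s-\Delta S_{n,s}\). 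We also have \(f_n\ge0\) and \(f_n(0)=0\). Hence \(s=0\) is a minimum of \(f_n\), and wherever \(f_n\) is differentiable at \(0\) we get \(f_n'(0)=0\).

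Next we transfer this to the Araki quantity. The hypothesis states that \(f_n\to f:=S_A(\omega_s\Vert\omega_0)\) in \(C^1\) on a neighborhood of \(0\). This gives \(f'(0)=\lim_n f_n'(0)=0\), which is the first claim. A cleaner route avoids differentiability of each \(f_n\). The Araki entropy is itself nonnegative with \(f(0)=0\), by positivity of Araki relative entropy. Since \(f\) is \(C^1\), the minimum at \(0\) forces \(f'(0)=0\) directly. We will state both routes and note that only positivity and differentiability of the limit are really needed; the regulators are used for the second claim.

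For the first-law identity we define \(\delta\langle K_D^0\rangle\) and \(\delta S^{\rm alg}_{\rm bulk}\) as the limits of the regulated first variations \(\frac{d}{ds}|_0\Delta\langle K_n^0\rangle_s\) and \(\frac{d}{ds}|_0\Delta S_{n,s}\). These are finite by hypothesis, and we pass to a subsequence if needed so that the limits exist. Differentiating the regulated decomposition at \(s=0\) gives \(0=f_n'(0)=\delta\langle K_n^0\rangle-\delta S_n\) for each \(n\). Taking \(n\to\infty\) then yields \(\delta S^{\rm alg}_{\rm bulk}=\delta\langle K_D^0\rangle\).

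The main obstacle is this last step. The statement assumes only that the two variations are finite, not that their regulated versions converge separately, and in Type III the individual terms \(\langle K\rangle\) and \(S\) diverge as \(n\to\infty\). I would handle this by defining the algebraic quantities through the regulated limits, and remarking that only their difference is regulator independent, because it equals \(f'(0)\). If the theorem intends an intrinsic meaning via the relative modular operator \(\log\Delta_{\omega_0|\omega_0}\), I would instead identify \(\delta\langle K_D^0\rangle\) with \(\frac{d}{ds}|_0\,\omega_s(-\log\Delta_{\omega_0})\) on a core. The entropy variation is then defined as the remainder in Araki's formula, which is consistent with the identity above.
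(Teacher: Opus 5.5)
Your proposal is correct and is essentially the paper's proof. You apply the density-matrix identity \(S=\Delta\langle K\rangle-\Delta S\) on each Type-I split approximant, observe that the first variation vanishes at the reference state, and use the assumed \(C^1\) convergence to exchange the regulator limit with the derivative at \(s=0\) before taking the limit in the regulated identity. Your side remarks are refinements of that same argument, not a different route: positivity of the Araki relative entropy plus differentiability of the limit already give the first claim, and since \(\delta\langle K_n^0\rangle=\delta S_n\) holds exactly at every \(n\), the algebraic variations only have meaning as regulated limits.
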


\begin{proof}
On each Type-I split approximant, the density-matrix derivation applies and
the first variation at the reference state vanishes.  The assumed \(C^1\)
convergence permits the regulator limit to be interchanged with that first
variation.  Taking the limit in the regulated identity
\(S=\Delta\langle K\rangle-\Delta S_{\rm alg}\) gives the stated result.
Without the \(C^1\) hypothesis, convergence of the relative-entropies alone
would not justify differentiating through the split limit.
\end{proof}

\begin{remark}
The remaining nontrivial input is not the first law itself but the local geometric identification
\[
K_D^0\longrightarrow \frac{2\pi}{\hbar}\int_D\xi^\mu T_{\mu\nu}\,d\Sigma^\nu.
\]
In RQCP this is explicitly listed as a local modular-geometric convergence assumption.
\end{remark}

\section{Conditional central-variable construction and fixed-sector
cancellation}\label{app:sequestering}

\textbf{Status: conditional comparison, not a derived cosmological
mechanism.}
Two distinct statements appeared under the earlier term
``vacuum-response sequestering.''  They should be kept separate.

The first is a comparison with global sequestering.  If one \emph{introduces}
a central variable \(\lambda_R\) and the functional
\[
S_{\rm seq}[g,\lambda_R]
=\int_M\sqrt{-g}\left[
\frac{M_{\rm Pl}^2}{2}R-\rho_{\rm vac}^{\rm bare}
-\lambda_R+\mathcal L_{\rm rel}\right]
+\Sigma\left(\frac{\lambda_R}{\mu^4}\right),
\]
then the simultaneous shift
\[
\rho_{\rm vac}^{\rm bare}\mapsto\rho_{\rm vac}^{\rm bare}+\Delta\rho,
\qquad
\lambda_R\mapsto\lambda_R-\Delta\rho
\]
leaves the combination entering the metric equation unchanged.  Variation
with respect to \(\lambda_R\) imposes
\[
\int_M\sqrt{-g}
=\frac{1}{\mu^4}\Sigma'\left(\frac{\lambda_R}{\mu^4}\right).
\]
This algebra is exact once the global variable and its action are supplied.
It does not derive either object from the local response process.

The second statement uses the additive charge center obtained in
Sec.~\ref{sec:loop-center}.  Suppose an additional continuum theorem
identifies a sector label \(v\) with metric four-volume, and define normalized
fixed-sector response by
\[
\mathcal W_v[J]=\log Z_v[J]-\log Z_v[0].
\]
Any source-independent term \(cv\) multiplies numerator and denominator by
the same factor:
\[
\frac{Z_v[J;c]}{Z_v[0;c]}
=\frac{e^{-cv}Z_v[J;0]}{e^{-cv}Z_v[0;0]}
=\frac{Z_v[J;0]}{Z_v[0;0]}.
\]
For a heat-kernel expansion this removes only the pure-volume \(a_0\) term.
Curvature, anomaly, boundary, nonlocal, and source-dependent terms are not
constant on the sector and remain in connected response.

Neither statement presently implies
\[
\rho_\Lambda^{\rm ren}
=\xi M_{\rm Pl}^2L_R^{-2}
+O(L_R^{-4})+O(e^{-L_R/\ell_R}).
\]
That expression is a dimensional infrared ansatz until a common microscopic
model derives the metric-volume map, the central susceptibility, the scale
\(L_R\), and the coupling of the surviving sector to curvature.  The
required missing steps are listed as (V2)--(V5) in
Section~\ref{sec:updated-vacuum}.

\section{Conditional de Sitter CSR-TPN benchmark}\label{app:ds}

\textbf{Status: background-relative consistency construction.}
This appendix starts from a de Sitter diamond and therefore tests
compatibility with that geometry; it does not derive de Sitter space.

Let $(M,g_H)$ be a compact causally convex diamond in four-dimensional de Sitter spacetime,
\[
R_{\mu\nu}=3H^2g_{\mu\nu},\qquad R=12H^2.
\]
Let $a_n=H^{-1}2^{-n}$ and let $V_n$ be a nested Poisson sprinkling of intensity $a_n^{-4}$.  Each point $x\in V_n$ carries a finite split-record Hilbert space
\[
\Hh_x=\Hh_x^S\otimes\Hh_x^R\otimes\Hh_x^E,
\qquad \dim\Hh_x^R=q\ge2.
\]
The finite region algebra is
\[
\A_n(O)=B\left(\bigotimes_{x\in O}\Hh_x\right).
\]
The record algebra is generated by projectors $P_{x,m}$ on $\Hh_x^R$ and the conditional expectation
\[
E_R(A)=\sum_{\mathbf m}P_{\mathbf m}AP_{\mathbf m}.
\]
The causal parent graph is induced from the de Sitter causal order in thin slabs.  Its transitive closure converges to the de Sitter causal order on finite windows, assuming the usual causal-set sprinkling convergence.

The old notation
\[
\sin^2\theta_n(H)\sim \rho_\Lambda(H)a_n^4
\]
is a fixed-point matching coordinate.  If a separate vacuum-response
mechanism supplies
\[
3H^2M_{\rm Pl}^2=\rho_\Lambda^{\rm ren}.
\]
then the benchmark may be evaluated at the resulting \(H\).  No present
microscopic calculation selects either side of this equality.

\begin{theorem}[Conditional de Sitter benchmark]
Assume split-record laboratories, transfer-channel center stability, antichain boundary area capacity, causal-set curvature convergence, local modular geometry, and vacuum-response sequestering.  If the renormalized vacuum response satisfies
\[
\rho_\Lambda^{\rm ren}=3H^2M_{\rm Pl}^2,
\]
then the continuum vacuum sector satisfies
\[
G_{\mu\nu}+3H^2g_{\mu\nu}=0.
\]
\end{theorem}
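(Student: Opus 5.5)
The plan is to treat the statement as a conditional composition, in the spirit of Theorem~\ref{thm:updated-conditional-synthesis}. Each listed assumption discharges one interface, and the field equation is reached by a short algebraic step at the end. No new dynamics is claimed. The ingredients are used in dependency order.

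First, split-record laboratories and transfer-channel center stability supply exact Boolean records on each sprinkled region. This uses the split-record event theorem together with the spectral-gap stabilization of threshold edges in the finite-window bound of Appendix~\ref{app:md}. As a result, the causal parent graph and its transitive closure are stable record observables. Causal-set curvature convergence then lets the nested sprinklings with \(a_n=H^{-1}2^{-n}\) converge on finite windows to the de Sitter order and volume, and hence to the metric \(g_H\). The existing Ricci identity \(R_{\mu\nu}=3H^2g_{\mu\nu}\) is used here only as an identification of the limit, not as a conclusion.

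Second, local modular geometry and antichain boundary area capacity are combined through the split-local Araki first law, Theorem~\ref{thm:arakifirstlaw}. This gives \(\delta S^{\rm alg}=\delta\langle K_D^0\rangle\) on small diamonds. The assumed identification \(K_D^0\to(2\pi/\hbar)\int\xi^\mu T_{\mu\nu}d\Sigma^\nu\) and the area-capacity normalization then turn this into the conditional chain~\eqref{eq:updated-einstein-chain}. The outcome is the Einstein equation \(G_{\mu\nu}+\Lambda_{\rm eff}g_{\mu\nu}=8\pi G T_{\mu\nu}\) up to an integration constant \(\Lambda_{\rm eff}\).

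Third, vacuum-response sequestering fixes that constant. The fixed-sector cancellation of Appendix~\ref{app:sequestering} removes the bare pure-volume term, so that only \(\rho_\Lambda^{\rm ren}\) survives in \(\Lambda_{\rm eff}=\rho_\Lambda^{\rm ren}/M_{\rm Pl}^2\). The hypothesis \(\rho_\Lambda^{\rm ren}=3H^2M_{\rm Pl}^2\) gives \(\Lambda_{\rm eff}=3H^2\). In the vacuum sector \(T_{\mu\nu}=0\), so \(G_{\mu\nu}+3H^2g_{\mu\nu}=0\). As a consistency check, \(G_{\mu\nu}=R_{\mu\nu}-\tfrac12Rg_{\mu\nu}=3H^2g-6H^2g=-3H^2g_{\mu\nu}\) for \(g_H\), so the limit is self-consistent.

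The main obstacle is the second step. The modular-to-stress identification and the area normalization must be applied to the same limiting diamonds produced by the sprinkling, and the order of limits (\(n\to\infty\) versus shrinking diamonds) must be controlled. I would handle this by stating explicitly that both assumptions are imposed uniformly on the nested family. This is exactly the kind of input the paper marks as conditional, and the proof records it as such rather than deriving it.
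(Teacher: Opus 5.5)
Your proposal is correct and follows essentially the same route as the paper's proof. It obtains Boolean records and a stable causal order from the split-record and center-stability hypotheses, reconstructs $(M,g_H)$ with $R=12H^2$ from the assumed sprinkling and curvature convergence, gets Einstein's equation from the modular first law with the area normalization, and closes with sequestering, $\rho_\Lambda^{\rm ren}=3H^2M_{\rm Pl}^2$, and vanishing vacuum stress. The only difference is cosmetic: you carry the cosmological term as an integration constant fixed afterward, whereas the paper puts the residual sequestered vacuum density directly into the source; your final consistency check matches the paper's remark that every geometric and gravitational arrow is already a hypothesis.
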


\begin{proof}
The center-stable RG limit provides Boolean facts and a locally finite causal order.  Sprinkling convergence and information-volume convergence reconstruct $(M,g_H)$, while the causal-set curvature operator gives $R=12H^2$.  The modular first law and entropy equilibrium give Einstein's equation with source equal to the renormalized response stress.  In the normal-ordered vacuum sector the causal stress vanishes and the residual sequestered vacuum density is $3H^2M_{\rm Pl}^2$.  Hence the displayed equation follows.
\end{proof}

Every geometric and gravitational arrow used in this theorem appears among
its hypotheses.  Its purpose is to verify consistency and nonemptiness of
the conditional construction, not to close the order--volume or
modular-gravity bridges.

\section{Gauge holonomy and Yang-Mills stress}\label{app:gauge}

\textbf{Status: conditional continuum construction.}
Given a compact group \(G\), attach a gauge register to each causal edge
$e=(x,y)$,
\[
\Hh_e^G=L^2(G_N)
\]
for a finite cutoff approximation $G_N$ of a compact gauge group $G$.  Link variables transform as
\[
U_{xy}\mapsto g_xU_{xy}g_y^{-1}.
\]
The gauge-invariant algebra is obtained by conditional expectation
\[
E_G(A)=\int_{G_N^{V_n}}\alpha_g(A)\,dg.
\]
For a causal plaquette $p=(x;y,z;w)$ define
\[
U_p=U_{xy}U_{yw}U_{zw}^{-1}U_{xz}^{-1}.
\]
If
\[
U_{xy}=\mathcal P\exp\left(i\int_x^yA_\mu dx^\mu\right),
\]
then the small plaquette expansion is
\[
U_p=\exp\left(iF_{\mu\nu}(m_p)\Sigma_p^{\mu\nu}+O(a_n^3\|\nabla F\|+a_n^4\|F\|^2)\right).
\]
Thus holonomy defects converge to Yang-Mills curvature.

The discrete Yang-Mills action is
\[
S_{{\rm YM},n}=\frac{1}{2g_n^2}\sum_p\operatorname{ReTr}(1-U_p).
\]
For smooth fields and shape-regular causal plaquettes,
\[
S_{{\rm YM},n}\to\frac{1}{4g^2}\int_M\sqrt{-g}\,\operatorname{tr}(F_{\mu\nu}F^{\mu\nu})\,d^4x.
\]
Varying with respect to the metric yields
\[
T_{\mu\nu}^{\rm YM}=\frac{1}{g^2}\operatorname{tr}\left(F_{\mu\alpha}F_\nu{}^\alpha-\frac14g_{\mu\nu}F_{\alpha\beta}F^{\alpha\beta}\right).
\]
This calculation recovers Yang--Mills response from a supplied gauge
register.  It does not generate \(G\), the link Hilbert spaces, or their
fusion rules.

\section{Domain-wall and spectral-triple fermions}\label{app:fermions}

\textbf{Status: conditional matter constructions.}
Naive local lattice fermions in four dimensions face the doubling problem.  RQCP therefore treats chiral matter through either a higher-dimensional domain-wall construction or a noncommutative spectral triple.

\subsection*{Domain-wall route}
Let $\mathcal B^5=M^4\times[-L,L]$ and define
\[
D_5=\gamma^\mu(\nabla_\mu+iA_\mu)+\gamma^5\partial_s+M(s),
\qquad
M(s)=m\operatorname{sign}(s).
\]
Zero modes localized at $s=0$ have profile
\[
\psi(x,s)=\exp\left(-\int_0^sM(s')ds'\right)\psi_L(x)
\]
and chirality $\gamma^5\psi_L=-\psi_L$.  The number of boundary zero modes is controlled by the index of the four-dimensional Dirac operator.

For the toy model $M^4=S^2\times S^2$, let $\alpha,\beta$ be the fundamental two-classes and take a line bundle $L_{p,q}$ with
\[
c_1(L_{p,q})=p\alpha+q\beta.
\]
Since $\widehat A(S^2\times S^2)=1$,
\[
\Index(D_{L_{p,q}})=\frac12\int c_1(L_{p,q})^2=pq.
\]
The explicit choice $(p,q)=(1,3)$ gives
\[
\Index(D_{L_{1,3}})=3.
\]
Thus an index-three defect background supports three chiral boundary zero
modes.  The multiplicity is already encoded in the chosen Chern numbers;
this example does not explain why the dynamics should select the value
three.

\subsection*{Spectral-triple route}
A finite internal noncommutative geometry is specified by
\[
(\A_F,\Hh_F,D_F,J_F,\gamma_F).
\]
The full spectral triple is
\[
\A=C^\infty(M)\otimes\A_F,
\qquad
D_A=D_M\otimes1+\gamma_5\otimes D_F+A+JAJ^{-1}.
\]
Generation number is the index pairing
\[
Q_{\rm gen}=\langle\ch(D_A),[p]\rangle.
\]
In this route generations are $K$-theoretic invariants, not lattice copies.

\section{Response-superfluid dark branch}\label{app:superfluid}

\textbf{Status: phenomenological ansatz.}
Let $N_R=\sum_xn_x$ be a conserved response charge and $B_x$ a lowering operator,
\[
[N_R,B_x]=-B_x.
\]
If response charge condenses,
\[
\omega(B_x^*B_y)\to n_Re^{i(\varphi(y)-\varphi(x))},
\]
then the long-wavelength field is
\[
\Psi_R=\sqrt{n_R}e^{i\varphi}.
\]
The thermodynamic pressure is defined directly from the process functional:
\[
Z_{n,\Lambda}(\mu)=\Omega_{n,\Lambda}(e^{\beta\mu N_R}),
\qquad
P_R(\mu)=\lim_{\Lambda\nearrow\infty}\lim_{n\to\infty}\frac{1}{\beta|\Lambda|}\log Z_{n,\Lambda}(\mu).
\]
In the nonrelativistic, weak-field, slow-phase limit, the local chemical-potential combination is
\[
X=\dot\varphi-m\Phi-\frac{|\nabla\varphi|^2}{2m}.
\]
The hydrodynamic action is
\[
S_R^{\rm hyd}=\int dtd^3x\,P_R(X)-\alpha\int dtd^3x\,\varphi\rho_b+\cdots.
\]
If the infrared response fixed point has
\[
P_R(X)=\kappa X\sqrt{|X|},
\]
then the phonon equation in a static spherical galaxy is
\[
\nabla\cdot(P_R'(X)\nabla\varphi)=\alpha\rho_b.
\]
Outside the baryonic mass $M_b$, this gives
\[
r^2P_R'(X)\varphi'=\frac{\alpha M_b}{4\pi}.
\]
With $P_R'(X)\propto\sqrt{|X|}$ and $X\simeq-|\nabla\varphi|^2/(2m)$, one obtains
\[
\varphi'\propto \frac{\sqrt{M_b}}{r}.
\]
The baryonic force mediated by the phonon is therefore
\[
g_R\propto\frac{\sqrt{GM_ba_R}}{r},
\]
which yields
\[
v_c^4=GM_ba_R.
\]
Whether the same parameters yield a normal-fluid phase compatible with
cluster lensing, the CMB, and structure growth is an open cross-scale test.
The BTFR scaling alone does not establish the microscopic equation of state.

\section{Phenomenology and falsifiability}\label{app:phenom}

In the weak-field Newtonian gauge,
\[
ds^2=-(1+2\Phi)dt^2+a(t)^2(1-2\Psi)d\mathbf x^2.
\]
RQCP deviations can be parameterized by
\[
-k^2\Psi=4\pi Ga^2\mu_{\rm eff}(k,z)\rho_m\Delta_m,
\]
\[
\eta_{\rm slip}(k,z)=\frac{\Phi}{\Psi},
\]
and
\[
w_{\rm RQCP}(a)=-1-\frac13\frac{d\log\rho_{\rm resp}}{d\log a}.
\]
A minimal transfer-spectrum motivated expansion is
\[
\mu_{\rm eff}(k,z)=1+\sum_iA_i(1+z)^{-p_i}\frac{(k/k_i)^{\sigma_i}}{1+(k/k_i)^{\sigma_i}},
\]
\[
\eta_{\rm slip}(k,z)=1+\sum_jB_j(1+z)^{-q_j}\frac{(k/k_j)^{\tau_j}}{1+(k/k_j)^{\tau_j}}.
\]
The parameters are not arbitrary: $k_i^{-1}$ are response correlation lengths, $p_i,q_i$ are related to transfer-channel critical exponents, and $A_i,B_j$ measure the coupling of the corresponding response operators to baryonic and lensing observables.

\subsection*{Failure modes}
The framework becomes empirically untenable if any of the following occur:
\begin{enumerate}[label=(F\arabic*)]
\item no CSR-TPN model has a split-record RG fixed point with nontrivial center;
\item the response transfer channel lacks a spectral gap, so exact records fail to stabilize;
\item no response-charge condensate exists in viable 3+1D models;
\item the infrared response equation of state never approaches $P_R(X)\propto X\sqrt{|X|}$ or an observationally equivalent law;
\item the sequestering constraint fails to remove UV identity vacuum response from local curvature;
\item the predicted $\mu_{\rm eff}$ and $\eta_{\rm slip}$ cannot fit rotation curves, lensing, clusters, CMB, and structure growth simultaneously;
\item no anomaly-free chiral defect/spectral triple sector reproduces the observed gauge and fermion representations.
\end{enumerate}

\section{Choi--Effros infrared algebras and asymptotic Boolean records}\label{app:v62-choi-effros}

This appendix outlines the operator-algebraic derivation replacing the preliminary assumption of split-record laboratories.  Let $\A$ be a von Neumann algebra and let $\Gamma:\A\to\A$ be a normal UCP map preserving a faithful normal state $\omega$.

\subsection*{Mean ergodic projection}

Assume the Cesaro averages
\[
E_N(A)=\frac1N\sum_{k=0}^{N-1}\Gamma^k(A)
\]
converge ultraweakly to a normal UCP idempotent $E_\infty$.  Then
\[
E_\infty^2=E_\infty,
\qquad
\Gamma E_\infty=E_\infty\Gamma=E_\infty.
\]
The asymptotic range
\[
\A_{\rm IR}:=E_\infty(\A)
\]
is an operator system.  Choi--Effros theory
\cite{ChoiEffros1976,ChoiEffros1977} equips it with the product
\[
X\circ Y=E_\infty(XY),
\]
turning it into a $C^*$-algebra.

\begin{remark}
The product $\circ$ need not coincide with the ambient product.  Moreover, Choi--Effros alone does not imply commutativity.  Therefore it does not by itself prove emergence of classical facts.
\end{remark}

\subsection*{Asymptotic abelianness}

Define the GNS seminorm
\[
\|A\|_\omega=\omega(A^*A)^{1/2}.
\]
A coarse-graining flow is asymptotically abelian on its invariant range if
\[
\lim_{n\to\infty}\|[\Gamma^n(A),\Gamma^n(B)]\|_\omega=0
\]
for all \(A,B\in E_\infty(\A)\).  Since these elements are fixed by
\(\Gamma\), this is an explicit commutativity condition on what survives.
Decay established only for pairs of microscopic local observables is not
sufficient unless it extends continuously to the invariant range.

\begin{theorem}[Asymptotically abelian Choi--Effros range]
Let $E_\infty$ be the normal ultraweak limit of Cesaro averages of $\Gamma$,
and assume the invariant-range condition above.  Then the
Choi--Effros algebra \(\A_{\rm IR}\) is commutative.  For any faithful normal
representation \(\pi\),
\[
[\pi(X),\pi(Y)]=0,
\qquad X,Y\in\A_{\rm IR}.
\]
Consequently \(\pi(\A_{\rm IR})''\) is a commutative von Neumann algebra and
its projections form a complete Boolean algebra.
\end{theorem}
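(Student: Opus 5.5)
The proof follows the route of Theorem~\ref{thm:v62-ir-bool}, whose argument this appendix statement restates. I would organize it in three steps.

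\emph{Step 1: the commutator condition is a statement about the fixed elements themselves.} I would use the mean-ergodic identities $\Gamma E_\infty=E_\infty$. They follow by passing to the ultraweak limit in
\[
\Gamma E_N-E_N=\tfrac1N(\Gamma^N-\mathrm{id}),
\]
whose norm is at most $2/N$; normality of $\Gamma$ justifies the limit. Every $X\in\A_{\rm IR}$ has the form $X=E_\infty(A)$, so $\Gamma(X)=X$, and hence $\Gamma^n(X)=X$ for all $n$. The asymptotic-abelianness hypothesis therefore reduces to $\|[X,Y]\|_\omega=0$ for all $X,Y\in\A_{\rm IR}$.

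\emph{Step 2: from zero GNS seminorm to zero commutator.} I would use faithfulness of $\omega$. Setting $C=[X,Y]$, we have $\omega(C^*C)=0$ and $C^*C\ge0$, so $C=0$. The ambient commutator of any two elements of the range therefore vanishes. Then
\[
X\circ Y=E_\infty(XY)=E_\infty(YX)=Y\circ X,
\]
so the Choi--Effros $C^*$-algebra is commutative.

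\emph{Step 3: representation-level conclusion.} This is the delicate point. The Choi--Effros product differs from the ambient product, so $\pi$ must be read either as a faithful normal representation of the ambient algebra $\A$ restricted to $\A_{\rm IR}$, or as a faithful representation of the Choi--Effros algebra $(\A_{\rm IR},\circ)$.

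In the first reading, $[\pi(X),\pi(Y)]=\pi([X,Y])=0$ by Step 2. A set of mutually commuting elements that is closed under adjoints has a commutative double commutant. To see this, let $S=\pi(\A_{\rm IR})$. Then $S\subseteq S'$, hence $S''\subseteq S'$, and $S''$ is commutative because $S''\subseteq S'=(S'')'$.

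In the second reading, commutativity of $\circ$ gives commuting images directly, and the same double-commutant argument applies.

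Finally, the projections of a commutative von Neumann algebra form a complete Boolean algebra. This is Proposition~\ref{prop:centerboolean} applied to an algebra equal to its own center.

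The main obstacle is purely interpretive: making sure the commutator in the hypothesis and in the conclusion refers to the right product, so that $\Gamma^n(X)=X$ may legitimately be substituted. The relevant point is ultraweak continuity of $\Gamma$ on the Cesàro limit, which the normality assumption supplies.
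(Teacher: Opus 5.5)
Your proposal is correct and follows the paper's proof step for step: mean ergodicity makes the range fixed by $\Gamma$, faithfulness of $\omega$ kills $[X,Y]$, hence $E_\infty(XY)=E_\infty(YX)$, and the commutative von Neumann closure then has a complete Boolean projection lattice. Your extra details (deriving $\Gamma E_\infty=E_\infty$ from $\Gamma E_N-E_N=\tfrac1N(\Gamma^N-\mathrm{id})$, and the double-commutant argument) fill in what the paper states in one line. The two-reading caveat in Step~3 is harmless but not needed here: because $\omega$ is faithful and $\Gamma$-invariant, $\A_{\rm IR}=\Fix(\Gamma)$ is already a von Neumann subalgebra, and on it $\circ$ coincides with the ambient product.
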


\begin{proof}
For \(X,Y\in\A_{\rm IR}\), mean ergodicity gives
\(\Gamma^n(X)=X\) and \(\Gamma^n(Y)=Y\).  The assumed limit therefore gives
\(\|[X,Y]\|_\omega=0\).  Faithfulness of \(\omega\) implies
\([X,Y]=0\).  Hence
\(E_\infty(XY)=E_\infty(YX)\), so the Choi--Effros product is commutative.
Faithful representation and von Neumann closure preserve commutativity, and
the resulting projection lattice is complete Boolean.
\end{proof}

\subsection*{Finite split-record approximation}

The split-record laboratory
\[
\R=\bigoplus_aB(\Hh_a)
\]
is a finite-stage approximation to the infrared fixed algebra before quotienting by asymptotically invisible intra-sector degrees of freedom.  Its center
\[
Z(\R)=\bigoplus_a\mathbb C P_a
\]
is the finite-dimensional shadow of the commutative Choi--Effros infrared algebra.

\section{Conditional central sectors and the response-horizon
ansatz}\label{app:v62-superselection-free-energy}

\textbf{Status: conditional.}
This appendix records the algebra of the earlier proposal while exposing the
inputs that are not generated by the current local channel.

\subsection*{Central response sector}

Assume that \(Z_R\) is a central element of the global IR representation:
\[
Z_R\in Z(\pi_\omega(\A_{\rm glob})'').
\]
For any local observable $A_{\rm loc}$,
\[
[A_{\rm loc},Z_R]=0.
\]
Then $Z_R$ labels superselection sectors.  A state decomposes as
\[
\omega=\int^\oplus \omega_z\,d\mu(z),
\]
where $z$ is the response-center label.  Operations belonging to the
block-diagonal observable algebra cannot detect cross-sector coherence.
The existence of this particular global center is an input here.

\subsection*{Vacuum identity response as a central gauge mode}

The bare identity vacuum response contributes
\[
T_{\mu\nu}^{\rm bare,vac}=-\rho_{\rm vac}^{\rm bare}g_{\mu\nu}.
\]
If the local response calculus is insensitive to the central label,
introduce a compensator $\lambda_R(Z_R)$ and define a sector-invariant
combination
\[
\rho_\Lambda^{\rm loc}=\rho_{\rm vac}^{\rm bare}+\lambda_R+\rho_{\rm rel}.
\]
The transformation
\[
\rho_{\rm vac}^{\rm bare}\mapsto \rho_{\rm vac}^{\rm bare}+\Delta\rho,
\qquad
\lambda_R\mapsto\lambda_R-\Delta\rho
\]
leaves response differences invariant by construction.  A gravitational
source defined on that quotient would depend only on the residual response
density.  Neither the compensator nor the quotient-to-curvature rule follows
from centrality alone.

\subsection*{Renormalized residual}

The earlier parametrization wrote
\[
\rho_\Lambda^{\rm ren}=\rho_{\rm IR}+\rho_{\rm bdry}+\rho_{\rm anomaly}.
\]
If the only macroscopic response scale is $L_R$, covariance and dimensional
analysis permit
\[
\rho_\Lambda^{\rm ren}=\xi M_{\rm Pl}^2L_R^{-2}+O(L_R^{-4})+O(e^{-L_R/\ell_R}).
\]
This scaling is not a calculation of the residual.  It becomes a prediction
only after deriving the central stiffness, the response length, and their
coupling to curvature in the same microscopic model.

\subsection*{Entanglement free-energy selection}

Define
\[
F_R(L)=E_{\rm mod}(L)-T_HS_{\rm gen}(L),
\qquad T_H=\frac{\hbar H}{2\pi}.
\]
The phenomenological late-time ansatz is
\[
E_{\rm mod}(L)=\alpha M_{\rm Pl}^2H^2L^3+O(L^2),
\]
and
\[
T_HS_{\rm gen}(L)=\beta M_{\rm Pl}^2HL^2+O(L).
\]
The leading free energy is
\[
F_R(L)=\alpha M_{\rm Pl}^2H^2L^3-\beta M_{\rm Pl}^2HL^2+\cdots.
\]
Thus
\[
\frac{\partial F_R}{\partial L}=M_{\rm Pl}^2HL(3\alpha HL-2\beta)+\cdots.
\]
The nonzero stationary point is
\[
L_R=\frac{2\beta}{3\alpha}H^{-1}+\cdots.
\]
Imposing first-law matching at \(L=H^{-1}\) sets
\(2\beta/3\alpha=1\), and therefore reproduces
\[
L_R=H^{-1}+\cdots,
\qquad
\rho_\Lambda^{\rm ren}\simeq \xi M_{\rm Pl}^2H^2.
\]

\begin{remark}
Because the matching point is the radius subsequently recovered, these
equations are a consistency condition, not a derivation of a horizon
attractor.  A closed calculation must determine \(\alpha\) and \(\beta\)
before extremizing, verify \(F_R''(L_R)>0\), and compute the relaxation rate.
\end{remark}

\section{Residual influence algebras and finite spectral
triples}\label{app:v62-ncg}

\textbf{Status: conditional classification interface.}
This appendix describes how a separately derived residual matter algebra
could be tested against finite noncommutative geometry.

\subsection*{Residual noncommutative influence algebra}

Let $\I_{\rm IR}$ be the infrared influence algebra after CP coarse-graining.  Its center gives classical facts.  The residual internal algebra is the noncentral finite part
\[
\I_{\rm rel}:=\I_{\rm IR}/Z(\I_{\rm IR})
\]
restricted to finite-energy localized sectors.  Assume it has a finite semisimple real form
\[
\A_F=\bigoplus_iM_{n_i}(\mathbb K_i),
\qquad
\mathbb K_i\in\{\mathbb R,\mathbb C,\mathbb H\}.
\]

\subsection*{Finite real spectral triple}

A finite real spectral triple consists of
\[
(\A_F,\Hh_F,D_F,J_F,\gamma_F),
\]
where $\A_F$ acts faithfully on $\Hh_F$, $D_F$ is self-adjoint, $J_F$ is a real structure, and $\gamma_F$ is a grading.  The product with spacetime gives
\[
\A=C^\infty(M)\otimes\A_F,
\qquad
\Hh=L^2(M,S)\otimes\Hh_F,
\]
\[
D=D_M\otimes1+\gamma_5\otimes D_F.
\]
Inner fluctuations
\[
D_A=D+A+JAJ^{-1}
\]
produce gauge fields and scalar/Higgs-like components.

\subsection*{Standard-Model-like algebra}

Under the particular finite-spectral-triple representation and consistency
conditions used in the Standard-Model construction---including a real
structure, the first-order condition, orientability, Poincare duality,
unimodularity, anomaly cancellation, and a minimal faithful nontrivial
charge representation---the familiar minimal candidate is
\cite{ConnesMarcolli,ChamseddineConnes}
\[
\A_F\simeq \mathbb C\oplus\mathbb H\oplus M_3(\mathbb C).
\]
Its unitary group modulo the unimodular condition gives
\[
G_F\simeq S(U(2)\times U(3))
\simeq \frac{SU(3)\times SU(2)\times U(1)}{\mathbb Z_6}.
\]
This is a classification of an assumed spectral triple.  It does not replace
the missing RQCP theorem that must first generate the residual algebra and
its representation.

\subsection*{Generation number}

The generation number is the index pairing
\[
N_{\rm gen}=\Index(D_A,p)=\langle \ch(D_A),[p]\rangle.
\]
A three-generation model is one in which the residual RQCP spectral triple contains a class $[p]\in K_0(\A_F)$ satisfying
\[
\langle \ch(D_A),[p]\rangle=3.
\]
This does not prove that RQCP dynamics selects three generations.  Once the
class \([p]\) is chosen, the multiplicity is topologically stable; the value
three remains encoded in the assumed index pairing.

\subsection*{Spectral action and coupling to gravity}

The spectral action has the form
\[
\operatorname{Tr} f(D_A/\Lambda_s)+\langle J\Psi,D_A\Psi\rangle.
\]
Its asymptotic expansion contains Einstein--Hilbert, Yang--Mills, scalar, and
Dirac/Yukawa terms under the standard spectral-action hypotheses.  In the
present program this is a candidate continuum functional, not a result
derived from the Boolean or causal-order channels.

\section{Microscopic mechanisms for asymptotic abelianness}\label{app:v62-asymptotic-origin}

This appendix spells out the physical content behind the
asymptotic-abelian condition.  A candidate coarse-graining channel is
expected to be quasi-local, to have a finite information-propagation speed,
and to discard phases inaccessible to long-wavelength probes.
Lieb--Robinson bounds motivate the quasi-local part
\cite{LiebRobinson1972}, while ETH and scrambling motivate intra-sector
mixing \cite{Deutsch1991,Srednicki1994}.  Neither body of results, without
additional hypotheses on the channel and observables, proves the
block-primitive estimates used here.

Mathematically, this motivates the block-primitive estimates used in Theorem~\ref{thm:block-primitive-abelian}.  If the invariant projections $P_a$ label macroscopic response sectors, the channel has the asymptotic form
\[
\Gamma^n(X)=\sum_a \omega_a(P_aXP_a)P_a+R_n(X),
\qquad
\|R_n(X)\|_\omega\le C_Xe^{-\gamma n}.
\]
The term $R_n$ contains both off-sector coherences and nonthermal intra-sector fluctuations.  Consequently,
\[
[\Gamma^n(A),\Gamma^n(B)]=[R_n(A),R_n(B)]+[R_n(A),X_B^{\rm cl}]+[X_A^{\rm cl},R_n(B)],
\]
where $X_A^{\rm cl}=\sum_a\omega_a(P_aAP_a)P_a$.  The right side is
$O(e^{-\gamma n})$ in the GNS norm.  Thus the block-primitive estimates give
a concrete sufficient route to Boolean IR records.  Deriving those estimates
from a generic local Hamiltonian remains open; the dephasing--exchange model
realizes them directly.

\begin{remark}[Nonabelian exceptions]
If a channel has a protected nonabelian noiseless subsystem, the IR algebra is not Boolean.  RQCP-QG then treats that subsystem as a residual quantum or topological sector, not as a classical record.  This is compatible with the matter-sector interpretation: classical spacetime facts are associated with the abelian center, while noncommutative residual algebras may encode gauge and internal degrees of freedom.
\end{remark}

\section{Causal-diamond coefficient matching}\label{app:v62-coefficients}

The equality $2\beta/3\alpha=1$ is a first-law matching condition between the modular-volume term and the boundary area-capacity term.  Let
\[
E_{\rm mod}(L)=\alpha M_{\rm Pl}^2H^2L^3+O(L^2\partial H,H^4L^5),
\]
\[
T_HS_{\rm gen}(L)=\beta M_{\rm Pl}^2HL^2+O(L\partial H,H^3L^4).
\]
If causal-diamond first-law matching is imposed at the de Sitter radius, one
sets
\[
\delta E_{\rm mod}=T_H\delta S_{\rm gen}
\]
for first-order radial variations at $L=H^{-1}$ in the late-time quasi-de Sitter regime.  Substituting the two scaling laws yields
\[
3\alpha M_{\rm Pl}^2H^2L^2\delta L
=
2\beta M_{\rm Pl}^2HL\delta L
\]
at $L=H^{-1}$, hence $3\alpha=2\beta$.  The coefficient relation is
therefore equivalent to imposing the Bekenstein--Hawking area normalization
and Jacobson-type entanglement equilibrium at the target radius.  It is not
an independent calculation of that radius.

In an explicit CSR-TPN model, $\alpha$ should be computed from the modular response density and $\beta$ from the boundary record capacity.  Agreement with $3\alpha=2\beta$ is a nontrivial consistency check.  Failure of the relation does not invalidate the algebraic core of RQCP-QG, but it invalidates the specific claim that the response horizon locks exactly to $H^{-1}$.

\section{Conditional status of the finite spectral-triple completion}\label{app:v62-ncg-status}

The NCG completion is a conditional bridge from residual influence algebras to Standard-Model-like matter.  The input is not merely a finite-dimensional algebra; it is a finite real spectral triple
\[
(\A_F,\Hh_F,D_F,J_F,\gamma_F)
\]
with reality, grading, first-order condition, orientability, Poincare duality, unimodularity, anomaly cancellation, and minimal faithful charge representation.  These constraints are strong.  RQCP-QG proposes that the residual noncommutative influence algebra is the physical source of $\A_F$, but it does not yet prove the dynamical selection of this algebra from arbitrary microscopic data.

The relevant result is an external classification input:
\begin{external}[Finite-spectral-triple classification
input]\label{thm:conditional-ncg-selection}
Within the specific representation, irreducibility, KO-dimension, reality,
first-order, orientability, Poincare-duality, unimodularity, and anomaly
conditions of the cited finite-geometry construction
\cite{ConnesMarcolli,ChamseddineConnes}, the familiar minimal
Standard-Model-like internal algebra is
\[
\A_F\simeq \mathbb C\oplus\mathbb H\oplus M_3(\mathbb C),
\]
with gauge group
\[
S(U(2)\times U(3))\simeq (SU(3)\times SU(2)\times U(1))/\mathbb Z_6.
\]
For a supplied \(K\)-theory class, generation multiplicity can be represented
by an index pairing.
\end{external}

This classification is not a proof that the Standard Model is inevitable
from RQCP.  The missing theorem must derive the residual sector category,
its representation, and the relevant index data from a local process before
the external classification can be applied.

\section{Logical dependencies of the conditional synthesis}
\label{app:synthesis}

This appendix records the origin and scope of the conclusions used in the
conditional synthesis and in the finite regulator results.  The list is not
an additional proof.  Its purpose is to prevent statements established in
different models from being combined without their interface hypotheses.
\begin{enumerate}[label=(D\arabic*)]
\item Boolean events follow abstractly from the center-projection theorem.
The explicit collision and dephasing--exchange models prove that a nontrivial
local dynamics can realize the required center and gap.
\item A transfer-channel gap and threshold margin freeze an already specified
finite-window edge set.  Acyclicity is obtained separately from defect
extinction in the block-cactus model.  In the affine bridge, plaquette Hodge
loss instead produces an exact response one-form and its reconstructed rank.
\item The relative-entropy first law is algebraic.  Stress-energy, boost
geometry, area response, several null directions, and tensor completion are
supplied in the affine regulator by decoupled wires, global exchange
stationarity, a product-reference matching condition, discrete focusing, the
finite response frame, conservation, and an exact lattice Bianchi identity.
The nonlinear regulator replaces the first-law and linear-focusing
truncations by exact finite information and Riccati identities and extends
the null-frame rank theorem to arbitrary invertible coframes.  It does not
derive the remaining finite Ward relation, interacting modular theorem, or
common holonomy/stress update.  The separate capacity--complement regulator
selects the reference used by this balance for factorized tokens, but does
not derive the record--area map or couple the selection to that same
nonlinear update.
\item The de Sitter sprinkling appendix is a background-relative benchmark;
the Ramsey rank-defect model is an inverse theorem in a specified
field--state family.  The affine rank-nine metric is a distinct flat
finite-frame theorem.  The generic curved order--volume limit is instead the
intrinsic volume-clock and strong-profile compactness theorem of
Sec.~\ref{sec:generic-order-volume}, conditional on its displayed operational
gates.  The unified-continuum theorem then proves, for one FLRW--TT
refinement family, that this order--volume branch and the independently
evaluated action-phase branch converge on the same reconstructed geometry;
dimension, action, capacity normalization, and \(G\) remain inputs.
\item The same-link branch uses a stable anonymous constructor to produce an
unlabeled member of a supplied \(P_L\times T_L^3\) language.  Directed order,
integer record volume, the scalar operator, the induced TT kernel, and the
large-\(N_s\) TT measure use that one support.  The positive coefficient
comes from the scalar determinant rather than a bare curvature phase.
Dimension, program/clock, volume unit, matter content, fixed spacing, and
the restricted metric sector remain inputs.
\item The relational-history branch replaces the external update parameter
by a stationary finite clock correlation and proves common causal-rank and
record-volume refinement.  The clock boundary, program, port language, and
volume unit remain inputs.  The quadratic completion branch uniquely extends
the same TT normalization to all ten tensor components and supplies its
finite Ward/ghost/BRST complex after the linear gauge law is declared.  The
spectral-balance branch removes the Newton cutoff within its regulator class
after two signed supertrace moments are imposed.  The positive-Hilbert
companion derives those moments from exterior parity and an affine internal
mass ladder geometry by geometry in every supplied finite graph ensemble,
and proves exact finite graph-move response plus an obstruction to exact
all-level pairing.  The selective-Ward companion then protects the moments
for arbitrary finite positive interacting physical operators and realizes a
positive geometry channel on a supplied finite language.  The
projective-growth companion further constructs compatible cylinder laws and
an infinite positive measure on a supplied naturally labelled causal-birth
language.  The covariant-growth companion then maps response-generated
	contact weights to an exactly discretely covariant, Bell-causal classical
	sequential-growth law and constructs an extendible strongly positive
	rank-one coherent history measure.  The fixed-strip companion separately
	constructs an extendible local-unitary noncommuting rank-four history
	measure.  The arbitrary-causet holonomy companion further combines that
	classical CSG density with intrinsic fresh-event noncommuting gates and
	proves a relabeling-covariant countably additive operator measure with a
	non-Abelian integrated range.  Their composition is a
	regulator-independent linear Gaussian response with finite interacting
	protection, an exact positive history law, and a covariant coherent
	subfamily, together with a distinct covariant non-Abelian
	operator-density branch.  It is not a derivation of the protected commutant, the
	maximal-birth language, contact motifs, response multiplet, coherent arity
	or phase, an operator-Markov or square-operator Bell law, quantum
	geometry backreaction, a tight Lorentzian
continuum, spin--statistics or anomaly freedom, nonlinear BV symmetry, or a
nonperturbative quantum metric measure.
\item Fixed-sector normalization cancels a source-independent pure-volume
factor if charge has already been identified with metric volume.  No
cosmological residual or response horizon follows from that identity.
\item Gauge registers, domain-wall data, finite spectral triples, and the
response-superfluid equation of state are conditional constructions or
phenomenological ansatzes, not outputs of the foundational channel.
\end{enumerate}

Accordingly, no equation in these appendices is used to claim that the
current program autonomously derives generic quantum Einstein dynamics, the
observed cosmological constant, the Standard Model, three generations, or
dark matter.  Einstein response is reached only in the declared
model-specific affine, same-update, unified-continuum, and induced-TT
closures under their visible inputs.  The appendices otherwise specify the mathematical
interfaces that future closed models must replace.

\newpage
\bibliographystyle{unsrturl}
\bibliography{references_v2}

\end{document}